\theoremstyle{definition}
\newtheorem{definition}{Definition}
\theoremstyle{plain}
 \newtheorem{thm}{Theorem}
 \newtheorem{cor}{Corollary}
 \newtheorem{prop}{Proposition}
\theoremstyle{remark}
\newtheorem{remark}{Remark}
\newcommand{\bea}{\begin{eqnarray}}
\newcommand{\eea}{\end{eqnarray}}
\newtheorem{claim}{Claim}
\newtheorem{lemma}{Lemma}
\renewcommand{\le}{\leqslant}
\renewcommand{\ge}{\geqslant}
\newcommand{\be}{\begin{equation}}
\newcommand{\ee}{\end{equation}}
\newcommand{\f}{\frac}
\newcommand{\tr}{{\rm tr}}
\def\uN{\mathrm{U}(N)}
\def\oD{\mathrm{O}(D)}
\def\Ne{\mathrm{N{}_e}}
\def\No{\mathrm{N{}_o}}
        \let\l=\lambda
\let\m=\mu    \let\n=\nu           \let\om=\omega
     \let\X=F
\newcommand{\cF}{\mathcal{F}}
\author[R\'emi Cocou Avohou]{R\'emi Cocou Avohou}
\author[Reiko Toriumi]{Reiko Toriumi}
\author[Matthias Vancraeynest]{Matthias Vancraeynest}
\address[1]{Okinawa Institute of Science and Technology Graduate University, 1919-1, Tancha, Onna, Kunigami District, Okinawa 904-0495, Japan, \& ICMPA-UNESCO Chair, 072BP50, Cotonou, \& Ecole Normale Superieure, B.P 72, Natitingou, Benin,\newline {\tt \href{mailto:remi.avohou@oist.jp}{remi.avohou@oist.jp}}}
\address[2]{Okinawa Institute of Science and Technology Graduate University, 1919-1, Tancha, Onna, Kunigami District, Okinawa 904-0495, Japan,\newline {\tt \href{mailto:reiko.toriumi@oist.jp}{reiko.toriumi@oist.jp}}}
\address[3]{School of Mathematics and Maxwell Institute for Mathematical Sciences,
University of Edinburgh, EH9 3FD, UK \newline {\tt \href{mailto:matthias.vancraeynest@ed.ac.uk}{matthias.vancraeynest@ed.ac.uk}}}
\begin{document}

\title{\bf Classification of higher grade $\ell$ graphs for $\uN^2 \times \oD$ multi-matrix models}

\begin{abstract}
The authors studied in [Ann. Inst. Henri Poincar\'e D 9, 367-433, (2022)], a complex multi-matrix model with $\mathrm{U}(N)^2 \times \mathrm{O}(D)$ symmetry, and whose double scaling limit where simultaneously the large-$N$ and large-$D$ limits were taken while keeping the ratio $N/\sqrt{D}=M$ finite and fixed. In this double scaling limit, the complete recursive characterization of the Feynman graphs of arbitrary genus for the leading order grade $\ell=0$ was achieved. In this current study, we classify the higher order graphs in $\ell$. More specifically, $\ell=1$ and $\ell=2$ with arbitrary genus, in addition to a specific class of two-particle-irreducible (2PI) graphs for higher $\ell \geqslant 3$ but with genus zero. 
Furthermore, we demonstrate that each 2PI graph with a single $\mathrm{O}(D)$-loop with an arbitrary $\ell$ corresponds to a reduced alternating knot diagram with $\ell$ crossings as listed in the Rolfsen knot table, or a resulting alternating knot diagram obtained after performing the Tait flyping moves. We generalize to 2PR by considering the connected sum and the Reidemeister move I.

\end{abstract}

\maketitle

\setcounter{footnote}{0}
\tableofcontents

\section*{Introduction} 
Tensor models are derived from the idea that the classical geometry of some manifold could emerge from the statistical sum of random geometries associated with triangulations of this background in attempts to generalize in higher dimensions matrix model results on 2D quantum gravity (QG) \cite{MR2231334, MR1465433}. They may also be related to a broader hypothesis that gravity is derived from more fundamental (quantum) objects and laws \cite{MR2443568, MR1182621}. 

One of the most compelling findings in this area comes from the special case of matrices. In fact, the Feynman integral of matrix models generates ribbon graphs arranged in a $1/N$ (or genus) expansion, allowing this statistical sum to be effectively controlled using only analytical methods \cite{MR0782334}. In the large $N$ limit, only planar graphs contribute to leading order. Invariant matrix models undergo a phase transition to a continuum theory of random infinitely refined surfaces when the coupling constant is tuned to some critical value because planar graphs form a summable family. Matrix models, which are related to conformal field theory on fixed geometries, provide a framework for the analytic study of two-dimensional random geometries coupled with conformal matter. Furthermore, because of the fact that their Feynman diagrams are dual to piecewise flat manifolds of dimension two (for matrices) and higher (for tensors), their free energy is the generating function of connected piecewise flat manifolds, which in the continuum limit is anticipated to define a functional integral for Euclidean quantum gravity \cite{MR2231334, MR1465433}.

The success of matrix models inspired their generalization to random tensor models in higher dimensions, as cited. In the case of tensor models, the crucial $1/N$ expansion tool that leads to understanding and control of the partition function was missing. A method for understanding the partition function of tensor models analytically was abandoned for a while, and as a result, computations in theories implementing a discrete version of QG in higher dimension rely heavily on numerics even today. In fact, tensor models have failed to provide an analytically controlled theory of random geometries. No progress could be made because no generalization of the $1/N$ expansion to tensors has been found for a long time. Tensor models' interest could have been significantly increased if they were given an appropriate notion of $1/N$-expansion. 
This occurred precisely following Gurau's identification of an authentic concept of large $N$-expansion for a particular category of random tensor models \cite{MR2802384, MR2909101}.
This class includes colored tensor models that are associated with triangulations of simplicial pseudo-manifolds in any dimension \cite{MR2942819, MR2738259, MR2793930}. The colored models' perturbative series supports a $1/N$ expansion indexed by Gurau degree, a positive integer that plays the role of the genus in higher dimensions but is not a topological invariant. Melonic, or leading order graphs, triangulate the $D$-dimensional sphere in any dimension and are a summable family \cite{MR2826235}.  Tensor models, like their two-dimensional counterparts, go through a phase transition to a theory of continuous infinitely refined random spaces when tuned to criticality. Colored random tensors provide the first analytically accessible theory of higher-dimensional random geometries. They are discovered to go through a phase transition into the so-called branched polymer phase \cite{MR2942819, MR2831765}. More findings address long-standing issues in statistical mechanics on random lattices and mathematical physics \cite{MR2826235, MR2968506}.

The leading order (Gurau degree zero) of a $1/N$ expansion of tensor models is indeed well understood; it is dominated by melonic graphs \cite{MR2831765} and topologically they represent a subclass of spheres. For the higher orders in $1/N$ expansion (i.e., Gurau degree bigger than zero), it is known that indeed, given a Gurau degree, there is a finite number of schemes
\cite{MR3549799,MR3336566},
and each scheme corresponds to a summable class of graphs. 
However, except for the lowest orders (small Gurau degree) 
\cite{Kaminski:2013maa,Bonzom:2017pqs, Bonzom:2019yik, Bonzom:2019moj}, 
we do not know much about the classification of graphs of arbitrary Gurau degree.

Given a summable class of Feynman graphs, we are interested in their continuum limit.
For the class of diagrams selected by the large-$N$, such a continuum limit is possible, and it leads to two well-defined and ubiquitous continuum probabilistic models: the continuous random tree (so-called branched polymers) 
\cite{aldous1991, aldous} 
and the Brownian sphere 
\cite{legall2013, miermont2013}.
Typical matrix models lead to the latter, while typical tensor models to the former 
\cite{melbp}, 
but special models can be built with exchanged continuum limits 
\cite{Das:1989fq,Bonzom:2015axa}.
Finding different limits from these two remains an open problem, of particular relevance to quantum gravity, as neither branched polymers nor the Brownian spheres approach a classical geometry at large distances in higher dimensions than two.
It would therefore be desirable to find other variations of large-$N$ limits selecting different classes of Feynman graphs that could lead to new continuum models, potentially yielding a different limits from the familiar two described above.

A variation of large-$N$ limit was introduced by Ferrari in 
\cite{MR4002672},
and further developed and generalized in 
\cite{MR3994575, Azeyanagi:2017mre}. 
The fundamental variables can be described either as $D$ different $N\times N$ matrices,
(therefore it is named a multi-matrix model), 
or as a tensor.
In such a multi-matrix model, there are two parameters to play with owing to the presence of $D$ and $N$ in comparison to the more conventional tensor model governed by only one parameter, Gurau degree coming from having only $N$. 
Now with two parameters, different limits can be taken.
In particular, the large-$N$ limit of such models leads to a genus expansion, but one can furthermore perform the large-$D$ limit, thus introducing a new expansion at each fixed genus.
In a typical model, sending both $N$ and $D$ to infinity leads back to the large-$N$ limit of order-$3$ tensors, with melonic dominance (vanishing Gurau degree graphs), and hence to a branched polymer model.
In \cite{MR4450018}, subclass of higher orders of such double expansion of a multi-matrix model has been studied, namely vanishing grade $\ell=0$ but all genera $g$.
Specifically, they considered a $\mathrm{U}(N)^2\times \oD$ multi-matrix model,
which carries a nice geometric interpretation of its Feynman graphs. 
The ribbon structure generated by the propagation of $\mathrm{U}(N)^2$ indices is dual to quadrangulation of orientable surfaces of arbitrary genus. 
Then, $\oD$ symmetry decorates these surfaces by 
cyclic patterns, called $\oD$-loops.  
One can articulate that the statistical properties of the genus are controlled by the large-$N$ limit, while the proliferation of $\oD$-loops -- captured by a second integer number \cite{MR4002672} that we call the grade -- is directly tied to the large-$D$ limit.  
Therefore, one can say that this model exhibits an interplay between topological (genus) and combinatorial ($\oD$-loops) aspects.
By assigning a combinatorial grading ($\ell$) and topological property ($g$), or vice versa, we can potentially identify classes of Feynman graphs that are summable within this model. Speaking of the vanishing genus case, and the fact that the graphs we are studying are $4$-regular, we show that the enumeration of such graphs is possible by making use of alternating knot diagrams.
In fact, given a connected planar graph $G$, its medial graph $M(G)$ contains a vertex for each edge of $G$, as well as an edge between two vertices for each face of $G$, with the corresponding edges occurring consecutively. The edges of $M(G)$ are drawn in the corresponding faces, and $M(G)$ is then treated as a planar graph. The medial graph or $4$-regular graph can be associated with an alternating link using checkerboard coloring, where each crossing in the link diagram corresponds to a vertex of the graph \cite{MR2079925, MR3086663}.

The paper is organized as follows.
In Section \ref{sec:model}, we briefly review the essential of the $\uN^2 \times \oD$ multi-matrix model on which we extend the analysis done in \cite{MR4450018} in this present work. In particular, our analysis extends to $\ell > 0$ classification whereas they focused on the classification of $\ell=0$ graphs in \cite{MR4450018}.
 In Section \ref{sec:recursion}, we characterize the Feynman graphs generated by the $\uN^2 \times \oD$ multi-matrix model. Using combinatorial moves (Subsection  \ref{sec:comstruc}), we analyze the properties of Feynman graphs for $\ell=1$, and $\ell=2$ (Subsection \ref{sec:propell12}).
Subsection \ref{sec:completeell12} gives a complete characterization of $\ell=1$ and $\ell=2$ graphs with arbitrary genus $g$ via recursion. We classify planar ($g=0$) $\ell=3$ graphs in Section \ref{sec:l3g0}. In Section \ref{sec:higherlg0}, we give some classification of subclass of graphs (the 2PI graphs) with $\ell >3$. We make concluding remarks with open questions in Section \ref{sec:concl}. 
Appendix \ref{sec:appiso} lists some isomorphic graphs which are used to identify distinct schemes.
The construction of the 2PR graphs for $\ell=2$ is given with explicit examples in Appendix \ref{sec:app2PR}. 
Finally, Appendix \ref {sec:appb} provides some alternating knots as well as the corresponding 4-regular planar graphs.

\section{The $\uN^2 \times \oD$ multi-matrix model}
\label{sec:model}

We consider an $\oD$-invariant complex matrix model in zero dimension with  $D$ complex matrices $X_\m$ of size $N\times N$ and we may write $(X_\mu)_{ab} = X_{\mu ab}$ where $1\leq \mu\leq D$ and $1\leq a,b\leq N$. 
Indeed, one can view such $X$ as order-3 tensor, therefore this model can be viewed as order-3 $\uN^2 \times \oD$-invariant tensor model.

\begin{definition}[The $\uN^2\times \oD$ multi-matrix model]
The $\uN^2\times \oD$ multi-matrix model is a model involving a vector of $D$ complex matrices of size $N\times N$, denoted by $(X_\mu)_{\mu=1,\dots,D} =(X_1, \dots,X_D)$ which required to be invariant under unitary actions on the left and on the right of each $X_\mu$,
\be \label{eq:transfLaws}
X_\mu \rightarrow X'_\mu = O_{\mu \mu'} U_{(L)} X_{\mu'} U_{(R)}^{\dagger}\, ,
\ee
where $O$ is an orthogonal matrix in $\oD$, while $U_{(L)}$ and $U_{(R)}$ are two independent unitary matrices in two distinct copies of the group $\uN$, which we call left and right, respectively.
\end{definition}

The free energy of this model is given by

\be \label{eq:free-en-def}
\cF(\l) = \log \int [dX] \, e^{-S[X,X^\dagger]} \,,
\ee
with a measure $[dX] = \prod_{\mu,a,b} d{\rm Re} (X_\mu)_{ab}\, d{\rm Im}(X_\mu)_{ab}$ and an action $S[X,X^\dagger]$ studied in \cite{MR4450018} is given by
\be \label{eq:actionND}
S[X,X^\dagger] = ND \Bigl(  \tr\bigl[ X^\dagger_\m X_\m \bigr] - \frac{\l}{2}\sqrt{D} \, \tr\bigl[ X^\dagger_\m X_\n X^\dagger_\m X_\n \bigr] \Bigr)
\,,
\ee
where the interaction term is known as the tetrahedral interaction and $\lambda$ is the corresponding coupling constant. The coupling constant $\lambda$ has been scaled in such a way that it is kept fixed as $N,D \rightarrow + \infty$. This scaling yields well-defined large $N$ and large $D$ expansions \cite{MR4002672}, as further detailed below. The perturbative expansion in  $\lambda$ of the free energy $\cF(\lambda)$ enables a graphical description in terms of connected Feynman graphs. These Feynman graphs can be represented in three equivalent ways:

\begin{itemize}

\item[i)] as $4$-regular directed orientable maps with the following properties: each vertex has two outgoing and two ingoing half-edges, and additionally, the two outgoing (resp. two ingoing) half-edges appear on opposite sides of the vertex 
(see Figure \ref{fig:vertex_propa}, left column). 
It should be noted that self-loops/tadpoles and multiple edges are allowed in these maps.

\item[ii)] as directed orientable stranded graphs that are $4$-regular, obtained from the above representation by replacing each edge with a triple of parallel strands, two external and one internal, as shown in the top right corner of Figure \ref{fig:vertex_propa}.  Since the two $\uN$ symmetry groups' indices are carried by the external strands, they can be distinguished. 
When an external strand is on an edge connecting two half-edges, it is referred to as being \emph{left} (resp.\ \emph{right}) depending on which side of the edge it is on.  
The internal strand corresponds to the $\oD$ symmetry group.
According to the tetrahedral interaction's structure (see Figure \ref{fig:vertex_propa}, right column), each vertex's contraction pattern for the three different types of strands is determined. It is designed in such a way that a strand of a specific type (left, right, or internal) is always connected to another strand of the same type.  The strands in a Feynman graph are closed into loops.  External strand loops correspond to the faces of the underlying $4$-regular map and form a ribbon graph; we call them \emph{L- or R-faces}, depending on whether they are made up of left or right strands. The loops formed from internal strands will be referred to as \emph{straight faces} or \emph{$\oD$-loops}.

\item[iii)] as four-colored graphs obtained in the same manner as tensor models \cite{MR3616422}.
\end{itemize}

Notice that the Feynman graphs correspond to graphs embedded on Riemann surfaces and are dual to discretized surfaces; thus, the $\oD$-loops can be thought of as loops drawn on these discretized surfaces.
From the standpoint of tensor models of order-3, the Feynman graphs can also be viewed as dual to discretizations of three-dimensional pseudo-manifolds, which are indeed non-orientable in general \cite{MR3616422, MR3511784}.

\begin{figure}[htb]
    \centering
    \includegraphics[scale=0.9]{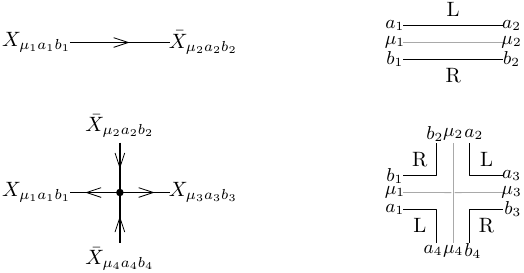}
    \caption{\small Propagator and vertex representations in the map (left) and stranded graph (right). The edge orientations are left implicit in the stranded representation for simplicity (note that fixing the orientation is equivalent to choosing R and L strands).}
    \label{fig:vertex_propa}
\end{figure}

The free energy in \cite{MR4002672} has double expansions in $1/N$ and $1/\sqrt{D}$,
\be \label{eq:free-en}
\cF(\l) = \sum_{g\in \mathbb{N}} N^{2-2g} \sum_{\ell\in \mathbb{N}} D^{1+g-\frac{\ell}{2}} \cF_{g,\ell}(\l)\,,
\ee
where $g\in\mathbb{N}$ is the genus of the corresponding $\uN^2$ ribbon graphs or, equivalently, the genus of the $4$-regular maps, which is also called the genus of the corresponding Feynman graphs. Through Euler's relation, it is defined by:
\be \label{eq:g}
2-2g = -e+v+f_L+f_R = -v +f \,,
\ee
where $e$ denotes the number of edges or propagators, $v$ the number of vertices ($e=2v$ because the maps are $4$-regular), $f_L$ (resp.\ $f_R$) denotes the number of L-faces (resp.\ R-faces) and $f=f_L+f_R$. Another parameter associated with Feynman graphs is the quantity $\ell\in\mathbb{N}$.
The parameter $\ell$ is given by:
\be \label{eq:ell}
\frac{\ell}{2} = 2 +  v - \frac{1}{2}f  - \varphi \,,
\ee
where $\varphi$ denotes the number of straight faces or $\oD$-loops.
The parameter $\ell$ introduces an extra grading in the standard genus expansion of matrix models, as shown in Eq.\ \eqref{eq:free-en}. As a result, we call it the \emph{grade}.\footnote{The parameter $\ell$ was called index in  
\cite{MR4002672}
whereas this name was used for a different quantity in 
\cite{MR3994575}.
To avoid confusion, in \cite{MR4450018}, a new name, grade, was introduced for it, and we follow here.}

Using Eq.\ \eqref{eq:g}, we obtain the following.
\begin{prop}
\label{prop:grade}
    The grade $\ell$ of Feynman graph can be expressed by the following relations:
    \begin{eqnarray}\label{eq:seven}
         \frac{\ell}{2} &=& 1+g+ \frac{1}{2}v-\varphi \,,\cr
         \f{\ell}{2} &=& g_L + g_R \,,
    \end{eqnarray}
    where $g_L$ (resp. $g_R$) is the genus of the ribbon graph obtained by deleting the $L$ (resp. $R$) strands from a Feynman graph in the stranded representation. Since $g_i \in\frac{\mathbb{N}}{2}$ for $i=L,R$ (the corresponding ribbon graphs are not necessarily orientable), $\ell\in\mathbb{N}$ follows. 
\end{prop}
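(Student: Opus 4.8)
The plan is to obtain both identities purely from Euler's relation \eqref{eq:g} and the defining equation \eqref{eq:ell}, together with a short bookkeeping of the three species of strand-loops in the stranded representation.

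First I would prove $\tfrac{\ell}{2} = 1 + g + \tfrac12 v - \varphi$ by eliminating $f$. Since the maps are $4$-regular, $e = 2v$, so \eqref{eq:g} reads $2 - 2g = -v + f$, i.e.\ $f = 2 - 2g + v$ and $\tfrac12 f = 1 - g + \tfrac12 v$. Substituting into \eqref{eq:ell} gives $\tfrac{\ell}{2} = 2 + v - (1 - g + \tfrac12 v) - \varphi = 1 + g + \tfrac12 v - \varphi$, which is routine algebra.

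For the second identity, the key point is that erasing one family of external strands (say all $L$-strands) from the stranded representation produces a genuine connected ribbon graph: by the vertex contraction pattern of the tetrahedral interaction (Figure \ref{fig:vertex_propa}) the surviving external strands and the internal $\oD$-strands are never glued to one another, so each $4$-valent vertex still carries a consistent ribbon structure, with the same underlying (connected) graph, hence $v$ vertices and $e = 2v$ edges now each carrying two strands. Its boundary loops are exactly the $f_R$ $R$-faces together with the $\varphi$ straight faces, so it has $f_R + \varphi$ faces; symmetrically, erasing the $R$-strands leaves a ribbon graph with $f_L + \varphi$ faces. Capping the boundary components and writing the Euler characteristic of the resulting closed surface uniformly as $\chi = 2 - 2g_i$ with $g_i \in \tfrac{\mathbb N}{2}$ (a half-integer precisely when the surface is non-orientable), I get $2 - 2g_L = -v + f_R + \varphi$ and $2 - 2g_R = -v + f_L + \varphi$. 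Adding these and using $f = f_L + f_R$ yields $g_L + g_R = 2 + v - \tfrac12 f - \varphi = \tfrac{\ell}{2}$ by \eqref{eq:ell}. Finally $2g_L$ and $2g_R$ are nonnegative integers, whence $\ell = 2(g_L + g_R) \in \mathbb N$.

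The main obstacle — and it is a mild one — is the careful justification that deleting a family of strands yields a well-defined ribbon graph and that its loop set decomposes as the surviving external faces together with the $\oD$-loops; this needs a direct inspection of the strand routing at the tetrahedral vertex in Figure \ref{fig:vertex_propa}. Everything else is two applications of the Euler characteristic formula.
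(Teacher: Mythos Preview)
Your proposal is correct and is precisely the intended argument. The paper does not give an explicit proof of this proposition---it simply records it as a consequence of Eq.~\eqref{eq:g}---and your derivation fills in exactly the expected details: eliminate $f$ via Euler's relation for the first identity, and apply Euler's relation to the two ribbon graphs obtained by erasing one external-strand family for the second.
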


We make an important remark here that indeed, our model defined by \eqref{eq:free-en} makes direct contact with the literature in tensor models in the following sense.
For $D=N$, we recover the large $N$ structure of  $\mathrm{U}(N)^2 \times \mathrm{O}(N)$ 
\cite{MR3192655}
and $\mathrm{O}(N)^3$ 
\cite{MR3555413}
tensor models
\be \label{eq:free-enND}
\cF(\l) = \sum_{\om \in \frac{\mathbb{N}}{2}} N^{3-\om}  \cF_{\om}(\l)\,,
\ee
where $\omega$ is an important combinatorial quantity called \emph{Gurau degree} \cite{MR3192655, MR3555413}.
\be \label{eq:omega}
\omega = g+\frac{\ell}{2} \,,
\ee
which is also called the \emph{Gurau degree}{\footnote{ It may be called \emph{index} in the more general context of \cite{MR3994575}.} \cite{MR2802384, MR2909101}
appearing in tensor model literature, which governs the large $N$ expansion of tensor models.

We  reorganize \eqref{eq:free-en} as
\be \label{eq:free-en-2}
\cF(\l) = \sum_{g\in \mathbb{N}} \left(\f{N}{\sqrt{D}}\right)^{2-2g} \sum_{\ell \in \mathbb{N}} D^{2-\frac{\ell}{2}} \cF_{g,\ell}(\l)\, .
\ee
We keep fixed $M \equiv \f{N}{\sqrt{D}}$
as we take $N\to\infty$ and $D\to\infty$. Then, 
\be \label{eq:free-en-0}
\lim_{ \substack{N,D\to\infty \\ M<\infty} } \f{1}{D^2} \cF(\l) = \sum_{g\geq 0} M^{2-2g}  \cF_{g,0}(\l) \equiv \cF^{(0)}(M,\l)\, .
\ee
In other words, by allowing $D\neq N$, but keeping the ratio constant, we obtain a \emph{double-scaling limit} that selects Feynman graphs with $\ell=0$, but of any genus. Since such graphs are much fewer than all the possible graphs, they lead to a summable series as worked out in \cite{MR4450018}.

In our present work, we will look at higher $\ell$ graphs corresponding to higher orders in this double scaling limit.

\medskip

\section{
Recursive characterization of Feynman graphs with $\ell=1,2$ with arbitrary $g$ 
}
\label{sec:recursion}

We perform a detailed combinatorial analysis of Feynman graphs with grades $\ell=1$ and $\ell=2$ for arbitrary $g$ in this section,
culminating in a complete characterization of 
Theorem \ref{propo:g1ell1},
Theorem \ref{propo:g1ell2},
and Theorem \ref{thm:induction}.

\subsection{Combinatorial structures of Feynman graphs}
\label{sec:comstruc}

We discuss some combinatorial structures on the $\uN^2 \times \oD$ multi-matrix model.
The majority of the structures listed below were first introduced in tensor model literature \cite{MR3549799, MR3336566}.

Remark that in this present work, we will work with unrooted Feynman graphs, whereas the preceding work \cite {MR4450018} exclusively worked with rooted Feynman graphs.

\medskip

Let us first introduce the \emph{cycle graph}, a specific Feynman graph that corresponds to an oriented edge; see Figure \ref{fig:cycle-graph} for illustration. It is characterized by $v=0$, $f=2$, and $\varphi=1$ by convention and has $g=\ell=\omega=0$ in particular.

\begin{figure}[htb]
    \centering
    \includegraphics[scale=.3]{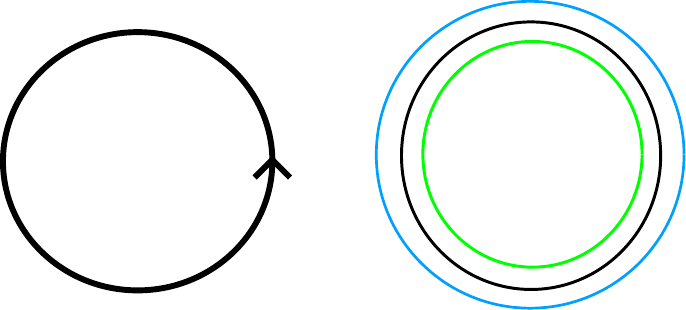}
    \caption{\small The cycle graph on the left and its stranded representation on the right. The black line on the right panel represents the $\mathrm{O}(D)$-loop or straight face, while the green and blue lines represent the L- and R-faces respectively.}
    \label{fig:cycle-graph}
\end{figure}

\medskip

\subsubsection{Melons and melon-free graphs}
\label{sec:MelonFree}

\begin{definition}[Melonic Feynman graph]
\label{def:melon}
\
\begin{itemize}
    \item[i)]  An  \emph{elementary melonic 2-point subgraph} is the subgraph represented on the right side of Figure \ref{fig:melon}, where the two external half-edges or legs are distinct.

    \item[ii)] A \emph{melonic insertion} on an edge $e$ of a Feynman graph is the replacement of $e$ by an elementary melonic $2$-point subgraph, respecting the orientation (see Figure \ref{fig:melon}). The reverse operation is called a \emph{melonic removal}. 

    \item[iii)] A  \emph{melonic} Feynman graph is a graph which can be reduced to the cycle-graph by successive removals of elementary melonic $2$-point subgraphs. 

\end{itemize}
\end{definition}

\begin{figure}[htb]
    \centering
    \includegraphics[scale=.7]{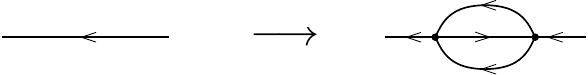}
    \caption{\small Insertion of a $2$-point elementary melonic subgraph.}
    \label{fig:melon}
\end{figure}

A Feynman graph is said to be melon-free if it lacks any elementary melonic $2$-point subgraph.  The 
cycle-graph is the only 
Feynman graph with $\omega=g=\ell=0$ that is not a melon.  Because of the following result 
\cite{MR3336566, MR3549799}, we can limit our investigation to melon-free 
Feynman graphs.

\begin{remark}
\label{rk:melon}
    From \cite{MR3336566, MR3549799} we can make the following remarks:

$\bullet$ The melonic Feynman graphs are exactly the Feynman graphs of Gurau degree $\omega=0$, that is of genus $g=0$ and of grade $\ell=0$.

$\bullet$ The genus, grade, and Gurau degree are all preserved when a melonic insertion or removal is performed.

$\bullet$ The closed equation 
\be \label{eq:melonEq}
T(\lambda)=1+\lambda^2 T(\lambda)^4 \,,
\ee
is obeyed by the generating function  $T(\lambda)$ of melonic Feynman graphs.
Remark that indeed $T(\lambda)$ is the generating function for Fuss-Catalan numbers.

$\bullet$ 
A sequence of successive melonic removal can produce any Feynman graph $G$ from a melon-free Feynman graph, known as the core of $G$. The cycle-graph, in particular, is the core of any melonic Feynman graph.
\end{remark}

\medskip

\subsubsection{Schemes}
\label{sec:schemes}

\begin{definition}[Dipole]

A \emph{dipole}  is defined as a two-edge subgraph  on two vertices with a face of length two. Dipoles are classified into three types (see  Figure \ref{fig:03dipoles}):
\begin{itemize}
\item 
An N-dipole, which contains a length-two straight face ($\oD$-loop);
\item 
an L-dipole, which contains a length-two L-face;
\item 
an R-dipole, which contains a length-two R-face.
\end{itemize}
\end{definition}

\begin{figure}
    \centering
\begin{minipage}[t]{0.95\textwidth}
\centering
\def\svgwidth{0.9\columnwidth}
\tiny{
\begingroup%
  \makeatletter%
  \providecommand\color[2][]{%
    \errmessage{(Inkscape) Color is used for the text in Inkscape, but the package 'color.sty' is not loaded}%
    \renewcommand\color[2][]{}%
  }%
  \providecommand\transparent[1]{%
    \errmessage{(Inkscape) Transparency is used (non-zero) for the text in Inkscape, but the package 'transparent.sty' is not loaded}%
    \renewcommand\transparent[1]{}%
  }%
  \providecommand\rotatebox[2]{#2}%
  \newcommand*\fsize{\dimexpr\f@size pt\relax}%
  \newcommand*\lineheight[1]{\fontsize{\fsize}{#1\fsize}\selectfont}%
  \ifx\svgwidth\undefined%
    \setlength{\unitlength}{531.24701072bp}%
    \ifx\svgscale\undefined%
      \relax%
    \else%
      \setlength{\unitlength}{\unitlength * \real{\svgscale}}%
    \fi%
  \else%
    \setlength{\unitlength}{\svgwidth}%
  \fi%
  \global\let\svgwidth\undefined%
  \global\let\svgscale\undefined%
  \makeatother%
  \begin{picture}(1,0.22966819)%
    \lineheight{1}%
    \setlength\tabcolsep{0pt}%
    \put(0,0){\includegraphics[width=\unitlength,page=1]{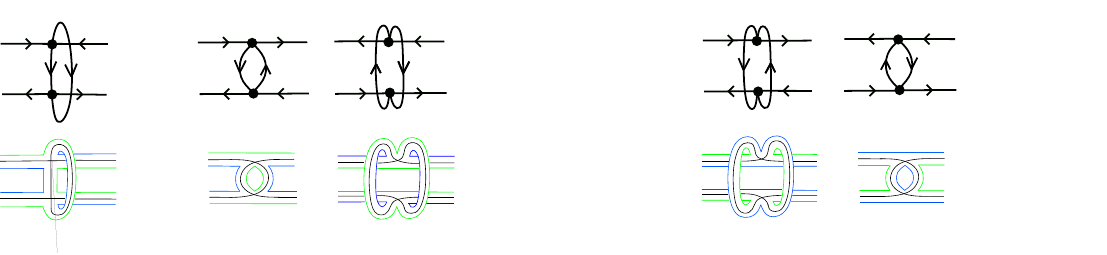}}%
    \put(0.01594667,0.00418275){\color[rgb]{0,0,0}\makebox(0,0)[lt]{\lineheight{1.25}\smash{\begin{tabular}[t]{l}\scalebox{1.5}{N-dipole}\end{tabular}}}}%
    \put(0.31499512,0.00418263){\color[rgb]{0,0,0}\makebox(0,0)[lt]{\lineheight{1.25}\smash{\begin{tabular}[t]{l}\scalebox{1.5}{L-dipole}\end{tabular}}}}%
    \put(0.76965607,0.00428436){\color[rgb]{0,0,0}\makebox(0,0)[lt]{\lineheight{1.25}\smash{\begin{tabular}[t]{l}\scalebox{1.5}{R-dipole}\end{tabular}}}}%
    \put(0,0){\includegraphics[width=\unitlength,page=2]{03dipolesnew.pdf}}%
    \put(0.40393167,0.16110666){\color[rgb]{0,0,0}\makebox(0,0)[lt]{\lineheight{1.25}\smash{\begin{tabular}[t]{l}\scalebox{2}{$\simeq$}\end{tabular}}}}%
    \put(0,0){\includegraphics[width=\unitlength,page=3]{03dipolesnew.pdf}}%
    \put(0.86824857,0.16466782){\color[rgb]{0,0,0}\makebox(0,0)[lt]{\lineheight{1.25}\smash{\begin{tabular}[t]{l}\scalebox{2}{$\simeq$}\end{tabular}}}}%
  \end{picture}%
\endgroup%

}
\end{minipage}
    \caption{\small 
    An N-dipole, two ways of representing an L-dipole, and two ways of representing an R-dipole
    are shown from left to right. 
    The upper row shows the so-called Feynman graph representation, while the lower row depicts the associated stranded structures. 
    An L-dipole (resp.\ R-dipole) is defined as having a face with a length-two L- (resp.\ R-) face. 
   Green (resp. blue) strands are L- (resp. R-) faces.
   }
    \label{fig:03dipoles}
\end{figure}

According to Figure \ref{fig:03dipoles}, which shows one pair on the left side of the dipole and the other pair on the right, the four half-edges/external legs incident to a dipole can be canonically divided into two pairs. The composition of dipoles into ladders, which we now define, will benefit from this distinction.

\begin{remark}
A priori, an $\oD$-loop of length $2$ can be of two forms given by the first two subgraphs in Figure \ref{dipoles}. The first one is the familiar N-dipole. However, one can observe that the second subgraph is in fact also a N-dipole by rotating a vertex by $180$ degrees as shown in the figure. 
This remark is simply due to the fact that our multi-orientable graphs are embedded graphs, i.e., combinatorial maps.
\begin{figure}[H]
\centering
\begin{minipage}[t]{0.8\textwidth}
\centering
\def\svgwidth{0.45\columnwidth}
\tiny{
\begingroup%
  \makeatletter%
  \providecommand\color[2][]{%
    \errmessage{(Inkscape) Color is used for the text in Inkscape, but the package 'color.sty' is not loaded}%
    \renewcommand\color[2][]{}%
  }%
  \providecommand\transparent[1]{%
    \errmessage{(Inkscape) Transparency is used (non-zero) for the text in Inkscape, but the package 'transparent.sty' is not loaded}%
    \renewcommand\transparent[1]{}%
  }%
  \providecommand\rotatebox[2]{#2}%
  \newcommand*\fsize{\dimexpr\f@size pt\relax}%
  \newcommand*\lineheight[1]{\fontsize{\fsize}{#1\fsize}\selectfont}%
  \ifx\svgwidth\undefined%
    \setlength{\unitlength}{233.37348517bp}%
    \ifx\svgscale\undefined%
      \relax%
    \else%
      \setlength{\unitlength}{\unitlength * \real{\svgscale}}%
    \fi%
  \else%
    \setlength{\unitlength}{\svgwidth}%
  \fi%
  \global\let\svgwidth\undefined%
  \global\let\svgscale\undefined%
  \makeatother%
  \begin{picture}(1,0.21788055)%
    \lineheight{1}%
    \setlength\tabcolsep{0pt}%
    \put(0,0){\includegraphics[width=\unitlength,page=1]{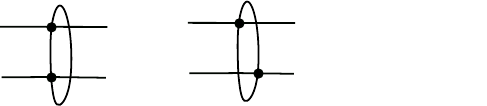}}%
    \put(0.61411189,0.10836178){\color[rgb]{0,0,0}\makebox(0,0)[lt]{\lineheight{1.25}\smash{\begin{tabular}[t]{l}\scalebox{2}{$=$}\end{tabular}}}}%
    \put(0,0){\includegraphics[width=\unitlength,page=2]{dipoles.pdf}}%
  \end{picture}%
\endgroup%

}
\end{minipage}
\caption{$\oD$-loops of length $2$ are contained in N-dipoles.}
\label{dipoles}
\end{figure}
\end{remark}

\begin{remark}
The arrows on the edges carry essential information regarding faces. 
However, as can be deduced from the expressions of $\ell$ in \eqref{eq:seven} which are symmetric under the swap of L- and R-faces (therefore under the overall flip of arrows of the edges), the underlying combinatorial map without arrows on edges determines the value of $\ell$ and also $g$.
We demonstrate in Figure \ref{fig:IO} this fact.
We will heavily use this fact in later proofs for constructing planar $g=0$ graphs.
\end{remark}
\begin{figure}[htb]
    \centering
\begin{minipage}[t]{0.95\textwidth}
\centering
\def\svgwidth{1\columnwidth}
\tiny{
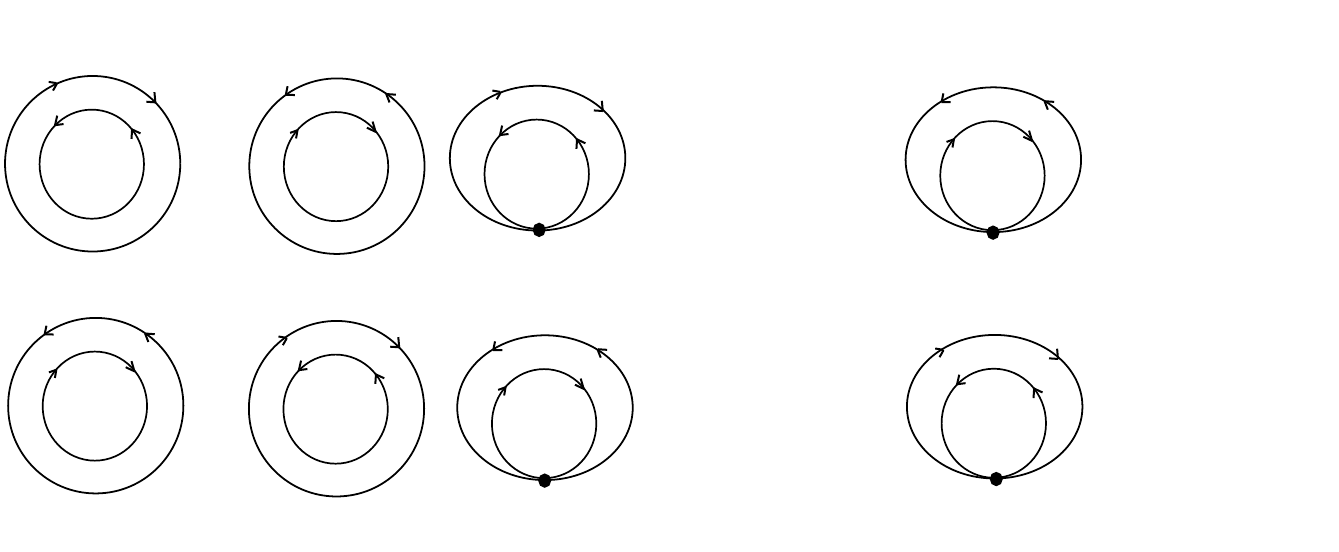
}
    \caption{\small 
Some examples of our multi-orientable tensor model graphs.
Notice that the arrows on edges are flipped on the graphs on the top row and the ones on the bottom row, but still give rise to the same values of $\ell$ and $g$.
}
    \label{fig:IO}
    \end{minipage}
\end{figure}

\begin{definition}[Ladder]
\

$\bullet$ A \emph{ladder} is a sequence of $n\geq2$ dipoles $(d_1, ..., d_n)$ in which two consecutive dipoles $d_i$ and $d_{i+1}$ are connected by two edges that involve two half-edges on the same side of $d_i$ and two half-edges on the same side of $d_{i+1}$. 

$\bullet$ 
If a ladder has rungs of different types, it is said to be \emph{broken}, or a B-ladder;
otherwise, it is \emph{unbroken}. As a result, there are three types of unbroken ladders: N-, L-, and R-ladders, which contain only rungs of type N, L, and R, respectively. 

$\bullet$ If a ladder is not included in a larger ladder, it is \emph{maximal}.
\end{definition}
The dipoles in a ladder are referred to as \emph{rungs} and the edges connecting them are divided into two \emph{rails}. 
Furthermore 
to keep track of the external face structure of ladders, we must divide N-ladders into two subfamilies: $\Ne$-ladders with an even number of rungs and $\No$-ladders with an odd number of rungs. This is illustrated in Figure \ref{fig:0chainvertexes}.

\begin{claim} 
Each ladder can be uniquely extended to a maximal ladder. Furthermore, any two distinct maximal ladders do not share any vertices.
\end{claim}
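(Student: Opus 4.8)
The plan is to prove the two assertions from a single local-determinism property: in a ladder, the way it can be prolonged past either of its two ends is completely forced by the ambient graph. Given a ladder $(d_1,\dots,d_n)$, look at one end, say $d_1$; by definition $d_2$ is attached on one side of $d_1$, so the two half-edges on the \emph{other} side of $d_1$ are free, and I denote by $e,e'$ the two edges carrying them. I would first show that a prolongation by a new rung $d_0$ exists if and only if $e$ and $e'$ both run to a common pair of vertices $\{v_0,w_0\}$, these two vertices form a dipole, and the endpoints of $e,e'$ at that dipole lie on one common side of it; and that when such $d_0$ exists it is \emph{unique}, since it is read off directly from $e,e'$. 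Thus there is a canonical one-step prolongation at each end, with no choices. Iterating at both ends gives an increasing sequence of ladders inside the finite Feynman graph, so the process stabilises at a ladder $L$ admitting no prolongation; $L$ is maximal, contains $(d_1,\dots,d_n)$, and — because every step was forced — is the unique maximal ladder doing so.

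One degenerate possibility must be excluded: the prolongation could cycle, the $d_i$ closing up into a cyclic chain with $d_n$ reattached to $d_1$. In that case every side of every $d_i$ is used to attach a neighbour, so all half-edges of all vertices occurring in the $d_i$ are consumed by rungs and rails, and the cyclic chain is then an entire connected component of the graph. Hence for a connected Feynman graph either the whole graph is a single cyclic chain of dipoles (and the Claim is trivial for it) or no cyclic prolongation occurs; in all cases ``maximal ladder'' is well defined.

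For the second assertion, the forced-prolongation property also shows that a maximal ladder is reconstructed, uniquely, from any one of its rungs $d$ (run the prolongation both ways from $d$), so every dipole lies in at most one maximal ladder. Now suppose distinct maximal ladders $L,L'$ shared a vertex $v$, and pick rungs $d\subseteq L$ and $d'\subseteq L'$ through $v$. If $d=d'$ then $L=L'$, a contradiction, so $d\neq d'$; then the four half-edges at $v$ split into the pair used internally by $d$ and the pair used internally by $d'$. Here I would use the rigidity of the tetrahedral vertex — two ingoing and two outgoing half-edges, with the two of each kind on opposite sides, equivalently the three strand types glued by the three fixed-point-free involutions of $\{1,2,3,4\}$ — to constrain which pairs of half-edges at $v$ can be the ``internal'' pair of a dipole and, consequently, on which side of that dipole the remaining two half-edges at $v$ lie. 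The expected conclusion is that the side of $d$ at $v$ cannot be a valid ladder-connection (its two edges run straight into $d'$, which shares $v$ and hence cannot be the next rung), forcing both $d$ and $d'$ to be \emph{end} rungs; one then shows this pins $L$ and $L'$ down to coincide, or that the configuration cannot occur in a connected Feynman graph of the model — either way contradicting $L\neq L'$.

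The routine parts are the one-step prolongation map and the termination argument; the genuine difficulty is this last step, namely controlling exactly what a vertex shared by two dipoles looks like and ruling out that it is shared by two \emph{different} maximal ladders. A clean way to organise it is to pass to the auxiliary graph whose nodes are the dipoles and whose arcs join two dipoles that are ladder-compatibly connected along a common side: each dipole has at most one neighbour on each of its two sides, so this graph is a disjoint union of paths and cycles, its maximal paths are exactly the maximal ladders, and the whole problem reduces to the local statement that two such maximal paths cannot meet at a vertex of the original graph — which is again the half-edge analysis at $v$ above.
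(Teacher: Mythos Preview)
The paper does not prove this Claim at all; it is asserted without proof, presumably as a known fact imported from the tensor-model references cited just above (Gurau--Schaeffer, Fusy--Tanasa). So there is no ``paper's own proof'' to compare against, and your proposal must be judged on its own.

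For the first assertion your argument is essentially complete and correct: the one-step prolongation past an end rung is indeed determined by the two edges leaving that end, iteration in a finite graph terminates, and the cyclic degenerate case is correctly identified and disposed of. The auxiliary ``dipole graph'' at the end is a clean repackaging of the same idea.

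For the second assertion, however, your write-up stops exactly where the content is. You reduce to two dipoles $d\neq d'$ sharing a vertex $v$, observe that they must use different pairs of half-edges at $v$, and then write ``the expected conclusion is \dots'' and ``one then shows \dots''. That \emph{is} the proof, and it is not done. Concretely, what needs to be checked is: (i) which pairs of half-edges at a tetrahedral vertex can bound a length-two face of each type (this is forced by the strand routing, so at most one dipole of each type passes through $v$, and two distinct dipoles through $v$ have different types and use complementary or overlapping pairs of half-edges); and then (ii) how the ``sides'' of $d$ and $d'$ sit relative to each other at $v$, to see whether the ladder-connection condition can be met for both simultaneously. Until that local analysis is actually carried out, the second half of the Claim is not established. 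Your auxiliary dipole-graph reformulation does not bypass this: it shows maximal ladders are vertex-disjoint \emph{as paths in the dipole graph}, which only says they share no \emph{rung}; sharing a Feynman-graph \emph{vertex} is a strictly weaker overlap and needs the half-edge argument you outlined but did not execute.
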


\medskip
The following replaces maximal ladders with new types of 4-point vertices that we refer to as \emph{ladder-vertices}\footnote{In the citation \cite{MR3336566}, they are referred to as chain-vertices.}. There are four different kinds of ladder-vertices; $\Ne$-, $\No$-, L-, R-, and B-vertices. Figure \ref{fig:0chainvertexes} serves as an illustration. 
We simply refer to them as N-vertices when we do not need to distinguish between $\Ne$- and $\No$-vertices.

\begin{figure}[h]
\centering
     \begin{minipage}[t]{0.8\textwidth}
      \centering
\def\svgwidth{0.7\columnwidth}
\tiny{
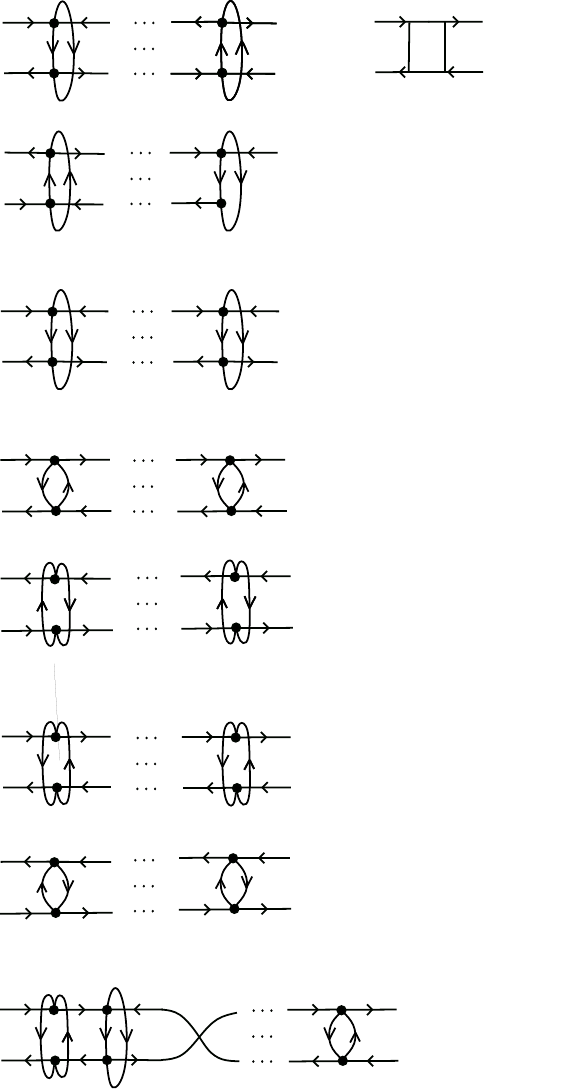
}
\caption{ {\small {Maximal ladders and their ladder-vertices.
We have represented the maximal ladders they represent (left panel) and the corresponding ladder-vertices (right panel).
By convention, the orientations of the external legs on the two sides of the B-ladder-vertex are fixed to be consistent as shown. In particular, this might entail `twisting' the two rails at one end of the corresponding B-ladder in the case of a B-vertex.
The B-ladders shown above are just examples.
}}}
\label{fig:0chainvertexes}
\end{minipage}
\end{figure}

\begin{definition}[Scheme]
Assume $G$ is a connected, melon-free Feynman graph. The \emph{scheme} $S_G$ of $G$ is the graph obtained by replacing any maximal ladder with the ladder-vertex of the corresponding type.
\end{definition}
We will heavily rely on \emph{schemes}, which describe equivalence classes of Feynman graphs defined up to melon and ladder insertions, in the remaining sections of the paper.

\begin{remark}
More broadly, we will consider the larger family of \emph{Feynman graphs with ladder-vertices}, which consists of all connected graphs constructed from edges, standard vertices, and ladder-vertices (see Figure \ref{fig:0vertexes}).  A scheme is, by definition, a Feynman graph with ladder-vertices. 
The converse, however, is not always true \cite{MR4450018}.
\end{remark}

\begin{figure}[htb]
    \centering
\begin{minipage}[t]{0.95\textwidth}
\centering
\def\svgwidth{0.8\columnwidth}
\tiny{
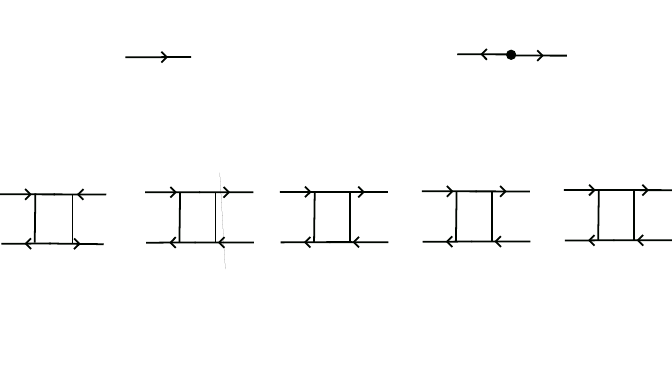
}
    \caption{\small 
    Schemes are made out of: edges, standard vertices and ladder-vertices.
We also show the stranded structure of the external faces of the ladder-vertices on the bottom row.
For the standard edge and the standard vertex, the stranded structure is shown in 
Figure \ref{fig:vertex_propa}.
    }
    \label{fig:0vertexes}
    \end{minipage}
\end{figure}

The following result is well known in \cite{MR3336566}.
\begin{prop}
\label{schemegraphcorresp}
   Let $G_1$ and $G_2$ be two Feynman graphs that are connected but do not contain any melons.
   If $S_{G_1} = S_{G_2}$, then:
\be
g(G_1) = g(G_2) \,, \qquad \omega(G_1) = \omega(G_2) \qquad \mathrm{and} \qquad \ell(G_1) = \ell(G_2)\,. 
\ee
\end{prop}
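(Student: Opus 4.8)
The plan is to reduce the whole statement to a single local move. Since $\omega = g + \tfrac{\ell}{2}$ by \eqref{eq:omega}, it suffices to show that $g$ and $\ell$ depend only on the scheme; the assertion for $\omega$ then follows. So the first step is to pin down how two melon-free graphs with a common scheme $S := S_{G_1} = S_{G_2}$ can differ: since neither contains a melon, the only freedom in reconstructing a Feynman graph from $S$ lies in each ladder-vertex of $S$, which must be expanded into some maximal ladder of the prescribed type --- i.e.\ a choice of the number of rungs, together with (for a B-ladder-vertex) a choice of broken rung sequence realizing the conventional external structure of Figure \ref{fig:0chainvertexes}. Using the Claim above (each ladder extends uniquely to a maximal one, and distinct maximal ladders are vertex-disjoint), one can then pass from $G_1$ to $G_2$ through a finite sequence of elementary modifications, each altering a single maximal ladder. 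Thus the proposition reduces to the assertion that inserting one extra dipole rung into a maximal ladder leaves $g$ and $\ell$ unchanged.

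The second step is that elementary computation, performed in the stranded representation. Inserting one rung adds $\Delta v = 2$ vertices and, after cutting the two affected rails and re-healing them around the new dipole, $\Delta e = 4$ edges, consistently with $4$-regularity. One then traces the three strand species through the modified region: in each of the three cases (an N-, L- or R-rung, which also covers rungs sitting inside a broken ladder) a short case analysis gives $\Delta f = \Delta(f_L + f_R) = 2$ and $\Delta\varphi = 1$. Feeding this into \eqref{eq:g} yields $\Delta(2 - 2g) = -\Delta e + \Delta v + \Delta f = -4 + 2 + 2 = 0$, and into \eqref{eq:ell} yields $\Delta(\tfrac{\ell}{2}) = \Delta v - \tfrac12\Delta f - \Delta\varphi = 2 - 1 - 1 = 0$; this is essentially the dipole/ladder book-keeping of \cite{MR3336566}. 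Combined with the first step, $g(G_1) = g(G_2)$ and $\ell(G_1) = \ell(G_2)$, and then $\omega(G_1) = \omega(G_2)$ by \eqref{eq:omega}. One could alternatively argue through Proposition \ref{prop:grade}, checking that the move preserves the genera $g_L, g_R$ of the L- and R-deleted ribbon graphs, since single-dipole reductions are genus-preserving in the usual fatgraph sense.

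The step I expect to be the real obstacle is the broken case. One must establish that every B-ladder realizing a given B-ladder-vertex carries the same external strand structure --- including the conventional `twisting' of the two rails at one end described in Figure \ref{fig:0chainvertexes} --- so that the rung-insertion move genuinely stays within the family of graphs whose scheme is $S$, and so that the case analysis above indeed collapses to the three basic dipole types. A convenient way to organize everything, which I would keep in reserve, is to define $g$, $\ell$ and $\omega$ directly on \emph{Feynman graphs with ladder-vertices} by equipping each ladder-vertex with the prescribed stranded external-face structure of Figure \ref{fig:0vertexes} and then applying \eqref{eq:g}--\eqref{eq:ell} verbatim; one checks that this agrees with the usual invariants on genuine Feynman graphs and is unaffected by expanding a ladder-vertex into a maximal ladder of its type, whence $g(G_i) = g(S)$, $\ell(G_i) = \ell(S)$, $\omega(G_i) = \omega(S)$ for $i = 1,2$ and the proposition is immediate.
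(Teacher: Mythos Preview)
The paper does not actually prove this proposition: it is stated as a known result and attributed to \cite{MR3336566} (see the sentence ``The following result is well known in \cite{MR3336566}'' immediately preceding it). So there is no in-paper proof to compare against; the relevant question is whether your argument stands on its own.

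Your approach is sound and is in fact the standard one. The reduction step is correct: two connected melon-free graphs with a common scheme $S$ differ only in the realization of each ladder-vertex as a maximal ladder of its type, and one can interpolate between any two such realizations by inserting or deleting single rungs inside maximal ladders (the Claim on uniqueness of maximal extensions and their vertex-disjointness makes this clean). The local computation $\Delta v=2$, $\Delta e=4$, $\Delta f=2$, $\Delta\varphi=1$ is exactly right for each of the three rung types and feeds directly into \eqref{eq:g} and \eqref{eq:ell}. Your alternative formulation --- defining $g,\ell,\omega$ directly on Feynman-graphs-with-ladder-vertices via the external stranded structure of Figure \ref{fig:0vertexes} and checking invariance under expanding a ladder-vertex --- is precisely what the paper adopts immediately \emph{after} the proposition in order to extend these invariants to schemes, so you have correctly anticipated the intended use.

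On the B-ladder subtlety you flag: you are right that this is where the bookkeeping requires care, but the resolution is built into the definition. The B-vertex is \emph{defined} with a fixed external stranded structure (Figure \ref{fig:0chainvertexes}, bottom row), and the convention of twisting one pair of rails if necessary ensures that every broken ladder realizing it has that same external structure. Hence a single-rung insertion inside a B-ladder is again a purely internal move, and the three-case analysis (N-, L-, or R-rung) applies verbatim. So there is no genuine obstacle, only a point where one must invoke the convention explicitly.
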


A Feynman graph with ladder vertices $G$ can itself be mapped to a unique scheme $S_G$ (obtained by a consistent replacement of melon two-point functions by propagators, and ladders by their corresponding ladder-vertices), which allows to define: $g(G):= g(S_G)$, $\omega(G):= \omega(S_G)$ and $\ell(G):= \ell(S_G)$.  In other words, the genus, Gurau degree, and grade of any equivalent class of graphs defined by a scheme are constant. 

Finally, distinguishing between two kinds of Feynman graphs with ladder-vertices will be useful. 
The family of connected Feynman graphs is included in the family of Feynman graphs with ladder-vertices whether they are \emph{two-particle reducible} (2PR) or \emph{two-particle irreducible} (2PI).
A Feynman graph with ladder-vertices $G$ is said to be 2PR if it contains an \emph{two-edge-cut}, which is a pair of edges in $G$ whose removal disconnects $G$. Otherwise, $G$ is 2PI. 

It is not hard to prove that a connected melon-free Feynman graph $G$,  $G$ is 2PI if and only if $S_G$ is 2PI.
In general, the 2PR/2PI property is transitive when a maximal ladder is replaced by a ladder-vertex in the class of melon-free Feynman graphs with ladder-vertices.

It is worth noting that we are working on Feynman graphs that allow half-edges to be attached to the vertices. As a result, we assume that the graphs examined throughout this work may have half-edges attached to them. Additionally, we may refer to Feynman graphs simply as graphs.

\medskip
\subsubsection{Combinatorial moves on  Feynman graphs}
\label{sec:CombMoves}

We now present a set of local operations on Feynman graphs with ladder-vertices and investigate how they affect the genus, grade, and Gurau degree of these graphs. 

A dipole or ladder-vertex \emph{contraction} is defined as the operation of 1) removing the dipole or ladder-vertex and 2) reconnecting the two half-edges on each side of the dipole or ladder-vertex. A dipole or ladder-vertex \emph{insertion} is the reverse operation. This is shown in Figure \ref{fig:0movesdipole}. 

\begin{figure}[htb]
\centering
\begin{minipage}[t]{0.8\textwidth}
\centering
\def\svgwidth{0.8\columnwidth}
\tiny{
\begingroup%
  \makeatletter%
  \providecommand\color[2][]{%
    \errmessage{(Inkscape) Color is used for the text in Inkscape, but the package 'color.sty' is not loaded}%
    \renewcommand\color[2][]{}%
  }%
  \providecommand\transparent[1]{%
    \errmessage{(Inkscape) Transparency is used (non-zero) for the text in Inkscape, but the package 'transparent.sty' is not loaded}%
    \renewcommand\transparent[1]{}%
  }%
  \providecommand\rotatebox[2]{#2}%
  \newcommand*\fsize{\dimexpr\f@size pt\relax}%
  \newcommand*\lineheight[1]{\fontsize{\fsize}{#1\fsize}\selectfont}%
  \ifx\svgwidth\undefined%
    \setlength{\unitlength}{354.13970226bp}%
    \ifx\svgscale\undefined%
      \relax%
    \else%
      \setlength{\unitlength}{\unitlength * \real{\svgscale}}%
    \fi%
  \else%
    \setlength{\unitlength}{\svgwidth}%
  \fi%
  \global\let\svgwidth\undefined%
  \global\let\svgscale\undefined%
  \makeatother%
  \begin{picture}(1,0.77895659)%
    \lineheight{1}%
    \setlength\tabcolsep{0pt}%
    \put(0,0){\includegraphics[width=\unitlength,page=1]{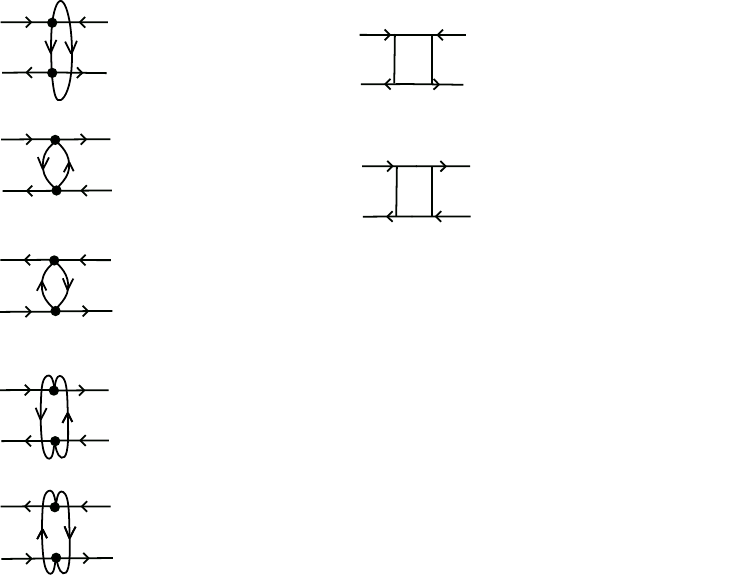}}%
    \put(0.54820885,0.68921724){\color[rgb]{0,0.01960784,0}\makebox(0,0)[lt]{\lineheight{1.25}\smash{\begin{tabular}[t]{l}\scalebox{1}{${\rm N}_{\rm o}$}\end{tabular}}}}%
    \put(0.55505201,0.5116766){\color[rgb]{0,0.01960784,0}\makebox(0,0)[lt]{\lineheight{1.25}\smash{\begin{tabular}[t]{l}\scalebox{1}{${\rm X}$}\end{tabular}}}}%
    \put(0,0){\includegraphics[width=\unitlength,page=2]{0movesdipolenew.pdf}}%
  \end{picture}%
\endgroup%

}
\caption{\small Dipole contraction/insertion (left panel) and ladder-vertices (right panel). $
{\rm X}
\in \{
{\rm L}, {\rm R},
{\rm N}_{\rm e}, {\rm B}\}$. 
}
\label{fig:0movesdipole}
\end{minipage}
\end{figure}

\begin{remark}
It should be noted that the graph may become disconnected if a dipole or a ladder-vertex contracts. Furthermore, the number of connected components can only increase by one. If the contraction of a dipole or a ladder-vertex increases the number of connected components by one, it is said to be \emph{separating}; otherwise, it is said to be \emph{non-separating}. 

In the topological sense, a separating (resp. non-separating) N-dipole is also separating (resp.\ non-separating) because the cycle it constitutes separates (resp.\ fails to separate) the discretized Riemann surface encoded in the Feynman graph into two disconnected regions.  
\end{remark} 
We shall work with connected graphs. 

Assume $G$ is a Feynman graph with ladder-vertices and contains a dipole. If this dipole is separating, we denote the two Feynman graphs with ladder-vertices obtained after contracting it as $G_1$ and $G_2$, respectively; if it is not separating, we denote the resulting Feynman graph with ladder-vertices as $G'$. In both cases, we clearly have $v(G_1)+v(G_2)=v(G)-2$ and $v(G')=v(G)-2$. To investigate the effect of the contraction on the genus, grade, and Gurau degree, one must examine how the total number of faces and $\oD$-loops is affected. 

\begin{prop}
\label{prop:dipolecases}
 Let $G$ be a Feynman graph with ladder-vertices which contains a dipole.   

\begin{enumerate}
    \item In case of a \textit{separating N-, L-, or R-dipole}, 
    we have the following equations:
    \begin{eqnarray}
  \label{eq:contSep}
    && f(G_1)+f(G_2)=f(G)\,, \quad  \varphi(G_1)+\varphi(G_2)=\varphi(G)\,, \cr
    && g(G_1)+g(G_2)=g(G)\,, \quad \ell(G_1)+\ell(G_2)=\ell(G) \quad \mathrm{and} \quad \omega(G_1)+\omega(G_2)=\omega(G).
    \end{eqnarray}   
    \item In case of a \textit{non-separating N-dipole} (see Figure \ref{connectingNdipole3types} for illustration), we have the following equations:
    \begin{eqnarray} 
    \label{eq:contNonSepN}
    &&\varphi(G')=\varphi(G)-1+\sigma, 
    \cr
    &&g(G')=g(G)-1\,, \quad \ell(G')=\ell(G)-2(\sigma+1) \quad \mathrm{and} \quad \omega(G')=\omega(G)-(\sigma+2),
    \end{eqnarray}
    with 
    $\sigma=-1$ (type I, connecting), $\sigma=0$ (type II, rearranging){\footnote{We should underline that the case $\sigma =0$ was overlooked in \cite{MR4450018} and \cite{MR3336566}, however, does not affect the $\ell=0$ result.}}, or $\sigma=1$ (type III).
    \item For non-separating L- or R-dipole (see an illustration in Figure \ref{LRconnecting}) then 
    \begin{eqnarray}
  \label{eq:contNonSepLR}
    && f(G')=f(G)-1+\sigma\,,  \cr
    &&g(G')=g(G)-\frac{1}{2}(\sigma+1)\,, 
    \ell(G')=\ell(G)-(\sigma+3)\,, 
    \mathrm{and} \quad \omega(G')=\omega(G)-(\sigma+2) \,,
\end{eqnarray} 
    with 
    $\sigma=-1$ (connecting), or $\sigma = 1$.
\end{enumerate} 
\end{prop}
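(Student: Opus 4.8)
The plan is to prove Proposition~\ref{prop:dipolecases} by a careful case-by-case bookkeeping of how edges, vertices, faces (L- and R-faces together denoted $f$) and $\oD$-loops $\varphi$ change under a dipole contraction, and then to feed these counts into the defining relations \eqref{eq:g} and \eqref{eq:ell}. The core observation is that a dipole contraction always removes exactly two vertices and two edges ($v\to v-2$, $e\to e-2$, consistent with $4$-regularity), so the only nontrivial data to track is the change in $f$ and in $\varphi$; once these are known, $g$, $\ell$ and $\omega$ follow mechanically from \eqref{eq:g}, \eqref{eq:ell} and \eqref{eq:omega}. Concretely, from \eqref{eq:g} one has $\Delta(2-2g) = -\Delta e + \Delta v + \Delta f = \Delta f$, and from \eqref{eq:ell} one has $\Delta(\ell/2) = \Delta v - \tfrac12\Delta f - \Delta\varphi = -2 - \tfrac12\Delta f - \Delta\varphi$; these two identities are the engine of the whole proof.

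First I would treat the separating case (part 1). Here the dipole is a $2$-edge cut whose removal splits $G$ into $G_1$ and $G_2$. Deleting the two dipole edges and the two dipole vertices and reconnecting the dangling half-edges on each side produces a disjoint union; every L-face, R-face and $\oD$-loop of $G$ either lies entirely in $G_1$ or entirely in $G_2$, because the only faces that could pass through the dipole are its length-two internal faces, and in a dipole the "short" face is closed off locally upon contraction while the two long strands on each side are simply glued to whatever they connected to. A short strand-by-strand check (using the stranded pictures of the N-, L-, and R-dipole in Figure~\ref{fig:03dipoles}) shows $f(G_1)+f(G_2)=f(G)$ and $\varphi(G_1)+\varphi(G_2)=\varphi(G)$: contracting an N-dipole kills one $\oD$-loop but that loop is reconstituted as the union of the two strands on either side, similarly for L/R and $f$. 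Plugging the additive face and loop counts, together with $v(G_1)+v(G_2)=v(G)-2$, into \eqref{eq:g} and \eqref{eq:ell} and being careful with the additive constant "$2$" appearing in each formula (it appears once on the left and twice on the right for two components, hence the apparent discrepancy is absorbed by the $-2$ from the removed vertices) yields $g(G_1)+g(G_2)=g(G)$ and $\ell(G_1)+\ell(G_2)=\ell(G)$, and then $\omega$ additivity follows from \eqref{eq:omega}.

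Next I would handle the non-separating N-dipole (part 2). Topologically the N-dipole is a cycle on the discretized surface; contracting it is a surgery that, since it is non-separating, lowers the genus by one, giving $g(G')=g(G)-1$ uniformly --- this can either be taken from the known tensor-model computation \cite{MR3336566} adapted to our conventions, or re-derived by tracking the L- and R-face count (the $\uN^2$ ribbon structure sees an N-dipole contraction as a handle removal, so $f(G')=f(G)+2$, hence $\Delta(2-2g)=+2$ and $\Delta g=-1$). The parameter $\sigma\in\{-1,0,1\}$ records the three ways the two $\oD$-strands exiting the dipole can be paired up with the ambient $\oD$-loops: type I ($\sigma=-1$) merges what were two distinct $\oD$-loops, giving $\varphi(G')=\varphi(G)-2$; type II ($\sigma=0$) reroutes a single $\oD$-loop into itself without changing the count, $\varphi(G')=\varphi(G)-1$; type III ($\sigma=1$) splits one $\oD$-loop into two, $\varphi(G')=\varphi(G)$. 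In all three this is exactly $\varphi(G')=\varphi(G)-1+\sigma$. Feeding $\Delta f=+2$, $\Delta v=-2$, $\Delta\varphi=\sigma-1$ into $\Delta(\ell/2)=-2-\tfrac12\Delta f-\Delta\varphi$ gives $\Delta(\ell/2)=-2-1-(\sigma-1)=-(\sigma+2)$, i.e.\ $\ell(G')=\ell(G)-2(\sigma+2)$. (I note the statement writes $-2(\sigma+1)$; I would double-check the face bookkeeping here against Figure~\ref{connectingNdipole3types}, since the discrepancy is precisely the kind of sign/constant slip the footnote warns was overlooked in \cite{MR4450018,MR3336566}; whichever value the figures dictate is what I would commit to, and $\omega(G')=\omega(G)-(\sigma+2)$ then drops out of \eqref{eq:omega} since $\Delta\omega=\Delta g+\tfrac12\Delta\ell = -1 + (-(\sigma+1)) = -(\sigma+2)$ with the stated $\ell$ value.)

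Finally, for the non-separating L- or R-dipole (part 3), by the L$\leftrightarrow$R symmetry of \eqref{eq:seven} it suffices to treat, say, the L-dipole. Now it is an L-face of length two that gets contracted, so the relevant count is the total face number $f=f_L+f_R$: the two L-strands on either side of the dipole either belong to two distinct L-faces (which merge, $\sigma=-1$, $\Delta f=-1-1=-2$... ) — more carefully, contracting an L-dipole removes the short length-two L-face and reconnects the two long L-strands, so $f(G')=f(G)-1+\sigma$ with $\sigma=-1$ when the two long strands lay on the same L-face (it gets cut, net $\Delta f = -2$) and $\sigma=+1$ when they lay on distinct L-faces (they merge, net $\Delta f=0$); the intermediate $\sigma=0$ case does not occur for an L- or R-dipole, which is why only two values appear. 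Meanwhile $\varphi$ is unchanged, $\Delta\varphi=0$, since no $\oD$-loop passes through an L-dipole in a way that is altered by the contraction. Then $\Delta(2-2g)=\Delta f=\sigma-1$ gives $g(G')=g(G)-\tfrac12(\sigma+1)$ (note this can be a half-integer, reflecting non-orientability of the deleted-strand ribbon graph, consistent with Proposition~\ref{prop:grade}), and $\Delta(\ell/2)=-2-\tfrac12(\sigma-1)-0=-\tfrac12(\sigma+3)$ gives $\ell(G')=\ell(G)-(\sigma+3)$, with $\omega$ following from \eqref{eq:omega}.

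The main obstacle is the strand-level verification in the non-separating cases: one must be genuinely careful about which $\oD$-loops (in the N-dipole case) or which L/R-faces (in the L/R-dipole case) are distinct versus identical before the contraction, as this is exactly where the $\sigma$ trichotomy lives and exactly where the previously-overlooked $\sigma=0$ subcase of \cite{MR4450018,MR3336566} hides. I would organize this by drawing, for each dipole type, the three (resp.\ two) possible local strand-pairing configurations explicitly as in Figures~\ref{connectingNdipole3types} and \ref{LRconnecting}, confirm the face and loop deltas in each, and only then apply the mechanical substitution into \eqref{eq:g}--\eqref{eq:ell}. The separating case is comparatively routine once one notes that a $2$-edge cut cannot be crossed by any closed strand other than the dipole's own short face.
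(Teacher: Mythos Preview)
Your overall strategy---track the changes in $v$, $f$, and $\varphi$ under contraction and feed them into \eqref{eq:g} and \eqref{eq:ell}---is exactly the paper's approach. However, there is a concrete bookkeeping error that propagates through and causes the discrepancy you flag in part~2.

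The error is in the edge count. A dipole contraction removes two vertices and \emph{four} edges, not two: the two internal rungs disappear, and the four external edges incident to the dipole are replaced by two new edges, so $\Delta e=-4$. (Equivalently, $4$-regularity forces $e=2v$, hence $\Delta v=-2\Rightarrow\Delta e=-4$.) With the correct count, $\Delta(2-2g)=-\Delta e+\Delta v+\Delta f = 4-2+\Delta f = 2+\Delta f$, not $\Delta f$. For the non-separating N-dipole, the L- and R-strands pass straight through the dipole (see Figure~\ref{fig:03dipoles}), so the external face structure is literally unchanged and $\Delta f=0$, not $+2$; the genus drop $\Delta g=-1$ then comes from the ``$+2$'' in the corrected Euler relation. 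Plugging $\Delta f=0$ and $\Delta\varphi=\sigma-1$ into $\Delta(\ell/2)=\Delta v-\tfrac12\Delta f-\Delta\varphi$ gives $\Delta(\ell/2)=-2-0-(\sigma-1)=-(\sigma+1)$, i.e.\ $\ell(G')=\ell(G)-2(\sigma+1)$, matching the proposition. The discrepancy you noticed is therefore your own face miscount, not an issue with the statement.

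The same correction fixes the L/R-dipole case: with $\Delta f=\sigma-1$ and the corrected formula you get $\Delta(2-2g)=2+(\sigma-1)=\sigma+1$, hence $g(G')=g(G)-\tfrac12(\sigma+1)$; your intermediate line $\Delta(2-2g)=\sigma-1$ does not actually yield the $g$ formula you wrote down. (Your verbal description of which $\sigma$ corresponds to ``same face'' versus ``distinct faces'' is also reversed, though your formula $\Delta f=\sigma-1$ is correct.) In the separating case, the additivity of $g$ likewise relies on $\Delta e=-4$: with $v_1+v_2=v-2$, $e_1+e_2=e-4$, $f_1+f_2=f$ one gets $(2-2g_1)+(2-2g_2)=(2-2g)+2$, which is what absorbs the extra additive constant.
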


\proof
Assume the dipole is a N-, L-, or R-dipole. In the case of a N-dipole, the contraction removes one internal $\oD$-loop and one internal face in the case of L- and  R-dipoles. Furthermore, the number of external faces is unaffected in the case of a N-dipole (the external face structure is the same before and after the contraction), and there is one additional external $\oD$-loop created by the dipole's separating nature. It is the number of external $\oD$-loops and R-faces (resp. L-faces) that remain unaffected in the case of a L-dipole (resp. R-dipole) while one additional external L-face (resp. R-face) is created. As a result, we have the first two relationships in \eqref{eq:contSep}. Using Eqs.\ \eqref{eq:g}, \eqref{eq:ell} from which we can write $\omega=3+\frac{2}{2}v-f+\varphi$, we thus obtain the remaining results in \eqref{eq:contSep}.

Let us now turn to the case of non-separating N-dipole. The contraction eliminates one internal $\oD$-loop while leaving the number of external faces unchanged. Furthermore, there is one more external $\oD$-loop that is either created, deleted, or unaffected. As a result, we obtain the relations in \eqref{eq:contNonSepN}.

Assume we have a non-separating L- or R-dipole. 
The contraction eliminates one internal L- or R-face while maintaining the same number of external $\oD$-loops and L- or R-faces. There is an additional external L- or R-face that is either created or deleted; thus, in both cases the results follow.
\qed

\begin{figure}[htb]
\begin{minipage}[t]{0.8\textwidth}
\centering
\def\svgwidth{0.9\columnwidth}
\tiny{
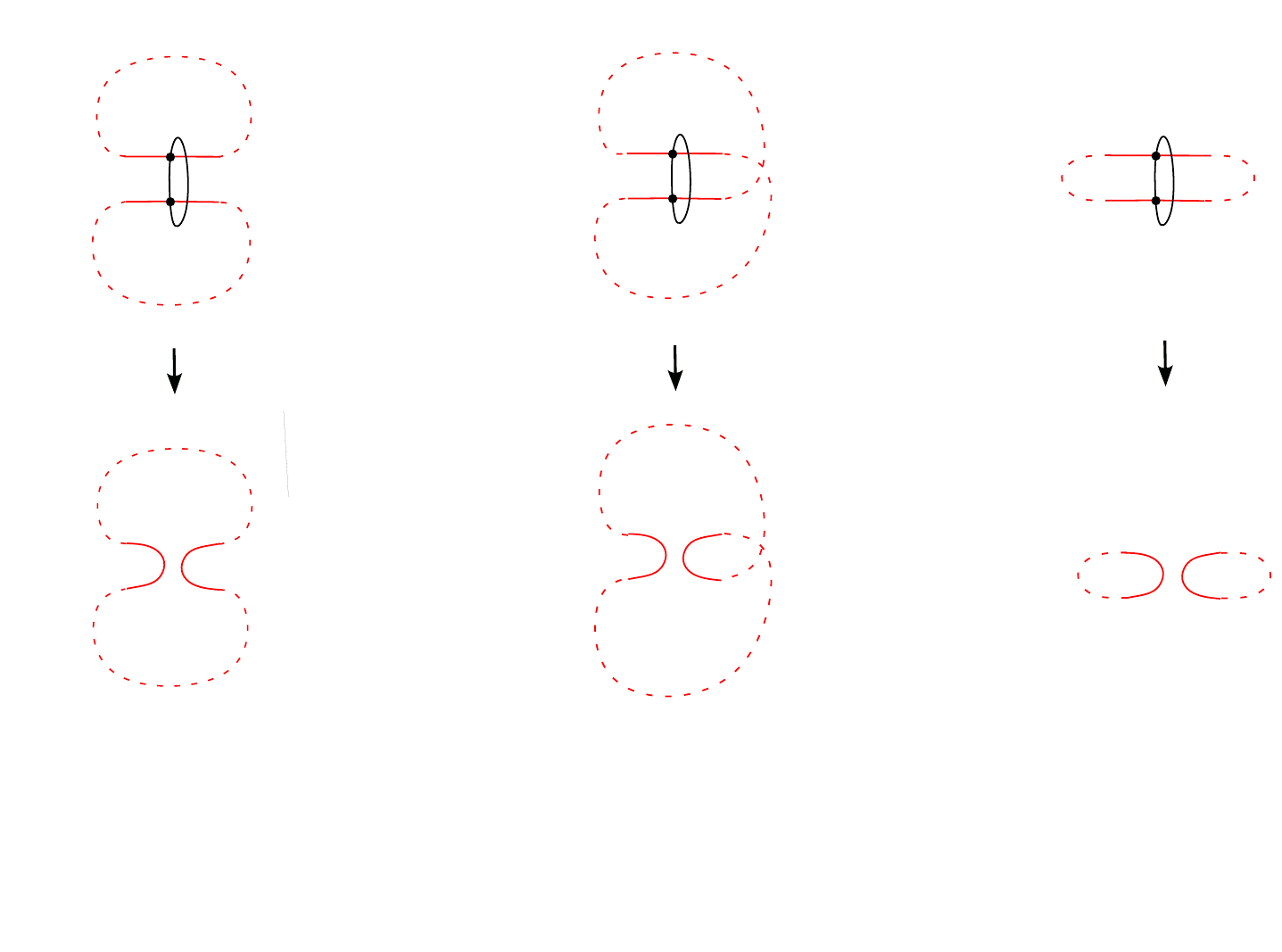
}
\caption{Three types of non-separating N-dipoles. The dotted lines represent the paths of $\oD$-faces. For $\ell=2$, the first two types $\sigma = -1$ (connecting) and $\sigma = 0$ (rearranging) are possible; however, for $\ell=0$ and $\ell=1$, only the first type ($\sigma = -1$) is possible. The third case $\sigma = 1$ is irrelevant for our present study, but will be relevant once one wants to consider $\ell \ge 4$ and $g \ge 1$. In the last row, we show some examples of each type of graph (left for type I, middle for type II, right for type III).}
\label{connectingNdipole3types}
\end{minipage}
\end{figure}

\begin{figure}[htb]
\centering
\begin{minipage}[t]{0.9\textwidth}
\centering
\def\svgwidth{0.9\columnwidth}
\tiny{
\begingroup%
  \makeatletter%
  \providecommand\color[2][]{%
    \errmessage{(Inkscape) Color is used for the text in Inkscape, but the package 'color.sty' is not loaded}%
    \renewcommand\color[2][]{}%
  }%
  \providecommand\transparent[1]{%
    \errmessage{(Inkscape) Transparency is used (non-zero) for the text in Inkscape, but the package 'transparent.sty' is not loaded}%
    \renewcommand\transparent[1]{}%
  }%
  \providecommand\rotatebox[2]{#2}%
  \newcommand*\fsize{\dimexpr\f@size pt\relax}%
  \newcommand*\lineheight[1]{\fontsize{\fsize}{#1\fsize}\selectfont}%
  \ifx\svgwidth\undefined%
    \setlength{\unitlength}{611.90521481bp}%
    \ifx\svgscale\undefined%
      \relax%
    \else%
      \setlength{\unitlength}{\unitlength * \real{\svgscale}}%
    \fi%
  \else%
    \setlength{\unitlength}{\svgwidth}%
  \fi%
  \global\let\svgwidth\undefined%
  \global\let\svgscale\undefined%
  \makeatother%
  \begin{picture}(1,0.73062539)%
    \lineheight{1}%
    \setlength\tabcolsep{0pt}%
    \put(0,0){\includegraphics[width=\unitlength,page=1]{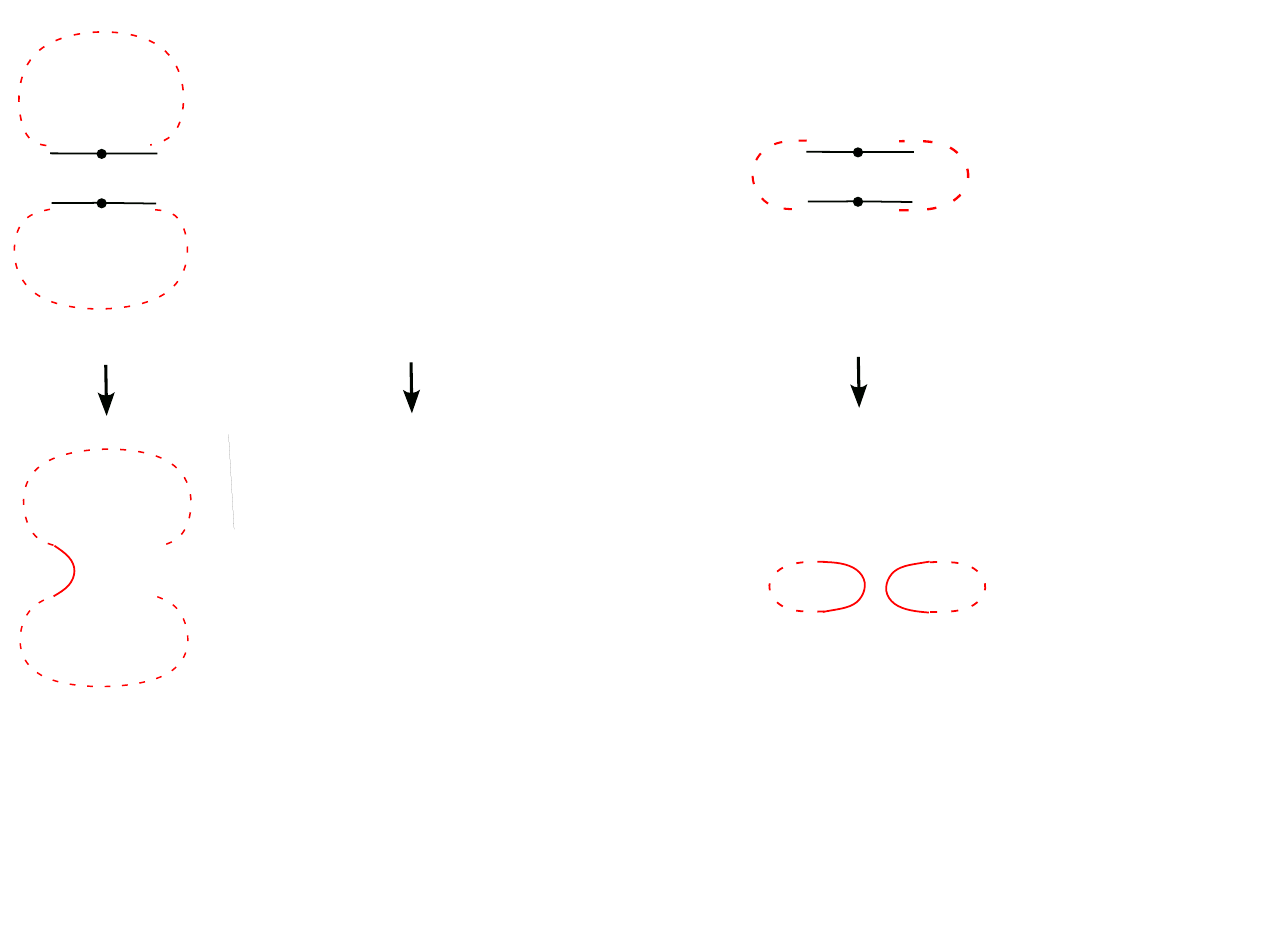}}%
    \put(0.10876683,0.42322732){\color[rgb]{0,0,0}\makebox(0,0)[lt]{\lineheight{1.25}\smash{\begin{tabular}[t]{l}\scalebox{1}{$\Delta \ell =-2, \Delta g = 0$}\end{tabular}}}}%
    \put(0.70202832,0.40308605){\color[rgb]{0,0,0}\makebox(0,0)[lt]{\lineheight{1.25}\smash{\begin{tabular}[t]{l}\scalebox{1}{$\Delta \ell =-4, \Delta g = -1$}\end{tabular}}}}%
    \put(0.05787021,0.71425601){\color[rgb]{0,0,0}\makebox(0,0)[lt]{\lineheight{1.25}\smash{\begin{tabular}[t]{l}\scalebox{1.5}{connecting ($\sigma = -1$)}\end{tabular}}}}%
    \put(0.70264008,0.70841673){\color[rgb]{0,0,0}\makebox(0,0)[lt]{\lineheight{1.25}\smash{\begin{tabular}[t]{l}\scalebox{1.5}{($\sigma = 1$)}\end{tabular}}}}%
    \put(0,0){\includegraphics[width=\unitlength,page=2]{LRconnecting.pdf}}%
  \end{picture}%
\endgroup%

}
\caption{
Two kinds of non-separating L- or R-dipoles; we call the first one with $\sigma = -1$ connecting.
The dotted lines represent the paths that L- or R-face, respectively, for L- or R-dipoles of non-separating type take. The first connecting type $\sigma = -1$ is possible for $\ell=2$, but not for $\ell=0$ and $\ell=1$.
On the bottom, we show some examples. The three graphs on the left has a connecting L- or R-dipole, and the one on the right has an L- or R-dipole of $\sigma = -1$ type.
}
\label{LRconnecting}
\end{minipage}
\end{figure}

The effect of contracting a ladder-vertex in a Feynman graph with ladder-vertices $G$ is now investigated. We continue to denote the two graphs (resp.\ the graph) obtained after contracting a separating (resp.\ non-separating) ladder-vertex in $G$ by $G_1$ and $G_2$ (resp. $G'$). As explained in Section \ref{sec:schemes}, one way to examine how the genus $g$, grade $\ell$, and Gurau degree $\omega$ change when a ladder-vertex is contracted is to first replace it with a ladder of dipoles of the corresponding type. In the case of a B-vertex, this ladder can be of arbitrary length and structure. The contraction of the ladder-vertex is then easily seen to be the combination of two moves: 1) the contraction of one dipole in the corresponding ladder, whose analysis has been given above; and 2) the removal of up to two melonic $2$-point subgraphs that may have been generated by the first move due to the presence of other dipoles in the initial ladder. The second step requires no further discussion because it preserves the genus, Gurau degree, and grade (see Section \ref{sec:MelonFree}). We obtained the following result, which is similar to the results in \cite{MR4450018}. 

\begin{prop}\label{prop:ladder-vertices}
     Let $G$ be a Feynman graph with ladder-vertices which contains a dipole. Let us replace the ladder-vertex with a B-, N-, L-, or R-ladder-vertex, as appropriate. 
     
\begin{enumerate}
    \item In case of a \textit{separating B-, N-, L-, or R-ladder-vertex}, we have the following equations:
    \be \label{eq:LVcontSep}
    g(G_1)+g(G_2)=g(G)\,, \quad \ell(G_1)+\ell(G_2)=\ell(G) \quad \mathrm{and} \quad \omega(G_1)+\omega(G_2)=\omega(G) \, ;
    \ee
    
    \item In case of a \textit{non-separating N-ladder-vertex},
    we have the following equations:
    \be \label{eq:LVcontNonSepN}
    g(G')=g(G)-1\,, \quad \ell(G')=\ell(G)-2(\sigma+1) \quad \mathrm{and} \quad \omega(G')=\omega(G)-(\sigma+2) \, ,
    \ee
    with 
    $\sigma=-1$ (type I, connecting), $\sigma=0$ (type II rearranging){\footnote{We should underline that the case $\sigma =0$ was overlooked in \cite{MR4450018} and \cite{MR3336566}, however, does not affect the $\ell=0$ result.}}, $\sigma=1$ (type III);

    \item Non-separating L-, or R-ladder-vertex, then 
    \be \label{eq:LVcontNonSepLR}
    g(G')=g(G)-\frac{1}{2}(\sigma+1)\,, \quad \ell(G')=\ell(G)-(\sigma+3) \quad \mathrm{and} \quad \omega(G')=\omega(G)-(\sigma+2) \, ,
    \ee
    with $\sigma=\pm1$;

      \item Non-separating B-ladder-vertex, then 
    \be \label{eq:LVcontNonSepB}
    g(G')=g(G)-1\,, \quad \ell(G')=\ell(G)-4 \quad \mathrm{and} \quad \omega(G')=\omega(G)-3 \, .
    \ee
\end{enumerate} 
\end{prop}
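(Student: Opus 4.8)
The plan is to reduce the contraction of a ladder-vertex to the contraction of a single dipole, already treated in Proposition \ref{prop:dipolecases}, exactly as outlined just before the statement. Given a ladder-vertex $c$ in $G$, I first replace $c$ by the maximal ladder $(d_1,\dots,d_n)$ of dipoles it encodes; this changes neither $g$, nor $\ell$, nor $\omega$, since a ladder and its ladder-vertex determine the same scheme (Proposition \ref{schemegraphcorresp}). Contracting $c$ then amounts to contracting one dipole $d_k$ of this ladder and then removing the elementary melonic $2$-point subgraphs that this creates out of the rungs adjacent to $d_k$ (at most two rungs are adjacent, and the further rungs disappear by the same mechanism). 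Since each melonic removal preserves $g$, $\ell$, $\omega$ by Remark \ref{rk:melon}, the changes induced on $(g,\ell,\omega)$ by contracting $c$ are precisely those induced by the single dipole contraction, which are given by Proposition \ref{prop:dipolecases}.

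It then remains to match each item with a case of Proposition \ref{prop:dipolecases}. If $c$ is separating, $d_k$ is a separating dipole, because the ladder is the only chain joining the two sides of $c$, so the removal of any of its rungs disconnects $G$; item (1) of Proposition \ref{prop:dipolecases}, whose $g$-, $\ell$-, $\omega$-equalities are the ones that survive the melonic removals, then gives \eqref{eq:LVcontSep}. If $c$ is a non-separating N-ladder-vertex, every rung is an N-dipole, so $d_k$ is a non-separating N-dipole and \eqref{eq:contNonSepN} yields \eqref{eq:LVcontNonSepN} with the same three possible values $\sigma\in\{-1,0,1\}$; likewise, a non-separating L- or R-ladder-vertex has only rungs of that type, $d_k$ is a non-separating L- or R-dipole, and \eqref{eq:contNonSepLR} gives \eqref{eq:LVcontNonSepLR} with $\sigma=\pm1$. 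That no rung contraction disconnects $G$ in the non-separating case follows from the existence of a path of $G$ joining the two sides of $c$ while avoiding the ladder.

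The delicate case --- and the one where I expect the actual work to be --- is the non-separating B-ladder-vertex. By definition a B-ladder is made of rungs of at least two distinct types and, with the orientation conventions of Figure \ref{fig:0chainvertexes}, its rails carry a twist at one end. The plan is to contract a rung adjacent to that twist and to argue that the twist forces the non-separating contraction into the ``$\sigma=1$'' regime: type III of \eqref{eq:contNonSepN} if the contracted rung is an N-dipole, or the $\sigma=1$ branch of \eqref{eq:contNonSepLR} if it is an L- or R-dipole. Both substitutions produce the same change, $g(G')=g(G)-1$, $\ell(G')=\ell(G)-4$, $\omega(G')=\omega(G)-3$, so the answer is independent of the type of the contracted rung and of the otherwise arbitrary length and structure of the B-ladder --- as it must be, the contraction of $c$ being a well-defined operation on $G$. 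The main obstacle is to check, directly from the stranded pictures, that the presence of the twist always creates exactly one extra external loop ($\sigma=1$) and never gives $\sigma=0$ or $\sigma=-1$, and that this conclusion does not depend on which rung of the B-ladder is chosen for the contraction.
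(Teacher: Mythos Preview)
Your proposal is correct and follows the same approach as the paper: replace the ladder-vertex by a ladder, contract one rung, and clean up the resulting melons, so that Proposition \ref{prop:dipolecases} does all the work. The paper handles the B-ladder case slightly differently from your plan --- rather than singling out a rung adjacent to the twist, it simply asserts that for \emph{any} rung in a B-ladder (N, L, or R) the contraction necessarily falls into the $\sigma=+1$ branch, ``because of the structure of a B-ladder'' --- but this is exactly the check you flagged as the main obstacle, and your observation that the answer must be rung-independent (since the ladder-vertex contraction is well-defined) makes your specific choice of rung just as valid.
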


\proof

Assume that we have a separating B-, N-, L-, or R-vertex. We begin by replacing the ladder-vertex with a B-, N-, L-, or R-ladder, as appropriate. Because the ladder-vertex is separating, the dipoles in the corresponding ladder must also be separating. As a result, we can apply the result of  Eq.\ \eqref{eq:contSep} to the contraction of a separating dipole. Furthermore, because melonic removals have no effect on the genus, grade, or Gurau degree of a Feynman graph, we obtain the relationships in \eqref{eq:LVcontSep}.

In the case of a non-separating N-vertex, we replace the N-vertex with a N-ladder, 
made out of non-separating N-dipoles.
Using the result of Eq.\ \eqref{eq:contNonSepN} and the properties of melonic removal giving therefore the results in \eqref{eq:LVcontNonSepN}.

We now turn to the case of non-separating L- or R-vertex. The same reasoning as in the previous case leads and using Eq.\ \eqref{eq:contNonSepLR} we obtain the results in \eqref{eq:LVcontNonSepLR}

Let us now discuss the last case of non-separating B-vertex. We can substitute a B-vertex for a B-ladder of any length and structure, with all of its dipoles being non-separating. Furthermore, because of the structure of a B-ladder, one can check that the contraction of a non-separating N-dipole always yields one additional external $\oD$-loop (case of Eq.\ \eqref{eq:contNonSepN} with $\sigma=+1$); and the contraction of a non-separating L- or R-dipole leads to one additional external face (case of Eq.\ \eqref{eq:contNonSepLR} with $\sigma=+1$). As a result, in this final case the results in \eqref{eq:LVcontNonSepB} follow. 
\qed

\medskip

\begin{remark}
\label{remarkuniqueness}
We elaborate the uniqueness of inserting a connecting N and a rearranging N once we specify the cuts.

Consider an ${\mathrm{O}}(D)$ loop. We cut two edges and name the half edges A, B, C, and D as shown in Figure \ref{fig:uniqueconnectingN}.
Then, a priori, we have three possibilities to connect back these half edges via a ${\mathrm N}$-dipole or -ladder-vertex, namely, connecting (i) A and D, (ii) A and C, and (iii) A and B. 
Remark that automatically it means that we connect respectively
(i) B and C, (ii) B and D, and (iii) C and D.

The last choice (iii) will simply bring us back to the original case. 

Once we make the first choice (i), a non-separating N that we insert will be necessarily connecting. Furthermore, the orientations of the edges which were cut dictate whether we insert an odd or even number of N-dipoles.

Similarly, once we make the second choice (ii), a non-separating N that we insert will be necessarily rearranging. Again, the orientations of the edges which were cut dictate whether we insert an odd or even number of N-dipoles.

In conclusion, once we specify the cuts, there is a unique way of inserting a connecting N and a rearranging N.
\end{remark}

\begin{figure}[H]
\begin{minipage}[t]{0.9\textwidth}
\centering
\def\svgwidth{0.8\columnwidth}
\tiny{
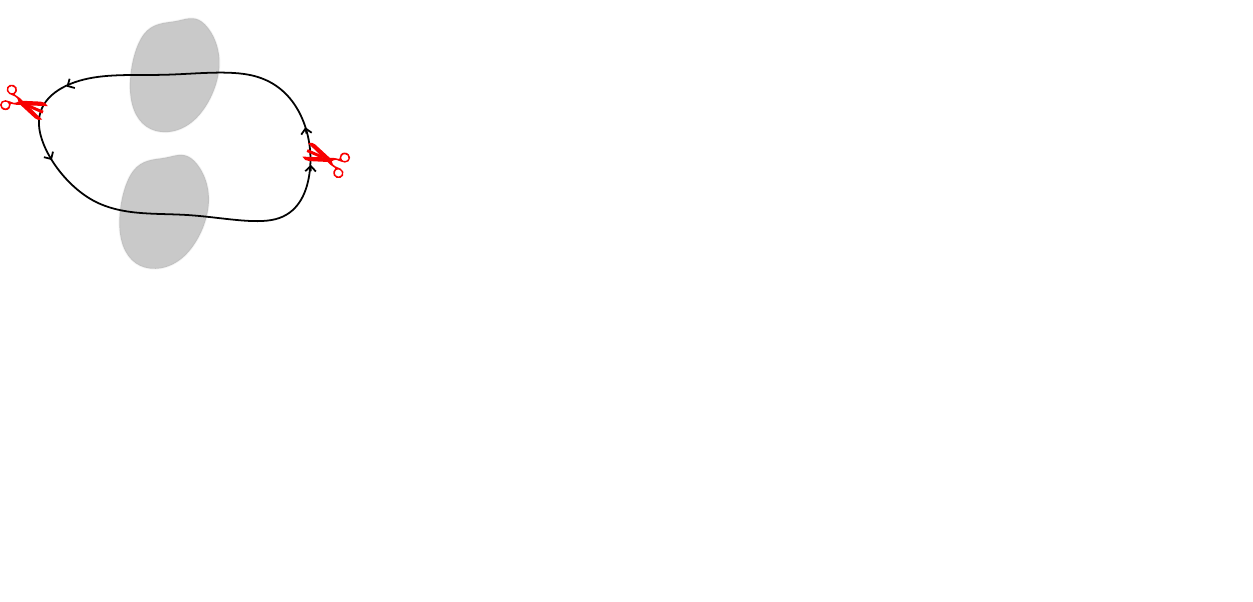
}
\caption{
Uniqueness of inserting a connecting N and a rearranging N, once the cuts are specified.
Grey shaded discs represent any subgraphs.
}
\label{fig:uniqueconnectingN}
\end{minipage}
\end{figure}

\medskip

In the same spirit as \cite{MR4450018}, we now review another local operation that will be useful in analyzing 2PR Feynman graphs. Assume $G$ is a 2PR Feynman graph with ladder-vertices. The graph $G$ thus contains a two-edge-cut $(e,e')$ and has the structure depicted on the left of Figure \ref{fig:flipOp}, where $\tilde G_1$ and $\tilde G_2$ are, by definition, two connected $2$-point non-empty subgraphs.

\begin{definition}
A \emph{flip} on $(e,e')$ is defined as the operation of cutting the two edges $e$ and $e'$ and reconnecting the four half-edges two by two, as shown on the right side of Figure \ref{fig:flipOp}. A flip necessarily divides $G$ into two connected components. The reverse operation of a flip is called a \emph{two-edge-connection insertion}.
\end{definition}

\begin{figure}[htb]
\begin{minipage}[t]{0.9\textwidth}
\centering
\def\svgwidth{0.7\columnwidth}
\tiny{
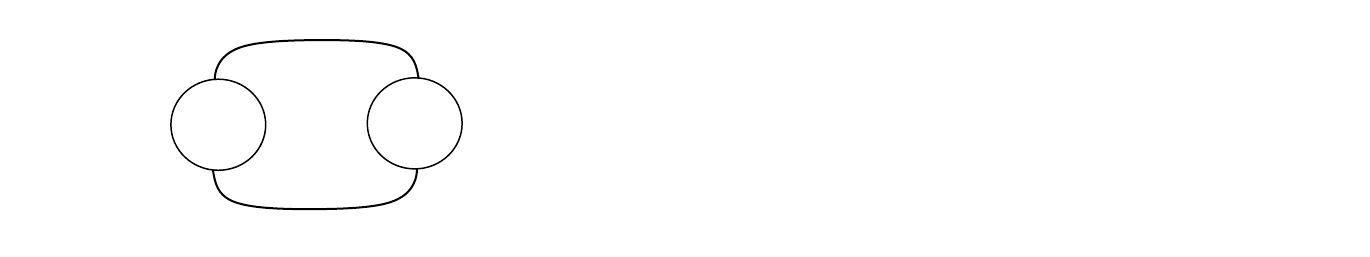
}
\end{minipage}
\caption{\small Flip operation on a ladder-vertex 2PR Feynman graph.
}
\label{fig:flipOp}
\end{figure}

The effect of a flip operation on the genus, grade, and Gurau degree of the graphs is summarized in the following result:
\begin{lemma}\label{lem:2PR} \cite{MR4450018} 
Let $G$ be a Feynman graph with ladder-vertices and suppose that it contains a two-edge-cut $(e,e')$. Then, the flip operation on $(e,e')$ generates two Feynman graphs with ladder-vertices $G_1$ and $G_2$ satisfying
\be \label{eq:flipOp}
g(G_1)+g(G_2)=g(G)\,, \quad \ell(G_1)+\ell(G_2)=\ell(G) \quad \mathrm{and} \quad \omega(G_1)+\omega(G_2)=\omega(G) \, .
\ee
\end{lemma}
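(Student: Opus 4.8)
The plan is to track the three combinatorial quantities $g$, $\ell$, and $\omega$ through the flip operation by counting how the number of faces (L-faces, R-faces) and $\mathrm{O}(D)$-loops (straight faces) change, and then invoke the defining relations \eqref{eq:g}, \eqref{eq:ell}, and $\omega = g + \ell/2$. Since a flip does not add or remove any vertices or edges, we have $v(G_1) + v(G_2) = v(G)$ and $e(G_1) + e(G_2) = e(G)$, so the whole content of the lemma is the additivity of $f$ and $\varphi$ across the two pieces produced by the flip. Once $f(G_1) + f(G_2) = f(G)$ and $\varphi(G_1) + \varphi(G_2) = \varphi(G)$ are established, all three identities in \eqref{eq:flipOp} follow by linearity directly from the formulas $2 - 2g = -e + v + f$, $\ell/2 = 2 + v - f/2 - \varphi$, and $\omega = 3 + v - f + \varphi$ (the last of which is recorded in the proof of Proposition \ref{prop:dipolecases}).

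The first step is to set up the local picture from Figure \ref{fig:flipOp}: the two-edge-cut $(e,e')$ decomposes $G$ into two connected two-point subgraphs $\tilde G_1$ and $\tilde G_2$, each with exactly two external half-edges, and the flip reconnects these half-edges so that $\tilde G_1$ closes up into $G_1$ and $\tilde G_2$ closes up into $G_2$. The key observation is that, in the stranded representation, each of the two cut edges $e$, $e'$ carries three strands (one left, one right, one internal), and the flip only re-pairs the strand-endpoints lying in $\tilde G_1$ with each other and those lying in $\tilde G_2$ with each other. Therefore I would argue strand-type by strand-type: the left strands of $e$ and $e'$ together with the left strands internal to $\tilde G_1$ and $\tilde G_2$ form some collection of arcs, and both before and after the flip these arcs close into L-face loops, but after the flip every L-face loop is entirely contained in either $G_1$ or $G_2$. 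A short case analysis (whether the two left-strand arcs through the cut belonged to the same L-face of $G$ or to two different L-faces) shows that the flip either splits one face into two or merges nothing — and in fact, because the reconnection is precisely the one that separates the graph, the net count of left strand loops is preserved: $f_L(G_1) + f_L(G_2) = f_L(G)$. The identical argument applies to right strands, giving $f_R$ additivity, and to internal strands, giving $\varphi$ additivity.

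The main obstacle — and the step requiring the most care — is the case analysis for each strand type: one must verify that the specific reconnection pattern shown on the right of Figure \ref{fig:flipOp} is the one that both disconnects $G$ and preserves the total loop count, as opposed to the other reconnection (which would keep $G$ connected, i.e. would be a dipole-type move rather than a flip). Concretely, before the flip the two strand-arcs of a given type passing through $e$ and $e'$ either lie on a single loop of $G$ (which then necessarily visits both $\tilde G_1$ and $\tilde G_2$) or on two separate loops; the flip's reconnection, being the separating one, turns a single shared loop into exactly two loops (one in each piece) and turns two shared loops into one loop in $G_1$ plus one loop in $G_2$, so in all cases the count is balanced. Since the paper attributes this lemma to \cite{MR4450018} and the flip operation was introduced there in the $\ell=0$ setting, I would present the proof at the level of "the counting argument is identical to that of Proposition \ref{prop:dipolecases} for a separating dipole, applied now to a two-edge-cut rather than a two-edge dipole," and simply record the resulting face and $\mathrm{O}(D)$-loop additivities before reading off \eqref{eq:flipOp}.
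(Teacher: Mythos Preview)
The paper does not supply its own proof of this lemma; it is simply quoted from \cite{MR4450018}. So there is no proof in the paper to compare against, but your argument as written contains a genuine arithmetic error that would make the conclusion fail.

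Your claim that $f(G_1)+f(G_2)=f(G)$ and $\varphi(G_1)+\varphi(G_2)=\varphi(G)$ is false, and if it were true the lemma would not hold: plugging $c_f=0$ into $2-2g=-e+v+f$ (with $e$ and $v$ exactly additive) gives $g(G_1)+g(G_2)=g(G)+1$, not $g(G)$. The correct count is
\[
f_L(G_1)+f_L(G_2)=f_L(G)+1,\qquad f_R(G_1)+f_R(G_2)=f_R(G)+1,\qquad \varphi(G_1)+\varphi(G_2)=\varphi(G)+1.
\]
The reason is that the ``second case'' of your case analysis (two distinct loops of a given strand type passing through the cut) can never occur for a two-edge-cut: each of $e$ and $e'$ carries exactly one strand of each type, and any closed strand-loop that crosses from $\tilde G_1$ to $\tilde G_2$ along $e$ must return along $e'$. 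Hence for each strand type the two cut-segments lie on a \emph{single} loop of $G$, and the separating flip splits that one loop into two. This $+1$ for each of $f_L$, $f_R$, $\varphi$ (so $c_f=2$, $c_\varphi=1$) is exactly what cancels the constant terms $2$ in $2-2g=-e+v+f$ and in $\ell/2 = 2 + v - f/2 - \varphi$, yielding the additivity \eqref{eq:flipOp}. Your case analysis sentence (``turns a single shared loop into exactly two loops \ldots\ so in all cases the count is balanced'') is where the slip happens: one loop becoming two is $+1$, not $0$.
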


\medskip

\subsection{Properties of melon-free Feynman graphs with $\ell=1$ 
or
$\ell=2$
}
\label{sec:propell12}

The connected, melon-free Feynman graphs with $\ell=1$ 
or $\ell=2$ and their corresponding schemes are what we are interested in characterizing. As made clear in Eq.\ \eqref{eq:seven}, they always have an odd number of (standard) vertices for $\ell=1$ and an even number of (standard) vertices for $\ell=2$, respectively. Namely, we observe for $\ell = 1$,
\be
\varphi = g + \frac{v+1}{2}
\,,
\ee
implying the number of standard vertices should be odd. While  for $\ell = 2$,
\be\label{eq:varphi}
\varphi = g + \frac{v}{2}
\,,
\ee
implying the number of standard vertices should be even. The following  lemma provides a more detailed characterization.

We can consider partitions of the total number of edges in terms of the lengths $l_i$ of each $\oD$-loop. We have
\begin{align}\label{partition}
    e = \sum_{i=1}^{\varphi}l_i.
\end{align}
We sometimes refer to a family of graphs by specifying their configuration in terms of $l_i$, written as a $\varphi$-tuple $(l_1,l_2,\ldots,l_{\varphi})$. For example, in section \ref{sec:l3g0}, we will study the family of $(4,4,\ldots,4)$ graphs.
We present some lemmas that will be useful several times throughout this work.

\begin{lemma}\label{2EL}
    Each graph with $l=1$ and $g=0$ has at least one $\oD$-loop with length $2$.
\end{lemma}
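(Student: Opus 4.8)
The plan is to argue by contradiction: suppose $G$ is a connected, melon-free Feynman graph with $\ell=1$ and $g=0$ in which every $\mathrm{O}(D)$-loop has length at least $3$. Since the maps are $4$-regular we have $e=2v$, and combining this with the relation $\varphi = g + \tfrac{v+1}{2} = \tfrac{v+1}{2}$ from Eq.\ \eqref{eq:seven} specialized to $\ell=1$, $g=0$, together with the edge-partition \eqref{partition}, we get
\be
2v = e = \sum_{i=1}^{\varphi} l_i \ge 3\varphi = 3\cdot\frac{v+1}{2}\,,
\ee
which rearranges to $4v \ge 3v+3$, i.e. $v\ge 3$. So the naive length count alone does not immediately give a contradiction; it only rules out the very smallest cases. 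The argument therefore has to be sharpened, and the natural place to look is the structure that $\ell=1$ forces together with melon-freeness.

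First I would recall that $\ell=1$ means $g_L+g_R=1/2$ by Proposition \ref{prop:grade}, so exactly one of the two ribbon graphs (say the $L$-ribbon graph) is a projective-plane-type (non-orientable, genus $1/2$) surface while the other is planar. Equivalently, erasing all the $\mathrm{O}(D)$ internal strands and either the $L$- or the $R$-strands yields a planar or near-planar ribbon graph, which is very restrictive. The key step is then a descent/minimality argument: take $G$ to be a \emph{minimal} counterexample (fewest vertices). Since $G$ is melon-free and every $\mathrm{O}(D)$-loop has length $\ge 3$, in particular $G$ has no $N$-dipole (an $N$-dipole would contain a length-$2$ $\mathrm{O}(D)$-loop). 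One then wants to exhibit a move — a dipole contraction or a ladder-vertex contraction, governed by Proposition \ref{prop:dipolecases} / Proposition \ref{prop:ladder-vertices} — that produces a strictly smaller graph still having $\ell=1$, $g=0$, no melons, and no short $\mathrm{O}(D)$-loop, contradicting minimality; or, alternatively, directly locate a short $\mathrm{O}(D)$-loop via a counting/Euler argument on the planar ribbon structure.

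Concretely, the cleanest route is probably purely combinatorial on the planar side. Because $g=0$ and $\ell=1$, the underlying $4$-regular map (forgetting arrows, as permitted by the remark around Figure \ref{fig:IO}) is planar, and one of the two families of faces — say the $R$-faces — behaves planarly while the $L$-faces live on a projective plane. I would set up Euler's relation \eqref{eq:g} for the $4$-regular planar map, $2 = -e+v+f$ with $e=2v$, $f=f_L+f_R$, so $f = v+2$; then use \eqref{eq:ell} with $\ell=1$, $f=v+2$ to get $\varphi = 1 + v - \tfrac12(v+2) = \tfrac{v}{2}$ — wait, this must be reconciled with the $\tfrac{v+1}{2}$ formula, the discrepancy being exactly the parity/half-integer bookkeeping of $g=0$ versus $\ell=1$; carefully, $\tfrac{\ell}{2} = 2 + v - \tfrac12 f - \varphi$ with $\ell=1$, $f = v+2$ gives $\varphi = \tfrac32 + v - \tfrac12(v+2) - $ hmm, $= \tfrac12 + \tfrac{v}{2}$, consistent with $\varphi = \tfrac{v+1}{2}$ and forcing $v$ odd. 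Now the $\mathrm{O}(D)$-loops are cycles drawn on the discretized surface; each edge of the $4$-regular map carries exactly one internal strand, so the $\varphi$ straight faces partition the $e=2v$ edge-slots. If all $l_i\ge 3$ then $2v \ge 3\varphi = \tfrac{3(v+1)}{2}$, giving $v\ge 3$; to push further I would feed in melon-freeness to exclude the specific configurations realizing $v=3$ (the $(3,3)$ partition with $\varphi=2$, $v=3$) — enumerating the finitely many $3$-vertex melon-free $\ell=1$, $g=0$ stranded graphs and checking each has a length-$2$ $\mathrm{O}(D)$-loop — and then argue inductively that any larger such graph contains a dipole or ladder whose contraction stays inside the class, reducing to the base case.

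The main obstacle I anticipate is the inductive step: showing that a melon-free $\ell=1$, $g=0$ graph with all $\mathrm{O}(D)$-loops of length $\ge 3$ and $v$ large must admit a genus- and grade-preserving contraction leading to a strictly smaller graph \emph{still} in the class. Dipole contractions and ladder-vertex contractions can be separating or non-separating and can change $\ell$ (Propositions \ref{prop:dipolecases}--\ref{prop:ladder-vertices}); one has to ensure that the move available in this setting is of the grade-preserving kind (e.g. a separating dipole, or a connecting non-separating $L$/$R$-dipole with $\sigma=-1$ which at $\ell=1$ would drop $\ell$ to $-$something and is hence excluded — so in fact at $\ell=1$, $g=0$ the only non-separating $L$/$R$-dipole allowed has to be ruled out, leaving only separating dipoles or the forbidden $N$-dipoles). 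Disentangling exactly which local move survives the constraints, and confirming it genuinely decreases $v$ without creating a new short $\mathrm{O}(D)$-loop, is the delicate part; the rest is bookkeeping with Euler's relation and the grade formula.
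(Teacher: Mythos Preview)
Your proposal has a genuine gap: you overlook the elementary fact that every $\mathrm{O}(D)$-loop has \emph{even} length. This follows directly from the oriented structure of the Feynman graphs. At a tetrahedral vertex the internal strand joins the two $X^\dagger$-legs together and the two $X$-legs together, while a propagator always connects an $X$-leg to an $X^\dagger$-leg. Hence, walking along an $\mathrm{O}(D)$-loop, one alternates between vertices entered through $X^\dagger$-legs and vertices entered through $X$-legs, forcing the number of edges traversed to be even. So if there is no length-$2$ $\mathrm{O}(D)$-loop, then every $l_i\ge 4$, not merely $l_i\ge 3$.

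With this correction your very first counting argument already finishes the proof:
\[
2v = e = \sum_{i=1}^{\varphi} l_i \;\ge\; 4\varphi \;=\; 4\cdot\frac{v+1}{2} \;=\; 2v+2,
\]
a contradiction. This is exactly the paper's proof. Everything you wrote after ``the naive length count alone does not immediately give a contradiction'' --- the ribbon-graph genera, the minimal-counterexample descent, the dipole/ladder contractions, the base-case enumeration at $v=3$ --- is unnecessary, and in fact that machinery would be very hard to make work (for instance, at $\ell=1$, $g=0$ Lemma~\ref{lem:sep-conn} says every dipole is separating, and your inductive step needs a grade-preserving contraction that also preserves the ``no short $\mathrm{O}(D)$-loop'' property, which you never establish). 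The single missing observation is the parity of $l_i$.
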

\begin{proof}
    We write a partition of the total number of edges $e$ as in \eqref{partition}. Suppose there are no $\oD$-loops with length $2$. Then because $l_i$ is even, the smallest possible $e$ is obtained when $l_i =4$ for every $i$. We can write
    \begin{align}
        e \geq 4\varphi = 4\Big( \frac{v+1}{2}\Big) = 2v + 2 > 2v.
    \end{align}
    Because the graphs we are considering are $4$-regular, we have that $e=2v$.
    This gives a contradiction, hence a $\oD$-loop with length $2$ should be present.
\end{proof}

In a similar manner, we can achieve the following lemma.
 
\begin{lemma}
\label{lem:generall1}
For each graph with $\ell \le 3$ with any $g \ge 1$, it always contains at least one $\oD$-loop of length $2$.
\end{lemma}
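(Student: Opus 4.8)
The plan is to run exactly the same edge-counting argument used in the proof of Lemma \ref{2EL}, but now feeding in the general expression for $\varphi$ rather than the $\ell=1$, $g=0$ specialisation. First I would invoke the first identity of Proposition \ref{prop:grade}, $\varphi = 1 + g + \tfrac12 v - \tfrac12 \ell$, and combine it with the hypotheses $\ell \le 3$ and $g \ge 1$ to record that $\varphi \ge 1 + 1 + 0 - \tfrac32 = \tfrac12 > 0$, so $\varphi \ge 1$ and the statement is not vacuous (there is at least one $\oD$-loop to speak of).

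Next, arguing by contradiction, I would suppose that $G$ has no $\oD$-loop of length $2$. Since every $\oD$-loop has even length (the same parity input used in Lemma \ref{2EL}), each term $l_i$ in the partition \eqref{partition} then satisfies $l_i \ge 4$. Using $4$-regularity, $e = 2v$, together with \eqref{partition} and the formula for $\varphi$ above, one gets
\begin{align*}
2v \;=\; e \;=\; \sum_{i=1}^{\varphi} l_i \;\ge\; 4\varphi \;=\; 4\Bigl(1 + g + \tfrac12 v - \tfrac12 \ell\Bigr) \;=\; 4 + 4g + 2v - 2\ell \,,
\end{align*}
hence $2\ell \ge 4 + 4g$, i.e.\ $\ell \ge 2 + 2g$.

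Finally, for $g \ge 1$ this forces $\ell \ge 4$, contradicting the hypothesis $\ell \le 3$. Therefore $G$ must contain at least one $\oD$-loop of length $2$. I do not expect any genuine obstacle here: the argument is a one-line counting estimate, and the only points requiring a word of care are the parity fact that each $l_i$ is even (already exploited in Lemma \ref{2EL}) and the remark that $\varphi \ge 1$ under the stated hypotheses, so that the conclusion is meaningful.
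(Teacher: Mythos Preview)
Your proof is correct and follows essentially the same counting argument as the paper: assume no $\oD$-loop of length $2$, bound $e\ge 4\varphi$ using the parity of the $l_i$, substitute $\varphi = 1+g+\tfrac12 v-\tfrac12\ell$, and derive a contradiction with $e=2v$. The only cosmetic difference is that the paper rewrites the inequality as $e\ge 2v+2(4-\ell)>2v$, whereas you rearrange to $\ell\ge 2+2g\ge 4$; these are the same computation, and your extra remark that $\varphi\ge 1$ is a harmless addition.
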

\begin{proof}
    Consider a graph with $\ell \le 3$ any $g \ge 1$ that does not contain an $\oD$-loop of length two.  Then because $l_i$ is even, the smallest possible $e$ is obtained when every $\oD$-loop has length $4$. From \eqref{eq:seven}, $\varphi = 1+ g + \frac{v}{2} - \frac{\ell}{2}$. We can write for $g \ge 1$, 
    \begin{align}
        e \geq 4\varphi  = 4(1+ g + \frac{v}{2} - \frac{\ell}{2}) = 4 + 4 g  + 2 v -2 \ell
        \ge 2 v + 2 (4- \ell)\,.
    \end{align}
    If $\ell \le 3$, then the above becomes, $e > 2 v$.
    Because the graphs we are considering are $4$-regular, we have that $e=2v$.
    This gives a contradiction, hence the statement is proven.
\end{proof}

\begin{cor}
\label{cor:N_dip}
Let $G$ be a connected, melon-free 
Feynman graph and $S_G$ its scheme of $\ell=1$ 
or
$\ell=2$. If $g \ge 1$, then there exists a N-dipole in $G$. In particular, there exists a N-dipole, a N-vertex or a B-vertex 
which has at least one N-dipole 
in $S_G$.
\end{cor}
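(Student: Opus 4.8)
The plan is to obtain the corollary as an essentially immediate consequence of Lemma \ref{lem:generall1} together with the observation recorded in Figure \ref{dipoles} that every $\oD$-loop of length $2$ is contained in an N-dipole. First I would note that $\ell=1$ and $\ell=2$ both lie in the range $\ell\le 3$ handled by Lemma \ref{lem:generall1}, so the hypothesis $g\ge1$ forces $G$ to contain an $\oD$-loop of length $2$. Such a straight face runs along exactly two edges, and by the remark illustrated in Figure \ref{dipoles} the two-edge subgraph it determines is an N-dipole (after possibly rotating one vertex by $180^\circ$, which is harmless since our Feynman graphs are combinatorial maps). This already establishes the first assertion.

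For the \emph{in particular} statement I would pass to the scheme $S_G$. Since $G$ is melon-free, $S_G$ is obtained from $G$ purely by contracting each maximal ladder to its ladder-vertex --- no melonic removals are involved --- so it suffices to track where the N-dipole $d$ found above ends up. If $d$ is not a rung of any ladder (a ladder being by definition a sequence of $n\ge2$ dipoles), then $d$ is untouched by the passage to $S_G$ and survives there as an N-dipole. Otherwise $d$ is a rung of some maximal ladder $L$; since $L$ then contains a rung of type N, it cannot be an L- or R-ladder, hence it is either an unbroken N-ladder, which is replaced by an N-vertex of $S_G$, or a B-ladder containing $d$, which is replaced by a B-vertex of $S_G$ that has at least one N-dipole among its rungs. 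In every case $S_G$ contains an N-dipole, an N-vertex, or a B-vertex with an N-dipole, as claimed.

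I do not anticipate a genuine obstacle: the argument is bookkeeping built on Lemma \ref{lem:generall1}, Figure \ref{dipoles}, and the definitions of Section \ref{sec:schemes}. The only points that deserve an explicit sentence are that a maximal ladder carrying an N-rung must be an N-ladder or a B-ladder (immediate from the definition of broken versus unbroken ladders) and that the passage to $S_G$ alters $G$ only through the maximal-ladder-to-ladder-vertex replacement (which is exactly where melon-freeness of $G$ enters).
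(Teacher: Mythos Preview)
Your proposal is correct and follows exactly the approach the paper intends: the corollary is stated immediately after Lemma~\ref{lem:generall1} with no separate proof, so the paper implicitly relies on precisely the chain you give --- Lemma~\ref{lem:generall1} produces an $\oD$-loop of length~$2$, the remark around Figure~\ref{dipoles} identifies it as an N-dipole, and the scheme statement is the obvious bookkeeping under maximal-ladder replacement. Your case split on whether the N-dipole sits inside a maximal ladder (and, if so, whether that ladder is an N-ladder or a B-ladder) is the correct way to make the ``in particular'' clause precise.
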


\begin{lemma}
\label{lem:sep-conn}
Let $G$ be a connected, melon-free Feynman graph and $S_G$ its scheme. 
\begin{itemize}
\item{}
If 
$\ell (G) = 1$, any N-dipole in $G$ is separating or connecting; any other dipole is separating.  
Furthermore, in $S_G$: any N-dipole or N-vertex is separating or connecting; any other dipole or ladder-vertex is separating.

\item{}
If  $\ell(G) =2$, any N-dipole in $G$ is separating, connecting (type I), or rearranging (type II), and any L-, or R-dipole is either separating, or connecting. Furthermore, in $S_G$: any N-dipole or N-vertex is separating, connecting (type I) or rearranging (type II), and L-, or R-dipole or L-, or R-vertex is separating or connecting; any other ladder-vertex (namely B-vertex) is separating.
\end{itemize}
\end{lemma}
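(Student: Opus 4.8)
The statement is a constraint on which contraction types can occur for $\ell=1$ and $\ell=2$ graphs; the natural tool is Proposition \ref{prop:dipolecases} (and its ladder-vertex analogue, Proposition \ref{prop:ladder-vertices}) combined with the non-negativity of $\ell$ and $g$. The overall strategy is: for a dipole (or ladder-vertex) of a given type in $G$ (resp.\ $S_G$), contract it; the pieces $G_1, G_2$ (separating case) or $G'$ (non-separating case) are again Feynman graphs with ladder-vertices, so by Proposition \ref{schemegraphcorresp} and the remarks in Section \ref{sec:schemes} their grades and genera are well-defined non-negative (half-)integers. Imposing $\ell(\cdot)\ge 0$, $g(\cdot)\ge 0$ on the formulas of Propositions \ref{prop:dipolecases}--\ref{prop:ladder-vertices} then rules out all the forbidden cases.

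\textbf{Step 1: the $\ell=1$ case.} First I would treat a non-separating dipole. If it is a non-separating L- or R-dipole, \eqref{eq:contNonSepLR} gives $\ell(G') = \ell(G) - (\sigma+3)$; with $\ell(G)=1$ and $\sigma=\pm 1$ this forces $\ell(G')\le -1<0$, a contradiction — so no non-separating L- or R-dipole exists, hence every L- or R-dipole is separating. For a non-separating N-dipole, \eqref{eq:contNonSepN} gives $\ell(G')=\ell(G)-2(\sigma+1)=1-2(\sigma+1)$; non-negativity forces $\sigma\le -1/2$, i.e.\ $\sigma=-1$, which is exactly the connecting (type I) case. So every N-dipole is separating or connecting. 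The same arguments, applied verbatim to ladder-vertices via \eqref{eq:LVcontNonSepLR}, \eqref{eq:LVcontNonSepN}, and \eqref{eq:LVcontNonSepB} (the last rules out a non-separating B-vertex since it would give $\ell(G')=\ell(G)-4<0$), give the statement for $S_G$. Note a separating L-, R-, or N-ladder-vertex is automatically fine by \eqref{eq:LVcontSep}. By the transitivity remark at the end of Section \ref{sec:schemes} (the 2PR/2PI and separating/non-separating properties pass between $G$ and $S_G$ under ladder replacement), the two bullets are consistent.

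\textbf{Step 2: the $\ell=2$ case.} Same template with $\ell(G)=2$. For a non-separating N-dipole, \eqref{eq:contNonSepN} gives $\ell(G')=2-2(\sigma+1)$, non-negative iff $\sigma\le 0$, i.e.\ $\sigma\in\{-1,0\}$ — connecting (type I) or rearranging (type II); $\sigma=1$ (type III) is excluded because it would give $\ell(G')=-2$. For a non-separating L- or R-dipole, \eqref{eq:contNonSepLR} gives $\ell(G')=2-(\sigma+3)$, non-negative iff $\sigma\le -1$, i.e.\ $\sigma=-1$ (connecting); $\sigma=1$ gives $\ell(G')=-2$, excluded. A non-separating B-vertex gives $\ell(G')=2-4=-2<0$, so B-vertices must be separating. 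This establishes both bullets once the same reasoning is transferred to ladder-vertices through Proposition \ref{prop:ladder-vertices}.

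\textbf{Main obstacle.} The arithmetic is routine; the one genuine subtlety is making sure the contracted objects $G_1, G_2, G'$ really are \emph{Feynman graphs with ladder-vertices} to which the non-negativity of $\ell$ and $g$ applies — in particular that contracting a ladder-vertex (after expanding it into a ladder of dipoles) and removing the resulting stray melonic $2$-point subgraphs lands back in this family with well-defined invariants; this is exactly what the discussion preceding Proposition \ref{prop:ladder-vertices} and Proposition \ref{schemegraphcorresp} guarantee, so it is safe to invoke. A secondary point to state carefully is why no \emph{other} non-separating configurations occur — e.g.\ why for $\ell=1$ there is no non-separating N-dipole of type II or III and no non-separating L/R-dipole at all — but this is precisely the content of the inequalities above, so no additional work is needed beyond citing Propositions \ref{prop:dipolecases} and \ref{prop:ladder-vertices} and the remark that genus and grade are non-negative.
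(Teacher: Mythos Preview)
Your proposal is correct and follows essentially the same approach as the paper's own proof: both argue by contracting a hypothetical non-separating dipole (or ladder-vertex), applying the grade-variation formulas from Propositions~\ref{prop:dipolecases} and~\ref{prop:ladder-vertices}, and using $\ell(G')\ge 0$ to rule out the disallowed values of $\sigma$ (and to exclude non-separating B-vertices). The paper's version is slightly terser and omits your ``main obstacle'' discussion, but the logic is identical.
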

\begin{proof}
This follows from the variation of the grade under a dipole or a ladder-vertex contraction (see Section \ref{sec:CombMoves}) and from the non-negativity of the grade. Consider a N-dipole in $G$ and suppose it is non-separating. Performing a dipole contraction yields another connected 
Feynman graph $G'$ such that, by Eq.\ \eqref{eq:contNonSepN}, 
\begin{itemize}
\item{}
if $\ell(G)=1$ and $\ell\geq0$, it implies $\sigma=-1$. 
Hence, the N-dipole is connecting.

Consider now a L- or R-dipole in $G$ and suppose it is non-separating. By Eq.\ \eqref{eq:contNonSepLR}, contracting this dipole yields another connected Feynman graph $G'$ such that $\ell(G')\le \ell(G)-2$, which is impossible because $\ell(G) = 1$ and $\ell \geq 0$.

\item{}
Suppose $\ell(G) =2$ and consider the contraction of a non-separating N-dipole resulting in graph $G'$.  Equation\ \eqref{eq:contNonSepN} states that 
$\ell(G')=\ell(G)-2(\sigma+1)$ with $\sigma=\pm1$ or $\sigma= 0 $. 
Since $\ell(G)=2$ and $\ell\geq0$, it implies $\sigma=-1$ or $\sigma = 0$.  
Hence, 
the N-dipole is 
either
connecting (type I) or rearranging (type II).

Consider now L- or R-dipole in $G$ and suppose it is non-separating. By Eq.\ \eqref{eq:contNonSepLR}, contracting this dipole yields another connected Feynman graph $G'$ such that $\ell(G')=\ell(G)-(\sigma+3)$, with $\sigma = \pm1$. 
Since $\ell(G)=2$ and $\ell\geq0$, it implies $\sigma=-1$ and consequently the dipole is connecting.

\end{itemize}

The same reasoning applies for any dipole or ladder-vertex in $S_G$ using a dipole or a ladder-vertex contraction and Eqs.\ \eqref{eq:LVcontNonSepN}-\eqref{eq:LVcontNonSepB}.

\end{proof}

Unless otherwise stated, we assume that the Feynman graphs are always connected in the following. With the  Corollary \ref{cor:N_dip}, 
Lemma \ref{2EL}, 
Lemma  \ref{lem:generall1}, and Lemma \ref{lem:sep-conn}, 
we can clearly manipulate the connected, melon-free Feynman graphs with $\ell=1$ and their schemes by successive insertions or contractions of: 1) connecting N-dipoles or N-vertices; and 2) separating dipoles or ladder-vertices. 
In the same way, we can also manipulate the connected, melon-free Feynman graphs $\ell=2$ and their schemes by successive insertions or contractions of: 
1) connecting (type I) or rearranging (type II) N-dipoles or N-vertices ; 2) separating dipoles or ladder-vertices; and 3) connecting ($\sigma = -1$) L-, or R-dipoles, or L-, or R-vertices.

To lay the groundwork for this construction, it is useful to list the situations in which the contraction of a dipole or a ladder in a $\ell=1$ 
or $\ell = 2$ melon-free Feynman graph generates a melonic $2$-point subgraph. Assume that $G$ is such a graph and that it contains a dipole or ladder $X$.

For $\ell=1$, if we assume that $X$ is non-separating, it is necessarily a connecting N-dipole or N-ladder by Lemma \ref{lem:sep-conn}.

For $\ell=2$, if we assume that $X$ is non-separating, it is necessarily either 
\begin{itemize}
\item 
a connecting ($\sigma = -1$ type I) N-dipole or N-ladder, 
\item 
a rearranging ($\sigma = 0$ type II) N-dipole or N-ladder, or 
\item 
a connecting ($\sigma = -1$) L-, or R-dipole or L-, or R-ladder,
\end{itemize}
by Lemma \ref{lem:sep-conn}.

\medskip

\begin{figure}[H]
     \centering
     \begin{subfigure}[b]{.9\textwidth}
         \begin{minipage}[t]{0.9\textwidth}
\centering
\def\svgwidth{0.65\columnwidth}
\tiny{
\begingroup%
  \makeatletter%
  \providecommand\color[2][]{%
    \errmessage{(Inkscape) Color is used for the text in Inkscape, but the package 'color.sty' is not loaded}%
    \renewcommand\color[2][]{}%
  }%
  \providecommand\transparent[1]{%
    \errmessage{(Inkscape) Transparency is used (non-zero) for the text in Inkscape, but the package 'transparent.sty' is not loaded}%
    \renewcommand\transparent[1]{}%
  }%
  \providecommand\rotatebox[2]{#2}%
  \newcommand*\fsize{\dimexpr\f@size pt\relax}%
  \newcommand*\lineheight[1]{\fontsize{\fsize}{#1\fsize}\selectfont}%
  \ifx\svgwidth\undefined%
    \setlength{\unitlength}{412.70764449bp}%
    \ifx\svgscale\undefined%
      \relax%
    \else%
      \setlength{\unitlength}{\unitlength * \real{\svgscale}}%
    \fi%
  \else%
    \setlength{\unitlength}{\svgwidth}%
  \fi%
  \global\let\svgwidth\undefined%
  \global\let\svgscale\undefined%
  \makeatother%
  \begin{picture}(1,0.08566207)%
    \lineheight{1}%
    \setlength\tabcolsep{0pt}%
    \put(0,0){\includegraphics[width=\unitlength,page=1]{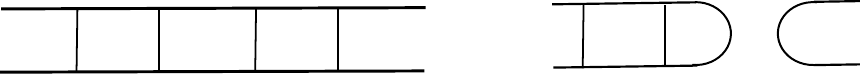}}%
    \put(0.11688382,0.02913336){\color[rgb]{0,0,0}\makebox(0,0)[lt]{\lineheight{1.25}\smash{\begin{tabular}[t]{l}\scalebox{1}{$Y$}\end{tabular}}}}%
    \put(0.7084932,0.03779098){\color[rgb]{0,0,0}\makebox(0,0)[lt]{\lineheight{1.25}\smash{\begin{tabular}[t]{l}\scalebox{1}{$Y$}\end{tabular}}}}%
    \put(0.33044051,0.02913336){\color[rgb]{0,0,0}\makebox(0,0)[lt]{\lineheight{1.25}\smash{\begin{tabular}[t]{l}\scalebox{1}{$X$}\end{tabular}}}}%
    \put(0,0){\includegraphics[width=\unitlength,page=2]{create-melon1.pdf}}%
  \end{picture}%
\endgroup%

}
\end{minipage}
         \caption{
         $X$ is a connecting N, rearranging N, connecting L, connecting R, or separating.
         For separating, $X \in  \{$N-dipole, $\No$, $\Ne$, L-dipole, R-dipole, L, R, B$\}$  in $\ell=1$ and $\ell=2$ graphs. For connecting,  $X \in  \{$N-dipole$, \,\No, \,\Ne\}$ in $\ell =1$ graphs, and $X \in  \{$N-dipole, \, L-dipole,  \, R-dipole \,$\No$, \,$\Ne$, \, L, \, R $\}$
         in $\ell =2$ graphs.
        }
         \label{fig:create-melon1}
     \end{subfigure}
     \begin{subfigure}[b]{0.9\textwidth}
     \begin{minipage}[t]{0.9\textwidth}
\centering
\def\svgwidth{0.8\columnwidth}
\tiny{
\begingroup%
  \makeatletter%
  \providecommand\color[2][]{%
    \errmessage{(Inkscape) Color is used for the text in Inkscape, but the package 'color.sty' is not loaded}%
    \renewcommand\color[2][]{}%
  }%
  \providecommand\transparent[1]{%
    \errmessage{(Inkscape) Transparency is used (non-zero) for the text in Inkscape, but the package 'transparent.sty' is not loaded}%
    \renewcommand\transparent[1]{}%
  }%
  \providecommand\rotatebox[2]{#2}%
  \newcommand*\fsize{\dimexpr\f@size pt\relax}%
  \newcommand*\lineheight[1]{\fontsize{\fsize}{#1\fsize}\selectfont}%
  \ifx\svgwidth\undefined%
    \setlength{\unitlength}{622.70832212bp}%
    \ifx\svgscale\undefined%
      \relax%
    \else%
      \setlength{\unitlength}{\unitlength * \real{\svgscale}}%
    \fi%
  \else%
    \setlength{\unitlength}{\svgwidth}%
  \fi%
  \global\let\svgwidth\undefined%
  \global\let\svgscale\undefined%
  \makeatother%
  \begin{picture}(1,0.35334642)%
    \lineheight{1}%
    \setlength\tabcolsep{0pt}%
    \put(0,0){\includegraphics[width=\unitlength,page=1]{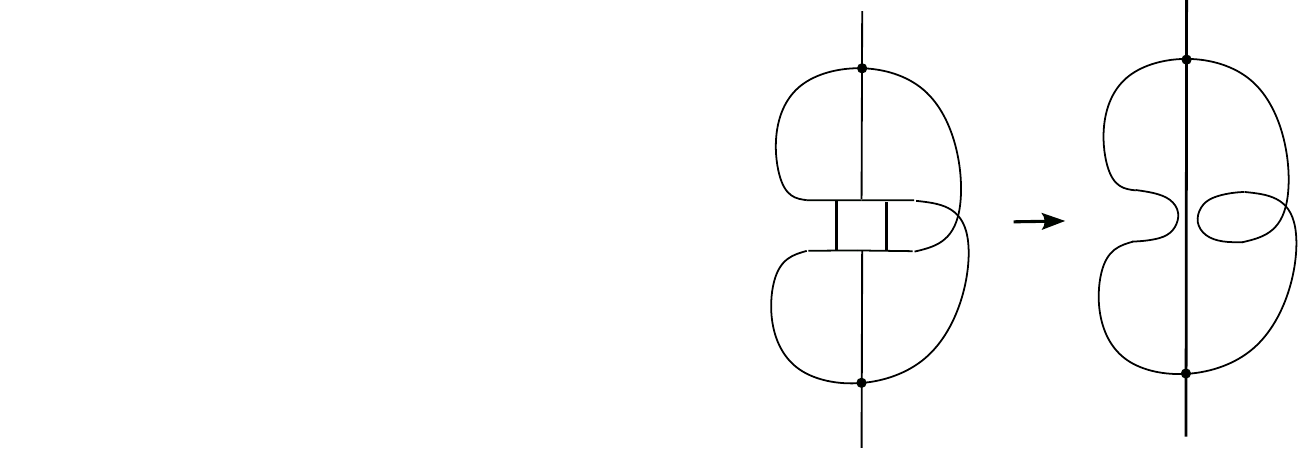}}%
    \put(0.65436266,0.17184922){\color[rgb]{0,0,0}\makebox(0,0)[lt]{\lineheight{1.25}\smash{\begin{tabular}[t]{l}\scalebox{1}{${\rm X}$}\end{tabular}}}}%
    \put(0,0){\includegraphics[width=\unitlength,page=2]{Createmelon4.pdf}}%
    \put(0.056576,0.16419873){\color[rgb]{0,0,0}\makebox(0,0)[lt]{\lineheight{1.25}\smash{\begin{tabular}[t]{l}\scalebox{1}{${\rm X}$}\end{tabular}}}}%
    \put(0,0){\includegraphics[width=\unitlength,page=3]{Createmelon4.pdf}}%
  \end{picture}%
\endgroup%

}
\end{minipage}
         \caption{On the left, $X$ is a connecting N, and $X \in  \{$N-dipole, \,$\No$$\}$ (note that  $X \ne \Ne$)
        in $\ell =1, \,2$ graphs. On the right, on the other hand, in a subgraph with a slightly different structure, $X$
        is a rearranging N, a connecting L, or a connecting R, i.e., 
        $X \in  \{\Ne,\,$ 
        L-dipole, R-dipole, L, R
        $\}$ possible only in $\ell =2$ graphs.
        Shown are only one orientation assignment on edges, and one needs to consider the other assignment as well. 
        }\label{fig:create-melon2}
     \end{subfigure}     
     \begin{subfigure}[b]{0.9\textwidth}
\begin{minipage}[t]{0.9\textwidth}
\centering
\def\svgwidth{0.8\columnwidth}
\tiny{
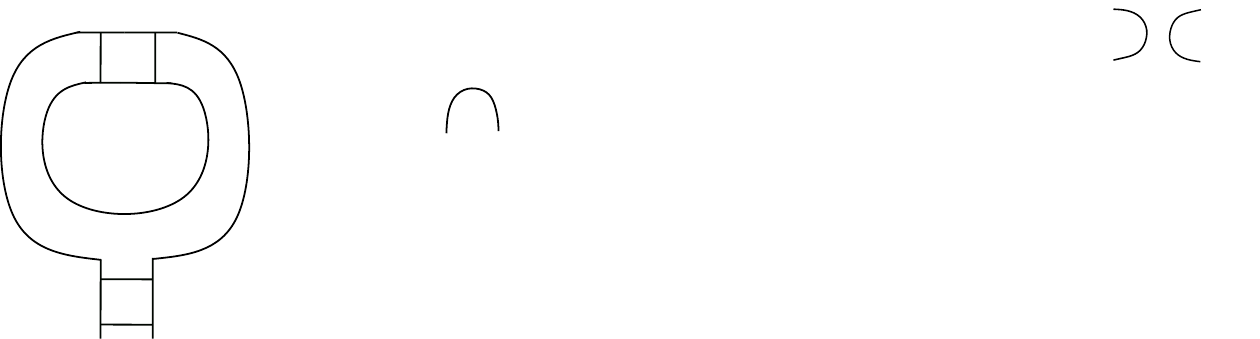
}
\end{minipage}
         \caption{On the left, $X$ is a connecting N, a connecting L, or a connecting R, and more precisely $X = \Ne$
        ($ X \notin \{$N-dipole$, \,\No\}$) in $\ell = 1$ graphs, and 
        $X \in  \{\Ne, \, $L-dipole, R-dipole, L, R$\}$ in $\ell =2$ graphs. On the right,
        on the other hand, in a subgraph with a slightly different structure, $X$
        is a rearranging N, and $X \in  \{ $N-dipole$, \,\No\}$  possible only in $\ell =2$ graphs.
        Shown are only one orientation assignment on edges, and one needs to consider the other assignment as well.        
        }\label{fig:create-melon3}
     \end{subfigure}     

     \caption{\small Configurations in which the contraction of a dipole or ladder $X$ in a 
     $\ell=1$ or $\ell=2$ melon-free Feynman graph, assumed to be connecting (i.e., any $\sigma = -1$) N, L, or R rearranging N,  or separating, generates a melonic $2$-point subgraph (where $Y$ is itself a dipole or a ladder).}
        \label{fig:generate-melon}
\end{figure}

\begin{remark}
When $X$ is contracted in a Feynman graph $G$ with $\ell=1$ or $\ell=2$, a new Feynman graph $G'$ with $\ell=1$ or $\ell=2$ is created. This new graph is \emph{not} melon-free in the following three situations:

\begin{enumerate}
    \item 
    As shown in Figure \ref{fig:create-melon1}, $X$ is connected on one side to another dipole or ladder $Y$ (of any type). For $\ell=1$, $X$ can be a N-dipole, a N$\mathrm{_o}$-, or a N$\mathrm{_e}$-ladder. For $\ell=2$, $X$ can be an N-dipole, an N$\mathrm{_o}$-, or an N$\mathrm{_e}$-ladder, or an L-, or R-dipole, or an  L-, or R-ladder;

    \item 
$X$ is inserted in between two edges of an elementary $2$-point melon, as illustrated in Figure \ref{fig:create-melon2}. Note that the choice of pair of edges on which $X$ is inserted, and therefore their orientation, is fixed by the requirement that either $X$ is connecting (as on the left of Figure \ref{fig:create-melon2}), or a rearranging N,  a connecting L-, or R-ladder (as on the right of Figure \ref{fig:create-melon2}). 
    As a result, $X$ must be an N-dipole or an N$\mathrm{_o}$-ladder (as on the left of Figure \ref{fig:create-melon2}), or else must be an N$\mathrm{_e}$-ladder, L-, or R-dipole, or L-, or R-ladder (as on the right of Figure \ref{fig:create-melon2})
    possible in $\ell=2$ graphs only;

    \item 
    $X$ is forming a $2$-point subgraph on one side of a dipole or ladder $Y$ (of arbitrary type), as illustrated in Figure \ref{fig:create-melon3}. 
    Note that the orientations of the edges are again fixed by the requirement that 
    $X$ is either a connecting N (as on the left of Figure \ref{fig:create-melon3}),  or it is a rearranging N (as on the right of Figure \ref{fig:create-melon3}).
    As a result, $X$ must be an N$\mathrm{_e}$-ladder in either $\ell=0\,,1\,,2$ graphs,  an L-, or R-dipole, or an   L-, or R-ladder (as on the left of Figure \ref{fig:create-melon3}), or must be an N-dipole or an N$\mathrm{_o}$-ladder possible only in $\ell=2$ graphs (as on the right of Figure \ref{fig:create-melon3}).
\end{enumerate}

If we assume instead that $X$ is separating, it can be of any type by Lemma \ref{lem:sep-conn}, and its contraction yields two Feynman graphs $G_1$ and $G_2$, such that $\ell(G_1) + \ell(G_2) = \ell(G)$. 
\end{remark}

We now investigate the structure of melon-free Feynman graphs of any genus with $\ell = 1$ and $\ell=2$.

The upcoming Theorem \ref{thm:ell1ell2planar} will provide all the planar $g=0$, $\ell = 1$  and $\ell=2$ melon-free Feynman graphs by specifying their schemes.

Later on, in Theorem \ref{propo:g1ell1}, we will provide all the $\ell = 1$ melon-free Feynman graphs of $g=1$ by specifying their schemes, and in Theorem \ref{propo:g1ell2}, all the 2PI $\ell=2$ melon-free Feynman graphs of $g=1$ by specifying their 2PI schemes.

Finally, Theorem \ref{thm:induction} offers inductive way of recursively constructing all the $\ell=1$ and $\ell=2$ Feynman graphs of any genus.

\medskip

To begin, let us first recall from \cite{MR4450018} that there is one  scheme of genus zero (given in Figure \ref{fig:Ell0g0.jpg}) and one scheme of genus one (given in Figure \ref{fig:Ell0g1.jpg} with $\ell=0$). 
We point out here an apparent discrepancy, by remarking that more schemes were reported in \cite{MR4450018} simply because the authors in \cite{MR4450018} worked in the class of rooted diagrams, whereas in our current work, we work with unrooted diagrams.

\begin{figure}[H]
\centering
\begin{minipage}[t]{0.9\textwidth}
\centering
\def\svgwidth{0.3\columnwidth}
\tiny{
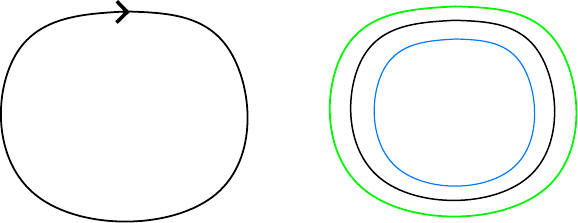
}
\caption{There is only one scheme of $\ell=0$ and $g=0$, the cycle graph.
The left is a colored graph representation, and the right is the stranded notation. Green is L-strand and in blue, we show R-strand.}

\label{fig:Ell0g0.jpg}
\end{minipage}
\end{figure}

\begin{figure}[H]
\centering
\begin{minipage}[t]{0.9\textwidth}
\centering
\def\svgwidth{0.15\columnwidth}
\tiny{
\begingroup%
  \makeatletter%
  \providecommand\color[2][]{%
    \errmessage{(Inkscape) Color is used for the text in Inkscape, but the package 'color.sty' is not loaded}%
    \renewcommand\color[2][]{}%
  }%
  \providecommand\transparent[1]{%
    \errmessage{(Inkscape) Transparency is used (non-zero) for the text in Inkscape, but the package 'transparent.sty' is not loaded}%
    \renewcommand\transparent[1]{}%
  }%
  \providecommand\rotatebox[2]{#2}%
  \newcommand*\fsize{\dimexpr\f@size pt\relax}%
  \newcommand*\lineheight[1]{\fontsize{\fsize}{#1\fsize}\selectfont}%
  \ifx\svgwidth\undefined%
    \setlength{\unitlength}{120.55869876bp}%
    \ifx\svgscale\undefined%
      \relax%
    \else%
      \setlength{\unitlength}{\unitlength * \real{\svgscale}}%
    \fi%
  \else%
    \setlength{\unitlength}{\svgwidth}%
  \fi%
  \global\let\svgwidth\undefined%
  \global\let\svgscale\undefined%
  \makeatother%
  \begin{picture}(1,0.91299295)%
    \lineheight{1}%
    \setlength\tabcolsep{0pt}%
    \put(0,0){\includegraphics[width=\unitlength,page=1]{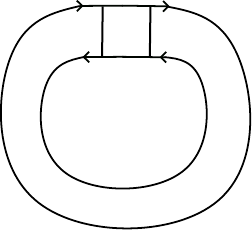}}%
    \put(0.43436027,0.75772678){\color[rgb]{0,0,0}\makebox(0,0)[lt]{\lineheight{1.25}\smash{\begin{tabular}[t]{l}\scalebox{1}{${\rm N}_{\rm e}$}\end{tabular}}}}%
  \end{picture}%
\endgroup%

}
\caption{
There is only one $\ell=0$, $g=1$ scheme, $S_1$, including the opposite orientation assignment on the edges. 
}
\label{fig:Ell0g1.jpg}
\end{minipage}
\end{figure}

By plugging in $\ell=2$ in equation \eqref{eq:seven}, we obtain the following relation
\begin{align}
    \varphi = g + \frac{v}{2}.
\end{align}
This relation implies that the number of vertices should be even.

\begin{lemma}
\label{l2}
For each graph with $\ell=2$, $g=0$, it either contains at least one $\oD$-loop of length $2$, otherwise every $\oD$-loop is of length $4$.
\end{lemma}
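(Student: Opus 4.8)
The statement is an edge-counting argument of exactly the same flavour as Lemma \ref{2EL} and Lemma \ref{lem:generall1}, so the plan is to reuse the partition identity \eqref{partition} together with the specialization of \eqref{eq:seven} to $\ell=2$, $g=0$. First I would record that, by plugging $\ell=2$ and $g=0$ into \eqref{eq:seven} (equivalently by \eqref{eq:varphi}), one gets $\varphi = v/2$, and that since the graphs are $4$-regular we have $e = 2v$. I would also note the standing fact, used already in the proofs of Lemma \ref{2EL} and Lemma \ref{lem:generall1}, that every $\oD$-loop has even length $l_i$, hence $l_i \ge 2$ always, and $l_i \ge 4$ whenever $l_i \ne 2$.

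The core of the argument is then a short dichotomy. Suppose $G$ does not contain an $\oD$-loop of length $2$; I want to conclude that every $\oD$-loop has length exactly $4$. Since there is no loop of length $2$ and all lengths are even, we have $l_i \ge 4$ for every $i = 1,\dots,\varphi$. Combining this with \eqref{partition}, $e = 4$-regularity and $\varphi = v/2$ gives
\begin{align}
    2v = e = \sum_{i=1}^{\varphi} l_i = \sum_{i=1}^{v/2} l_i \ \ge\ 4\cdot\frac{v}{2} = 2v .
\end{align}
Hence the inequality is saturated, which forces $l_i = 4$ for every $i$. This proves the claim: either some $\oD$-loop has length $2$, or else all of them have length $4$.

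Since the computation collapses immediately once $\varphi = v/2$ and $e=2v$ are in hand, there is essentially no obstacle here; the only point that deserves a sentence is the parity of the $\oD$-loop lengths, which I would either cite from the discussion preceding \eqref{partition} or justify by the observation that each $\oD$-loop is a closed strand that alternates between the two ``halves'' dictated by the tetrahedral vertex, so it visits an even number of edges. Everything else is the one-line chain of (in)equalities above.
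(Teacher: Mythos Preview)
Your proof is correct and follows essentially the same edge-counting approach as the paper, using $\varphi = v/2$, $e=2v$, and the parity of the $l_i$. The only cosmetic difference is that the paper argues by contradiction (assuming some $l_i \ge 6$ and deriving $e \ge 4(\varphi-1)+6 = 2v+2 > 2v$), whereas you observe directly that the inequality $2v = \sum l_i \ge 4\varphi = 2v$ is saturated and hence every $l_i = 4$; these are two phrasings of the same argument.
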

\begin{proof}
    Consider a planar graph with $\ell=2$ that does not contain an $\oD$-loop of length two and not every $\oD$-loop has length $4$.  Then because $l_i$ is even, the smallest possible $e$ is obtained when every $\oD$-loop but one has length $4$, and the remaining one has length $6$. We can write
    \begin{align}
        e \geq 4(\varphi-1) + 6 = 4(
        \frac{v}{2}-1) + 6 = 
        2v + 2 > 2v.
    \end{align}
    Because the graphs we are considering are $4$-regular, we have that $e=2v$.
    This gives a contradiction, hence the statement is proved.
\end{proof}

We now investigate the number of schemes based on the grade and the genus of the graph we are considering. Before doing so, the following definition gives a direct way to determine whether two schemes are isomorphic or not.

\begin{definition}[Graph isomorphism]
An isomorphism of graphs $G$ and $H$ is a bijection $f$ between the vertex sets of $G$ and $H$, $f: V(G)\rightarrow V(H)$
such that any two vertices $u$ and $v$ of $G$ are adjacent in $G$ if and only if $f(u)$ and $f(v)$
are adjacent in $H$. 
\end{definition}

\begin{definition}[Combinatorial map isomorphism \cite{MR1851080}]
Two combinatorial maps or graphs embedded on orientable surfaces, known as cyclic graphs, are considered isomorphic if there is a graph isomorphism between their underlying graphs that also preserves the cyclic order of edges emanating from each vertex.
\end{definition}

\begin{definition}[Isomorphic schemes]
\label{def:isomorphism}
Two schemes $S_1$ and $S_2$ are considered isomorphic if they are combinatorial map isomorphic in such a way that each ladder-vertex in $S_1$ corresponds to the same type of ladder-vertex in $S_2$. Additionally, the isomorphism must preserve the orientations of the edges (equivalently face structures).
Specifically, if a face $f_1$ in $S_1$ passes through certain vertices and edges, its counterpart $f_2$ in $S_2$ should traverse the images of those same vertices and edges in $S_2$.
\end{definition}
From this definition, it is clear that if two schemes have faces of the same length connected in identical ways, then if $f$ is an R-face (or L-face or $\oD$-face) in $S_1$, its corresponding face in $S_2$ must also be an R-face (or L-face or $\oD$-face), respectively. An illustration of isomorphic schemes are given by the last two graphs in Figure \ref{fig:App2}.

\medskip

\subsection{Complete characterization of $\ell=1$ and $\ell=2$ with arbitrary $g$}
\label{sec:completeell12}

This subsection recursively enumerates  the Feynman graphs with grades $\ell =1, 2$ and arbitrary genus. 
 The procedure begins with graphs with genus $g=0$,  giving the exact numbers of corresponding schemes  by classifying them as 2PI or 2PR graphs, for $\ell=1, 2$.
 For $g=1$, the same is achieved for $\ell=1$, but only for 2PI in the case of $\ell=2$.
These findings were immediately recursively extended to arbitrary genus for $\ell=1, 2$. We should note that the exact numbers of such schemes are not specified for $g >1$, but we provide an algorithm that allows us to fully recover all of them. Furthermore, we should emphasize that we work under the isomorphism condition given in Definition \ref{def:isomorphism}, namely, that two schemes are the same if they are isomorphic.

\begin{thm}
\label{thm:ell1ell2planar}
Exaustively, there are two schemes of $\ell=1$ and $g=0$ (Figure \ref{fig:Ell1g0.jpg}). They are 2PI and called the infinity graphs.
Furthermore, there are three 2PI schemes (Figure \ref{fig:Sum3}) and 27 2PR schemes of $\ell=2$ and $g=0$ (Figure \ref{fig:Sum4}).
\end{thm}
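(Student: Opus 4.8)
The plan is to treat the four families separately --- the two $\ell=1$ schemes, and the $\ell=2$ 2PI and 2PR schemes --- in each case reducing a general connected melon-free scheme by the dipole/ladder-vertex contraction moves of Section \ref{sec:CombMoves} down to the unique $\ell=0$, $g=0$ scheme (the cycle graph), and then reversing the reduction to enumerate, always modulo the scheme-isomorphism of Definition \ref{def:isomorphism}.

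For $\ell=1$, $g=0$: I would first record that Proposition \ref{prop:grade} forces $\varphi=(v+1)/2$, hence $v$ is odd, and that Lemma \ref{2EL} guarantees an $\oD$-loop of length $2$, hence (by the remark around Figure \ref{dipoles}) an N-dipole, which in the scheme lies inside an N-dipole or N-vertex. By Lemma \ref{lem:sep-conn} this is separating or connecting. The connecting case is impossible: Eqs.\ \eqref{eq:contNonSepN}/\eqref{eq:LVcontNonSepN} would give $g(G')=g-1=-1$. The separating case produces, after contraction, a closed component with $\ell=g=0$; by Remark \ref{rk:melon} it is melonic, hence either it is the cycle graph --- in which case the dipole is a pendant two-point insertion, exhibiting a two-edge-cut and contradicting 2PI --- or it has $v\ge 2$ and contains an elementary melonic $2$-point subgraph that survives in $G$, contradicting melon-freeness. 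Thus any such scheme has $v=1$; a single tetrahedral vertex closes in exactly two orientation-compatible ways, and a direct check shows both give $\ell=1$, $g=0$, are non-isomorphic, and are 2PI --- the two infinity graphs. Absence of 2PR $\ell=1$, $g=0$ schemes then follows from Lemma \ref{lem:2PR}: a flip would split off an $\ell=0$, $g=0$ two-point graph, which in a melon-free scheme is a bare propagator, so the cut is degenerate.

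For $\ell=2$, $g=0$, 2PI schemes: I would first note that a 2PI Feynman graph with ladder-vertices contains no separating dipole or ladder-vertex, since a separating one exhibits a two-edge-cut; hence by Lemma \ref{lem:sep-conn} every dipole/ladder-vertex is connecting-N (type I), rearranging-N (type II), or connecting-L/R. Using Lemma \ref{l2}: either there is an $\oD$-loop of length $2$, giving an N-dipole/N-vertex that cannot be type I (again $g'=-1$), hence is a rearranging N whose contraction has $\ell'=\ell-2=0$, $g'=0$ and so yields the cycle graph; or all $\oD$-loops have length $4$, a configuration I would identify as exactly the cycle graph with one connecting-L or connecting-R vertex inserted (also reached by a single contraction to the cycle graph). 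Reversing, and using Remark \ref{remarkuniqueness} together with the melon-creation configurations of Figures \ref{fig:create-melon1}--\ref{fig:create-melon3} to control which insertions keep the graph melon-free, a 2PI melon-free $\ell=2$, $g=0$ scheme is obtained from the cycle graph by inserting exactly one rearranging-N, connecting-L, or connecting-R vertex; enumerating these insertions up to Definition \ref{def:isomorphism} (carefully accounting for the orientation-flip symmetry of the remark around Figure \ref{fig:IO}) yields the three schemes of Figure \ref{fig:Sum3}.

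For the 2PR $\ell=2$, $g=0$ schemes: these contain a two-edge-cut, so I would apply the flip of Lemma \ref{lem:2PR}, splitting $G$ into two-point schemes $G_1,G_2$ with $g(G_i)=0$ and $(\ell(G_1),\ell(G_2))\in\{(0,2),(1,1),(2,0)\}$. The task then reduces to (a) classifying melon-free two-point (``open'') schemes of genus zero and grade $0,1,2$ --- obtained from the closed ones by opening an edge or amputating a ladder-vertex --- and (b) counting, up to Definition \ref{def:isomorphism}, the ways of regluing a compatible pair along the cut, while correctly handling the case where dipoles on the two sides merge into a longer maximal ladder, so that the glued object must be re-reduced to a scheme. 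I expect step (b) --- matching the total to exactly $27$ without over- or under-counting, and discarding the isomorphic duplicates collected in Appendix \ref{sec:appiso} --- to be the main obstacle; the construction with explicit examples is spelled out in Appendix \ref{sec:app2PR}. By contrast, the 2PI counts ($2$ for $\ell=1$, $3$ for $\ell=2$) follow quickly once the no-separating-structure observation and grade-non-negativity are in place.
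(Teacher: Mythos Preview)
Your approach via the contraction moves of Section \ref{sec:CombMoves} is genuinely different from the paper's, and for $\ell=1$ it works. The paper instead uses a direct combinatorial reduction: the local ``remove a length-$2$ $\oD$-loop'' move of Figure \ref{2ELremoval} (which preserves both $\ell$ and $g$) to strip any planar $\ell=1$ graph down to the infinity graph, and for $\ell=2$ strips down to a $(4,4,\ldots,4)$ core, which is then built explicitly from the length-$4$ building blocks of Figure \ref{4EL}. The advantage of the paper's route is that it treats 2PI and 2PR uniformly and produces the $27$ 2PR schemes directly from the building-block enumeration, avoiding the reglue-and-deduplicate step you flag as the main obstacle.

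Your $\ell=2$, 2PI argument, however, contains a genuine error. You write that an N-dipole/N-vertex in such a scheme ``cannot be type I (again $g'=-1$), hence is a rearranging N whose contraction has $\ell'=\ell-2=0$, $g'=0$''. But Eq.\ \eqref{eq:contNonSepN} (and \eqref{eq:LVcontNonSepN}) says $g(G')=g(G)-1$ for \emph{every} non-separating N, independently of $\sigma$; only $\Delta\ell=-2(\sigma+1)$ depends on the type. So a rearranging N at $g=0$ is just as impossible as a connecting one. Combined with the fact that separating structures are excluded by 2PI, this means a 2PI $\ell=2$, $g=0$ scheme has no N-dipole/N-vertex at all, hence no length-$2$ $\oD$-loop, hence is already a $(4,\ldots,4)$ graph. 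Your claimed characterisation ``cycle graph with one connecting-L or connecting-R inserted'' then only accounts for the two closed-chain schemes $S_2^{2,0}$ and misses $S_1^{2,0}$ (the left graph of Figure \ref{l2noTP}), which has four standard vertices, two interleaved length-$4$ $\oD$-loops, and is not a single L/R insertion on the cycle graph --- contracting one of its L/R-dipoles produces a melon. To recover it via your approach you would need to track the melon-creating contractions of Figure \ref{fig:generate-melon} more carefully, or else fall back on the paper's direct building-block analysis.
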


\begin{figure}[H]
\centering
\begin{minipage}[t]{0.9\textwidth}
\centering
\def\svgwidth{0.4\columnwidth}
\tiny{
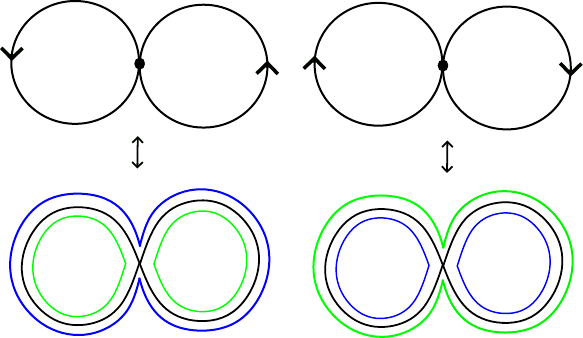
}
\caption{
All $\ell=1$, $g=0$ schemes.
There are two schemes of $\ell=1$ and $g=0$ in total.
We call this type the infinity graph.}
\label{fig:Ell1g0.jpg}
\end{minipage}
\end{figure}

\begin{figure}[H]
\centering
\begin{minipage}[t]{0.9\textwidth}
\centering
\def\svgwidth{0.7\columnwidth}
\tiny{
\begingroup%
  \makeatletter%
  \providecommand\color[2][]{%
    \errmessage{(Inkscape) Color is used for the text in Inkscape, but the package 'color.sty' is not loaded}%
    \renewcommand\color[2][]{}%
  }%
  \providecommand\transparent[1]{%
    \errmessage{(Inkscape) Transparency is used (non-zero) for the text in Inkscape, but the package 'transparent.sty' is not loaded}%
    \renewcommand\transparent[1]{}%
  }%
  \providecommand\rotatebox[2]{#2}%
  \newcommand*\fsize{\dimexpr\f@size pt\relax}%
  \newcommand*\lineheight[1]{\fontsize{\fsize}{#1\fsize}\selectfont}%
  \ifx\svgwidth\undefined%
    \setlength{\unitlength}{411.60459984bp}%
    \ifx\svgscale\undefined%
      \relax%
    \else%
      \setlength{\unitlength}{\unitlength * \real{\svgscale}}%
    \fi%
  \else%
    \setlength{\unitlength}{\svgwidth}%
  \fi%
  \global\let\svgwidth\undefined%
  \global\let\svgscale\undefined%
  \makeatother%
  \begin{picture}(1,0.2262219)%
    \lineheight{1}%
    \setlength\tabcolsep{0pt}%
    \put(0,0){\includegraphics[width=\unitlength,page=1]{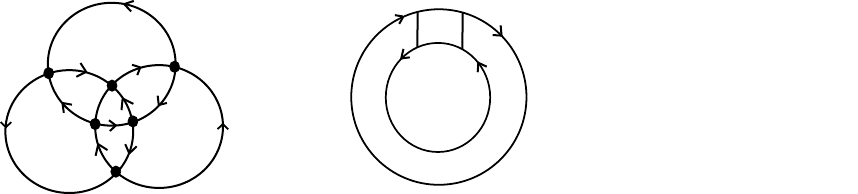}}%
    \put(0.50741566,0.18641892){\color[rgb]{0,0,0}\makebox(0,0)[lt]{\lineheight{1.25}\smash{\begin{tabular}[t]{l}\scalebox{1.2}{L}\end{tabular}}}}%
    \put(0,0){\includegraphics[width=\unitlength,page=2]{Sum3.pdf}}%
    \put(0.89154544,0.18569552){\color[rgb]{0,0,0}\makebox(0,0)[lt]{\lineheight{1.25}\smash{\begin{tabular}[t]{l}\scalebox{1.2}{R}\end{tabular}}}}%
    \put(0,0){\includegraphics[width=\unitlength,page=3]{Sum3.pdf}}%
  \end{picture}%
\endgroup%

}
\caption{
All $\ell=2$, $g=0$ 2PI schemes.
There are three 2PI schemes of $\ell=2$ and $g=0$ in total. Let us call the leftmost type $S_{1}^{2,0}$, the middle and the right type $S_2^{2,0}$.}
\label{fig:Sum3}
\end{minipage}
\end{figure}

\begin{figure}[H]
\centering
\begin{minipage}[t]{0.95\textwidth}
\centering
\def\svgwidth{0.9\columnwidth}
\tiny{
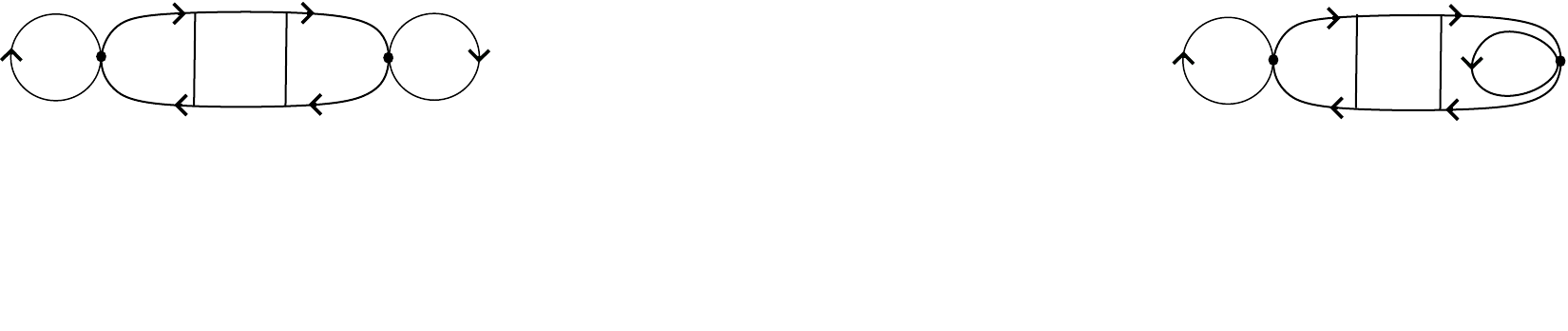
}
\caption{
All $\ell=2$, $g=0$ 2PR schemes.
There are 27 2PR schemes of $\ell=2$ and $g=0$ in total.
Let us call this type (all 27 of them) $S_3^{2,0}$.
$Y \in \{$L-dipole$, $R-dipole$, {\rm  L}, {\rm  R}, {\rm N}_{\rm e}, {\rm B}\}$ and $\widetilde {\rm N}_{\rm o} \in \{ $N-dipole$, {\rm N}_{\rm o}\}$.
}
\label{fig:Sum4}
\end{minipage}
\end{figure}

\begin{proof}  Let us ignore the orientations of the edges at the moment. 
We will recover them at the end of the proof.
Consider a  graph 
with  $\ell=1$, $g=0$. Because of 
Lemma \ref{2EL}
there is an $\oD$-loop of length $2$ present. If the graph has one vertex, there is only one possibility. It is the cycle graph with one tadpole added, also referred to as the infinity graph, depicted in Figure \ref{fig:infinity}.

\begin{figure}[H]
\centering
\begin{minipage}[t]{0.9\textwidth}
\centering
\def\svgwidth{0.15\columnwidth}
\tiny{
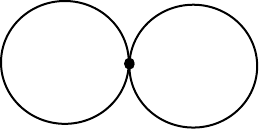
}
\caption{
The infinity graph.}
\label{fig:infinity}
\end{minipage}
\end{figure}

If it has more than one vertex and is connected, then the $\oD$-loop of length 2 necessarily has to go through two vertices and is contained in an N-dipole and the graph will be of the form as the graph on the left hand side of Figure \ref{2ELremoval}.
The planarity condition necessarily yields this form of the graph.
One can easily prove that\footnote{By planarity of the graphs in Figure \ref{2ELremoval}, $g=0$ before and after. $v$ will decrease by 2 whereas $\varphi$ will increase by 2, leaving $\ell$ invariant.} removing the $\oD$-loop of length 2 and its associated two vertices, as depicted in Figure \ref{2ELremoval},  does not change $\ell$ or $g$. Therefore, we obtain again a planar ($g=0$) graph with $\ell=1$. Consequently, the resulting graph contains an $\oD$-loop of length $2$. We can again remove from the resulting graph the $\oD$-loop of length 2 without changing $\ell$ or $g$. By continuing this procedure of removing an $\oD$-loop with length 2 and its associated two vertices while keeping $\ell$ and $g$ invariant, we end up with a planar graph with $\ell=1$ and only one vertex. 
\begin{figure}[H]
\begin{minipage}[t]{0.9\textwidth}
\centering
\def\svgwidth{0.4\columnwidth}
\tiny{
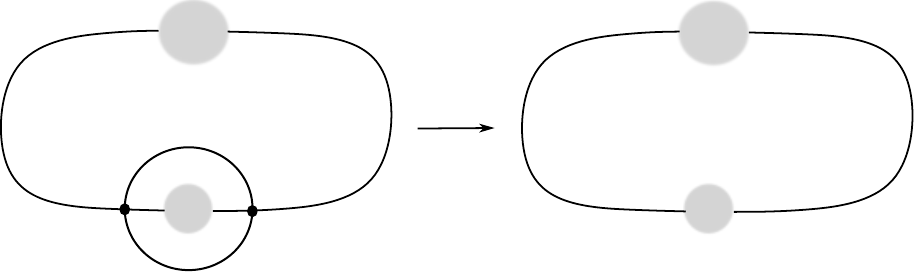
}
\end{minipage}
\caption{Removal of an $\oD$-loop of length $2$. Grey discs represent any subgraphs.}
\label{2ELremoval}
\end{figure}

By reversing this reasoning, one can conclude that a graph with $\ell=1$, $g=0$ can be constructed from the infinity graph by adding $\oD$-loops of length $2$.
In this case, these $\oD$-loops of length $2$ are necessarily part of a melonic two-point function, owing to the fact that only in one of the grey discs in Figure \ref{2ELremoval}, there exists a tadpole and the other grey disc only contains melons.
In conclusion, the resulting graph is the infinity graph decorated with melons.
Therefore, in schemes, we just proved that the infinity graph shown in Figure \ref{fig:infinity} is the only $\ell=1$, $g=0$ scheme{\footnote{In \cite{MR3336566}, they have proven also that the infinity graph is the only $\ell=1$ $g=0$ scheme.}}.
If we recover the edge orientation assignment in the infinity graph, we obtain two possibilities for the infinity graph which are depicted in Figure \ref{fig:Ell1g0.jpg}.

\medskip 

Consider a planar graph with $\ell=2$. If an $\oD$-loop of length two exists, we can repeat part of the reasoning from the case for $\ell=1$ and remove $\oD$-loops of length two iteratively while keeping $\ell$ and $g$ invariant. The resulting graph contains no $\oD$-loops of length 2 and thus should only contain  $\oD$-loops of length 4, as explained in Lemma \ref{l2}.
By reversing this argument, we can conclude that every graph is built from  $(4,4, \ldots, 4)$ graphs by sequentially adding  $\oD$-loops of length 2.

In order to discuss the possible configurations of $(4,4,\ldots,4)$ graphs, we first consider the possible configurations of a single  $\oD$-loop of length $4$ appearing in a planar graph.
Suppose that the $\oD$-loop of length 4 contains two vertices. Then, 
ignoring orientations of the edges for the moment,
we have only two possibilities, depicted in  Figure \ref{l2basic}. They can be constructed from the cycle graph by adding two tadpoles.
\begin{figure}[H]
\centering
\begin{minipage}[t]{0.9\textwidth}
\centering
\def\svgwidth{0.4\columnwidth}
\tiny{
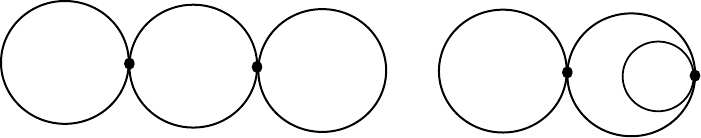
}
\end{minipage}
\caption{Simplest planar graphs of the configuration $(4,4, ... ,4)$ with $\ell=2$ with two vertices.}
\label{l2basic}
\end{figure}
Secondly, suppose that the $\oD$-loop contains three vertices. Then we can distinguish two cases given as the first two subgraphs (a) and (b) in Figure \ref{4EL}. They can be thought of as the cycle graph with one tadpole added. 
Finally, suppose that the $\oD$-loop contains four vertices. Then the only possible $\oD$-loop one can form is the subgraph drawn on the right (c) in Figure \ref{4EL}. Note that these $\oD$-loops are only subgraphs of a general graph, hence we let the vertices with their respective free half-edges drawn.

\begin{figure}[H]
\centering
\begin{minipage}[t]{0.9\textwidth}
\centering
\def\svgwidth{0.4\columnwidth}
\tiny{
\begingroup%
  \makeatletter%
  \providecommand\color[2][]{%
    \errmessage{(Inkscape) Color is used for the text in Inkscape, but the package 'color.sty' is not loaded}%
    \renewcommand\color[2][]{}%
  }%
  \providecommand\transparent[1]{%
    \errmessage{(Inkscape) Transparency is used (non-zero) for the text in Inkscape, but the package 'transparent.sty' is not loaded}%
    \renewcommand\transparent[1]{}%
  }%
  \providecommand\rotatebox[2]{#2}%
  \newcommand*\fsize{\dimexpr\f@size pt\relax}%
  \newcommand*\lineheight[1]{\fontsize{\fsize}{#1\fsize}\selectfont}%
  \ifx\svgwidth\undefined%
    \setlength{\unitlength}{373.42771515bp}%
    \ifx\svgscale\undefined%
      \relax%
    \else%
      \setlength{\unitlength}{\unitlength * \real{\svgscale}}%
    \fi%
  \else%
    \setlength{\unitlength}{\svgwidth}%
  \fi%
  \global\let\svgwidth\undefined%
  \global\let\svgscale\undefined%
  \makeatother%
  \begin{picture}(1,0.24903406)%
    \lineheight{1}%
    \setlength\tabcolsep{0pt}%
    \put(0,0){\includegraphics[width=\unitlength,page=1]{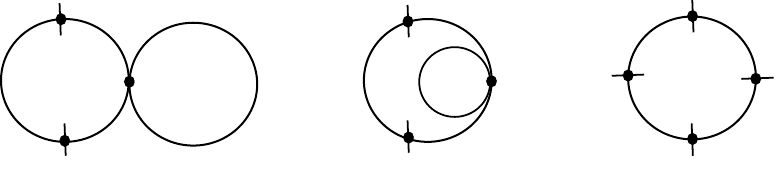}}%
    \put(0.13512584,0.01362642){\color[rgb]{0,0,0}\transparent{0.98699999}\makebox(0,0)[lt]{\lineheight{1.25}\smash{\begin{tabular}[t]{l}\scalebox{1}{$(a)$}\end{tabular}}}}%
    \put(0.52502006,0.00911609){\color[rgb]{0,0,0}\transparent{0.98699999}\makebox(0,0)[lt]{\lineheight{1.25}\smash{\begin{tabular}[t]{l}\scalebox{1}{$(b)$}\end{tabular}}}}%
    \put(0.8655686,0.01137149){\color[rgb]{0,0,0}\transparent{0.98699999}\makebox(0,0)[lt]{\lineheight{1.25}\smash{\begin{tabular}[t]{l}\scalebox{1}{$(c)$}\end{tabular}}}}%
  \end{picture}%
\endgroup%

}
\end{minipage}
\caption{Building blocks of $(4,4,\ldots,4)$ graphs.
}
\label{4EL}
\end{figure}
The $\oD$-loops of length $4$ just described and dipicted in Figure \ref{4EL} can be thought of as the building blocks from which one constructs a general $(4,4,\ldots,4)$ graph. 
We distinguish two cases below, by heavily relying on the planarity condition.

If the graph does not contain a tadpole, it must only contain $\oD$-loops of length $4$ as (c) of Figure \ref{4EL}. If an $\oD$-loop contains two non-adjacent vertices of another $\oD$-loop, there is only one possibility, in order not to break planarity.
The graph is depicted on the left side of Figure \ref{l2noTP}. 
The only other possibility when no tadpole is present, is the graph obtained by linking the $\oD$-loops together to form a closed chain. It can be easily redrawn with the use of an L- or R-ladder-vertex on the right side of Figure \ref{l2noTP}.
\begin{figure}[H]
\centering
\begin{minipage}[t]{1\textwidth}
\centering
\def\svgwidth{1\columnwidth}
\tiny{
\begingroup%
  \makeatletter%
  \providecommand\color[2][]{%
    \errmessage{(Inkscape) Color is used for the text in Inkscape, but the package 'color.sty' is not loaded}%
    \renewcommand\color[2][]{}%
  }%
  \providecommand\transparent[1]{%
    \errmessage{(Inkscape) Transparency is used (non-zero) for the text in Inkscape, but the package 'transparent.sty' is not loaded}%
    \renewcommand\transparent[1]{}%
  }%
  \providecommand\rotatebox[2]{#2}%
  \newcommand*\fsize{\dimexpr\f@size pt\relax}%
  \newcommand*\lineheight[1]{\fontsize{\fsize}{#1\fsize}\selectfont}%
  \ifx\svgwidth\undefined%
    \setlength{\unitlength}{985.28530812bp}%
    \ifx\svgscale\undefined%
      \relax%
    \else%
      \setlength{\unitlength}{\unitlength * \real{\svgscale}}%
    \fi%
  \else%
    \setlength{\unitlength}{\svgwidth}%
  \fi%
  \global\let\svgwidth\undefined%
  \global\let\svgscale\undefined%
  \makeatother%
  \begin{picture}(1,0.15601586)%
    \lineheight{1}%
    \setlength\tabcolsep{0pt}%
    \put(0,0){\includegraphics[width=\unitlength,page=1]{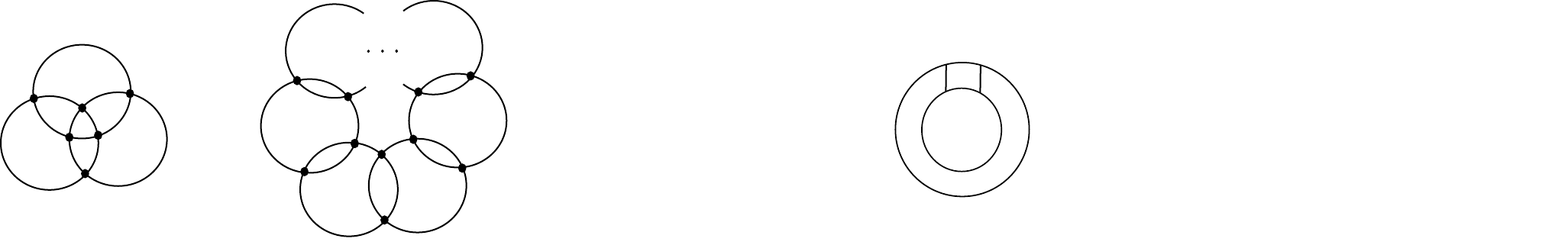}}%
    \put(0.60940622,0.10167334){\color[rgb]{0,0,0}\makebox(0,0)[lt]{\lineheight{1.25}\smash{\begin{tabular}[t]{l}\scalebox{1.2}{${\rm L}$}\end{tabular}}}}%
    \put(0,0){\includegraphics[width=\unitlength,page=2]{l2noTP.pdf}}%
    \put(0.9524601,0.0981506){\color[rgb]{0,0,0}\makebox(0,0)[lt]{\lineheight{1.25}\smash{\begin{tabular}[t]{l}\scalebox{1.2}{${\rm R}$}\end{tabular}}}}%
    \put(0,0){\includegraphics[width=\unitlength,page=3]{l2noTP.pdf}}%
    \put(0.88097908,0.07122554){\color[rgb]{0,0,0}\makebox(0,0)[lt]{\lineheight{1.25}\smash{\begin{tabular}[t]{l}\scalebox{2}{$\cong$}\end{tabular}}}}%
    \put(0,0){\includegraphics[width=\unitlength,page=4]{l2noTP.pdf}}%
    \put(0.53645312,0.07635411){\color[rgb]{0,0,0}\makebox(0,0)[lt]{\lineheight{1.25}\smash{\begin{tabular}[t]{l}\scalebox{2}{$\cong$}\end{tabular}}}}%
  \end{picture}%
\endgroup%

}
\end{minipage}
\caption{Graphs with $\ell=2$ and no tadpoles. 
The third and the fourth graphs and their corresponding schemes are obtained from the second graph from the left decorated with orientations on edges.}
\label{l2noTP}
\end{figure}

We now describe how any $(4,4,\ldots,4)$ graph with tadpoles can be constructed. 
Throughout the construction procedure below, we keep relying on the planarity condition.
The graph contains a tadpole, hence it contains an $\oD$-loop of either  (a) or (b).
We now have two vertices, each with two free half-edges.
One can add to this $\oD$-loop of (a) or (b), another $\oD$-loop of the same kind.
Then there are no free half-edges left and the procedure is finished as illustrated in Figure \ref{l2procedureshort}.

\begin{figure}[H]
\centering
\begin{minipage}[t]{0.7\textwidth}
\centering
\def\svgwidth{0.5\columnwidth}
\tiny{
\begingroup%
  \makeatletter%
  \providecommand\color[2][]{%
    \errmessage{(Inkscape) Color is used for the text in Inkscape, but the package 'color.sty' is not loaded}%
    \renewcommand\color[2][]{}%
  }%
  \providecommand\transparent[1]{%
    \errmessage{(Inkscape) Transparency is used (non-zero) for the text in Inkscape, but the package 'transparent.sty' is not loaded}%
    \renewcommand\transparent[1]{}%
  }%
  \providecommand\rotatebox[2]{#2}%
  \newcommand*\fsize{\dimexpr\f@size pt\relax}%
  \newcommand*\lineheight[1]{\fontsize{\fsize}{#1\fsize}\selectfont}%
  \ifx\svgwidth\undefined%
    \setlength{\unitlength}{156.73228707bp}%
    \ifx\svgscale\undefined%
      \relax%
    \else%
      \setlength{\unitlength}{\unitlength * \real{\svgscale}}%
    \fi%
  \else%
    \setlength{\unitlength}{\svgwidth}%
  \fi%
  \global\let\svgwidth\undefined%
  \global\let\svgscale\undefined%
  \makeatother%
  \begin{picture}(1,0.2304909)%
    \lineheight{1}%
    \setlength\tabcolsep{0pt}%
    \put(0,0){\includegraphics[width=\unitlength,page=1]{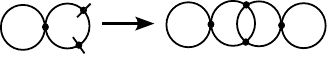}}%
    \put(0.11949367,0.01426479){\color[rgb]{0,0,0}\transparent{0.99626899}\makebox(0,0)[lt]{\lineheight{1.25}\smash{\begin{tabular}[t]{l}\scalebox{1}{$(a)$}\end{tabular}}}}%
  \end{picture}%
\endgroup%

}
\end{minipage}
\caption{
A construction of $(4,4,\ldots,4)$-graph with tadpoles.
Here, we start with the subgraph (a) in Figure \ref{4EL}, and add another (a).
Presently, there are no specifications on the orientations on the edges, therefore, the dipoles contained in this Feynman graph can either be L or R.
}
\label{l2procedureshort}
\end{figure}

Instead, one can also add the $\oD$-loop of length four without a tadpole as drawn as (c) in Figure \ref{4EL} 
Then there are still free half-edges and the procedure continues. 
A priori, there are two possibilities, but due to the planarity condition, adding (c) in the way that is illustrated in Figure \ref{l2procedureforbidden} is not allowed. Whatever $\oD$-loop you add next (either (a), (b), (c), or an $\oD$-loop of length 2), the resulting graph will necessarily be non-planar.
\begin{figure}[H]
\centering
\begin{minipage}[t]{0.7\textwidth}
\centering
\def\svgwidth{0.5\columnwidth}
\tiny{
\begingroup%
  \makeatletter%
  \providecommand\color[2][]{%
    \errmessage{(Inkscape) Color is used for the text in Inkscape, but the package 'color.sty' is not loaded}%
    \renewcommand\color[2][]{}%
  }%
  \providecommand\transparent[1]{%
    \errmessage{(Inkscape) Transparency is used (non-zero) for the text in Inkscape, but the package 'transparent.sty' is not loaded}%
    \renewcommand\transparent[1]{}%
  }%
  \providecommand\rotatebox[2]{#2}%
  \newcommand*\fsize{\dimexpr\f@size pt\relax}%
  \newcommand*\lineheight[1]{\fontsize{\fsize}{#1\fsize}\selectfont}%
  \ifx\svgwidth\undefined%
    \setlength{\unitlength}{156.73228707bp}%
    \ifx\svgscale\undefined%
      \relax%
    \else%
      \setlength{\unitlength}{\unitlength * \real{\svgscale}}%
    \fi%
  \else%
    \setlength{\unitlength}{\svgwidth}%
  \fi%
  \global\let\svgwidth\undefined%
  \global\let\svgscale\undefined%
  \makeatother%
  \begin{picture}(1,0.2304909)%
    \lineheight{1}%
    \setlength\tabcolsep{0pt}%
    \put(0,0){\includegraphics[width=\unitlength,page=1]{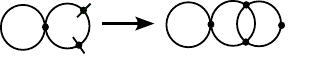}}%
    \put(0.11949367,0.01426479){\color[rgb]{0,0,0}\transparent{0.99626899}\makebox(0,0)[lt]{\lineheight{1.25}\smash{\begin{tabular}[t]{l}\scalebox{1}{$(a)$}\end{tabular}}}}%
    \put(0,0){\includegraphics[width=\unitlength,page=2]{l2procedureforbidden.pdf}}%
  \end{picture}%
\endgroup%

}
\end{minipage}
\caption{
A forbidden configuration which will necessarily end up with a non-planar graph.
}
\label{l2procedureforbidden}
\end{figure}

Therefore, we are only left with one specific way  in adding an $\oD$-loop without a tadpole (i.e., (c)) as drawn in the first step in Figure \ref{l2procedure}.
We continue the procedure by adding  (c)'s.
The procedure ends by adding an $\oD$-loop with a tadpole (i.e., (a) or (b)) similarly to the minimal case in Figure \ref{l2procedureshort}.
We end up with a ladder containing L- and/or R-dipoles with a tadpole on either end.
An example of this procedure is displayed in Figure \ref{l2procedure}, where first, two (c)'s are added,  then subsequently  an (a) or (b) is added to end the procedure.
In this Figure \ref{l2procedure}, it is also shown how the resulting graph can be expressed in terms of dipoles.
\begin{figure}[H]
\centering
\begin{minipage}[t]{1\textwidth}
\centering
\def\svgwidth{1\columnwidth}
\tiny{
\begingroup%
  \makeatletter%
  \providecommand\color[2][]{%
    \errmessage{(Inkscape) Color is used for the text in Inkscape, but the package 'color.sty' is not loaded}%
    \renewcommand\color[2][]{}%
  }%
  \providecommand\transparent[1]{%
    \errmessage{(Inkscape) Transparency is used (non-zero) for the text in Inkscape, but the package 'transparent.sty' is not loaded}%
    \renewcommand\transparent[1]{}%
  }%
  \providecommand\rotatebox[2]{#2}%
  \newcommand*\fsize{\dimexpr\f@size pt\relax}%
  \newcommand*\lineheight[1]{\fontsize{\fsize}{#1\fsize}\selectfont}%
  \ifx\svgwidth\undefined%
    \setlength{\unitlength}{669.42244883bp}%
    \ifx\svgscale\undefined%
      \relax%
    \else%
      \setlength{\unitlength}{\unitlength * \real{\svgscale}}%
    \fi%
  \else%
    \setlength{\unitlength}{\svgwidth}%
  \fi%
  \global\let\svgwidth\undefined%
  \global\let\svgscale\undefined%
  \makeatother%
  \begin{picture}(1,0.08051473)%
    \lineheight{1}%
    \setlength\tabcolsep{0pt}%
    \put(0,0){\includegraphics[width=\unitlength,page=1]{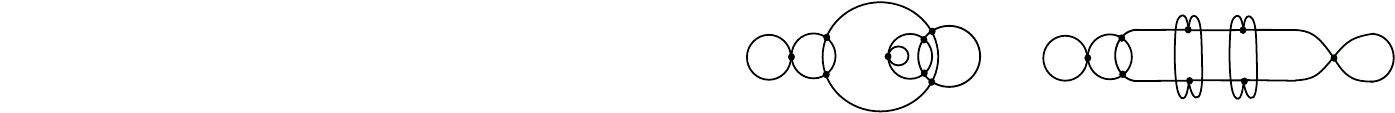}}%
    \put(0.70772274,0.03497277){\color[rgb]{0,0,0}\makebox(0,0)[lt]{\lineheight{1.25}\smash{\begin{tabular}[t]{l}\scalebox{2}{$\cong$}\end{tabular}}}}%
    \put(0,0){\includegraphics[width=\unitlength,page=2]{l2procedure.pdf}}%
    \put(0.02676848,0.00342079){\color[rgb]{0,0,0}\transparent{0.99626899}\makebox(0,0)[lt]{\lineheight{1.25}\smash{\begin{tabular}[t]{l}\scalebox{1}{$(a)$}\end{tabular}}}}%
  \end{picture}%
\endgroup%

}
\end{minipage}
\caption{Example of construction of $(4,4,\ldots,4)$-graph with tadpoles.
Presently, there are no specifications on the orientations on the edges, therefore, the dipoles contained in this Feynman graph can either be L or R.
}
\label{l2procedure}
\end{figure}

One can therefore construct general planar graphs with $\ell=2$ from the $(4,4,\ldots,4)$ graphs by adding $\oD$-loops of length $2$ that do not change the genus. In the case where there are no tadpoles present 
(e.g., graphs in Figure \ref{l2noTP}) these $\oD$-loops will be part of a melonic two-point function, because of planarity.
In the case where tadpoles are present in both grey discs in Figure \ref{2ELremoval} (the rightmost graph of Figure \ref{l2procedure} is such an example), adding such non-melonic $\oD$-loops, amounts to adding N-dipoles to the ladder (which can be possibly empty). 
(See Figure \ref{l2procedurenonplanar} for an example.)
\begin{figure}[H]
\begin{minipage}[t]{0.9\textwidth}
\centering
\def\svgwidth{0.6\columnwidth}
\tiny{
\begingroup%
  \makeatletter%
  \providecommand\color[2][]{%
    \errmessage{(Inkscape) Color is used for the text in Inkscape, but the package 'color.sty' is not loaded}%
    \renewcommand\color[2][]{}%
  }%
  \providecommand\transparent[1]{%
    \errmessage{(Inkscape) Transparency is used (non-zero) for the text in Inkscape, but the package 'transparent.sty' is not loaded}%
    \renewcommand\transparent[1]{}%
  }%
  \providecommand\rotatebox[2]{#2}%
  \newcommand*\fsize{\dimexpr\f@size pt\relax}%
  \newcommand*\lineheight[1]{\fontsize{\fsize}{#1\fsize}\selectfont}%
  \ifx\svgwidth\undefined%
    \setlength{\unitlength}{391.34797344bp}%
    \ifx\svgscale\undefined%
      \relax%
    \else%
      \setlength{\unitlength}{\unitlength * \real{\svgscale}}%
    \fi%
  \else%
    \setlength{\unitlength}{\svgwidth}%
  \fi%
  \global\let\svgwidth\undefined%
  \global\let\svgscale\undefined%
  \makeatother%
  \begin{picture}(1,0.2134916)%
    \lineheight{1}%
    \setlength\tabcolsep{0pt}%
    \put(0,0){\includegraphics[width=\unitlength,page=1]{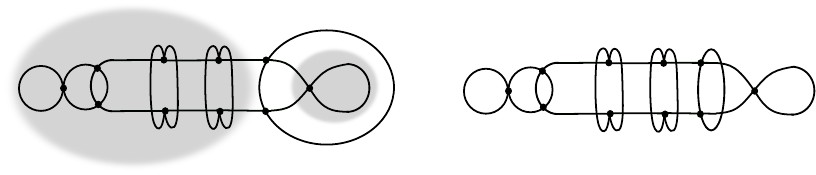}}%
    \put(0.49041048,0.08561098){\color[rgb]{0,0,0}\makebox(0,0)[lt]{\lineheight{1.25}\smash{\begin{tabular}[t]{l}\scalebox{2}{$\cong$}\end{tabular}}}}%
  \end{picture}%
\endgroup%

}
\end{minipage}
\caption{
Example of adding a non-melonic $\oD$-loop of length 2 to a $(4,4,\ldots,4)$ graph. We identified the Figure \ref{2ELremoval}'s grey discs on the left graph for illustration.
}
\label{l2procedurenonplanar}
\end{figure}
All possible schemes of graphs with $\ell=2$, $g=0$ are given in Figure \ref{fig:Sum4}, after recovering the orientation assignment on edges. 
\end{proof}

By plugging in $\ell=1$, $g=1$ in equation \eqref{eq:seven}, we obtain the following relation
\begin{align}
    \varphi = \frac{v+3}{2}.
\end{align}
This relation implies that the number of vertices should be odd.

\begin{thm}
\label{propo:g1ell1}
For $\ell=1$ and $g=1$, there are four 2PI schemes as given in Figure \ref{fig:Sum1} and 36 2PR schemes  as given in Figure \ref{fig:Sum2}. 
\end{thm}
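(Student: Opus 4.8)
The plan is to classify the melon-free $\ell=1$, $g=1$ Feynman graphs through their schemes by reducing them, via the moves of Section \ref{sec:CombMoves}, to the already-known lower-complexity schemes — the two infinity schemes of Figure \ref{fig:Ell1g0.jpg} (the only $\ell=1$, $g=0$ schemes, by Theorem \ref{thm:ell1ell2planar}) and the scheme $S_1$ of Figure \ref{fig:Ell0g1.jpg} (the only $\ell=0$, $g=1$ scheme, from \cite{MR4450018}) — and then reversing the reduction to enumerate all preimages. The first step is to plug $\ell=1$, $g=1$ into Eq.\ \eqref{eq:seven}, giving $\varphi=(v+3)/2$, so that Corollary \ref{cor:N_dip} applies: every such graph contains an N-dipole, which by Lemma \ref{lem:sep-conn} is separating or connecting. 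I treat the 2PI and 2PR cases separately.

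\emph{2PI schemes.} I would first record that a 2PI scheme has no separating dipole or ladder-vertex, since the two edges on one side of such a structure constitute a two-edge-cut; combined with Lemma \ref{lem:sep-conn} this forces the only dipoles/ladder-vertices in a 2PI $\ell=1$ scheme to be connecting N-dipoles or connecting N-vertices (in particular there are no L-, R-, or B-vertices). By Corollary \ref{cor:N_dip} such a connecting N-structure exists; contracting it lowers the genus by one and, because $\sigma=-1$ is the only option for $\ell=1$, leaves $\ell=1$ unchanged (Propositions \ref{prop:dipolecases} and \ref{prop:ladder-vertices}). Removing the melonic $2$-point subgraphs that may appear yields a melon-free $\ell=1$, $g=0$ graph, whose scheme is by Theorem \ref{thm:ell1ell2planar} one of the two infinity schemes. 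Conversely, every 2PI $\ell=1$, $g=1$ scheme is recovered from an infinity scheme by a connecting N-vertex insertion; by Remark \ref{remarkuniqueness}, once the pair of cut edges on the (unique, length-$2$) $\mathrm{O}(D)$-loop of the infinity scheme is chosen, the insertion is determined up to the $\mathrm{N}_e/\mathrm{N}_o$ alternative, which is itself fixed by the edge orientations. Running through the inequivalent choices — the two orientation versions of the infinity scheme, and for each the ways to cut and reglue so that the result is melon-free, has a maximal ladder, and stays $2$PI — I expect to land on exactly the four schemes of Figure \ref{fig:Sum1}.

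\emph{2PR schemes.} Let $S_G$ be a $2$PR $\ell=1$, $g=1$ scheme. Picking a two-edge-cut and applying Lemma \ref{lem:2PR}, the flip splits $S_G$ into $G_1,G_2$ with $g(G_1)+g(G_2)=1$ and $\ell(G_1)+\ell(G_2)=1$. Decomposing $S_G$ along \emph{all} its two-edge-cuts into $2$PI blocks, and using that a melon-free closed scheme with $\ell=g=0$ is the cycle graph (so only trivial blocks can carry $(\ell,g)=(0,0)$; cf.\ Remark \ref{rk:melon}), exactly two blocks are non-trivial, with $(\ell,g)$ equal to $(1,0)$ and $(0,1)$. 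By Theorem \ref{thm:ell1ell2planar} the $(1,0)$-block is one of the two infinity schemes, and the $(0,1)$-block is $S_1$ (which is $2$PI, being the unique $\ell=0$, $g=1$ scheme). Hence every $2$PR $\ell=1$, $g=1$ scheme is obtained from the disjoint union of an infinity scheme and $S_1$ by a single two-edge-connection insertion. The remaining task is to enumerate, up to the isomorphism of Definition \ref{def:isomorphism}, the admissible insertions — a choice of edge and side (a face/orientation datum) on the infinity scheme, a choice of edge and side on $S_1$, and the orientation matching — discarding those creating a melon, destroying maximality of a ladder, or breaking $2$PI-ness of a block. This finite check should yield the $36$ schemes of Figure \ref{fig:Sum2}, with the isomorphism identifications of Appendix \ref{sec:appiso} used to avoid double-counting.

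The main obstacle is precisely this last enumeration in each case: the reductions are short, but proving the lists of four and thirty-six schemes are exhaustive \emph{and} non-redundant is an essentially exhaustive combinatorial case analysis, and the delicate points are the edge orientations (which distinguish $\mathrm{N}_e$ from $\mathrm{N}_o$ and L-/R-faces from $\mathrm{O}(D)$-loops), the symmetries of the infinity scheme and of $S_1$, and the side effects of an insertion — creation of a melon, extension of a maximal ladder, loss of $2$PI-ness — which may make two a priori distinct insertions coincide or become invalid. I would organise this by first tabulating all insertion sites (edge $\times$ side $\times$ orientation) on $S_1$ and on the two infinity schemes, then pruning the table according to those side effects and the isomorphism relation.
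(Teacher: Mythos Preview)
Your 2PI argument is essentially the paper's: contract the maximal connecting N-structure (guaranteed by Corollary \ref{cor:N_dip} and Lemma \ref{lem:sep-conn}), land on an infinity scheme, and invert. One refinement worth noting: the paper shows directly that in the 2PI case the contraction of a \emph{maximal} connecting $N$ yields a melon-free $G'$, because the melon-creating configurations of Figure \ref{fig:generate-melon} are ruled out either by maximality (Figure \ref{fig:create-melon1}) or because they force $G$ to be 2PR (Figures \ref{fig:create-melon2}, \ref{fig:create-melon3}). Your ``remove the melons that may appear'' step is therefore unnecessary --- and if melons \emph{did} appear, your inverse step would have to re-insert them, which you do not account for.

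Your 2PR argument has a genuine gap. Decomposing $S_G$ along \emph{all} two-edge-cuts and observing that only two blocks carry nonzero $(\ell,g)$ does \emph{not} imply that $S_G$ is obtained from those two blocks by a \emph{single} two-edge-connection insertion. Take a scheme of the shape (tadpole)--$Y$--($\mathrm{N}_e$-loop) with $Y$ a separating ladder-vertex, e.g.\ $Y=\mathrm{B}$ or $Y=\mathrm{L}$. Flipping on both sides of $Y$ produces \emph{three} pieces: the infinity graph, $S_1$, and the closed-up $Y$. By additivity the last piece has $(\ell,g)=(0,0)$, but it is \emph{not} the cycle graph --- its underlying Feynman graph is melonic with vertices --- so it is not ``trivial'' in the sense you use, and it carries the datum of $Y$. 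Your reconstruction via one two-edge-connection corresponds only to the $Y=\emptyset$ entries of Figure \ref{fig:Sum2} and recovers at most a handful of the $36$ schemes; the $\mathrm{B}$-, $\mathrm{L}$-, $\mathrm{R}$-vertex cases are unreachable this way. The paper avoids this by performing a \emph{single} reduction rather than a full block decomposition: if $G$ has a separating dipole, contract its maximal extension $X$ (and $X$ \emph{is} the $Y$ of the final list); otherwise flip on a two-edge-cut (this is the $Y=\emptyset$ case). In both cases $G_1$ is forced to be an infinity graph and $G_2$ to be $S_1$, both melon-free, and inverting with all admissible $X$ (respectively $\emptyset$) together with the cut positions gives Figure \ref{fig:Sum2}.
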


\begin{figure}[H]
\centering
\begin{minipage}[t]{0.6\textwidth}
\centering
\def\svgwidth{0.6\columnwidth}
\tiny{
\begingroup%
  \makeatletter%
  \providecommand\color[2][]{%
    \errmessage{(Inkscape) Color is used for the text in Inkscape, but the package 'color.sty' is not loaded}%
    \renewcommand\color[2][]{}%
  }%
  \providecommand\transparent[1]{%
    \errmessage{(Inkscape) Transparency is used (non-zero) for the text in Inkscape, but the package 'transparent.sty' is not loaded}%
    \renewcommand\transparent[1]{}%
  }%
  \providecommand\rotatebox[2]{#2}%
  \newcommand*\fsize{\dimexpr\f@size pt\relax}%
  \newcommand*\lineheight[1]{\fontsize{\fsize}{#1\fsize}\selectfont}%
  \ifx\svgwidth\undefined%
    \setlength{\unitlength}{258.22397547bp}%
    \ifx\svgscale\undefined%
      \relax%
    \else%
      \setlength{\unitlength}{\unitlength * \real{\svgscale}}%
    \fi%
  \else%
    \setlength{\unitlength}{\svgwidth}%
  \fi%
  \global\let\svgwidth\undefined%
  \global\let\svgscale\undefined%
  \makeatother%
  \begin{picture}(1,0.42505904)%
    \lineheight{1}%
    \setlength\tabcolsep{0pt}%
    \put(0,0){\includegraphics[width=\unitlength,page=1]{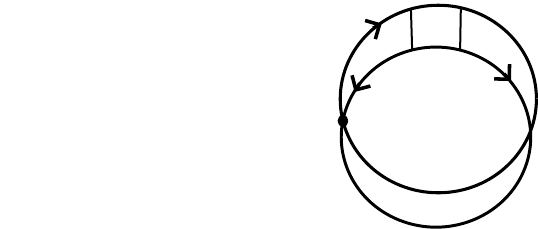}}%
    \put(0.77467852,0.35255159){\color[rgb]{0,0,0}\makebox(0,0)[lt]{\lineheight{1.25}\smash{\begin{tabular}[t]{l}\scalebox{1.2}{$\widetilde{\rm N}_{\rm o}$}\end{tabular}}}}%
    \put(0,0){\includegraphics[width=\unitlength,page=2]{Sum1.pdf}}%
    \put(0.14186461,0.36038502){\color[rgb]{0,0,0}\makebox(0,0)[lt]{\lineheight{1.25}\smash{\begin{tabular}[t]{l}\scalebox{1.2}{$\widetilde{\rm N}_{\rm o}$}\end{tabular}}}}%
    \put(0,0){\includegraphics[width=\unitlength,page=3]{Sum1.pdf}}%
  \end{picture}%
\endgroup%

}
\caption{
All $\ell=1$, $g=1$ 2PI schemes.
There are four schemes of 2PI $\ell=1$ and $g=1$ in total.
$\widetilde {\rm N}_{\rm o}\in\{ $N-dipole$, {\rm N}_{\rm o}\}$.}
\label{fig:Sum1}
\end{minipage}
\end{figure}
\begin{figure}[H]
\begin{minipage}[t]{1\textwidth}
\centering
\def\svgwidth{1\columnwidth}
\tiny{
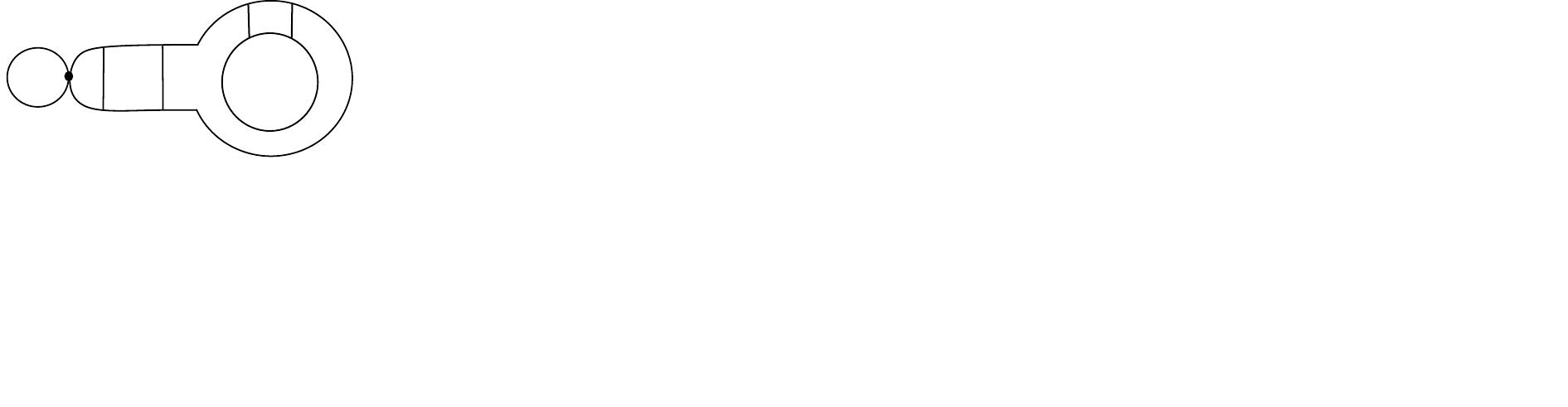
}
\caption{
All $\ell=1$, $g=1$ 2PR schemes.
There are 36 schemes of $\ell=1$ and $g=1$ in total.
${\rm Y} \in 
\{$L-dipole$, \, $R-dipole$,\,{\rm L}, {\rm  R}, {\rm N}_{\rm e}, {\rm B}\}$. $\widetilde {\rm N}_{\rm o}\in\{ $N-dipole$, {\rm N}_{\rm o}$\}.
}
\label{fig:Sum2}
\end{minipage}
\end{figure}

\begin{proof}
Consider melon-free Feynman graph $G$ of $\ell=1$ genus one and $S_G$ its scheme.

{\bf($\boldsymbol{\ell=1}$, 2PI.)}
We begin by assuming that $G$ is 2PI. According to 
Corollary \ref{cor:N_dip}, there must be a connecting (type I) N-dipole in $G$. Because of Lemma \ref{lem:sep-conn}, this N-dipole must be connecting (type I)
(a separating one would violate the 2PI condition).
Furthermore, the maximal extension of this N-dipole in $G$ is either the N-dipole itself or a maximal N-ladder (a B-ladder is necessarily separating by Lemma \ref{lem:sep-conn}), which corresponds to a N-dipole or a N-vertex in $S_G$. The contraction of the maximal extension of this connecting N-dipole with $\Delta \ell = 0$ and $\Delta g = -1$ yields a $\ell=1$ Feynman graph $G'$ of genus zero using  Eq.\ \eqref{eq:contNonSepN} or \eqref{eq:LVcontNonSepN} with $\sigma=-1$ (by the connecting property). 

Now, if $G'$ is not melon-free, we must be in one of the configurations shown in Figures \ref{fig:create-melon1}, \ref{fig:create-melon2}, \ref{fig:create-melon3}. However, the configuration \ref{fig:create-melon1} is excluded by maximality of $X$, while both configurations \ref{fig:create-melon2} and \ref{fig:create-melon3} yield 2PR graphs $G$. 
Therefore, $G'$ must be melon-free.

Let us analyze further $\ell=1$ and 2PI graphs $G$ keeping in mind that $G'$ must be melon-free.
If $G'$ is melon-free, it must be the infinity graph which can be either clockwise or counter-clockwise; and the orientation of the edges imposes that the contraction involved is an N-dipole or an N$_\mathrm{o}$-ladder. As a result, $S_G$ can be one of the graphs in Figure {\ref{fig:2PIg1}}.

\begin{figure}[H]
\centering
\begin{minipage}[t]{0.8\textwidth}
\centering
\def\svgwidth{0.7\columnwidth}
\tiny{
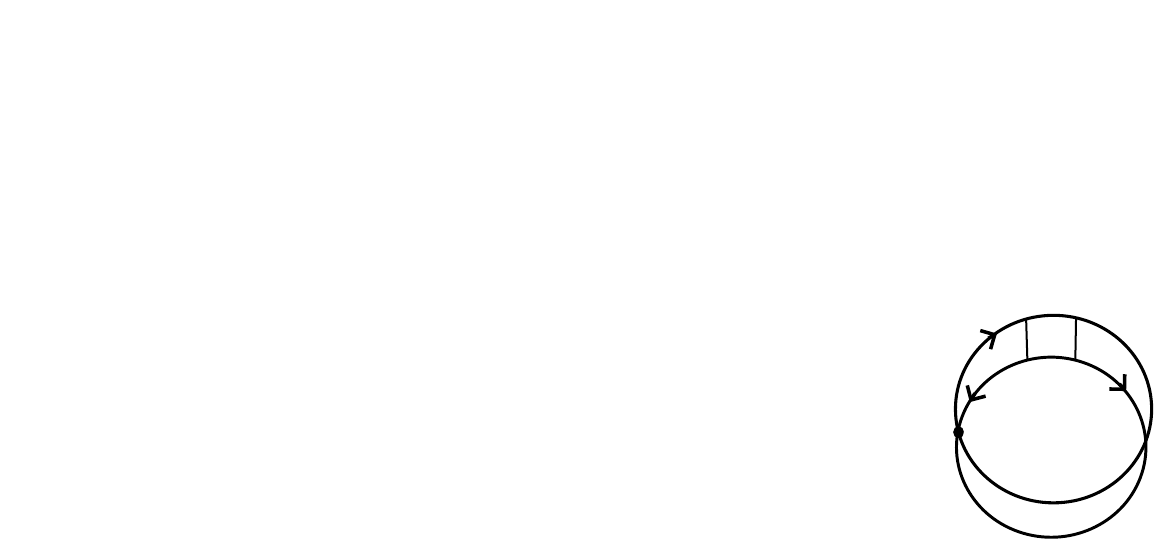
}
\caption{The schematic instruction on how to obtain the only 2PI $\ell=1$, $g=1$ schemes given in Figure \ref{fig:Sum1}, starting with $\ell=1$ $g=0$ schemes $G'$ on the left. One cuts the edges where scissors are placed, and insert a connecting N-dipole, or a connecting N-ladder-vertex accordingly. The red dotted lines indicate the paths of  $\oD$-strands. $\widetilde {\rm N}_{\rm o} \in \{ $N-dipole$, {\rm N}_{\rm o}\}$.
Notice that for illustration, we have used two different infinity graphs (clockwise and counter-clockwise), however, once two cuts are performed, their resulting graphs are identical. Nevertheless, after the insertion of a connecting N, we will obtain all four distinct schemes drawn above.
}
\label{fig:2PIg1}
\end{minipage}
\end{figure}

\medskip

{\bf($\boldsymbol{\ell=1}$, 2PR.)} 
We will now assume that $G$ is 2PR. 
We distinguish two cases. 

In the first case, $G$ contains a separating dipole. We call $X$ its maximal extension and consider its contraction. We call the two resulting graphs $G_1$ and $G_2$. 

In the second case, $G$ does not contain a separating dipole. It still contains a two-edge-cut because it is 2PR. We perform the flip operation as described in Lemma \ref{lem:2PR} and call the resulting graphs $G_1$ and $G_2$. 

In both cases the following equations hold , $\ell(G_1) + \ell(G_2) = \ell(G)$ and $g(G_1) + g(G_2) = g(G)$, as discussed in Propositions \ref{prop:dipolecases} and \ref{prop:ladder-vertices} in the first case and in Lemma \ref{lem:2PR} in the second case. Since $g(G) =1$, we can assume without loss of generality that  $g(G_1)=0$ and $g(G_2) = 1$.

 A priori, there are two options:
 \begin{itemize}
     \item [(a)]
     $g(G_1) = 0$, $\ell(G_1) = 0$ and $g(G_2) = 1$, $\ell(G_2) = 1$. Therefore, $G_1$ is the cycle graph, 
     possibly decorated with melons.
     Since $G$ is melon-free, it is easy to see that $G_1$ should be melon-free as well because of the maximality of $X$ in the first case and because of the non-existence of a separating dipole in the second case. Thus $G_1$ is the melon-free cycle graph. This implies that $G$ is not melon-free in the first case, while in the second case this configuration (where $\tilde G_1 = \emptyset$ in Figure \ref{fig:flipOp}) is not permitted by the definition of a two-edge-cut. In conclusion, option (a) cannot occur.

     \item [(b)]
     $g(G_1) = 0$, $\ell(G_1) = 1$ and $g(G_2) = 1$, $\ell(G_2) = 0$. Thus, $G_1$ is the infinity graph of either clockwise or counterclockwise, 
     possibly decorated with melons,
     and there is also only one possible $G_2$ which is $S_1$ (given in Figure \ref{fig:Ell0g1.jpg}),  
     possibly decorated with melons.
 \end{itemize}
Therefore, in the end, we  omit case (a) and we only have one possibility, which is (b) above. Let us examine the case (b) further below.

Suppose $G_1$ or $G_2$ contains a melonic subgraph.

\begin{itemize}
    \item 
    In the first case, this would violate either the maximality of $X$ 
    (see Figure \ref{fig:create-melon1})
    or the assumption that $G$ is melon-free.
    \item 
    In the second case, this would violate either the assumption that $G$ does not contain a separating dipole or the assumption that $G$ is melon-free.
\end{itemize}

Therefore, we conclude that $G_1$ nor $G_2$ cannot contain a melonic subgraph.

Then, the only possibilities are given in Figure \ref{2prEll1g1} below.
\begin{figure}[H]
\begin{center}
\begin{minipage}[t]{0.95\textwidth}
\centering
\def\svgwidth{1\columnwidth}
\tiny{
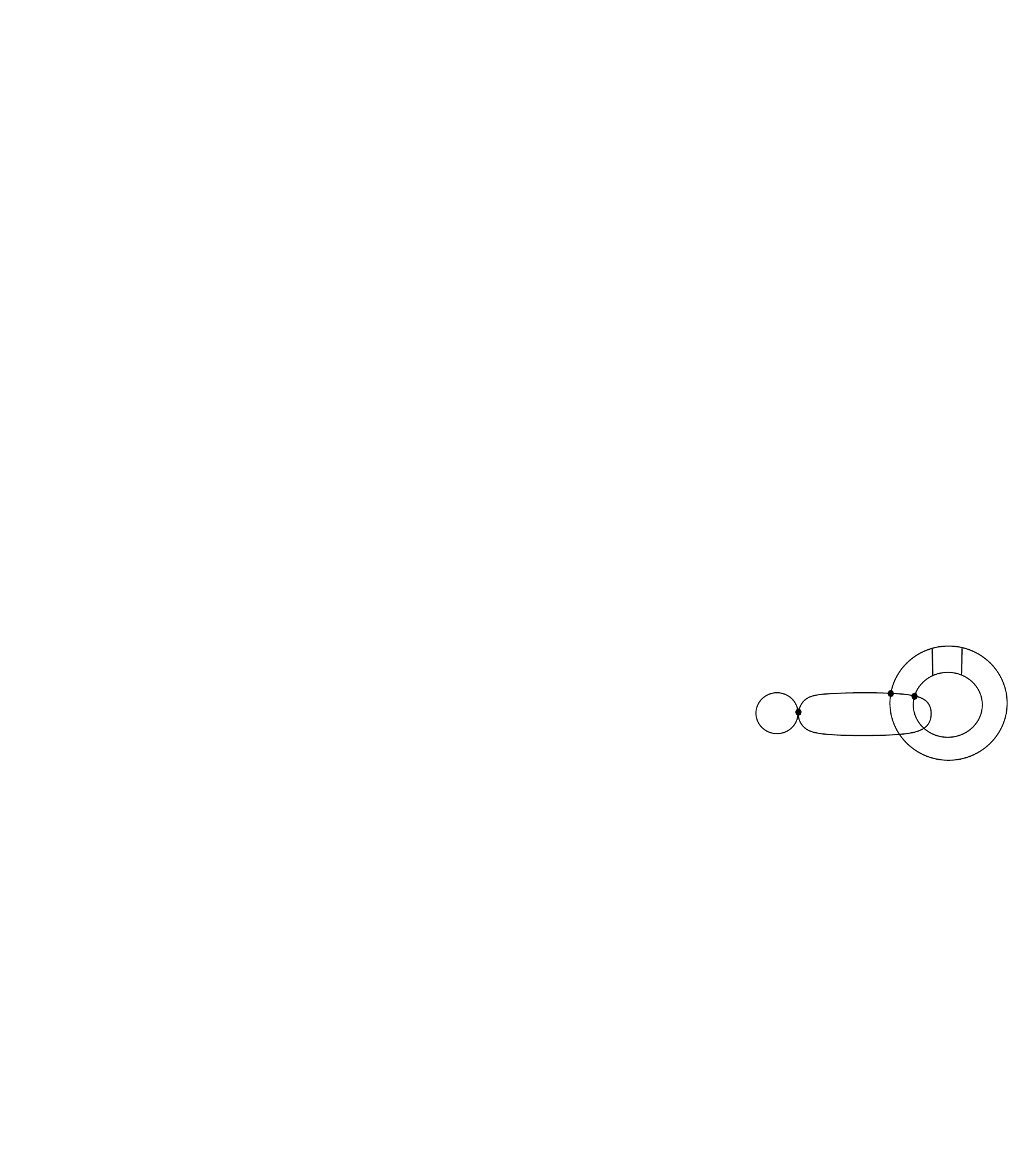
}
\caption{
$\ell=1$, $g=1$ 2PR schemes of $G$ (on the right), which are generated by inserting a separating dipole or separating ladder into $G_1$ (infinity graphs given in Fig.\,\ref{fig:Ell1g0.jpg}) and $G_2 = S_1$ given in Fig. \ref{fig:Ell0g1.jpg}.
$\widetilde {\rm N}_{\rm o} \in \{ $N-dipole$,\, {\rm N}_{\rm o}\}$, 
$Y \in \{$L-dipole$, \,  $R-dipole$, {\rm L}, \, {\rm R}, \,{\rm N}_{\rm e} \}$.
Note that we have taken advantage of the flexibility in where cuts can be made (scissor placement) to represent graphs in our chosen form.
See Figure \ref{fig:isomorphismwithcuts} in Appendix \ref{sec:appiso} for the isomorphisms of the graphs with cuts.
}
\label{2prEll1g1}
\end{minipage}
\end{center}
\end{figure}

This concludes that for $\ell=1$ and $g=1$, we have 36 2PR schemes listed in Figure \ref{fig:Sum2}, in addition to the only 2PI four schemes shown in Figure \ref{fig:Sum1}. 
\end{proof}

\medskip

\begin{thm}
\label{propo:g1ell2}
For $\ell=2$ and $g=1$, there are 42 2PI schemes as given in Figure \ref{fig:Sum5} in addition to more 2PR schemes.
\end{thm}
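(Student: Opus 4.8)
The plan is to follow the strategy used for Theorem~\ref{propo:g1ell1}, adapted to the richer set of non-separating moves available at $\ell=2$. Let $G$ be a connected, melon-free Feynman graph with $\ell=2$, $g=1$, and let $S_G$ be its scheme; assume first that $G$ is 2PI. By Corollary~\ref{cor:N_dip} there is an N-dipole in $G$, so $S_G$ contains an N-dipole, an N-vertex, or a B-vertex containing an N-dipole. Since a B-vertex is separating (Lemma~\ref{lem:sep-conn}) and $G$ is 2PI, the last possibility is excluded, so $S_G$ contains an N-dipole or N-vertex $X$, realised as the maximal extension of an N-dipole of $G$; being non-separating, $X$ is connecting (type~I) or rearranging (type~II) by Lemma~\ref{lem:sep-conn}.

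The first step is to contract $X$ and branch according to Proposition~\ref{prop:ladder-vertices} (or Proposition~\ref{prop:dipolecases} when $X$ is a single dipole). If $X$ is connecting ($\sigma=-1$), then by \eqref{eq:LVcontNonSepN} the graph $G'$ obtained has $\ell(G')=2$, $g(G')=0$; if $X$ is rearranging ($\sigma=0$), then $\ell(G')=0$, $g(G')=0$, so $G'$ is melonic with the cycle graph as core. In either case one argues, exactly as for Theorem~\ref{propo:g1ell1}, that $G'$ is melon-free: otherwise one of the configurations of Figures~\ref{fig:create-melon1}--\ref{fig:create-melon3} occurs, the first being forbidden by maximality of $X$ and the other two forcing $G$ to be 2PR. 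Thus in type~I, $G'$ is one of the genus-zero grade-$\ell=2$ schemes classified in Theorem~\ref{thm:ell1ell2planar}, while in type~II, $G'$ is the cycle graph.

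The second step is to reverse the construction and count. For each admissible base scheme $G'$ --- the three 2PI genus-zero $\ell=2$ schemes $S_1^{2,0}$ and the two copies of $S_2^{2,0}$ of Figure~\ref{fig:Sum3}, the cycle graph, and, in principle, the 2PR genus-zero $\ell=2$ schemes --- one enumerates the reinsertions of a connecting, respectively rearranging, N-dipole or N-ladder-vertex. By Remark~\ref{remarkuniqueness} such an insertion is determined by a choice of two cut edges on an $\oD$-loop of $G'$, the edge orientations fixing the connecting/rearranging type and the parity ($\widetilde{\No}$ versus $\Ne$). One then imposes that the result be 2PI, that it be melon-free with $X$ maximal (so the inserted N-structure is adjacent to no other dipole or ladder-vertex), and finally identifies schemes isomorphic in the sense of Definition~\ref{def:isomorphism}, exploiting the freedom in where cuts are placed (cf.\ Appendix~\ref{sec:appiso}). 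Carrying this out case by case produces exactly $42$ 2PI schemes, depicted in Figure~\ref{fig:Sum5}. The 2PR schemes are then obtained as in the 2PR part of the proof of Theorem~\ref{propo:g1ell1}: a 2PR $G$ either has a separating dipole or, by Lemma~\ref{lem:2PR}, admits a flip, so it decomposes into $G_1$ with $g(G_1)=0$ and $G_2$ with $g(G_2)=1$ and $\ell(G_1)+\ell(G_2)=2$; one then glues the already-classified lower-order schemes (the cycle graph, the infinity graphs, $S_1$, the genus-zero $\ell=2$ schemes and the genus-one $\ell=1$ schemes) by a separating dipole/ladder-vertex or a two-edge-connection, again up to isomorphism.

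The main obstacle is the size and delicacy of the enumeration. At $\ell=2$ there are, in contrast with $\ell=1$, two distinct non-separating N-moves --- connecting and the rearranging $\sigma=0$ move that was overlooked in \cite{MR4450018, MR3336566} --- a larger family of base schemes (three 2PI genus-zero schemes plus the cycle graph rather than a single infinity graph), and connecting L- or R-moves that can appear elsewhere in $G$; keeping the case analysis exhaustive while avoiding over- or under-counting under the face-structure-preserving isomorphism of Definition~\ref{def:isomorphism} is where most of the work lies. A related essential point is to establish that the contraction of the maximal N-extension is genuinely reversible --- that $G'$ is always melon-free, and to pin down which genus-zero schemes actually arise as such $G'$ --- since this is what guarantees that the construction recovers every 2PI $\ell=2$, $g=1$ scheme.
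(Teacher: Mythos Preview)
Your proposal is correct and follows essentially the same approach as the paper's own proof: contract the maximal non-separating N-structure (connecting or rearranging) to land on a melon-free $\ell=2$, $g=0$ scheme or the cycle graph, reverse the move over all admissible base schemes (including the 2PR $S_3^{2,0}$, as you note), and handle the 2PR case via separating-dipole contraction or flip into the $(\ell_1,\ell_2)\in\{(2,0),(1,1)\}$ pairs. The paper carries out exactly this enumeration explicitly via Figures~\ref{fig:App1}--\ref{fig:App4} and the isomorphisms of Appendix~\ref{sec:appiso}, arriving at the same 42 2PI schemes; your identification of the rearranging move and the 2PR base schemes as the new ingredients relative to $\ell=1$ is precisely what the paper emphasises.
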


\begin{figure}[H]
\centering
\begin{minipage}[t]{0.9\textwidth}
\centering
\def\svgwidth{0.9\columnwidth}
\tiny{
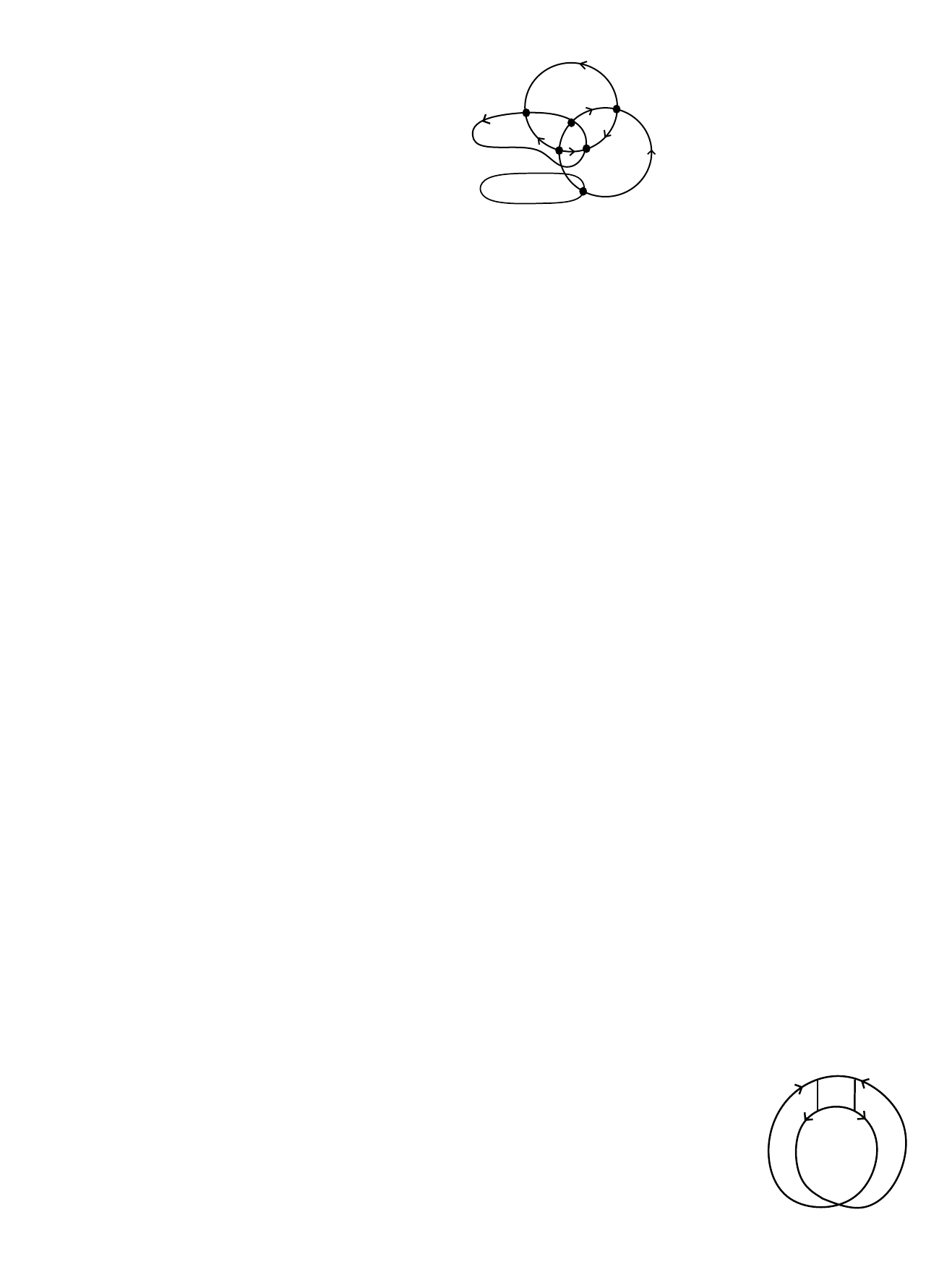
}
\caption{
All $\ell=2$, $g=1$ 2PI schemes.
There are 42 2PI schemes of $\ell=2$ and $g=1$ in total.
$\widetilde {\rm L} \in \{$L-dipole$, \, $L$\}$, 
$\widetilde {\rm R} \in \{$R-dipole$, \, $R$\}$, 
$\widetilde {\rm N}_{\rm o} \in \{$N-dipole$, {\rm N}_{\rm o}\}$.
}
\label{fig:Sum5}
\end{minipage}
\end{figure}

\begin{proof}
Let $G$ be a 
melon-free Feynman graph of $\ell=2$ genus one and $S_G$ its scheme.

{\bf($\boldsymbol{\ell=2}$, 2PI.)}
We first assume that $G$ is 2PI.
By Lemma \ref{lem:sep-conn}, there necessarily exists a non-separating (either connecting (type I, $\sigma = -1$) or rearranging (type II, $\sigma = 0$)) N-dipole
in $G$ (a separating one would break the 2PI condition).
Furthermore, the maximal extension of this 
N-dipole in $G$ is either the 
N-dipole itself or a maximal N-ladder respectively (a B-ladder is necessarily separating by Lemma \ref{lem:sep-conn}), which translates into a N-dipole or a 
N-vertex in $S_G$, respectively. 
Using Eq.\ \eqref{eq:contNonSepN} or \eqref{eq:LVcontNonSepN},
\begin{itemize}
\item [(1)]
the contraction of the maximal extension of this connecting ($\sigma=-1$) N-dipole of type I with $\Delta \ell = 0$ and $\Delta g = -1$ yields a $\ell=2$ Feynman graph $G'$ of genus zero,
\item [(2)] 
whereas the contraction of the maximal extension of the rearranging ($\sigma=0$) N-dipole of type II with $\Delta \ell = -2$ and $\Delta g = -1$ yields a $\ell=0$ Feynman graph $G'$ of genus zero.
\end{itemize}

For both (1) and (2) above, if $G'$ is not melon-free, 
we must be in one of the configurations shown in Figures \ref{fig:create-melon1}, \ref{fig:create-melon2}, 
or
\ref{fig:create-melon3}.
However, the configuration \ref{fig:create-melon1} 
is excluded by maximality of $X$, while  
configurations 
\ref{fig:create-melon2}
and
\ref{fig:create-melon3} 
yield  2PR $G$. 
Therefore, $G'$ must be melon-free.

\medskip

Now, let us further examine the above cases (1) and (2) separately keeping in mind that $G'$ is melon-free.

\begin{itemize}
\item[(1)] 
 If $G'$ (can be either 2PI or 2PR a priori) is melon-free, it must be either $S^{2,0}_1$, $S^{2,0}_2$, or $S^{2,0}_3$ given in Figures \ref{fig:Sum3} and \ref{fig:Sum4}.
 We explictily show in Figures
 \ref{fig:App1}, 
 \ref{fig:App2} and \ref{fig:App21},
 and
 \ref{fig:App3} 
 all possible operations of inserting a connecting N 
 (either N-dipole, ${\rm N}_{\rm e}$-ladder-vertex or ${\rm N}_{\rm o}$-ladder-vertex)
 to an $\ell=2$, $g=0$ 
 scheme respectively $S^{2,0}_1$ (2PI), $S^{2,0}_2$ of L-type (2PI), $S^{2,0}_2$ of R-type (2PI), and $S^{2,0}_3$ (2PR) to produce 2PI schemes of $\ell=2$ and $g=1$.
In all of the figures  \ref{fig:App1}, 
 \ref{fig:App2} and \ref{fig:App21}, 
 and
 \ref{fig:App3},
the red scissors indicate where we cut and insert a connecting N, whereas the red dotted lines indicate the $\oD$-loops.
Based on Remark \ref{remarkuniqueness}, listing all possible cuts is sufficient, however, there can be isomorphisms between the resulting schemes, and we need to identify them.
Refer Appendix \ref{sec:appiso} for isomorphisms of schemes.
We finally list all the distinct schemes generated by such operations in Figure \ref{fig:Sum5}.

\begin{figure}[H]
\begin{minipage}[t]{0.8\textwidth}
\centering
\def\svgwidth{0.8\columnwidth}
\tiny{
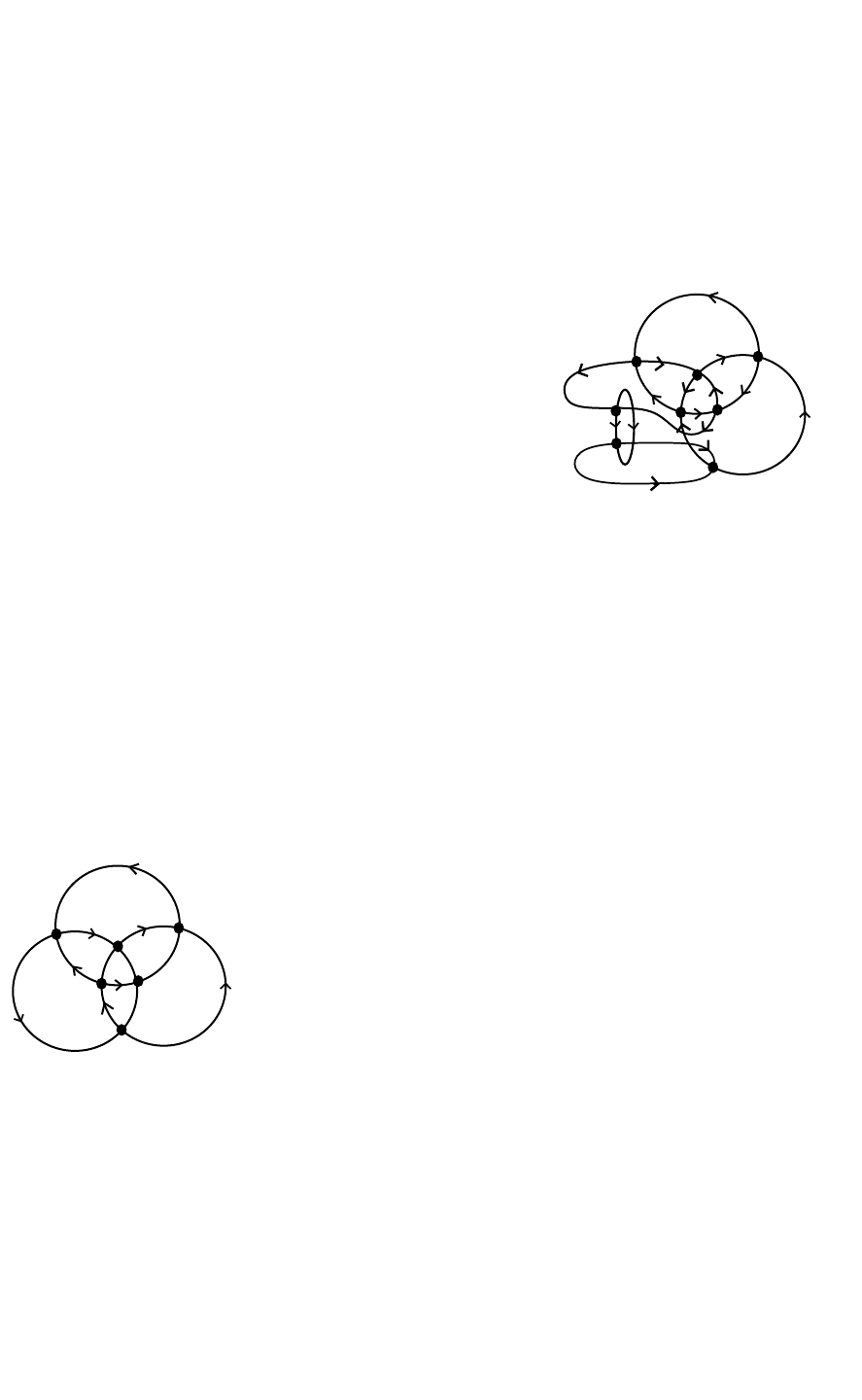
}
\caption{
We obtain an $\ell=2$, $g=1$ 2PI scheme by inserting a compatible connecting N to an $\ell=2$, $g=0$ 2PI scheme, $S^{2,0}_1$.
$\widetilde {\rm N}_{\rm o} \in \{ $N-dipole$, {\rm N}_{\rm o}\}$.
Note that if we insert only one N-dipole as in the top two rows, the resulting graphs happen to be isomorphic.
See Figure \ref{fig:app1isomorphism}.
See Figure \ref{fig:App1isomophismwithcuts}
for additional isomorphism of graphs related to the graphs listed here.
}
\label{fig:App1}
\end{minipage}
\end{figure}

\begin{figure}[H]
\begin{center}
\begin{minipage}[t]{0.9\textwidth}
\centering
\def\svgwidth{0.7\columnwidth}
\tiny{
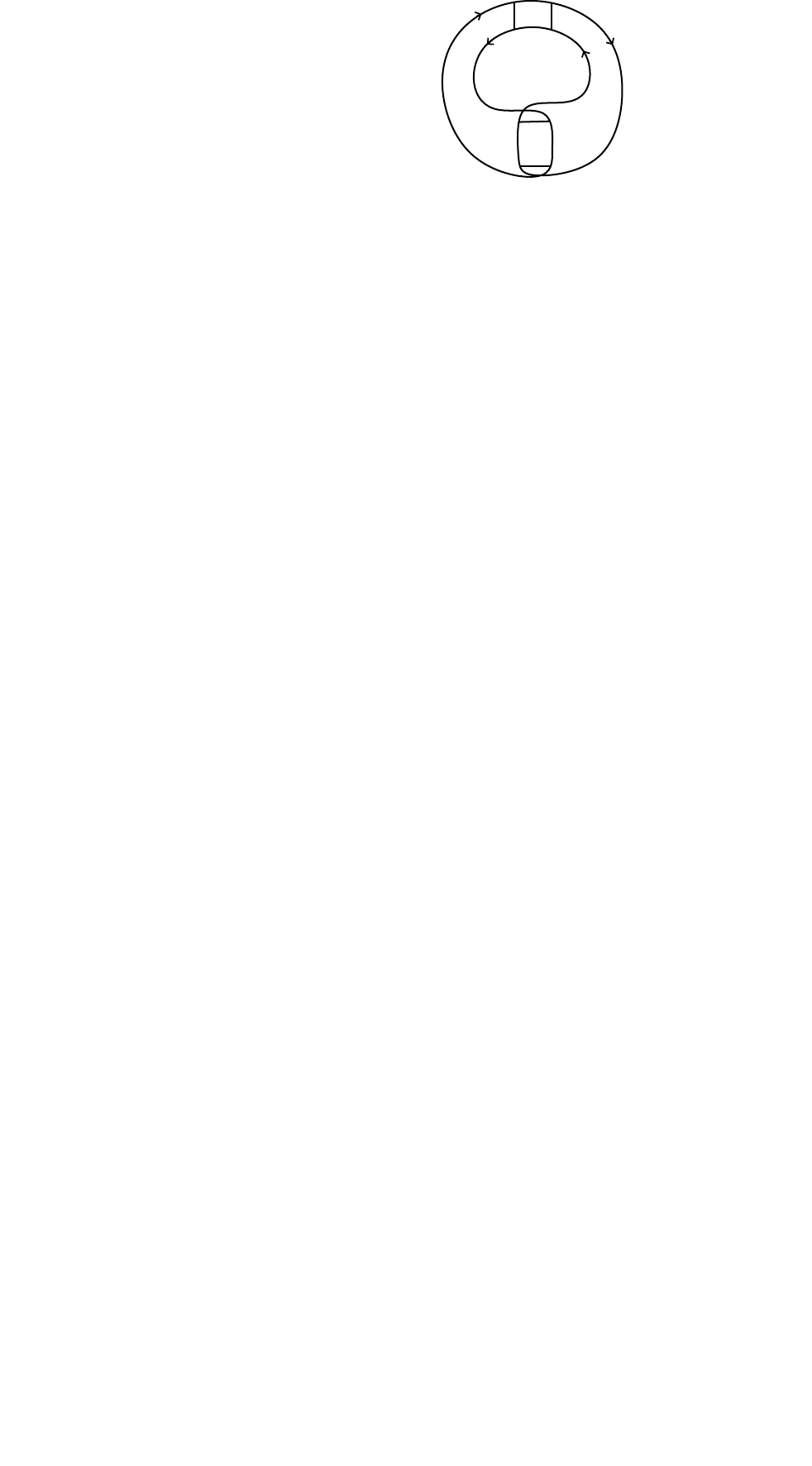
}
\end{minipage}
\caption{
All possible procedures of how we obtain $\ell=2$, $g=1$ 2PI schemes by inserting a connecting N to $\ell=2$, $g=0$ 2PI scheme, $S^{2,0}_2$ of L-type in Fig. \ref{fig:Sum3}.
See Fig. \ref{fig:App2isomorphismwithcuts} for isomorphisms of $S^{2,0}_2$ of L-type with two cuts identified.
$\widetilde {\rm L} \in \{ $L-dipole$, {\rm L}$\}.
On the second row, in principle, we can also start with $\emptyset$ in replacement with $\widetilde {\rm L}$ in $S^{2,0}_2$, however, then, this graph with two identified cuts is isomorphic to the graph on the top row with two L-rungs in ${\rm L}$ (see Fig. \ref{fig:App2isomorphismwithcutsREDUCED}).
Similarly, on the third, forth, and fifth rows, in principle, we can also start with $\emptyset$ in replacement with $\widetilde {\rm L}$ in $S^{2,0}_2$, however, the resulting schemes will be 2PR, so we will not list them here.
In the last two rows, although it is not obvious, two operations will yield the isomorphic scheme
}
\label{fig:App2}
\end{center}
\end{figure}

\begin{figure}[H]
\begin{center}
\begin{minipage}[t]{0.9\textwidth}
\centering
\def\svgwidth{0.7\columnwidth}
\tiny{
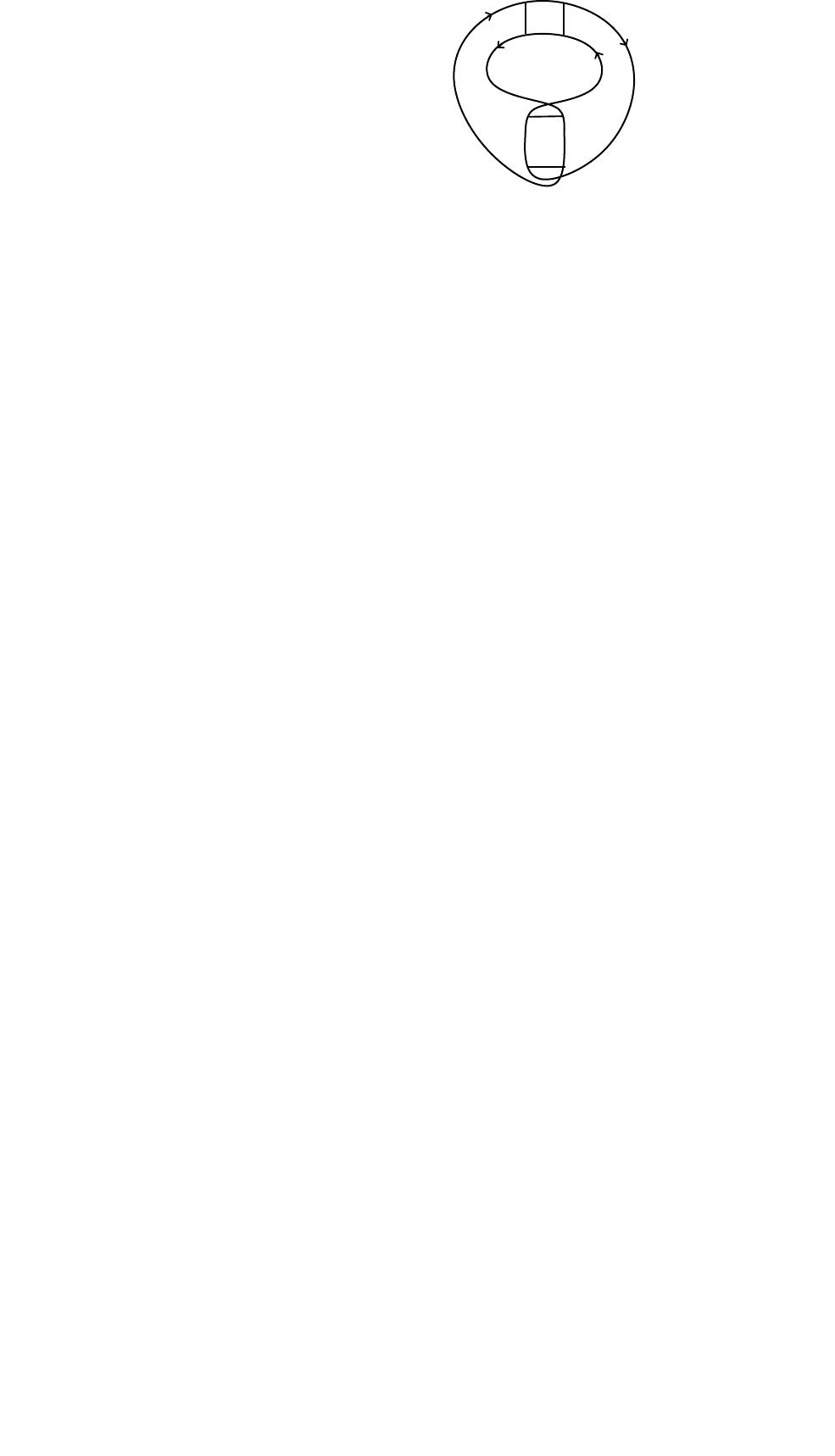
}
\caption{
All possible procedures of how we obtain $\ell=2$, $g=1$ 2PI schemes by inserting a connecting N to $\ell=2$, $g=0$ 2PI scheme, $S^{2,0}_2$ of R-type in Fig. \ref{fig:Sum3}.
See Fig. \ref{fig:App2isomorphismwithcuts} for isomorphisms of $S^{2,0}_2$ of R-type with two cuts identified.
$\widetilde {\rm R} \in \{ $R-dipole$, {\rm R}$\}.
On the second row, in principle, we can also start with $\emptyset$ in replacement with $\widetilde {\rm R}$ in $S^{2,0}_2$, however, then, this graph with two identified cuts is isomorphic to the graph on the top row with two R-rungs in ${\rm R}$ (see Fig. \ref{fig:App2isomorphismwithcutsREDUCED}).
Similarly, on the third, forth and fifth rows, in principle, we can also start with $\emptyset$ in replacement with $\widetilde {\rm R}$ in $S^{2,0}_2$, however, the resulting schemes will be 2PR, so we will not list them here.
In the last two rows, although it is not obvious, two operations will yield the isomorphic scheme.
}
\label{fig:App21}
\end{minipage}
\end{center}
\end{figure}

\begin{figure}[H]
\begin{minipage}[t]{1\textwidth}
\centering
\def\svgwidth{0.95\columnwidth}
\tiny{
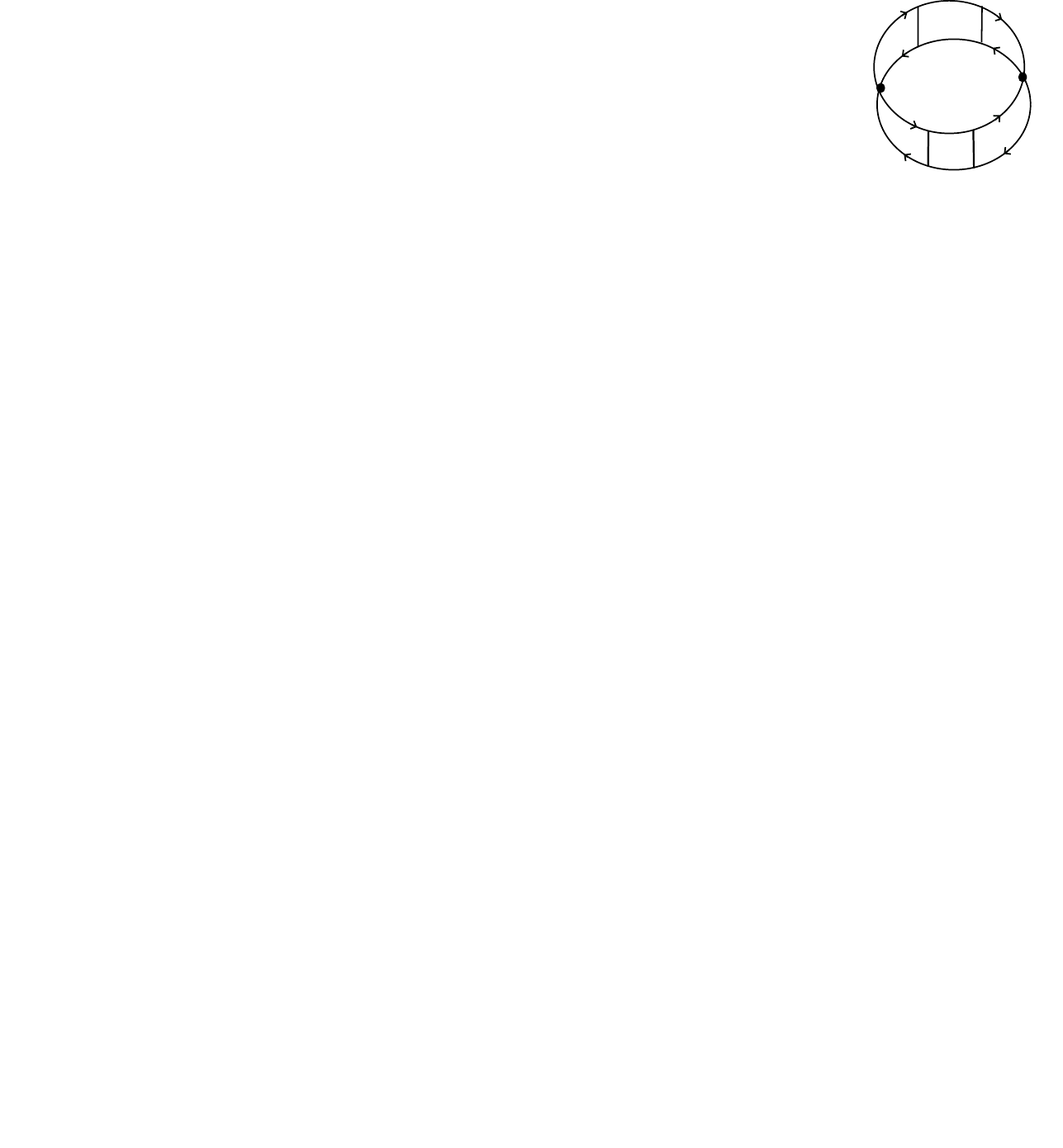
}
\caption{
We show how to obtain an $\ell=2$, $g=1$ 2PI scheme by inserting a connecting N to an $\ell=2$, $g=0$ 2PR scheme, $S_3^{2,0}$ (2PR) given in Figure \ref{fig:Sum4}. 
$\widetilde {\rm N}_{\rm o} \in \{ $N-dipole$, {\rm N}_{\rm o}\}$.
}
\label{fig:App3}
\end{minipage}
\end{figure}

\item[(2)]
If $G'$ is melon-free, it must be the cycle graph.

We explicitly demonstrate in Figure \ref{fig:App4} how to generate the  scheme of $\ell=2$ $g=1$ by inserting a rearranging N.

\begin{figure}[H]
\begin{minipage}[t]{0.8\textwidth}
\centering
\def\svgwidth{0.9\columnwidth}
\tiny{
\begingroup%
  \makeatletter%
  \providecommand\color[2][]{%
    \errmessage{(Inkscape) Color is used for the text in Inkscape, but the package 'color.sty' is not loaded}%
    \renewcommand\color[2][]{}%
  }%
  \providecommand\transparent[1]{%
    \errmessage{(Inkscape) Transparency is used (non-zero) for the text in Inkscape, but the package 'transparent.sty' is not loaded}%
    \renewcommand\transparent[1]{}%
  }%
  \providecommand\rotatebox[2]{#2}%
  \newcommand*\fsize{\dimexpr\f@size pt\relax}%
  \newcommand*\lineheight[1]{\fontsize{\fsize}{#1\fsize}\selectfont}%
  \ifx\svgwidth\undefined%
    \setlength{\unitlength}{356.90053624bp}%
    \ifx\svgscale\undefined%
      \relax%
    \else%
      \setlength{\unitlength}{\unitlength * \real{\svgscale}}%
    \fi%
  \else%
    \setlength{\unitlength}{\svgwidth}%
  \fi%
  \global\let\svgwidth\undefined%
  \global\let\svgscale\undefined%
  \makeatother%
  \begin{picture}(1,0.23923288)%
    \lineheight{1}%
    \setlength\tabcolsep{0pt}%
    \put(0,0){\includegraphics[width=\unitlength,page=1]{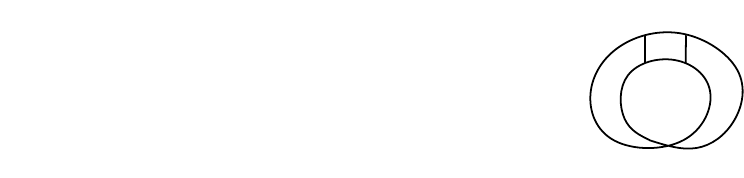}}%
    \put(0.87406073,0.16547321){\color[rgb]{0,0,0}\makebox(0,0)[lt]{\lineheight{1.25}\smash{\begin{tabular}[t]{l}\scalebox{1.2}{$\widetilde {\rm N}_{\rm o}$}\end{tabular}}}}%
    \put(0,0){\includegraphics[width=\unitlength,page=2]{App4.pdf}}%
    \put(0.48869639,0.1087605){\color[rgb]{0,0,0}\makebox(0,0)[lt]{\lineheight{1.25}\smash{\begin{tabular}[t]{l}\scalebox{1.2}{${\rm N}$}\end{tabular}}}}%
    \put(0,0){\includegraphics[width=\unitlength,page=3]{App4.pdf}}%
    \put(0.33485602,0.10460606){\color[rgb]{0,0,0}\makebox(0,0)[lt]{\lineheight{1.25}\smash{\begin{tabular}[t]{l}\scalebox{2}{$+$}\end{tabular}}}}%
    \put(0.67190864,0.10961175){\color[rgb]{0,0,0}\makebox(0,0)[lt]{\lineheight{1.25}\smash{\begin{tabular}[t]{l}\scalebox{2}{$=$}\end{tabular}}}}%
    \put(0,0){\includegraphics[width=\unitlength,page=4]{App4.pdf}}%
  \end{picture}%
\endgroup%

}
\caption{
We obtain an $\ell=2$, $g=1$ 2PI scheme by inserting a rearranging N to the $\ell=0$, $g=0$ cycle graph. 
The red scissors indicate where we cut and insert a connecting N, and the red dotted line indicates the $\oD$ path. 
$\widetilde {\rm N}_{\rm o} \in \{ $N-dipole$, {\rm N}_{\rm o}\}$.
}
\label{fig:App4}
\end{minipage}
\end{figure}

\end{itemize}

\medskip

{\bf($\boldsymbol{\ell=2}$, 2PR.)}
We now assume that $G$ is 2PR.

We distinguish two cases. 

In the first case, $G$ contains a separating dipole. We call $X$ its maximal extension and consider its contraction. We call the two resulting graphs $G_1$ and $G_2$. 

In the second case, $G$ does not contain a separating dipole. It still contains a two-edge-cut because it is 2PR. We perform the flip operation as described in Lemma \ref{lem:2PR} and call the resulting graphs $G_1$ and $G_2$. 

In both cases the following equations hold , $\ell(G_1) + \ell(G_2) = \ell(G)$ and $g(G_1) + g(G_2) = g(G)$, as discussed in Propositions \ref{prop:dipolecases} and \ref{prop:ladder-vertices} in the first case and in Lemma \ref{lem:2PR} in the second case. Since $g(G) =1$, we can assume without loss of generality that  $g(G_1)=0$ and $g(G_2) = 1$.

 A priori, there are three possibilities:
 \begin{itemize}
     \item [(a)]
     $g(G_1) = 0$, $\ell(G_1) = 0$ and $g(G_2) = 2$, $\ell(G_2) = 2$. Hence, $G_1$ is the cycle graph, 
     possibly decorated with melons.
    Since $G$ is melon-free, it is easy to see that $G_1$ should be melon-free as well because of the maximality of $X$ in the first case and because of the non-existence of a separating dipole in the second case. Thus $G_1$ is the melon-free cycle graph. This implies that $G$ is not melon-free in the first case, while in the second case this configuration (where $\tilde G_1 = \emptyset$ in Figure \ref{fig:flipOp}) is not permitted by the definition of a two-edge-cut. In conclusion, option (a) cannot occur.

     \item [(b)]
     $g(G_1) = 0$, $\ell(G_1) = 2$ and $g(G_2) = 1$, $\ell(G_2) = 0$. Therefore, $G_1$ is one of corresponding graphs from 
     Figures \ref{fig:Sum3} and \ref{fig:Sum4}. The graph
     $G_2$ is $S_1$ in Figure \ref{fig:Ell0g1.jpg}, 
     possibly decorated with melons.

     \item[(c)]
     $g(G_1) = 0$, $\ell(G_1) = 1$ and $g(G_2) = 1$, $\ell(G_2) = 1$. Thus, $G_1$ is the infinity graph, 
     possibly decorated with melons,
     and $G_2$ is either corresponding 2PI graphs given in Figure \ref{fig:Sum1}, or corresponding 2PR graphs given in Figure \ref{fig:Sum2}.

 \end{itemize}
Therefore, a posteriori, (a) must be omitted, and we only have two possibilities (b) and (c) of above. 
In both of these cases (b) and (c), let us analyze the situation further.

Suppose $G_1$ or $G_2$ contains a melonic subgraph.

\begin{itemize}
    \item 
    In the first case, this would violate either the maximality of $X$ 
    (see Figure \ref{fig:create-melon1})
    or the assumption that $G$ is melon-free.
    \item 
    In the second case, this would violate either the assumption that $G$ does not contain a separating dipole or the assumption that $G$ is melon-free.
\end{itemize}

Therefore, we conclude that $G_1$ nor $G_2$ cannot contain a melonic subgraph. 
Then, we show specific examples below of 2PR $\ell=2$ graphs.
For (b), Figure \ref{fig:2PRex1} shows an example, and more examples can be found in Figure \ref{fig:2PRex1app} in Appendix \ref{sec:app2PR}.
For (c), Figure \ref{fig:2PRex2} illustrates an example, and another example can be found in Figure \ref{fig:2PRex2app} in Appendix \ref{sec:app2PR}.

\begin{figure}[H]
\begin{minipage}[t]{1\textwidth}
\centering
\def\svgwidth{1\columnwidth}
\tiny{
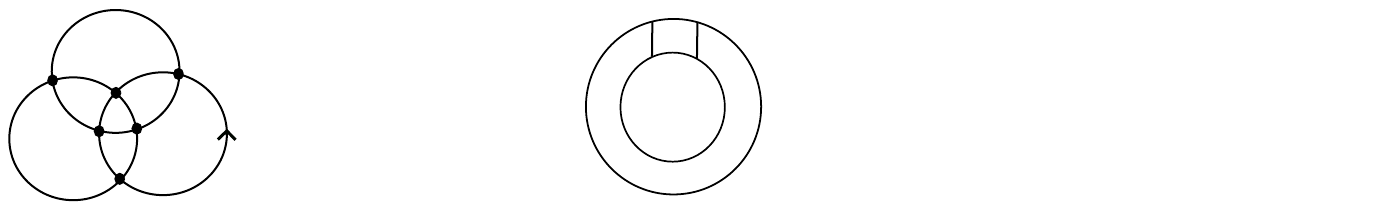
}
\caption{
2PR schemes obtained from combining a $\ell=2$ and $g = 0$ scheme with a $\ell=0$ and $g=1$ scheme via separating dipole or a ladder-vertex or a two-edge-connection.
$\widetilde {\rm N}_{\rm o} \in \{ $N-dipole$,\, {\rm N}_{\rm o}\}$.
See more in Figure \ref{fig:2PRex1app}.
}
\label{fig:2PRex1}
\end{minipage}
\end{figure}
\begin{figure}[H]
\begin{minipage}[t]{0.95\textwidth}
\centering
\def\svgwidth{1\columnwidth}
\tiny{
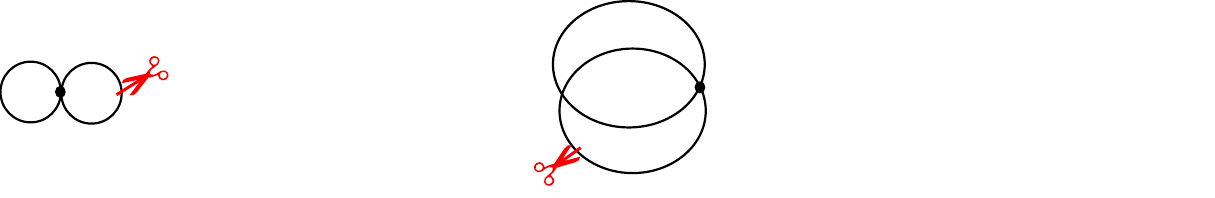
}
\caption{
2PR schemes obtained from combining a $\ell=1$ and $g = 0$  infinity graph with a $\ell=1$ and $g=1$ scheme via separating dipole or a ladder-vertex.  
Orientations on edges are ignored in this illustration.
$ X\in\{$N-dipole$, \,
$L-dipole$, \,$R-dipole$,
\, {\rm N}_{\rm e}, {\rm N}_{\rm o}, 
{\rm L}, {\rm R}, 
{\rm B}\}$.
See more in Figure \ref{fig:2PRex2app}.
}
\label{fig:2PRex2}
\end{minipage}
\end{figure}
This concludes that in the end, we show some  2PR graphs for $\ell=2$ and $g=1$ in Figures \ref{fig:2PRex1} and \ref{fig:2PRex2},  in addition to the only 2PI ones (42 of them) shown in Figure \ref{fig:Sum5}.

\end{proof}

From the families of  $\ell = 1$ and $\ell=2$ melon-free Feynman graphs of genus $g' < g$, the following theorem provides a method for constructing all of the $\ell=1$ and $\ell=2$ melon-free Feynman graphs of any genus $g$.   

\begin{thm}
\label{thm:induction}
{\bf$\underline{\boldsymbol{\ell=1}}$.}
Let $G$ be a connected and melon-free Feynman graph, such that  $\ell(G) = 1$ and $g(G) = g \geq 2$.

{\bf($\boldsymbol{\ell=1}$, 2PI.)}
Suppose first that $G$ is 2PI. 
Then it can be obtained by insertion of a connecting N-dipole or N-ladder into a: $\ell=1$, connected and melon-free Feynman graph of genus $g-1$.

{\bf($\boldsymbol{\ell=1}$, 2PR.)}
Suppose instead that $G$ is 2PR. 
Then, $G$ can be obtained by insertion of a separating dipole, a separating ladder, or a two-edge-connection in between:  connected and melon-free Feynman graphs $G_1$ and $G_2$ with 
$\ell(G_1) = 1$ and $\ell (G_2) = 0$, such that  $0   \le g(G_1)  < g$, $0 <  g(G_2) \le  g$
and $g(G_1) + g(G_2) = g$.

\medskip

{\bf$\underline{\boldsymbol{\ell=2}}$.}
Let us now suppose $G$ be a connected and melon-free Feynman graph, such that $\ell(G) = 2$ and $g(G) = g \geq 2$.

{\bf($\boldsymbol{\ell=2}$, 2PI.)}
Suppose first that $G$ is 2PI. Then it can be obtained by insertion of either 
\begin{itemize}
    \item 
    a connecting N-dipole or N-ladder into a: $\ell=2$, connected and melon-free Feynman graph of genus $g-1$,
    \item
    a rearranging N-dipole or N-ladder into a: $\ell=0$, connected 
    and melon-free 
    Feynman graph of genus $g-1$.

\end{itemize}

{\bf($\boldsymbol{\ell=2}$, 2PR.)}
Suppose instead that $G$ is 2PR. 
Then one of these two conditions holds:
\begin{itemize}
    \item 
    $G$ can be obtained by insertion of a separating dipole, a separating ladder, or a two-edge-connection in-between:  connected and melon-free Feynman graphs $G_1$ and $G_2$ with $\ell(G_1) = 2$ and $\ell (G_2) = 0$, such that   $0 \le g(G_1) \le g-1$, $0  < g(G_2) \le g$, and $g(G_1) + g(G_2) = g$.
 
    \item 
    $G$ can be obtained by insertion of a separating dipole, a separating ladder, or a two-edge-connection in between: $\ell=1$, connected, and melon-free Feynman graphs $G_1$ and $G_2$, such that $g(G_1) + g(G_2) = g$.
\end{itemize}
\end{thm}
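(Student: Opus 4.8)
The plan is to mimic, at the level of a single genus-reduction step, the explicit $g=1$ analysis carried out in Theorems \ref{propo:g1ell1} and \ref{propo:g1ell2}, but now reasoning abstractly so that the conclusion is phrased as a recursion from genus $g-1$ (and lower) to genus $g$. Throughout I work with a connected, melon-free Feynman graph $G$ with $\ell(G)\in\{1,2\}$ and $g(G)=g\geq 2$, and I split into the 2PI and 2PR cases exactly as before.

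\textbf{The 2PI case.} First I would invoke Corollary \ref{cor:N_dip}: since $g\geq 1$, there is an N-dipole in $G$, hence in the scheme $S_G$ there is an N-dipole, N-vertex, or B-vertex carrying an N-dipole. Because $G$ is 2PI, this N-structure cannot be separating (a separating dipole/ladder-vertex would give a two-edge-cut), so by Lemma \ref{lem:sep-conn} it is connecting (type I) when $\ell=1$, and either connecting (type I) or rearranging (type II) when $\ell=2$; moreover a B-ladder is necessarily separating, so the maximal extension is an N-dipole or an N-ladder, i.e.\ an N-dipole or N-vertex in $S_G$. Let $X$ be this maximal extension and contract it. By Propositions \ref{prop:dipolecases}–\ref{prop:ladder-vertices}: a connecting ($\sigma=-1$) contraction gives $\Delta g=-1$, $\Delta\ell=0$, producing $G'$ with $\ell(G')=\ell(G)$ and $g(G')=g-1$; a rearranging ($\sigma=0$) contraction (only for $\ell=2$) gives $\Delta g=-1$, $\Delta\ell=-2$, producing $G'$ with $\ell(G')=0$ and $g(G')=g-1$. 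The remaining point is that $G'$ is melon-free: if not, by the classification in Figures \ref{fig:create-melon1}–\ref{fig:create-melon3} we are in one of three configurations, but Figure \ref{fig:create-melon1} is excluded by maximality of $X$, and Figures \ref{fig:create-melon2} and \ref{fig:create-melon3} force $G$ to be 2PR, contradicting our assumption. Hence $G'$ is a connected, melon-free Feynman graph of genus $g-1$, with $\ell(G')=1$ (resp.\ $\ell(G')\in\{2,0\}$), and $G$ is recovered from $G'$ by inserting the corresponding connecting (resp.\ connecting or rearranging) N-dipole or N-ladder. This is exactly the claimed statement, and it is the verbatim abstraction of the first parts of the proofs of Theorems \ref{propo:g1ell1} and \ref{propo:g1ell2} with ``infinity graph''/``$S^{2,0}_i$'' replaced by ``genus $g-1$ graph''.

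\textbf{The 2PR case.} Here $G$ contains a two-edge-cut. I distinguish: (i) $G$ contains a separating dipole — take its maximal extension $X$, contract it, obtaining $G_1,G_2$; (ii) $G$ contains no separating dipole — since it is still 2PR, apply the flip operation of Lemma \ref{lem:2PR} on a two-edge-cut to get $G_1,G_2$. In both cases $\ell(G_1)+\ell(G_2)=\ell(G)$ and $g(G_1)+g(G_2)=g(G)$, by Propositions \ref{prop:dipolecases}–\ref{prop:ladder-vertices} in case (i) and by Lemma \ref{lem:2PR} in case (ii). As in the $g=1$ proofs, neither $G_1$ nor $G_2$ can contain a melonic subgraph: a melon in $G_i$ would violate either the maximality of $X$ (case (i), cf.\ Figure \ref{fig:create-melon1}) or the melon-freeness of $G$, or in case (ii) the assumption that $G$ has no separating dipole or is melon-free. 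Now I enumerate the grade splittings. For $\ell=1$: the options are $(\ell(G_1),\ell(G_2))\in\{(0,1),(1,0)\}$; the split $\ell(G_1)=0$ forces $G_1$ to be the (melon-free) cycle graph, which makes $G$ non-melon-free in case (i) or violates the definition of a two-edge-cut ($\tilde G_1=\emptyset$) in case (ii) — exactly the ``option (a) cannot occur'' argument — so only $\ell(G_1)=1,\ell(G_2)=0$ survives, and one must exclude $g(G_2)=0$ for the same cycle-graph reason, giving $0\le g(G_1)<g$, $0<g(G_2)\le g$. For $\ell=2$: the options are $(\ell(G_1),\ell(G_2))\in\{(0,2),(2,0),(1,1)\}$; again $(0,2)$ is excluded by the cycle-graph argument, leaving the two bullets stated — a $(\ell=2,\ell=0)$ split with $0\le g(G_1)\le g-1$, $0<g(G_2)\le g$, or a $(\ell=1,\ell=1)$ split with $g(G_1)+g(G_2)=g$. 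Finally, the reverse direction — that any such insertion of a separating dipole, separating ladder, or two-edge-connection does produce a valid graph of the stated $\ell$ and $g$ — follows immediately from Propositions \ref{prop:dipolecases}–\ref{prop:ladder-vertices} and Lemma \ref{lem:2PR} read in the insertion direction.

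\textbf{Main obstacle.} The delicate point is not the grade/genus bookkeeping (which is a direct application of the combinatorial-move propositions already proved) but verifying that the intermediate graphs $G'$, $G_1$, $G_2$ are genuinely \emph{melon-free}, so that the recursion lands inside the same class it started from and so that the scheme language applies cleanly. This is where one must be careful to appeal to the maximality of the contracted ladder $X$ and to the precise catalog of melon-creating configurations in Figure \ref{fig:generate-melon}; in the 2PR branch one additionally needs that in case (ii) the subgraphs $\tilde G_1,\tilde G_2$ flanking the two-edge-cut are non-empty (part of the definition of a two-edge-cut), which is what rules out the spurious ``$G_1=$ cycle graph'' splitting. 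Once these melon-freeness checks are in place, the theorem is a formal consequence of the $\ell=1,2$, $g=0,1$ base cases (Theorems \ref{thm:ell1ell2planar}, \ref{propo:g1ell1}, \ref{propo:g1ell2}) together with the single-step reductions above, i.e.\ the recursion is well-founded.
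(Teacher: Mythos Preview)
Your proposal is correct and follows essentially the same approach as the paper's proof: in the 2PI case you locate an N-dipole via Corollary \ref{cor:N_dip}, classify it as connecting (or rearranging, for $\ell=2$) via Lemma \ref{lem:sep-conn}, contract its maximal extension, and verify melon-freeness of the result by excluding the three configurations of Figure \ref{fig:generate-melon}; in the 2PR case you split according to whether a separating dipole exists, apply the flip of Lemma \ref{lem:2PR} otherwise, and rule out the degenerate cycle-graph splitting exactly as in the $g=1$ arguments. Your write-up is in fact slightly more explicit than the paper's in a few places (e.g.\ noting that a B-ladder is necessarily separating to justify that the maximal extension is an N-ladder, and spelling out the exclusion of the $(\ell_1,\ell_2)=(0,2)$ split), but the logical skeleton is identical.
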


\begin{proof} 
We prove by generalizing the arguments of Theorem \ref{propo:g1ell1} and Theorem \ref{propo:g1ell2}.
\medskip

{\bf$\underline{\boldsymbol{\ell=1}}$.}
Let us consider $\ell(G)=1$ first.

{\bf($\boldsymbol{\ell=1}$, 2PI.)}
If $G$ is 2PI, there exists a connecting N-dipole in $G$, which can be maximally extended. Contracting the resulting N-dipole or N-ladder $X$ yields a 
$\ell=1$ Feynman graph $G'$ of genus $g(G') = g-1$. Furthermore, $G'$ is necessarily melon-free because the three configurations of Figure \ref{fig:generate-melon} are all excluded: configuration \ref{fig:create-melon1} because $X$ is maximal, 
configurations \ref{fig:create-melon2} and \ref{fig:create-melon3} because $G$ should be 2PI but \ref{fig:create-melon2} and \ref{fig:create-melon3} necessarily give $G$ to be 2PR.
This proves the first part of the proposition.

{\bf($\boldsymbol{\ell=1}$, 2PR.)}
Let us suppose that $G$ is 2PR, and discuss two distinct situations:

-Suppose first that $G$ does not contain a separating dipole. Since $G$ is 2PR, it must still contain a two-edge-cut $(e,e')$. By Lemma \ref{lem:2PR}, performing a flip on $(e,e')$ yields two Feynman graphs $G_1$ and $G_2$ such that 
$\ell(G_1) =1$, $\ell(G_2) = 0$ without the loss of generality.
However, we cannot have $g(G_2) =0 $, because then it is not allowed by the definition of the two-edge-cut. Therefore, 
$g(G_1) < g$ and $g(G_1) + g(G_2) = g$.
As in Theorem~\ref{propo:g1ell1}, $G_1$ and $G_2$ are necessarily melon-free, 
otherwise it would imply that there was a separating dipole in $G$ 
or that $G$ was not melon-free to begin with.

-Assume instead that $G$ contains a separating dipole. As in Theorem \ref{propo:g1ell1}, we proceed to contracting its maximal extension $X$, itself a separating dipole or a separating ladder. The two resulting 
Feynman graphs $G_1$ with $\ell(G_1) = 1$ and $G_2$  with $\ell(G_2) = 0$ are again melon-free. However, again $g(G_2) =0$ is excluded because then, it means that $G$ has a melon.
Therefore, they obey: $g(G_1) < g$ and $g(G_1) + g(G_2) = g$.

This concludes the proof for $\ell(G) = 1$.

\medskip

{\bf$\underline{\boldsymbol{\ell=2}}$.}
Let us now consider $\ell(G)=2$.

{\bf($\boldsymbol{\ell=2}$, 2PI.)}
If $G$ is 2PI, there exists either
\begin{itemize}
\item 
a connecting N-dipole in $G$, which can be maximally extended, or
\item 
a rearranging N-dipole in $G$, which can be maximally extended.
\end{itemize}
Furthermore, the consequences of the both cases above follow.
Contracting the resulting connecting (resp. rearranging) N-dipole or N-ladder (in Figure \ref{fig:generate-melon} they may be denoted as $X$ (resp. $X$ or $\tilde X$))  yields a  Feynman graph $G'$ of genus $g(G') = g-1 \ge 1$ with $\ell=2$ (resp. $\ell=0$). Furthermore, $G'$ is necessarily melon-free because the three configurations of Figure \ref{fig:generate-melon} are all excluded: configuration \ref{fig:create-melon1} because $X$ is maximal, 
configurations \ref{fig:create-melon2} and \ref{fig:create-melon3} because $G$ should be 2PI by assumption but \ref{fig:create-melon2} and \ref{fig:create-melon3} necessarily give $G$ to be 2PR.

\medskip

{\bf($\boldsymbol{\ell=2}$, 2PR.)}
Let us suppose that $G$ is 2PR. We shall distinguish two cases:

\begin{itemize}
\item
$G$ does not contain a separating dipole. Since $G$ is 2PR, it must still contain a two-edge-cut $(e,e')$. By Lemma \ref{lem:2PR}, performing a flip on $(e,e')$ yields two Feynman graphs $G_1$ and $G_2$.

\item
$G$ contains a separating dipole. As in Theorem \ref{propo:g1ell2}, we proceed to contracting its maximal extension $X$, itself a separating dipole or a separating ladder, which will result in the two resulting Feynman graphs $G_1$ and $G_2$.
\end{itemize}

In both of the cases above, without a loss of generality, $G_1$ and $G_2$ will obey either cases:
\begin{itemize}
\item 
$\ell(G_1) = 2$, $\ell (G_2) =0$. However, we cannot have $g(G_2) =0 $, 
because 

- for the case $G$ does not contain a separating dipole, it is not allowed by the definition of the two-edge-cut. 

- for the case $G$ contains a separating dipole, it means that $G$ had a melon.

Therefore,   $g(G_1) < g$  and $g(G_1) + g(G_2) = g$.

\item
$\ell (G_1) = 1$, $\ell (G_2) = 1$. And, $g(G_1) + g(G_2) = g$.
\end{itemize}
\end{proof}

\medskip

Theorem \ref{thm:induction} is a key result of this paper. 
It provides a constructive way of generating all the $\ell=1$ and $\ell=2$ Feynman graphs, order by order in the genus and starting at genus one, by application of the following algorithms.

\paragraph{{\color{black}{Algorithmic construction of all}} 
{$\ell=1, 2$} {\color{black}{graphs.}}}

\begin{enumerate}
    \item Assume that the set 
    $\hat{E}_{g', \, \ell=1}$ (resp. $\hat{E}_{g', \, \ell=2}$)
     melon-free Feynman graphs of genus $g'\leq g-1$ has been constructed. Additionally, note that in \cite{MR4450018}, the authors
     have already constructed the set $\hat{E}_{g', \, \ell=0}$,  $\forall g' \ge 0$, 
     which is necessary to construct all $\ell = 2$ graphs; 
    
    \item 
    Construct the set $\hat{E}^\mathrm{2PI}_{g, \, \ell=1}$ (resp. $\hat{E}^\mathrm{2PI}_{g, \, \ell=2}$) of 2PI melon-free Feynman graphs of genus $g$, by inserting a connecting N-dipole or N-ladder into any element of $\hat{E}_{g-1, \, \ell=1}$, 
    (resp. by inserting either 
    \begin{itemize}
    \item
    a connecting N-dipole or N-ladder into any element of $\hat{E}_{g-1, \, \ell=2}$, or 
    \item
    a rearranging N-dipole or N-ladder into any element of $\hat{E}_{g-1, \, \ell=0}$,)
    \end{itemize}
    in any possible way that yields a 2PI graph. 
    Pay attention that it is possible that the graph before the insertion is 2PR, yielding a 2PI graph.
    Therefore, it is necessary to start from elements of $\hat{E}$ as opposed to $\hat{E}^\mathrm{2PI}$ for both $\ell =1$ and $\ell=2$;

    \item
    Construct the set $\hat{E}^\mathrm{2PR}_{g, \, \ell = 1}$ (resp. $\hat{E}^\mathrm{2PR}_{g, \, \ell = 2}$) of 2PR  melon-free Feynman graphs by inserting 
    a separating dipole, a separating ladder, or a two-edge-connection 
    in between any element of $\hat{E}_{g_1, \ell_1=1}$ and $\hat{E}_{g_2, \ell_2 = 0}$ such that $g_1 + g_2 = g$ and $g_1 < g$ 
    (resp. in between any element of either 
    \begin{itemize}
        \item $\hat{E}_{g_1, \ell_1=2}$ and $\hat{E}_{g_2, \ell_2 = 0}$ such that $g_1 + g_2 = g$ and $g_1 < g$, or
        \item $\hat{E}_{g_1, \ell_1=1}$ and $\hat{E}_{g_2, \ell_2 = 1}$ such that $g_1 + g_2 = g$)
    \end{itemize}  in any possible way.

   \item 
Obtain the set 
$\hat{E}_{g, \, \ell=1} = \hat{E}^\mathrm{2PR}_{g, \, \ell=1} \cup \hat{E}^\mathrm{2PI}_{g, \, \ell=1}$ 
(resp. $\hat{E}_{g, \, \ell=2} = \hat{E}^\mathrm{2PR}_{g, \, \ell=2} \cup \hat{E}^\mathrm{2PI}_{g, \, \ell=2}$ )
of all melon-free Feynman graphs of genus $g$;
    
    \item Finally, insert arbitrary melonic 
    Feynman graphs on the edges of any element in 
    $\hat{E}_{g, \, \ell=1}$ (resp. $\hat{E}_{g, \, \ell=2}$ )
    to obtain the set 
    $E_{g, \, \ell=1}$ (resp. $E_{g, \, \ell=2}$)
    Feynman graphs (see the fourth bullet in Remark \ref{rk:melon}).
\end{enumerate}

\medskip

\section{Classification for $\ell=3$ with $g=0$}
\label{sec:l3g0}

This discussion follows a similar approach as in the planar case for $\ell =1$ and $\ell =2$.
By plugging $\ell=3$, $g=0$ in equation \eqref{eq:seven}, we obtain the following relation
\begin{align}
    \varphi = \frac{v-1}{2}.
\end{align}
This relation implies that the number of vertices should be odd.

\begin{lemma}
\label{l3loops}
For each graph with $\ell=3$, $g=0$, it either contains at least one $\oD$-loop of length $2$ or has $\oD$-loop configuration $(4,4,\ldots,4,6)$.
\end{lemma}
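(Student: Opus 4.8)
The plan is to mimic the counting arguments used in Lemma \ref{2EL} and Lemma \ref{l2}, but now applied to the case $\ell=3$, $g=0$. The starting point is the partition of the edges by $\oD$-loop lengths, Eq.\ \eqref{partition}, together with the relation $\varphi = \tfrac{v-1}{2}$ obtained from \eqref{eq:seven} by setting $\ell=3$, $g=0$, and the $4$-regularity identity $e = 2v$.

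First I would argue by contradiction: suppose $G$ has $\ell=3$, $g=0$, contains no $\oD$-loop of length $2$, and does not have the configuration $(4,4,\ldots,4,6)$. Since every $\oD$-loop has even length and none has length $2$, each $l_i \ge 4$. The excluded configuration $(4,4,\ldots,4,6)$ is the unique configuration in which exactly one loop has length $6$ and all others have length $4$; ruling it out (together with the case where all loops have length $4$, which I handle separately below) forces either at least two loops of length $\ge 6$, or at least one loop of length $\ge 8$. In both situations $\sum_i l_i \ge 4\varphi + 4$. Then
\begin{align}
    e = \sum_{i=1}^{\varphi} l_i \ge 4\varphi + 4 = 4\Big(\frac{v-1}{2}\Big) + 4 = 2v + 2 > 2v,
\end{align}
contradicting $e = 2v$.

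Next I would dispose of the remaining borderline case, namely when every $\oD$-loop has length exactly $4$. Then $e = 4\varphi = 4\cdot\tfrac{v-1}{2} = 2v - 2 < 2v$, which again contradicts $e = 2v$. Hence the all-$4$'s configuration is impossible for $\ell=3$, $g=0$, so if there is no $\oD$-loop of length $2$, at least one loop must have length $\ge 6$; combined with the previous paragraph, the only way to avoid a contradiction is to have exactly one loop of length $6$ and all remaining loops of length $4$, i.e.\ the configuration $(4,4,\ldots,4,6)$. This establishes the dichotomy in the statement.

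I do not anticipate a serious obstacle here — the argument is a routine Euler-characteristic/edge-count estimate parallel to Lemmas \ref{2EL}, \ref{l2}, and \ref{lem:generall1}. The only point requiring a little care is the case analysis on how the total edge count can exceed $4\varphi$: one must check that \emph{any} deviation from ``all $4$'s'' or ``all $4$'s plus a single $6$'' already pushes $e$ up by at least $4$ (since lengths jump in steps of $2$, deviating by one loop going from $4$ to $8$ adds $4$, and two loops going from $4$ to $6$ each adds $2+2=4$), so that the strict inequality $e > 2v$ is obtained and the contradiction with $e = 2v$ is reached. This bookkeeping is the whole content of the proof.
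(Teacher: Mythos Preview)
Your proposal is correct and follows essentially the same approach as the paper's proof: both argue by contradiction, use $\varphi = \tfrac{v-1}{2}$ and $e = 2v$, and derive the two inequalities $e = 4\varphi = 2v-2 < 2v$ (all-$4$'s case) and $e \ge 4\varphi + 4 = 2v+2 > 2v$ (any other non-$(4,\ldots,4,6)$ case) to obtain a contradiction. Your case analysis is in fact slightly more explicit than the paper's, but the content is the same.
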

\begin{proof}
    The proof is very similar to the proof of 
    Lemma \ref{l2}.
    Consider a planar graph with $\ell=3$ that does not contain an $\oD$-loop of length two and does not have $\oD$-loop configuration $(4,4,\ldots,4,6)$.  Then because  $l_i$ is even, the smallest possible $e$ is obtained when $e= 4\varphi$ or $e= 4\varphi + 4$. These expressions of $e$ give successively the following two equations:
    \begin{eqnarray}
        &&e = 4\varphi = 4(\frac{v-1}{2}) = 2v - 2 < 2v,\cr
        &&e = 4\varphi + 4 = 4(\frac{v-1}{2}) + 4 = 2v + 2 > 2v.
    \end{eqnarray}
    Because the graphs we are considering are $4$-regular, we have that $e=2v$.
    This gives a contradiction in both cases, hence the statement is proved.
\end{proof}

In the following, we should distinguish different embeddings of a tadpole as illustrated in Figure \ref{tadpoleembedding}.

\begin{figure}[H]
\centering
\begin{minipage}[t]{0.4\textwidth}
\centering
\def\svgwidth{0.6\columnwidth}
\tiny{
\begingroup%
  \makeatletter%
  \providecommand\color[2][]{%
    \errmessage{(Inkscape) Color is used for the text in Inkscape, but the package 'color.sty' is not loaded}%
    \renewcommand\color[2][]{}%
  }%
  \providecommand\transparent[1]{%
    \errmessage{(Inkscape) Transparency is used (non-zero) for the text in Inkscape, but the package 'transparent.sty' is not loaded}%
    \renewcommand\transparent[1]{}%
  }%
  \providecommand\rotatebox[2]{#2}%
  \newcommand*\fsize{\dimexpr\f@size pt\relax}%
  \newcommand*\lineheight[1]{\fontsize{\fsize}{#1\fsize}\selectfont}%
  \ifx\svgwidth\undefined%
    \setlength{\unitlength}{313.50510244bp}%
    \ifx\svgscale\undefined%
      \relax%
    \else%
      \setlength{\unitlength}{\unitlength * \real{\svgscale}}%
    \fi%
  \else%
    \setlength{\unitlength}{\svgwidth}%
  \fi%
  \global\let\svgwidth\undefined%
  \global\let\svgscale\undefined%
  \makeatother%
  \begin{picture}(1,0.56160678)%
    \lineheight{1}%
    \setlength\tabcolsep{0pt}%
    \put(0,0){\includegraphics[width=\unitlength,page=1]{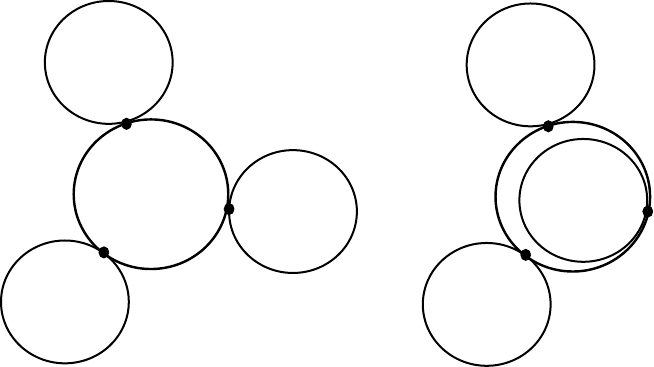}}%
    \put(0.43119614,0.23108651){\color[rgb]{0.01960784,0,0.01960784}\makebox(0,0)[lt]{\lineheight{1.25}\smash{\begin{tabular}[t]{l}$e$\end{tabular}}}}%
    \put(0.87378955,0.24818251){\color[rgb]{0.01960784,0,0.01960784}\makebox(0,0)[lt]{\lineheight{1.25}\smash{\begin{tabular}[t]{l}$e$\end{tabular}}}}%
  \end{picture}%
\endgroup%

}
\caption{Example of distinct tadpole (labelled by $e$) embeddings.}
\label{tadpoleembedding}
\end{minipage}
\end{figure}

\begin{thm}
\label{thm:ell3g02PI}
For $\ell=3$ and $g=0$, there are four 2PI schemes as listed in Figure \ref{l32PI}.
Up to tadpole embeddings and orientations of the edges, all the 2PR graphs are listed in Figure \ref{l3all2PR}.
\end{thm}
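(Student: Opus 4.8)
The plan is to adapt, to the grade-$3$ planar case, the building-block strategy that proves Theorem~\ref{thm:ell1ell2planar}, now anchored on Lemma~\ref{l3loops}.

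For the 2PI part, first observe that a melon-free, 2PI, planar graph cannot contain an $\oD$-loop of length $2$: such a loop is carried by an N-dipole (Figure~\ref{dipoles}), and since $g=0$ forbids a non-separating N-dipole (Eq.~\eqref{eq:contNonSepN} would give $g(G')=-1$) while a separating one with both sides nontrivial produces a two-edge-cut, contradicting 2PI, the only remaining possibility is that the N-dipole belongs to a melonic $2$-point subgraph, excluded by melon-freeness. Hence, by Lemma~\ref{l3loops}, a melon-free 2PI $\ell=3$, $g=0$ graph has loop configuration $(4,4,\ldots,4,6)$. I would then classify these by their building blocks: the length-$4$ $\oD$-loops of Figure~\ref{4EL}(a),(b),(c) together with one extra block, a single $\oD$-loop of length $6$, whose planar realizations on $3,4,5$, or $6$ vertices (keeping their free half-edges) must be enumerated, and then glued to the length-$4$ blocks in all planar ways, always using the planarity constraint as in the $\ell=2$ proof to discard configurations that inevitably become non-planar. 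Imposing 2PI, collapsing maximal ladders to ladder-vertices, and modding out by the isomorphism of Definition~\ref{def:isomorphism}, the surviving schemes are exactly the four of Figure~\ref{l32PI} (the smallest being the $v=3$, $\varphi=1$ single-$\oD$-loop diagrams, which are the trefoil-type diagrams anticipated in the introduction); the edge-orientation assignments are recovered at the very end.

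For the 2PR part, I would argue as in Theorems~\ref{propo:g1ell1} and \ref{propo:g1ell2}: if $G$ is 2PR and planar with $\ell=3$, then either it has a separating dipole, whose maximal extension $X$ we contract, or it has a two-edge-cut and we apply the flip of Lemma~\ref{lem:2PR}; in both cases we obtain $G_1,G_2$ with $g(G_1)=g(G_2)=0$ and $\ell(G_1)+\ell(G_2)=3$, and one checks $G_1,G_2$ are melon-free (from maximality of $X$ and Figures~\ref{fig:create-melon1}--\ref{fig:create-melon3} in the first case, and from the definition of a two-edge-cut in the second). The admissible partitions of the grade are $3=3+0$ and $3=2+1$, and iterating on the $\ell=2$ factor also $3=1+1+1$; therefore every 2PR $\ell=3$, $g=0$ scheme is assembled, via separating dipoles, separating ladder-vertices, or two-edge-connections, from the ingredients already classified: the two infinity graphs ($\ell=1$, $g=0$, Figure~\ref{fig:Ell1g0.jpg}), the $\ell=2$, $g=0$ schemes of Figures~\ref{fig:Sum3} and \ref{fig:Sum4}, the four 2PI $\ell=3$ schemes just obtained, and the cycle graph. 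Enumerating these assemblies, reducing modulo Definition~\ref{def:isomorphism}, and presenting representatives up to tadpole embeddings (Figure~\ref{tadpoleembedding}) and edge orientations, gives the list in Figure~\ref{l3all2PR}.

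The hard part is the purely combinatorial enumeration underlying the 2PI step: a length-$6$ $\oD$-loop admits many more planar internal configurations than a length-$4$ one, and its planar gluings with the $(a),(b),(c)$ blocks proliferate, with numerous candidates that look different but are combinatorial-map isomorphic. Moreover the distinct tadpole embeddings of Figure~\ref{tadpoleembedding} change the face and $\oD$-loop structure without changing the underlying graph, so they must be tracked carefully throughout to ensure both completeness and non-redundancy of the final lists; a secondary but unavoidable check is testing each candidate for the presence of a two-edge-cut, since this is precisely what distinguishes the four 2PI schemes from the 2PR family.
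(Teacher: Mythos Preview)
Your 2PI strategy is essentially the paper's: reduce via Lemma~\ref{l3loops} to the $(4,\dots,4,6)$ configuration and enumerate the planar gluings of the length-$4$ blocks of Figure~\ref{4EL} with one length-$6$ block (your observation that a melon-free 2PI planar graph cannot carry a length-$2$ $\oD$-loop is correct and is a clean shortcut the paper does not state explicitly). Where you diverge is in the 2PR part. The paper does \emph{not} use the separating-dipole/flip recursion here; instead it runs the same building-block construction uniformly, producing \emph{all} planar $(4,\dots,4,6)$ graphs (Figures~\ref{6EL4vcont}, \ref{6EL5vcont}, \ref{6EL6vcont}), then reinserts the length-$2$ $\oD$-loops (which become either melons or extra N-rungs in the existing ladders), and only at the very end sorts the output into the 2PI list of Figure~\ref{l32PI} and the 2PR list of Figure~\ref{l3all2PR}. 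Your recursive route is legitimate and has the advantage of reusing Theorems~\ref{thm:ell1ell2planar}; the paper's uniform route avoids enumerating all insertion points on the $\ell\le 2$ pieces and the associated isomorphism checks, and yields the ladder-box form of Figure~\ref{l3all2PR} directly.

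One correction: the partition $3=3+0$ is \emph{not} admissible. Exactly as in cases~(a) of Theorems~\ref{propo:g1ell1} and~\ref{propo:g1ell2}, the $\ell=0$, $g=0$ piece must be the (melon-free) cycle graph, and then either $G$ had a melon (separating-dipole case) or the two-edge-cut was not a genuine cut (flip case). So the only surviving partition is $3=2+1$ (and its iterate $1+1+1$); the four 2PI $\ell=3$ schemes and the cycle graph should be dropped from your list of ingredients.
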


\begin{figure}[H]
\centering
\begin{minipage}[t]{0.6\textwidth}
\centering
\def\svgwidth{0.45\columnwidth}
\tiny{
\begingroup%
  \makeatletter%
  \providecommand\color[2][]{%
    \errmessage{(Inkscape) Color is used for the text in Inkscape, but the package 'color.sty' is not loaded}%
    \renewcommand\color[2][]{}%
  }%
  \providecommand\transparent[1]{%
    \errmessage{(Inkscape) Transparency is used (non-zero) for the text in Inkscape, but the package 'transparent.sty' is not loaded}%
    \renewcommand\transparent[1]{}%
  }%
  \providecommand\rotatebox[2]{#2}%
  \newcommand*\fsize{\dimexpr\f@size pt\relax}%
  \newcommand*\lineheight[1]{\fontsize{\fsize}{#1\fsize}\selectfont}%
  \ifx\svgwidth\undefined%
    \setlength{\unitlength}{263.51413036bp}%
    \ifx\svgscale\undefined%
      \relax%
    \else%
      \setlength{\unitlength}{\unitlength * \real{\svgscale}}%
    \fi%
  \else%
    \setlength{\unitlength}{\svgwidth}%
  \fi%
  \global\let\svgwidth\undefined%
  \global\let\svgscale\undefined%
  \makeatother%
  \begin{picture}(1,1.09685773)%
    \lineheight{1}%
    \setlength\tabcolsep{0pt}%
    \put(0,0){\includegraphics[width=\unitlength,page=1]{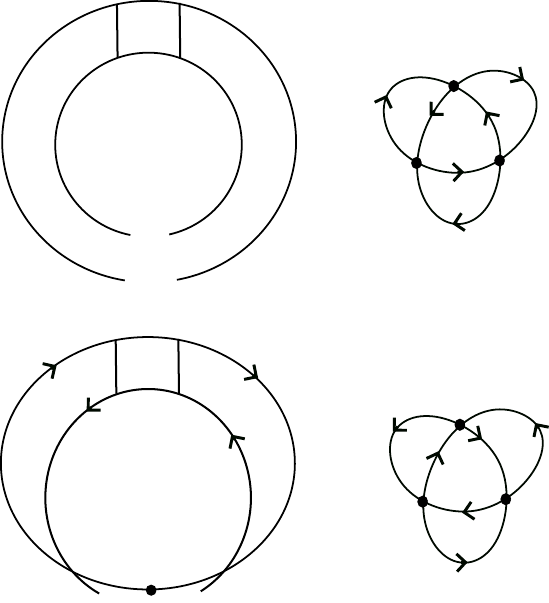}}%
    \put(0.25595916,1.02391743){\color[rgb]{0,0,0}\makebox(0,0)[lt]{\lineheight{1.25}\smash{\begin{tabular}[t]{l}\scalebox{1.2}{${\rm L}$}\end{tabular}}}}%
    \put(0,0){\includegraphics[width=\unitlength,page=2]{l32PI.pdf}}%
    \put(0.25143939,0.4114841){\color[rgb]{0,0,0}\makebox(0,0)[lt]{\lineheight{1.25}\smash{\begin{tabular}[t]{l}\scalebox{1.2}{${\rm R}$}\end{tabular}}}}%
    \put(0,0){\includegraphics[width=\unitlength,page=3]{l32PI.pdf}}%
  \end{picture}%
\endgroup%

}
\caption{All possible 2PI schemes for $\ell=3$, $g=0$. 
Note that if we replace the L-vertex (resp. R-vertex) with an L-dipole (resp. R-dipole) in the top (resp. bottom) left scheme then it is isomorphic to the right graph on the top (resp. bottom).
  }
\label{l32PI}
\end{minipage}
\end{figure}

\begin{figure}[H]
\centering
\begin{minipage}[t]{1\textwidth}
\centering
\def\svgwidth{0.9\columnwidth}
\tiny{
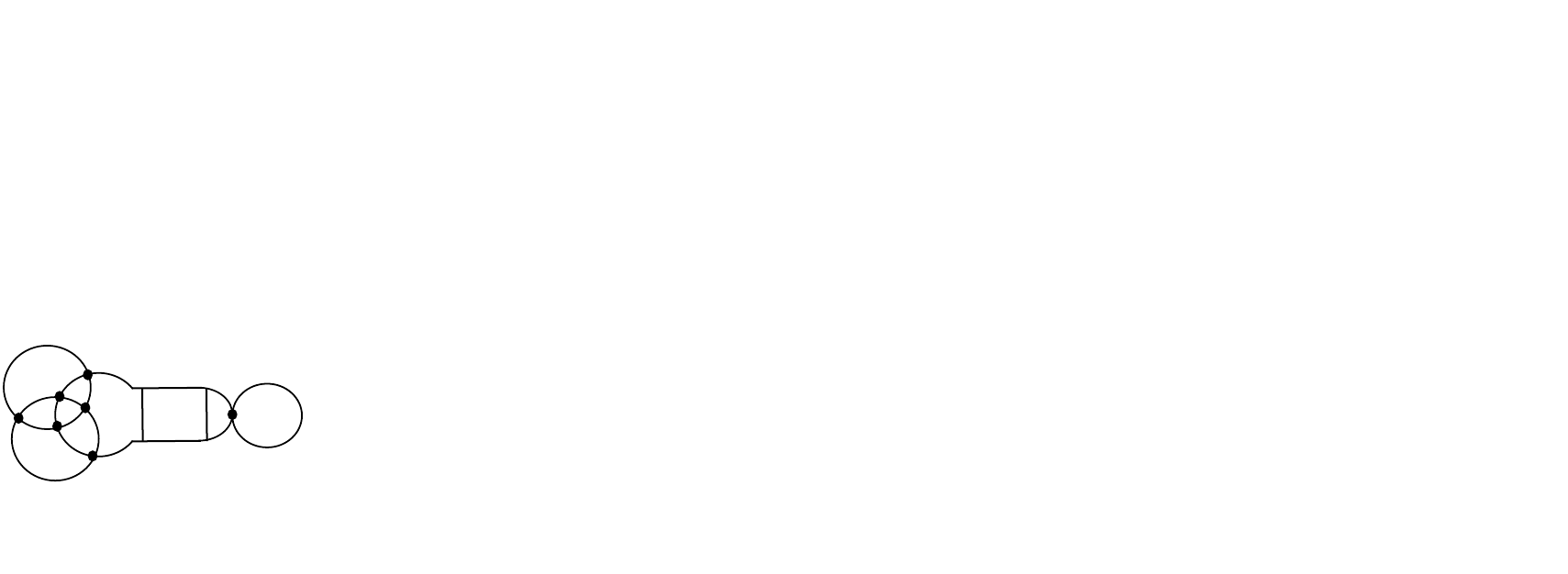
}
\caption{All possible 2PR graphs
for $\ell=3$, $g=0$, up to tadpole embedding and up to orientations of edges.
Note that the drawing on the top row  with $\check{\mathrm I}$ (as defined in Figure \ref{ION}) suggests the existence of a series of $\check{\mathrm I}$ rungs of two or more. 
Depending on the orientations of the edges assigned, they can be either L-, or R-ladder vertex.
The notation $Y$ in this figure can include any non-empty collection of elements of $\{\check {\mathrm I }, \check {\mathrm O }, \check {\mathrm N }\}$ (described in Figure \ref{ION}) in any quantity, combination, and order.
Depending on the orientation of the edges, $Y$ can become ${\mathrm L}$-, ${\mathrm R}$-, or ${\mathrm B}$-ladder vertices.
}
\label{l3all2PR}
\end{minipage}
\end{figure}

\begin{figure}[H]
\centering
\begin{minipage}[t]{0.6\textwidth}
\centering
\def\svgwidth{0.4\columnwidth}
\tiny{
\begingroup%
  \makeatletter%
  \providecommand\color[2][]{%
    \errmessage{(Inkscape) Color is used for the text in Inkscape, but the package 'color.sty' is not loaded}%
    \renewcommand\color[2][]{}%
  }%
  \providecommand\transparent[1]{%
    \errmessage{(Inkscape) Transparency is used (non-zero) for the text in Inkscape, but the package 'transparent.sty' is not loaded}%
    \renewcommand\transparent[1]{}%
  }%
  \providecommand\rotatebox[2]{#2}%
  \newcommand*\fsize{\dimexpr\f@size pt\relax}%
  \newcommand*\lineheight[1]{\fontsize{\fsize}{#1\fsize}\selectfont}%
  \ifx\svgwidth\undefined%
    \setlength{\unitlength}{136.89469826bp}%
    \ifx\svgscale\undefined%
      \relax%
    \else%
      \setlength{\unitlength}{\unitlength * \real{\svgscale}}%
    \fi%
  \else%
    \setlength{\unitlength}{\svgwidth}%
  \fi%
  \global\let\svgwidth\undefined%
  \global\let\svgscale\undefined%
  \makeatother%
  \begin{picture}(1,0.50025462)%
    \lineheight{1}%
    \setlength\tabcolsep{0pt}%
    \put(0,0){\includegraphics[width=\unitlength,page=1]{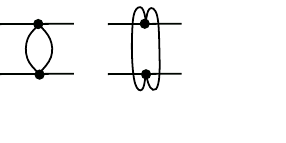}}%
    \put(0.1212544,0.02323573){\color[rgb]{0,0.03529412,0}\makebox(0,0)[lt]{\lineheight{1.25}\smash{\begin{tabular}[t]{l}\scalebox{1.5}{$\check{\mathrm I}$}\end{tabular}}}}%
    \put(0,0){\includegraphics[width=\unitlength,page=2]{ION.pdf}}%
    \put(0.49345538,0.01888559){\color[rgb]{0,0.03529412,0}\makebox(0,0)[lt]{\lineheight{1.25}\smash{\begin{tabular}[t]{l}\scalebox{1.5}{$\check{\mathrm O}$}\end{tabular}}}}%
    \put(0.83662468,0.02196178){\color[rgb]{0,0.03529412,0}\makebox(0,0)[lt]{\lineheight{1.25}\smash{\begin{tabular}[t]{l}\scalebox{1.5}{$\check{\mathrm N}$}\end{tabular}}}}%
  \end{picture}%
\endgroup%

}
\caption{
We define dipoles
$\check {\mathrm I }$, $\check {\mathrm O }$, and $\check {\mathrm N }$, without orientations of edges.
}
\label{ION}
\end{minipage}
\end{figure}

\proof
In the proof, we ignore the tadpole embedding and the orientations of the edges, as they can be recovered paying attention to the definition of scheme isomorphism given in Definition \ref{def:isomorphism}. 
Additionally, we heavily rely on the planarity condition in the similar way as we did in the earlier proofs.

Consider a planar graph with $\ell=3$.
If an $\oD$-loop of length two is present, we can remove iteratively the $\oD$-loops of length two, while keeping $\ell$ and $g$ invariant, as described in Figure \ref{2ELremoval}. 
The resulting graph does not have any $\oD$-loops of length $2$ and hence should have $\oD$-loop configuration $(4,4,\ldots,4,6)$, as explained in 
Lemma \ref{l3loops}.
By reversing this argument we can conclude that every graph is constructed from $(4,4,\ldots,4,6)$ graphs by consecutively adding $\oD$-loops of length of $2$. All possible $\oD$-loops of length $4$ are discussed in the proof of Theorem \ref{thm:ell1ell2planar}.
We now study the $\oD$-loops of length $6$, which are found empirically (a different proof is addressed in Theorem \ref{prop:enumerate}).

An $\oD$-loop of length six contains at least three vertices. In this minimal case, there are three possibilities depicted in  Figure \ref{6EL3v}.
The first two graphs are similar to the ones displayed in Figure \ref{l2basic}. They can be constructed from the cycle graph by adding three tadpoles. The third graph stands out, as it does not have a tadpole.
\begin{figure}[H]
\centering
\begin{minipage}[t]{0.9\textwidth}
\centering
\def\svgwidth{0.8\columnwidth}
\tiny{
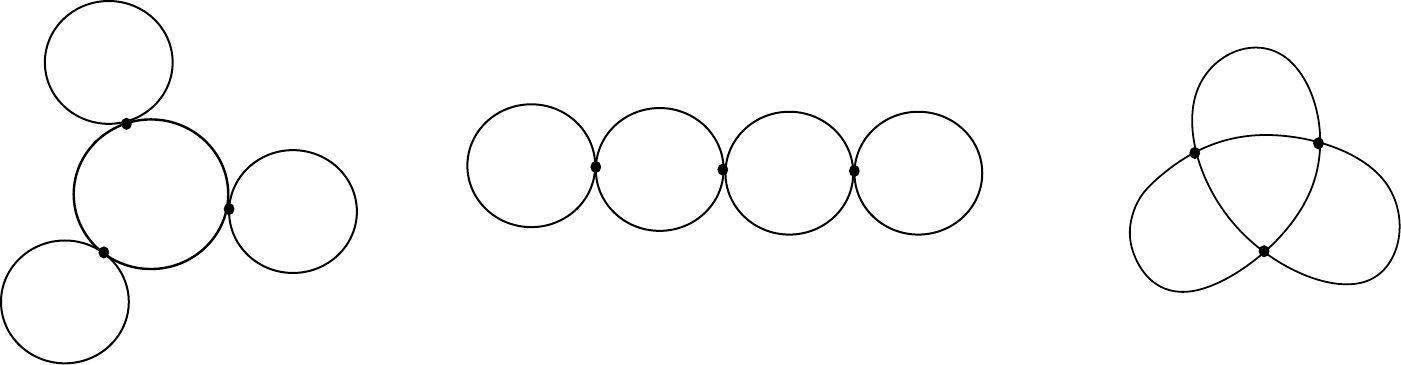
}
\caption{$\oD$-loops of length $6$ containing $3$ vertices.
}
\label{6EL3v}
\end{minipage}
\end{figure}

Secondly, suppose the $\oD$-loop contains four vertices. Then, the only possible graphs are the ones in Figure \ref{6EL4v}, up to tadpole embeddings.
Because they must contain a tadpole, they can be formed from the first two $\oD$-loops of length $4$ in Figure \ref{4EL} by adding a tadpole.
\begin{figure}[H]
\centering
\begin{minipage}[t]{0.9\textwidth}
\centering
\def\svgwidth{0.5\columnwidth}
\tiny{
\begingroup%
  \makeatletter%
  \providecommand\color[2][]{%
    \errmessage{(Inkscape) Color is used for the text in Inkscape, but the package 'color.sty' is not loaded}%
    \renewcommand\color[2][]{}%
  }%
  \providecommand\transparent[1]{%
    \errmessage{(Inkscape) Transparency is used (non-zero) for the text in Inkscape, but the package 'transparent.sty' is not loaded}%
    \renewcommand\transparent[1]{}%
  }%
  \providecommand\rotatebox[2]{#2}%
  \newcommand*\fsize{\dimexpr\f@size pt\relax}%
  \newcommand*\lineheight[1]{\fontsize{\fsize}{#1\fsize}\selectfont}%
  \ifx\svgwidth\undefined%
    \setlength{\unitlength}{313.3927521bp}%
    \ifx\svgscale\undefined%
      \relax%
    \else%
      \setlength{\unitlength}{\unitlength * \real{\svgscale}}%
    \fi%
  \else%
    \setlength{\unitlength}{\svgwidth}%
  \fi%
  \global\let\svgwidth\undefined%
  \global\let\svgscale\undefined%
  \makeatother%
  \begin{picture}(1,0.41142206)%
    \lineheight{1}%
    \setlength\tabcolsep{0pt}%
    \put(0,0){\includegraphics[width=\unitlength,page=1]{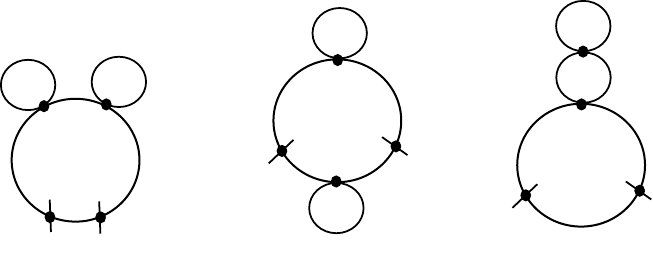}}%
    \put(0.09691142,0.00660267){\color[rgb]{0,0,0}\transparent{0.98699999}\makebox(0,0)[lt]{\lineheight{1.25}\smash{\begin{tabular}[t]{l}\scalebox{1}{$(a)$}\end{tabular}}}}%
    \put(0.49595836,0.00561669){\color[rgb]{0,0,0}\transparent{0.98699999}\makebox(0,0)[lt]{\lineheight{1.25}\smash{\begin{tabular}[t]{l}\scalebox{1}{$(b)$}\end{tabular}}}}%
    \put(0.87600307,0.00751689){\color[rgb]{0,0,0}\transparent{0.98699999}\makebox(0,0)[lt]{\lineheight{1.25}\smash{\begin{tabular}[t]{l}\scalebox{1}{$(c)$}\end{tabular}}}}%
  \end{picture}%
\endgroup%

}
\caption{$\oD$-loops of length $6$ containing $4$ vertices.
}
\label{6EL4v}
\end{minipage}
\end{figure}

Thirdly, suppose the $\oD$-loop contains five vertices. Then, when the $\oD$-loop contains a tadpole, it can be formed by adding a tadpole to the $\oD$-loop on the right of Figure \ref{4EL}. The resulting $\oD$-loop is displayed on the left, i.e., (a) in Figure \ref{6EL5v}. 
When there are no tadpoles present, the only possibilities are given in the middle (b) and on the right of Figure \ref{6EL5v}. 
In the case for planar graphs with $\ell=3$, these latter two subgraphs are equivalent.

\begin{figure}[H]
\centering
\begin{minipage}[t]{0.9\textwidth}
\centering
\def\svgwidth{0.6\columnwidth}
\tiny{
\begingroup%
  \makeatletter%
  \providecommand\color[2][]{%
    \errmessage{(Inkscape) Color is used for the text in Inkscape, but the package 'color.sty' is not loaded}%
    \renewcommand\color[2][]{}%
  }%
  \providecommand\transparent[1]{%
    \errmessage{(Inkscape) Transparency is used (non-zero) for the text in Inkscape, but the package 'transparent.sty' is not loaded}%
    \renewcommand\transparent[1]{}%
  }%
  \providecommand\rotatebox[2]{#2}%
  \newcommand*\fsize{\dimexpr\f@size pt\relax}%
  \newcommand*\lineheight[1]{\fontsize{\fsize}{#1\fsize}\selectfont}%
  \ifx\svgwidth\undefined%
    \setlength{\unitlength}{388.98067925bp}%
    \ifx\svgscale\undefined%
      \relax%
    \else%
      \setlength{\unitlength}{\unitlength * \real{\svgscale}}%
    \fi%
  \else%
    \setlength{\unitlength}{\svgwidth}%
  \fi%
  \global\let\svgwidth\undefined%
  \global\let\svgscale\undefined%
  \makeatother%
  \begin{picture}(1,0.26465043)%
    \lineheight{1}%
    \setlength\tabcolsep{0pt}%
    \put(0,0){\includegraphics[width=\unitlength,page=1]{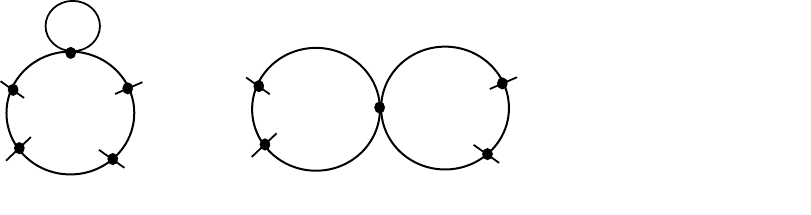}}%
    \put(0.0644527,0.0065421){\color[rgb]{0,0,0}\transparent{0.98699999}\makebox(0,0)[lt]{\lineheight{1.25}\smash{\begin{tabular}[t]{l}\scalebox{1}{$(a)$}\end{tabular}}}}%
    \put(0.45022604,0.00574771){\color[rgb]{0,0,0}\transparent{0.98699999}\makebox(0,0)[lt]{\lineheight{1.25}\smash{\begin{tabular}[t]{l}\scalebox{1}{$(b)$}\end{tabular}}}}%
    \put(0,0){\includegraphics[width=\unitlength,page=2]{6EL5v.pdf}}%
  \end{picture}%
\endgroup%

}
\caption{$\oD$-loops of length $6$ containing $5$ vertices. 
}
\label{6EL5v}
\end{minipage}
\end{figure}

Finally, suppose the $\oD$-loop contains six vertices.
Then the only possible $\oD$-loop one can form is the subgraph drawn in 
Figure 
\ref{6EL6v}.
\begin{figure}[H]
\centering
\begin{minipage}[t]{0.9\textwidth}
\centering
\def\svgwidth{0.12\columnwidth}
\tiny{
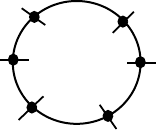
}
\caption{An $\oD$-loop of length $6$ containing $6$ vertices.
}
\label{6EL6v}
\end{minipage}
\end{figure}

We now have described all the possible configurations for $\oD$-loops with length $6$. The graphs in question are constructed by adding $\oD$-loops of length $4$ to the $\oD$-loop with length $6$, as was done in the case for $\ell =2$ in Figure \ref{l2procedure}. 
$X$ in the following figures (\ref{6EL4vcont}, \ref{6EL5vcont},  and \ref{6EL6vcont}), 
denotes 
any non-empty collection of elements from $\{\check {\mathrm I }, \check {\mathrm O }\}$ in any amount, combination, and sequence.
We distinguish three cases. 
\\

(I)
Suppose that the $\oD$-loop of length six is of the forms as in Figure \ref{6EL4v}. Then the procedure of adding $\oD$-loops of length $4$ is identical to the case where $\ell=2$, as described in Figure \ref{l2procedure} as an example.
As discussed earlier, once orientations of edges are assigned, it results in a ladder with only L- and/or R-dipoles ending in a tadpole. 
The possible graphs, before assigning orientations of edges, are given in Figure \ref{6EL4vcont}.
\begin{figure}[H]
\centering
\begin{minipage}[t]{0.9\textwidth}
\centering
\def\svgwidth{1\columnwidth}
\tiny{
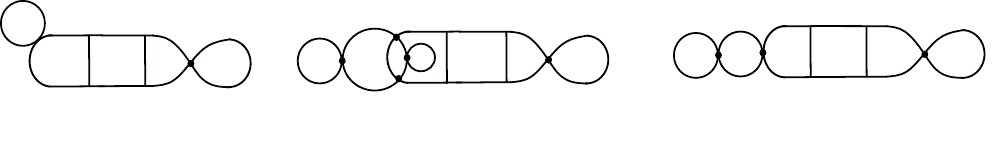
}
\caption{ 
Graphs ignoring tadpole embeddings and the orientations on edges with $\ell=3$ and $g=0$ and $\oD$-loop structure  $(4,4,\ldots,4,6)$, where the $\oD$-loop of length $6$ contains $4$ vertices. 
(a'), (b'), and (c') graphs are derived from (a), (b), and (c) respectively in Figure \ref{6EL4v}.
}
\label{6EL4vcont}
\end{minipage}
\end{figure}

(II)
Suppose that the $\oD$-loop of length six is of the forms as in Figure \ref{6EL5v}. Then the procedure of adding $\oD$-loops of length $4$ is nearly identical to the case where $\ell=2$, except that now we have four free vertices rather than two.
Note that we can only add an $\oD$-loop in a certain way, namely to pair the blue with blue and the red with red vertices without mixing them 
(as labeled in Figure \ref{6EL5vmore} and Figure \ref{6EL5vcont})
due to the planarity condition (see in Figure \ref{6EL5vforbidden} a forbidden way of inserting an $\oD$-loop in (b) of Figure \ref{6EL5v}).
\begin{figure}[H]
\centering
\begin{minipage}[t]
{0.45\textwidth}
\centering
\def\svgwidth{0.25\columnwidth}
\tiny{
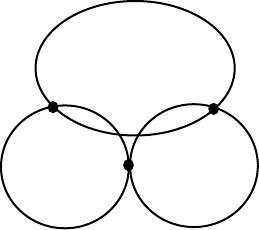
}
\caption{
This graph is non-planar.
}
\label{6EL5vforbidden}
\end{minipage}
\end{figure}
Therefore, respecting the planarity condition (i.e., pairing only blue (resp. red) with blue (resp. red) vertices), either one can add only $\oD$-loops of length four each without a tadpole that make a closed chain 
(see Figure \ref{6EL5vmore} and the left column of the resulting graphs in Figure \ref{6EL5vcont}), or one performs the procedure explained for Figure \ref{l2procedure}, ending up with creating ladders containing only  L- or R-dipoles ending with a tadpole. 
The resulting latter graphs can be found on the right column of the resulting graphs presented in Figure \ref{6EL5vcont}.
\begin{figure}[H]
\centering
\begin{minipage}[t]{1\textwidth}
\centering
\def\svgwidth{1\columnwidth}
\tiny{
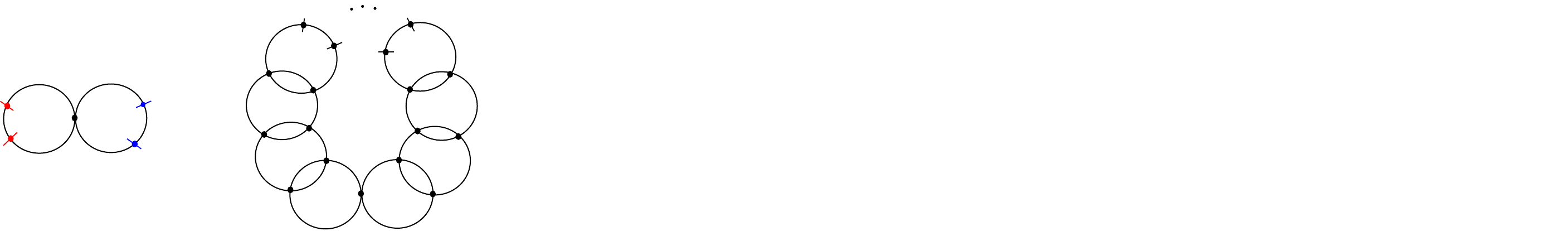
}
\caption{
Adding only $\oD$-loops of length four each without a tadpole to the $\mathrm{O}(D)$-loop structure (b) in Figure \ref{6EL5v} makes a closed chain.
The resulting graphs after recovering the orientations on edges are either two of the right side, corresponding to the two schemes above and they are identified to be 2PI.
}
\label{6EL5vmore}
\end{minipage}
\end{figure}
\begin{figure}[H]
\centering
\begin{minipage}[t]{0.9\textwidth}
\centering
\def\svgwidth{0.8\columnwidth}
\tiny{
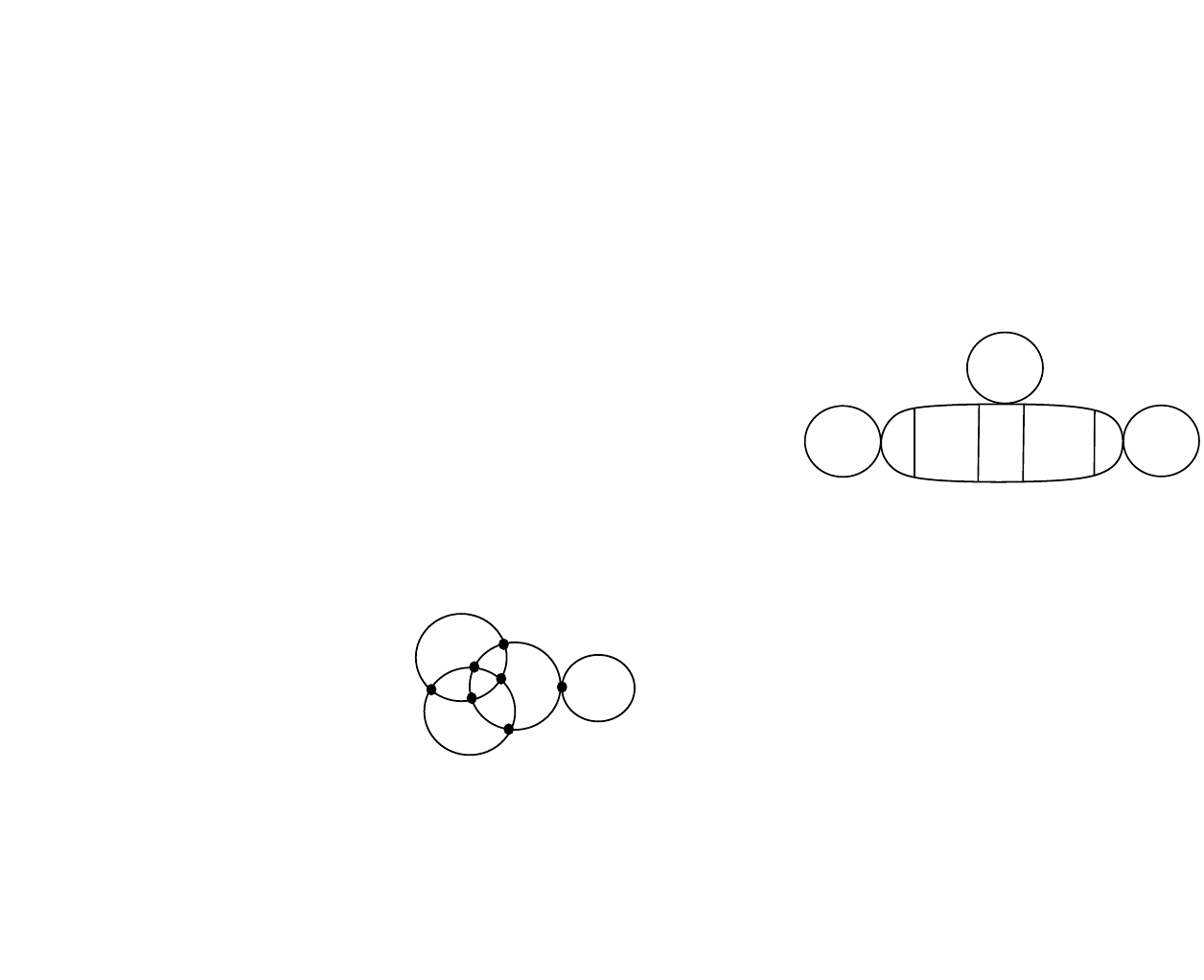
}
\caption{
Graphs (without tadpole embeddings nor orientation assignment on edges) with $\ell=3$ and $g=0$ and $\oD$-loop structure  $(4,4,\ldots,4,6)$, where the $\oD$-loop of length $6$ contains $5$ vertices. 
All cases presented here are 2PR.
Colors indicate the pairing of vertices belonging to the same $\mathrm{O}(D)$-loop when they are added in the procedure described in the text.
On the top two rows, the drawing with $\check{\mathrm I}$'s indicates that there are two or more $\check{\mathrm I}$'s in sequence. 
Depending on the orientations of the edges, it will either become L- or R-ladder vertex.
}
\label{6EL5vcont}
\end{minipage}
\end{figure}

(III)
Suppose now the case where the $\oD$-loop of length $6$ is as in Figure \ref{6EL6v}. 
Then, there are only three possibilities,
to add $\oD$-loops of length $4$. 
In Figure \ref{6EL6vcolour}  we indicate the vertices and corresponding half-edges that can be contained in the same $\oD$-loop of length $4$ with the same color.

\begin{figure}[H]
\centering
\begin{minipage}[t]{0.9\textwidth}
\centering
\def\svgwidth{0.5\columnwidth}
\tiny{
\begingroup%
  \makeatletter%
  \providecommand\color[2][]{%
    \errmessage{(Inkscape) Color is used for the text in Inkscape, but the package 'color.sty' is not loaded}%
    \renewcommand\color[2][]{}%
  }%
  \providecommand\transparent[1]{%
    \errmessage{(Inkscape) Transparency is used (non-zero) for the text in Inkscape, but the package 'transparent.sty' is not loaded}%
    \renewcommand\transparent[1]{}%
  }%
  \providecommand\rotatebox[2]{#2}%
  \newcommand*\fsize{\dimexpr\f@size pt\relax}%
  \newcommand*\lineheight[1]{\fontsize{\fsize}{#1\fsize}\selectfont}%
  \ifx\svgwidth\undefined%
    \setlength{\unitlength}{278.25166237bp}%
    \ifx\svgscale\undefined%
      \relax%
    \else%
      \setlength{\unitlength}{\unitlength * \real{\svgscale}}%
    \fi%
  \else%
    \setlength{\unitlength}{\svgwidth}%
  \fi%
  \global\let\svgwidth\undefined%
  \global\let\svgscale\undefined%
  \makeatother%
  \begin{picture}(1,0.27342055)%
    \lineheight{1}%
    \setlength\tabcolsep{0pt}%
    \put(0,0){\includegraphics[width=\unitlength,page=1]{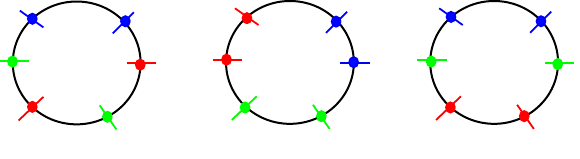}}%
    \put(0.11210742,0.00803501){\color[rgb]{0,0,0}\transparent{0.98699999}\makebox(0,0)[lt]{\lineheight{1.25}\smash{\begin{tabular}[t]{l}\scalebox{1}{$(a)$}\end{tabular}}}}%
    \put(0.48557365,0.01231538){\color[rgb]{0,0,0}\transparent{0.98699999}\makebox(0,0)[lt]{\lineheight{1.25}\smash{\begin{tabular}[t]{l}\scalebox{1}{$(b)$}\end{tabular}}}}%
    \put(0.84191819,0.01445557){\color[rgb]{0,0,0}\transparent{0.98699999}\makebox(0,0)[lt]{\lineheight{1.25}\smash{\begin{tabular}[t]{l}\scalebox{1}{$(c)$}\end{tabular}}}}%
  \end{picture}%
\endgroup%

}
\caption{Possible ways of adding $\oD$-loops of length $4$ to this particular $\oD$-loop of length $6$.
}
\label{6EL6vcolour}
\end{minipage}
\end{figure}

In the first case, i.e., (a) in Figure \ref{6EL6vcolour},
there is only one possible configuration for the $\oD$-loops added at the red and green half-edges. 
We then add $\oD$-loops of length $4$ to the the blue half-edges, following the same procedure as described for Figure \ref{l2procedure}.
We end up with (3) in Figure \ref{6EL6vcont}.

In the second case, i.e., (b) depicted in Figure \ref{6EL6vcolour}, we carry out the procedure of adding $\oD$-loops of length $4$ to the red, green and blue half-edges respectively. 
We have two possibilities.
First possibility is each procedure ends by adding an  $\oD$-loop of length $4$ with a tadpole. 
Then we end up with (4) in Figure \ref{6EL6vcont}. 
Otherwise, one of the procedures (say, starting with red) ends with an $\oD$-loop with a tadpole, while the other two procedures (say, starting at green, and at blue) connect 
together to make a closed chain, similar as to what was depicted in Figure \ref{l2noTP}. 
We then end up with (2) in Figure \ref{6EL6vcont}.

Lastly, the third case, i.e., (c) in Figure \ref{6EL6vcolour} is slightly more subtle. 
We again perform the procedure of adding $\oD$-loops of length $4$ to the red, green and blue half-edges respectively. 
Again, each procedure of adding $\oD$-loops of length $4$ can end by adding an $\oD$-loop with a tadpole. 
Then we end up with (5) in Figure \ref{6EL6vcont}. 
Now we consider how two procedures of adding $\oD$-loops of length $4$ starting at two different sets of colored half-edges can connect to make a closed chain as in the previous case. 
In this case we are more restricted how such a closed chain can occur. 
Indeed, the procedure of adding $\oD$-loops starting at the blue half-edges cannot be connected together with the procedure starting at the red half-edges, as this would make it impossible to add an  $\oD$-loop to the green half-edges while keeping the graph planar. 
However, the procedure of adding $\oD$-loops starting at the green half-edges can be connected together with the procedure started at either the red or blue half-edges. 
One can check we then end up with a scheme equivalent to (1) in Figure \ref{6EL6vcont}.

\begin{figure}[H]
\centering
\begin{minipage}[t]{0.95\textwidth}
\centering
\def\svgwidth{1\columnwidth}
\tiny{
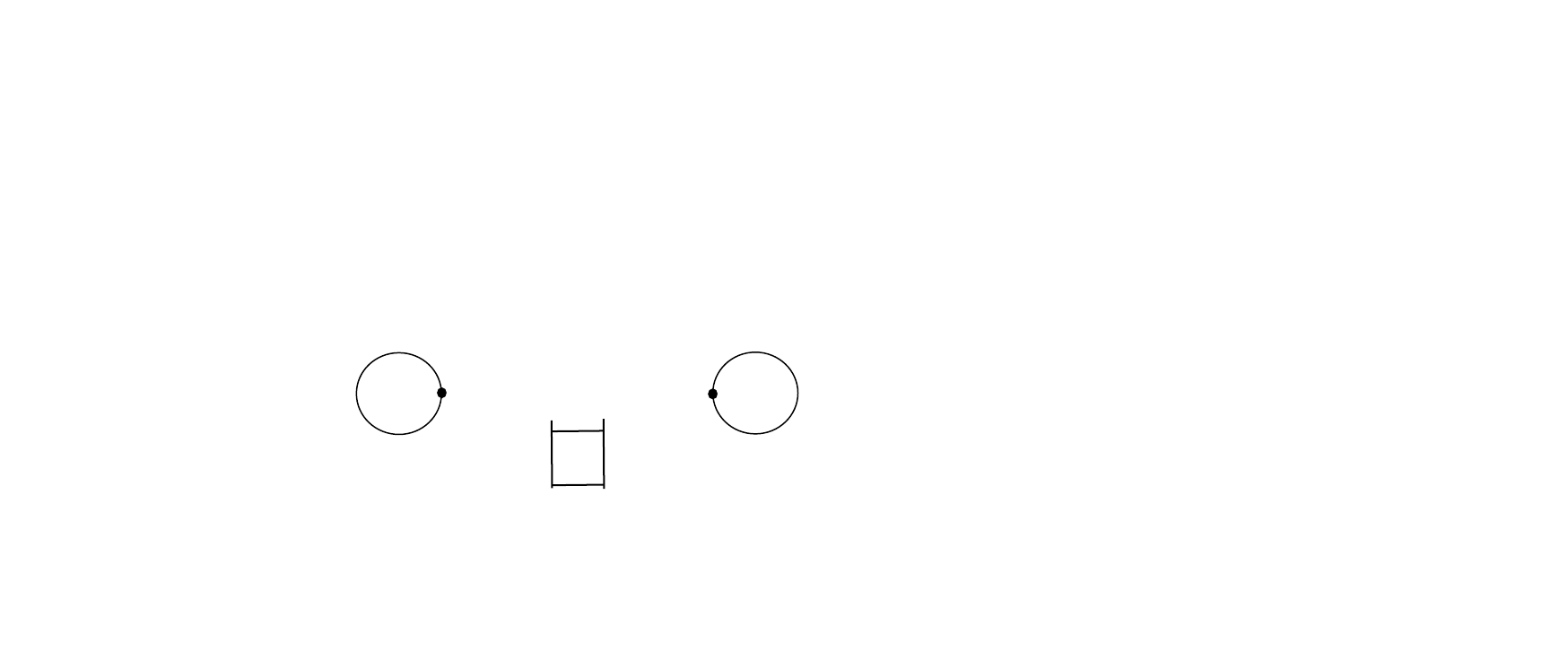
}
\caption{
Graphs (without tadpole embeddings nor orientations on edges) with $\ell=3$ and $g=0$ and $\oD$-loop structure  $(4,4,\ldots,4,6)$, where the $\oD$-loop of length $6$ contains $6$ vertices. 
}
\label{6EL6vcont}
\end{minipage}
\end{figure}

The next step of this proof is similar to the proof for $\ell=2$.
As discussed before, one can construct general planar graphs with $\ell=3$ from the $(4,4,\ldots,4,6)$ graphs by adding $\oD$-loops of length $2$ that do not change the genus. In the case where there are no tadpoles present, such $\oD$-loops will be melons. In the case where tadpoles are present, adding such non-melonic $\oD$-loops, amounts to adding N-dipoles to the, possibly empty, ladder.

To summarize, all the 2PR schemes with $g=0$ and $\ell=3$, ignoring the orientations of the edges, are given in Figure \ref{l3all2PR}.
Recovering the orientations of the edges for 2PI schemes we list all the 2PI schemes for $g=0$ and $\ell=3$ in Figure \ref{l32PI}.

\qed

\medskip

\section{Toward classification of $\ell> 3$ with $g=0$} 
\label{sec:higherlg0}

This section will cover the enumeration of 2PI melon-free graphs with grade $\ell$ greater than $3$. More precisely, we find all graphs with $\oD$-loop configurations $(4, 4, \dots, 4, 2\ell)$ for $\ell=4,5, 6$. 
The illustrations are provided in figures, read from left to right as the first, second, third (etc.) graph.
We finally deduce how to enumerate the 2PI melon-free graphs with a single $\oD$-loop and arbitrary grade after connecting to alternating knot diagrams.

Let us denote by $p_i$ (an even number) and $p(n,m)=(p_1, \ldots, p_m)$ a partition of $n$ in $m$ elements with $n=\sum_{i=1}^m p_i$ and by $p(n,m)^{>k}$ a partition in which each $p_i$ is greater than $k$. 
\begin{thm}
\label{thm:remi}
Let $G$ be a graph with grade $\ell$ and $\varphi$ number of $\oD$-loops.
\begin{enumerate}
    \item If $\ell>\varphi + 2$ then $G$ has $\oD$-loop configurations $p(2(l+2i),i+1)$; $i=0, \ldots, \ell-4$;
    \item Else, $G$ contains an $\oD$-loop of length $2$ or has $\oD$-loop configurations of the form $(4, 4, \ldots, 4, p(2(\ell+2(i-1)),i)^{>2})$; $i=1, \ldots, \ell-2$.
\end{enumerate}
\end{thm}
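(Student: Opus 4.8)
The plan is to exploit the edge-length partition identity \eqref{partition}, $e=\sum_{i=1}^{\varphi}l_i$, together with the $4$-regularity constraint $e=2v$ and the grade formula \eqref{eq:ell}, which I rewrite using \eqref{eq:g} in the form $\varphi = 2 + v - \tfrac12 f - \tfrac{\ell}{2}$; for $g=0$ this is not needed, but for the general statement I keep $g$ generic and use \eqref{eq:seven}, namely $\varphi = 1 + g + \tfrac12 v - \tfrac{\ell}{2}$, so that $v = 2(\varphi - 1 - g + \tfrac{\ell}{2}) = 2\varphi + \ell - 2 - 2g$ and hence $e = 2v = 4\varphi + 2\ell - 4 - 4g$. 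The strategy mirrors the proofs of Lemma \ref{l2} and Lemma \ref{l3loops}: one assumes the graph has \emph{no} $\oD$-loop of length $2$ and bounds $e$ from below by minimizing $\sum l_i$ subject to all $l_i$ even and $l_i \ge 4$, then compares with the exact value $e = 4\varphi + 2\ell - 4 - 4g$ to force a rigid constraint on how many $l_i$ exceed $4$ and by how much.

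First I would set up the counting. Suppose no $\oD$-loop has length $2$. Let $k$ of the $\varphi$ loops have length exactly $4$ and the remaining $i := \varphi - k$ loops have lengths $p_1,\dots,p_i$ with each $p_j \ge 6$ (even). Then
\begin{align}
e = 4k + \sum_{j=1}^{i} p_j = 4(\varphi - i) + \sum_{j=1}^{i} p_j = 4\varphi + \sum_{j=1}^{i}(p_j - 4).
\end{align}
Equating with $e = 4\varphi + 2\ell - 4 - 4g$ gives $\sum_{j=1}^{i}(p_j - 4) = 2\ell - 4 - 4g = 2(\ell - 2 - 2g)$. Since each $p_j - 4 \ge 2$, we get $2i \le 2(\ell - 2 - 2g)$, i.e.\ $i \le \ell - 2 - 2g \le \ell - 2$, and moreover $\sum_{j=1}^{i} p_j = 4i + 2(\ell-2-2g) = 2(2i + \ell - 2 - 2g) = 2(\ell + 2(i-1)) - 4g$. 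For $g=0$ (the scope of this section) this says precisely that the non-$4$ loops form a partition $p(2(\ell+2(i-1)), i)^{>2}$ of $2(\ell+2(i-1))$ into $i$ even parts each greater than $2$ (equivalently $\ge 4$; the superscript $>2$ matches the paper's notation, and the parts being $\ge 6$ in the "no length-$4$ loop among these $i$" grouping is automatic once we have separated out the $k$ loops of length exactly $4$). This establishes the configuration $(4,4,\dots,4,p(2(\ell+2(i-1)),i)^{>2})$ for $i = 1,\dots,\ell-2$, which is case (2).

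For the dichotomy with case (1), I would analyze when the "all $l_i \ge 4$" scenario is even possible versus forced. When $i = \varphi$ (i.e.\ $k=0$, no length-$4$ loop at all), the identity $\sum_{j=1}^{\varphi}(p_j-4) = 2(\ell-2-2g)$ combined with $p_j - 4 \ge 2$ needs $\varphi \le \ell - 2 - 2g$; contrapositively, if $\ell \le \varphi + 2 + 2g$ (in particular $\ell \le \varphi+2$ when $g=0$) then we \emph{cannot} have all loops of length $\ge 6$, so either a length-$2$ loop appears or at least one loop has length exactly $4$ — this is the "Else" branch feeding into case (2). When instead $\ell > \varphi + 2$ (for $g=0$), the lower bound forces the configuration to saturate in the opposite direction: taking $i=\varphi$ and relabelling, the loops must realize $p(2(\ell+2(\varphi-1)),\varphi)$, and writing $\varphi = i+1$ recovers the stated family $p(2(\ell+2i), i+1)$ for $i = 0,\dots,\ell-4$ (the upper endpoint $i=\ell-4$ coming from $\varphi = i+1$ and the constraint $\varphi \le \ell-2$, equivalently $i \le \ell - 3$, sharpened by the requirement that no length-$2$ loop is present, which removes the boundary case). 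I would double-check these index ranges against small cases $\ell=4,5,6$ to make sure the endpoints are stated correctly.

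The main obstacle I anticipate is \emph{not} the arithmetic — that is a routine unfolding of \eqref{partition}, $e=2v$, and \eqref{eq:seven} — but rather pinning down the precise boundary indices and the exact meaning of "contains an $\oD$-loop of length $2$ \emph{or}" in case (2): one must argue that once a length-$2$ loop is excluded, the edge count \emph{forces} (not merely permits) one of the listed $(4,\dots,4,p(\cdots)^{>2})$ shapes, and that every $i$ in the stated range is genuinely realizable by some graph (the existence direction), which presumably is handled by the explicit constructions in the figures of this section (the $(4,4,\dots,4,2\ell)$ families for $\ell=4,5,6$) and by the reduction moves of Figure \ref{2ELremoval}. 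I would therefore structure the writeup as: (i) the algebraic identity $e = 4\varphi + 2(\ell - 2 - 2g)$; (ii) the "no length-$2$ loop" assumption and the partition bound $\sum(p_j-4) = 2(\ell-2-2g)$; (iii) the case split on whether $\ell > \varphi + 2$; (iv) reading off the two partition families with their index ranges; and (v) a remark that realizability of each case follows from the constructions exhibited earlier in the section. The potential subtlety to flag explicitly is whether the theorem intends $g$ to be general or $g=0$ throughout this section — the cleanest writeup keeps $g$ in the formulas and specializes at the end, noting that for $g \ge 1$ the bound $\ell \le \varphi + 2 + 2g$ is weaker and Lemma \ref{lem:generall1} already covers $\ell \le 3$.
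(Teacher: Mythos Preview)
Your approach matches the paper's: both arguments rest on the identity $e = 2v = 4\varphi + 2\ell - 4$ (from \eqref{eq:seven} with $g=0$) combined with $e = \sum_i l_i$, then bound the number of long loops via $\sum_j (l_j - 4) = 2(\ell-2)$ when no length-$2$ loop is present. Case~(1) is in fact simpler than you make it: the paper just observes that $\ell > \varphi+2$ forces $\varphi \le \ell-3$ directly (not $\varphi \le \ell-2$), so setting $\varphi = i+1$ and computing $E = 2(\ell+2i)$ for $i=0,\dots,\ell-4$ is the whole argument, with no need to invoke the ``no length-$2$'' or ``$k=0$'' decomposition you built for case~(2).
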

\begin{proof}
\begin{enumerate}
    \item For $\ell>\varphi + 2$, $\varphi$ is clearly at most $\ell-3$ and therefore any $\oD$-loop configuration is a partition of $E$ in 1, 2, $\ldots$ or $\ell-3$ elements. Then $\varphi=i+1$ in the first equation in \eqref{eq:seven} gives $E=2v=2\ell-4+4(i+1)=2(\ell+2i)$ ending the proof of the first item.

    \item We now assume that $\ell\leq \varphi + 2$ and $G$ contains no $\oD$-loop of length $2$. The smallest value of $E$ is obtained if every $\oD$-loop but the fewest possible has length $4$. The maximum number $m$ of $\oD$-loops with length greater than $4$ is at most $\ell-2$. Otherwise if $m>\ell-2$ then $E=4(\varphi-m)+\sum_{t=1}^ml_i$ with $l_i\geq 6$ for all $i$. Then $E=2v\geq 4(\varphi-m)+6m=4\varphi+2m>4+2v-2\ell+2\ell-4=2v$ which gives a contradiction. 
    
    If the $\oD$-loop configuration 
   has only one element $p_k >4$, then it is of the form $(4, 4, \ldots, 4, 2\ell)$. In fact $E=2v=4(\varphi-1)+k=4(\frac{v}{2}-\frac{\ell}{2})+k=2v-2\ell+k$ and therefore $k=2\ell$. More generally, if there are $1\leq i\leq \ell-2$ elements; $l_1, \ldots, l_i$ of length greater than $4$, then $\sum_{t=1}^il_t=2(\ell+2(i-1))$. In fact, $E=2v=4(\varphi-i)+\sum_{t=1}^il_t=4(1+\frac{v}{2}-\frac{\ell}{2}-i)+\sum_{t=1}^il_t=4+2v-2\ell-4i+\sum_{t=1}^il_t$ implies that $\sum_{t=1}^il_t=2(\ell+2(i-1))$.
\end{enumerate}
\end{proof}

From now on, we skip the orientation in the enumeration of the schemes for simplicity.
One can recover the information encoded in the orientation by assigning orientations of edges in a graph and making sure that no graphs are isomorphic by following Definition \ref{def:isomorphism}.

In the following, we show the existence of these graphs that we claimed in the Theorem \ref{thm:remi}.

Let us now discuss the particular cases of $\ell=4,5,6$ in which we only address the $\oD$-loops of length greater than two.

{\bf Case $\ell=4$:}
In case $\ell>\varphi+2$, then $\varphi=1$ and the only $\oD$-loop configuration is $(8)=p(2\times 4,1)$. Otherwise the only $\oD$-loop configurations are $(4, 4, \ldots, 4, p(2(4+2(i-1)),i)^{>2})$; $i=1,2$ namely $(4, 4, \ldots, 4, 8)$
and $(4, 4, \ldots,4, 6, 6)$. A graphical representations of these is given in Figures \ref{l4phi1} and \ref{l4phi2}.

\begin{figure}[H]
\centering
\includegraphics[width=0.6\textwidth]{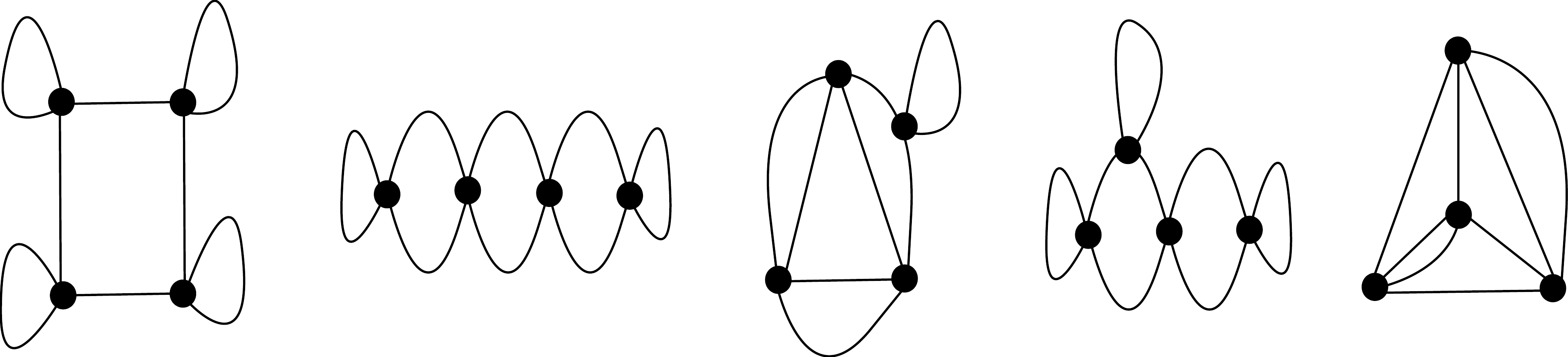}
\caption{\small{Some schematic representation of $\ell=4$ and $\varphi=1$ graphs.}}
\label{l4phi1}
\end{figure}

\begin{figure}[H]
\centering
\includegraphics[width=0.6\textwidth]{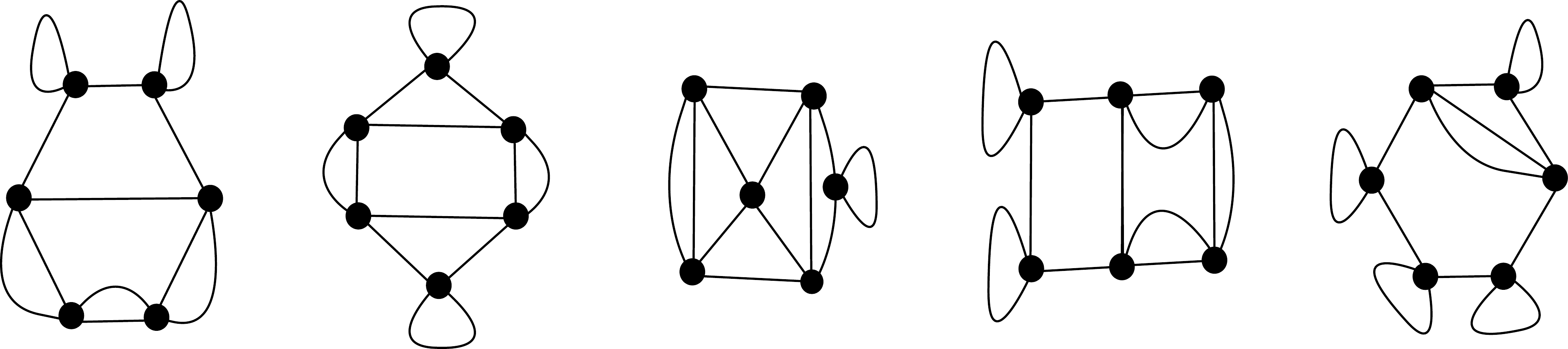}
\caption{\small{Some schematic representations of $\ell=4$ and $\varphi=2$ graphs.}}
\label{l4phi2}
\end{figure}

{\bf Case $\ell=5$:}
For $\ell>\varphi+2$, then $\varphi=1,2$ and the case $\varphi=1$ gives the only $\oD$-loop configuration is $(10)$ while $\varphi=2$ gives two $\oD$-loop configurations of $(6,8)$ and $(4,10)$.
Assume that $\ell\leq\varphi+2$. The only $\oD$-loop configurations are $(4, 4, \ldots, 4, 10)$, $(4, 4, \ldots, 4, 6, 8)$
and $(4, 4, \ldots,4, 6, 6, 6)$. An illustration for the existence of $\oD$-loop configurations of the form $(10)$ are given in Figure \ref{l5phi1} and those of $(6,8)$ and $(4,6,8)$ are respectively the first two figures in \ref{l56conf}.

\begin{figure}[H]
\centering
\includegraphics[width=0.6\textwidth]{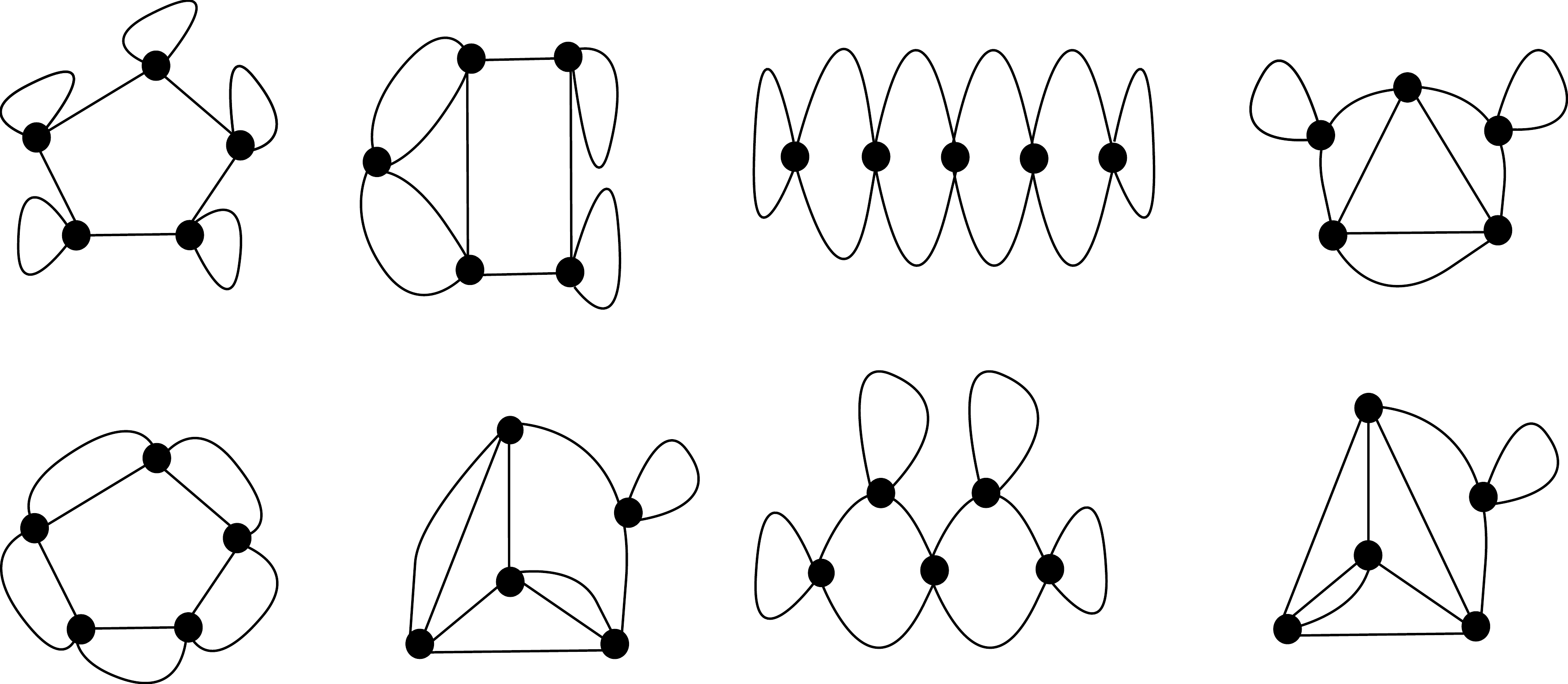}
\caption{\small{Some schematic representations of $\ell=5$ and $\varphi=1$ graphs.}}
\label{l5phi1}
\end{figure}

{\bf Case $\ell=6$:}
Assuming that $\ell>\varphi+2$, $\varphi=1,2,3$ and the case $\varphi=1$ gives the only $\oD$-loop configuration of $(12)$. The case $\varphi=2$ gives three $\oD$-loop configurations of $(6,10)$, $(8,8)$ and $(4,12)$ while $\varphi=3$ gives $(4,4,12)$, $(4,6,10)$, $(4,8,8)$ and $(6,6,8)$.
Suppose that $\ell\leq\varphi+2$. The only $\oD$-loop configurations are $(4, 4, \ldots, 4, 12)$, $(4, 4, \ldots, 4, 6, 10)$, $(4, 4, \ldots, 4, 8, 8)$ and $(4, 4, \ldots, 4, 6, 6, 8)$. An illustration for the existence of $(6,10)$, $(6,6,8)$ are the third  and forth graphs in Figure \ref{l56conf}.

\begin{figure}[H]
\centering
\includegraphics[width=0.6\textwidth]{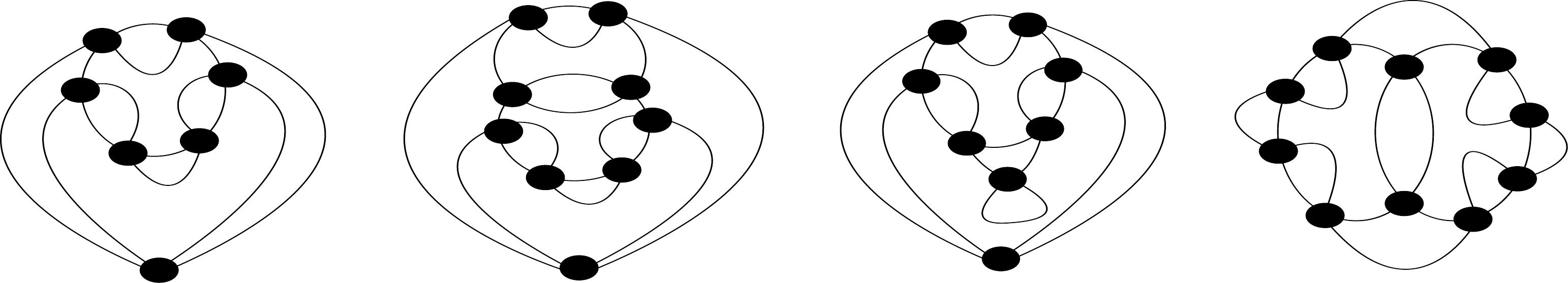}
\caption{\small{Schematic representations of the configurations $(6,8)$ and $(4,6,8)$ for $\ell=5$ followed by $(6,10)$ and $(6,6,8)$ for $\ell=6$ (reading from left to right).}}
\label{l56conf}
\end{figure}

\begin{remark}
The proof of constructing graphs with  $\oD$-loop configurations $(6)$ and $(4,4,\dots,6)$  for $l=3$ is performed differently as compared to the proof in Theorem \ref{thm:ell3g02PI} for the 2PI graphs in the following Theorem \ref{prop:enumerate}. 
This proof in Theorem\ref{prop:enumerate} can also be extended to generate 2PR graphs and is easily generalizable to higher $\ell >3$.
\end{remark}

\begin{definition}[\emph{Necklace} graph]
A \emph{necklace} graph with $n$ vertices is the cyclic graph with $n$ vertices in which every edge is doubled.
\end{definition}
An illustration of a necklace graph with 5 vertices is given by the leftmost graph in the bottom row in Figure \ref{l5phi1}.

\begin{thm}
\label{prop:enumerate}
    Let $G$ be a planar connected 2PI melon-free Feynman graph with grade $\ell$. 
    \begin{enumerate}
    \item If $\ell=3$, $G$ has $\oD$-loop configuration $(6)$ whose only representation is the necklace graph or the configuration $(4,4,\dots, 4, 6)$ which form is illustrated on the left hand side of Figure \ref{fig:49}. 
    
    \item If $\ell= 4$, the graph $G$ has $\oD$-loop configuration $(8)$ (given by the graph 2 in Figure \ref{l45phi1}) or $(4,4,\dots, 4, 8)$  with the form illustrated on the right hand side of Figure \ref{fig:49}. 
    
    \item If $\ell= 5$ and the graph $G$ has $\oD$-loop configuration $(10)$, then it is given by the necklace graph or the graph 4 in Figure \ref{l45phi1}.

    \item 
    If $\ell= 6$ and the graph $G$ has $\oD$-loop configuration $(12)$, then it is given by the second graph in Figure \ref{facesix}, 
    the three graphs in Figure \ref{facefive},
    the third graph in Figure \ref{facefour1}, 
or
the second graph in Figures \ref{facefour2} and \ref{facefour3}.
    \end{enumerate}
    \begin{figure}[H]
\centering
\includegraphics[width=60mm]{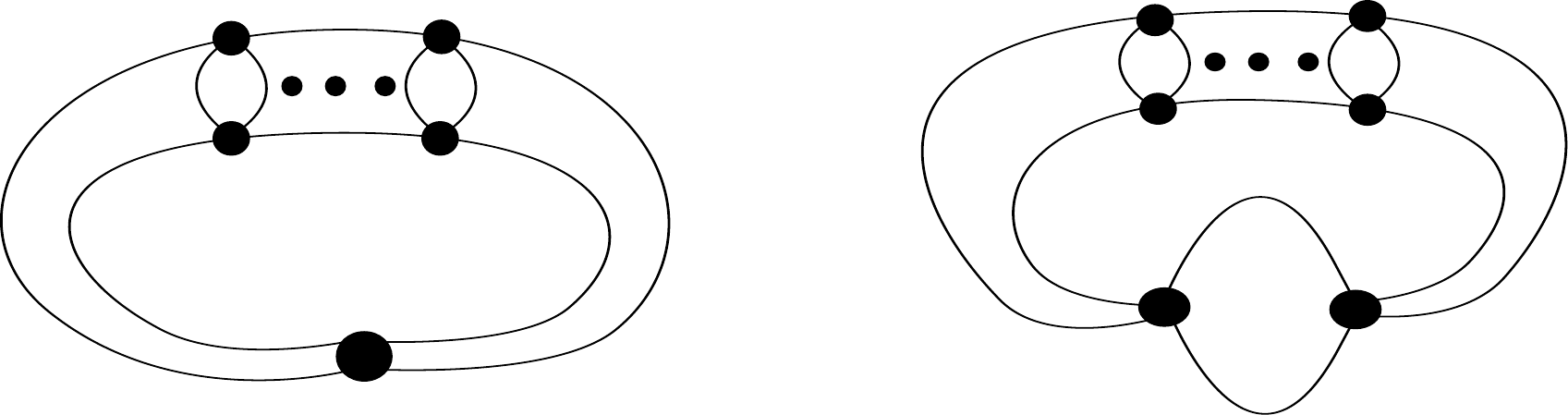}
\caption{\small{Graphs with loop configurations $(4, 4, \ldots, 4, 6)$ and $(4, 4, \ldots, 4, 8).$
}}
\label{fig:49}
\end{figure}
\end{thm}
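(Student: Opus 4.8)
The plan is to combine the configuration constraints of Theorem~\ref{thm:remi} with an explicit, planarity-driven construction of the long $\oD$-loop, in the same style as the proofs of Theorems~\ref{thm:ell1ell2planar} and \ref{thm:ell3g02PI}. From \eqref{eq:seven} with $g=0$: a graph with a single $\oD$-loop has $\varphi=1$, so $v=\ell$ and $e=2v=2\ell$, and the lone $\oD$-loop therefore has length $2\ell$ and runs through every vertex; a graph with configuration $(4,4,\ldots,4,2\ell)$ and $k$ short loops has $v=\ell+2k$ and $e=2\ell+4k$. By Theorem~\ref{thm:remi}(2) (together with Lemma~\ref{l3loops} for $\ell=3$), the configurations considered here are, aside from those containing an $\oD$-loop of length two, exactly the ones possessing a single long loop; so it suffices to determine the possible shapes of that long loop and then the admissible ways of grafting the length-$4$ loops onto it.

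First I would dispose of the single-loop case, configuration $(2\ell)$. Such a $G$ is a connected $4$-regular planar map on $\ell$ vertices whose unique internal strand traverses all $2\ell$ edges. Tracing the tetrahedral contraction pattern at each vertex (Figure~\ref{fig:vertex_propa}), the internal strand joins opposite corners, so following it around the planar embedding produces a generic closed curve in the plane with exactly $\ell$ transverse self-intersections, i.e.\ a knot shadow with $\ell$ crossings, the vertices being the crossings. Under this dictionary the $2$PI condition becomes the absence of a nugatory crossing (no isthmus in the map), melon-freeness selects the cores (no reducible bigon), and checkerboard colouring realises each such shadow by a reduced alternating diagram — so the objects to be counted are precisely the reduced alternating knot diagrams with $\ell$ crossings, taken up to the scheme isomorphism of Definition~\ref{def:isomorphism}. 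For $\ell=3,4,5,6$ these are classical (Tait's tabulation, the Rolfsen table): one for $\ell=3$ (the $3$-necklace, trefoil shadow), one for $\ell=4$ (the figure-eight shadow), two for $\ell=5$ (the $5$-necklace $=(2,5)$ torus and the $5_2$ shadow), and the $\ell=6$ representatives collected in Figures~\ref{facesix}, \ref{facefive}, \ref{facefour1}, \ref{facefour2}, \ref{facefour3}; it then remains to confirm, by inspecting strands, that each cited figure carries a single $\oD$-loop, is melon-free, and that the listed graphs are pairwise non-isomorphic as embedded maps.

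For configuration $(4,4,\ldots,4,2\ell)$ (the cases $\ell=3$ and $\ell=4$) I would run the building-block argument of Theorem~\ref{thm:ell1ell2planar}: draw the length-$2\ell$ loop as in the single-loop case, then enumerate the ways of attaching $\oD$-loops of length $4$ to the remaining half-edges that keep the map planar \emph{and} $2$PI. As in the $\ell=2$ analysis, any attachment that does not interlock the short loops into a single chain with the long loop either destroys planarity or creates a two-edge-cut, contradicting $2$PI; the surviving configuration is unique and is precisely the one displayed in Figure~\ref{fig:49}. Edge orientations and the various tadpole embeddings are then restored at the end, with Definition~\ref{def:isomorphism} used to discard duplicates, so that, e.g., the L-vertex/L-dipole variants noted in the caption of Figure~\ref{l32PI} are correctly identified.

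The main obstacle is the exhaustiveness of the $\ell=6$, configuration $(12)$ list: one must show that the reduced $6$-crossing knot shadows are \emph{exactly} those in the cited figures. This calls for either a careful vertex-by-vertex construction of the length-$12$ loop paralleling the $\ell=3$ argument of Theorem~\ref{thm:ell3g02PI} — building the loop and repeatedly ruling out non-planar or isthmic completions — or an appeal to the Tait flyping theorem to trim the list, followed by a delicate isomorphism check among the survivors under Definition~\ref{def:isomorphism}, together with a bookkeeping of the tadpole embeddings, which multiply the diagrams but not the schemes. The cases $\ell=3,4,5$ are comparatively light, reducing to the one, one, and two knot shadows respectively, plus (for $\ell=3,4$) the unique interlocked grafting of the length-$4$ loops.
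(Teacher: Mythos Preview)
Your approach is genuinely different from the paper's. The paper gives a completely self-contained brute-force enumeration: for each $\ell$ it first uses Euler's formula to bound the lengths of the L/R-faces, then draws the relevant face (a triangle, quadrilateral, pentagon or hexagon), labels the free half-edges $A,B,C,\ldots$, and exhaustively checks which pairings of half-edges are compatible with planarity, the single-$\oD$-loop condition, absence of tadpoles, and an even number of half-edges per face. The $\ell=6$ case alone runs through roughly a dozen sub-configurations and identifies the seven graphs $G_1,\ldots,G_7$ one by one, discarding isomorphic duplicates by hand. No knot theory enters the proof; the correspondence with reduced alternating diagrams is stated only afterwards in Theorem~\ref{thm:knot}, and the text explicitly presents the present theorem as independent confirmation of that correspondence for small $\ell$.

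Your route via knot shadows is more conceptual, but there is a genuine gap in the dictionary you set up. You write that ``the 2PI condition becomes the absence of a nugatory crossing''; this is not enough. A reduced (nugatory-free) alternating diagram can still be a connected sum, and a connected sum produces a two-edge-cut in the $4$-regular graph, hence a 2PR graph. In the paper's own formulation (Theorem~\ref{thm:knot}) the 2PI graphs correspond to reduced diagrams of \emph{prime} knots together with their flypes, while connected sums and Reidemeister~I moves generate the 2PR sector. This matters precisely at $\ell=6$: the Rolfsen table gives only three prime alternating knots $6_1,6_2,6_3$, yet the theorem lists seven graphs. The extra four come from inequivalent flype representatives (and, in the paper's account, from a $3_1\#3_1$ diagram --- which sits in some tension with Theorem~\ref{thm:knot} itself), and your proposal does not explain how to recover exactly seven from the knot tables. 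So to make your strategy rigorous you would need both the correct 2PI/prime dictionary and an explicit enumeration of flype classes at six crossings, which is essentially as much work as the paper's direct case analysis. For the $(4,\ldots,4,2\ell)$ part your building-block argument is closer to the paper's method in Theorem~\ref{thm:ell3g02PI}; the paper here instead repeats the half-edge labelling for $(4,6)$, $(4,4,6)$, $(4,4,4,6)$ and then invokes an evident recursion.
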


\begin{proof}
\begin{enumerate}
\item Assume that $\ell= 3$ and $g = 0$. The $\oD$-loop configurations in this case are $(6)$, or $(4, 4, \dots, 4, 6)$. 

We can demonstrate that the only graph with an $\oD$-loop configuration of $(6)$ is the necklace graph. Consider the graph $G$ with $V$, $E$, and $F$ representing the number of vertices, edges, and faces (left or right faces in this context), respectively. Let $v$ and $f$ denote a vertex and a face of $G$, respectively. Clearly, $\sum_v \text{deg}(v) = 2E$ implies $E = 2V$ since the graph is 4-regular. The graph $G$ must contain a triangular face; otherwise, all faces would be $2$-gons, as $G$ is a tadpole-free graph. Assuming all faces are $2$-gons leads to $\sum_f \text{deg}(f) = 2F = 2E$. Using Euler's formula, we get $V - 2V + 2V = 2$, which simplifies to $V = 2$. However, this is incorrect because $G$ has three vertices (substituting $\ell = 3$ in \eqref{eq:seven}). There is only one way to connect the two half-edges attached to each vertex of this triangle, resulting in the necklace graph.

Now, let's examine the $\oD$-loop configuration of the form $(4, 6)$. In this case, the graph has 5 vertices. The $4$-edge $\oD$-loop cannot connect only two vertices, as this would result in a self-loop and a disconnected graph. Similarly, the $4$-edge $\oD$-loop cannot involve only three vertices without forming a tadpole or a self-loop. Therefore, the only feasible configuration is that the $4$-edge $\oD$-loop connects 4 vertices, as illustrated in Figure \ref{l3phi4}.

\begin{figure}[H]
\centering
\includegraphics[width=30mm]{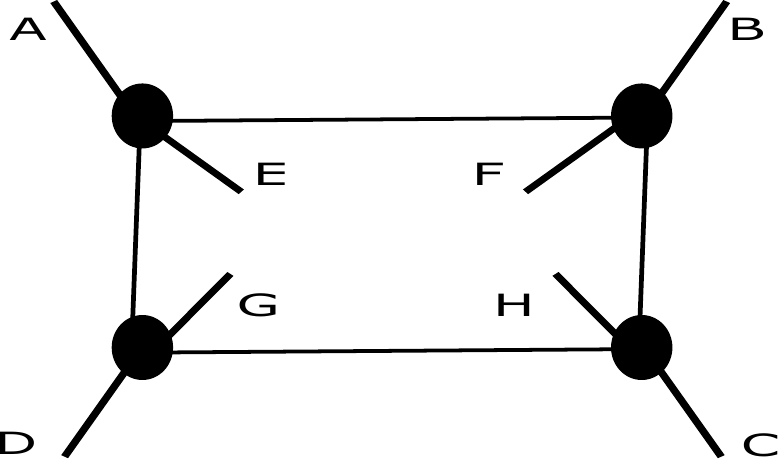}
\caption{\small{Representation of a $4$-edge $\oD$-loop.}}
\label{l3phi4}
\end{figure}

The fifth vertex can only be connected to the half-edges $A$, $B$, $C$, and $D$ or to the half-edges $E$, $F$, $G$, and $H$, with both cases resulting in isomorphic graphs. Assume this vertex is connected to the half-edges $A$, $B$, $C$, and $D$. The half-edge $E$ can only be connected to $F$ or $G$; otherwise, the resulting graph would not be planar. The same constraint applies to the other half-edges $F$, $G$, and $H$. The final configuration is depicted in Figure \ref{l3sixfour}, thus concluding the proof for the $(4, 6)$ configuration.

\begin{figure}[H]
\centering
\includegraphics[width=30mm]{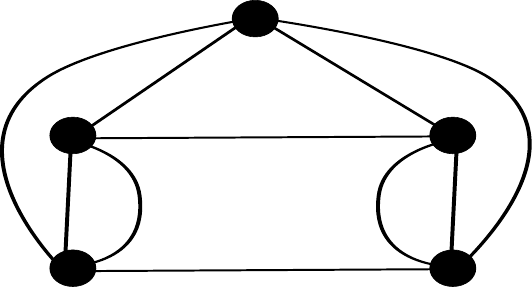}
\caption{Representation of the configuration $(4, 6)$.}\label{l3sixfour}
\end{figure}

Now, let's consider the configuration $(4, 4, 6)$. In this case, the graph has 7 vertices, and a $4$-edge loop is formed as shown in Figure \ref{l3phi4}. The second $4$-edge loop in this configuration cannot be formed using only the four vertices of this graph, as this would result in disconnected graphs. Additionally, any $\oD$-loop must involve an even number of these vertices. Therefore, the second $4$-edge $\oD$-loop can only be created by using two of the vertices from the previous $4$-edge $\oD$-loop, resulting in the two representations shown in Figure \ref{l3sixfourn}.
\begin{figure}[H]
\centering
\includegraphics[width=80mm]{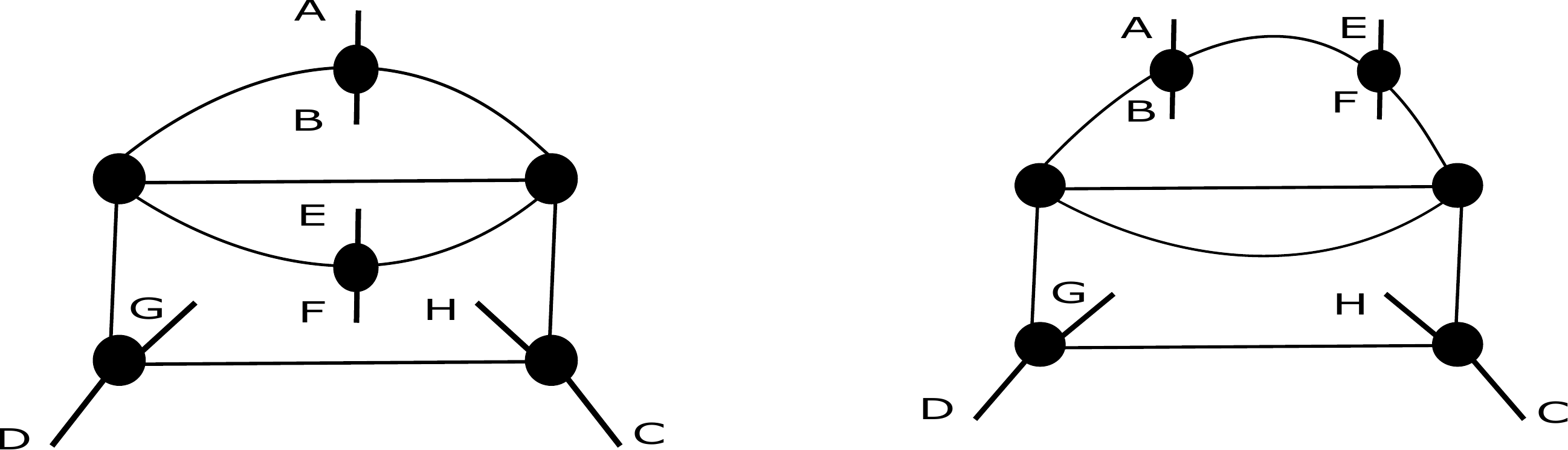}
\caption{\small{Schematic representations toward the configuration $(4, 4, 6)$.
}}
\label{l3sixfourn}
\end{figure}
The representation on the left-hand side is impossible because the half-edges $B$ and $E$ cannot be connected to any other half-edges without creating a non-planar graph. Therefore, we consider the representation on the right-hand side, which implies that $B$ should be joined to $F$ and $G$ to $H$. The remaining vertex will then connect to the four remaining half-edges, resulting in the expected graph.

Next, we turn our attention to the configuration $(4, 4, 4, 6)$. Following the previous proof, there are two $4$-edge $\oD$-loops that are connected, as shown in the graph on the right-hand side of Figure \ref{l3sixfourn}. Otherwise, the three $4$-edge $\oD$-loops would be separated, requiring more than 9 vertices. The third $4$-edge loop can only be constructed by using two new vertices and the half-edges $A$, $B$, $E$, and $F$ or $D$, $C$, $H$, and $G$. Finally, the remaining four half-edges in the resulting graph will be connected to the last vertex.

This recursive procedure can be extended for a general configuration of the form $(4, 4, \dots, 4, 6)$.

\item Assume that $\ell = 4$. If there is a face of length 4, it is evident that the graph either contains a tadpole or more than one $\oD$-loop. Proceeding similarly to the $\ell = 3$ case, we find that the graph has a triangular face. Since it is a connected graph, it takes the form of Graph 1 in Figure \ref{l45phi1}. The half-edge $C$ in this figure can only be connected to $B$ or $H$; otherwise, we would create a tadpole or place an odd number of half-edges in a face, violating the graph's planarity. If $C$ is connected to $B$, then the connections are as follows: $A$ to $D$, $E$ to $F$, and $H$ to $G$, resulting in Graph 2 in Figure \ref{l45phi1}. If $C$ is connected to $H$, then $B$ can only be connected to $A$, and the connections $E$ to $F$ and $D$ to $G$ follow to avoid creating a tadpole or having an odd number of half-edges in a face. It is clear that both configurations result in isomorphic graphs. A similar recursive procedure to the $\ell = 3$ case yields graphs with configurations $(4, \ldots, 4, 8)$.

\begin{figure}[H]
\centering
\includegraphics[width=120mm]{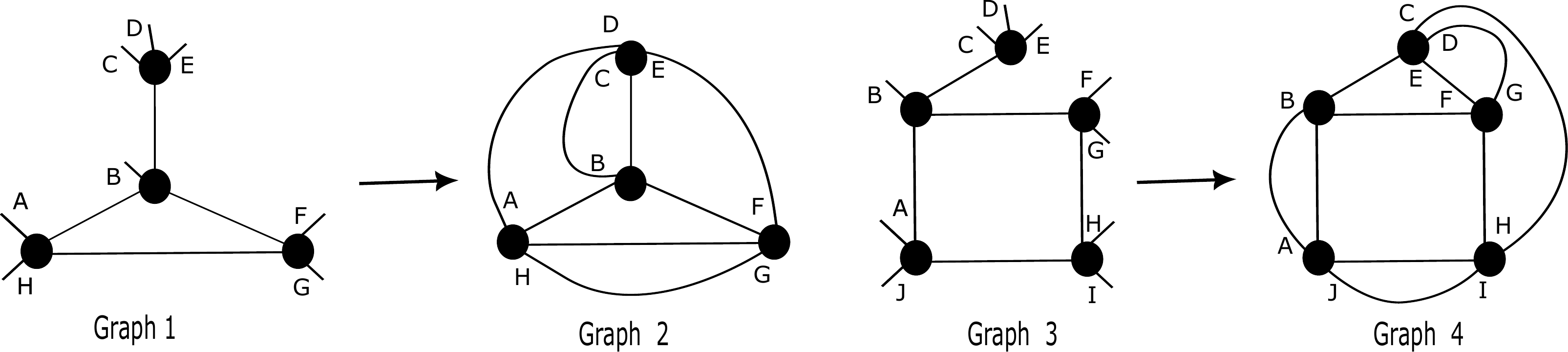}
\caption{\small{Schematic representations of $\ell=4, 5$ and $\varphi=1$ graph. 
}}
\label{l45phi1}
\end{figure}

\begin{figure}[H]
\centering
\includegraphics[width=90mm]{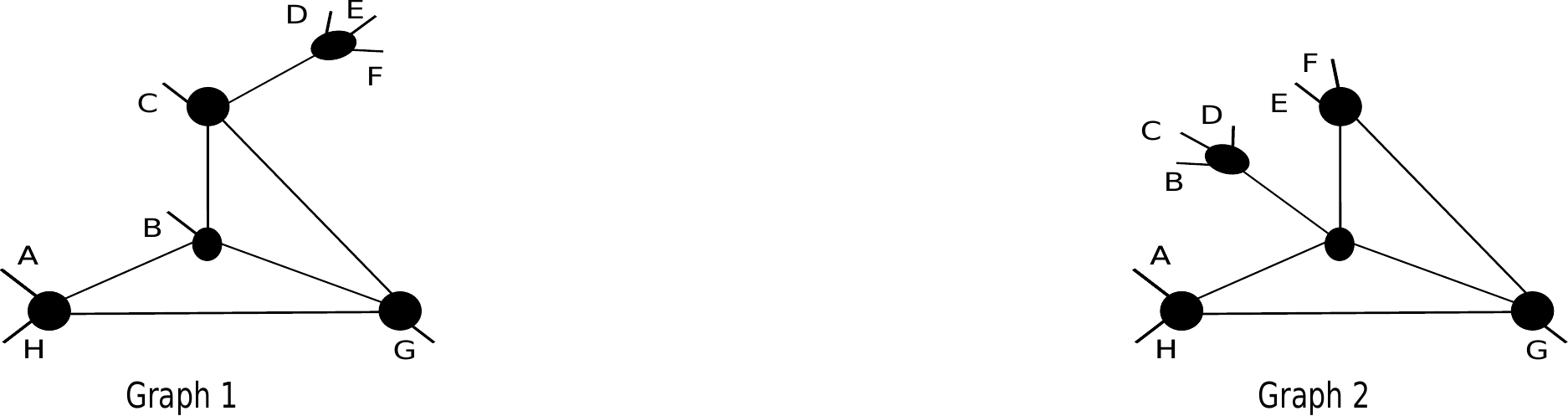}
\caption{\small{Representations with face of length at least 3; $\ell=4$ and $\varphi=1$.
}}
\label{noloopfive}
\end{figure}

\item Now, let's consider the case $\ell = 5$. Assume that the graph $G$ has only faces of length 2 or 3, with the number of such faces denoted by $F_i$ for $i = 2, 3$. Using Euler's formula, we have $V - 2V + F_2 + \frac{2}{3}(E - F_2) = 2$, which implies that $F_2 = 1$ since $V = 5$. Therefore, the graph $G$ should contain only one face of length 2, with all remaining faces being of length 3. This configuration corresponds to Graph 1 in Figure \ref{l45phi1}, to which we need to add one more vertex and connect the remaining half-edges.

We should connect $E$ to $F$ to avoid creating a face longer than 3, and there are exactly two non-isomorphic ways to add the remaining vertex, resulting in the two graphs shown in Figure \ref{noloopfive}. Consider Graph 1 in this figure. The half-edges $F$ and $G$ should be linked, thus connecting $E$ and $H$. If we link $D$ to $C$, then $A$ should connect to $B$, which creates a contradiction because it results in two faces of length 2. Therefore, we should link $D$ to $A$ and $B$ to $C$, which also creates a contradiction by forming a face of length 4 ($ABCD$).

Now, consider Graph 2 in the same figure. We should link $A$ to $B$ and $E$ to $D$. This implies that $F$ can only connect to $G$ or $C$ (implying that $H$ connects to $C$ or $G$, respectively), resulting in two faces of length 2 in both cases. This is the final contradiction. Thus, it is impossible to construct a graph where all faces are of length 2 or 3. Therefore, the graph $G$ must contain a face of length 4 or 5.

Assume there is a face of length 5. It is straightforward to prove that this configuration necessarily results in the necklace graph. Now, let's assume there is a quadrilateral face in $G$. Since $G$ is a connected graph, it takes the form of Graph 3 in Figure \ref{l45phi1}. The half-edge $C$ in this figure can only be connected to $B$, $H$, or $J$; otherwise, we would create a tadpole or place an odd number of half-edges in a face, violating the planarity of $G$.

To avoid having more than one $\oD$-loop, a tadpole, or an even number of half-edges in a face, the half-edge $C$ cannot connect to $J$ without causing a contradiction. If $C$ is connected to $H$, then the half-edge $D$ can only be connected to $G$, and the connections $E$ to $F$, $A$ to $B$, and $I$ to $F$ follow, resulting in Graph 4 in Figure \ref{l45phi1}. If $C$ is connected to $B$, then $D$ should be connected to $A$, $E$ to $F$, $G$ to $H$, and $I$ to $J$, resulting in a graph isomorphic to Graph 4 in Figure \ref{l45phi1}.

Ultimately, we obtain a unique representation given by Graph 4 in Figure \ref{l45phi1}.

\begin{figure}[H]
\centering
\includegraphics[width=110mm]{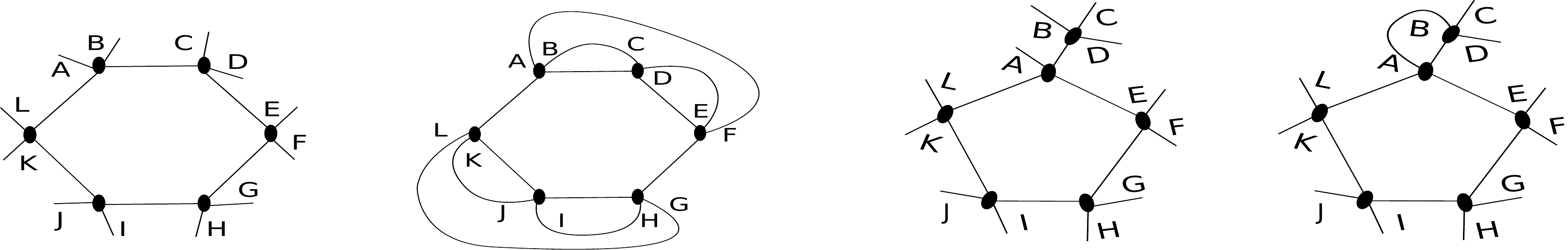}
\caption{\small{Schematic representations with a face of length 6; $\ell=6$ and $\varphi=1$ graph.}}
\label{facesix}
\end{figure}

\item Consider the case where $\ell = 6$. Assume that the graph $G$ has a face of length 6, resulting in the first configuration in Figure \ref{facesix} (reading from left to right). The half-edge $A$ cannot be connected to $B$, $C$, $D$, $E$, $G$, $H$, $I$, $J$, or $K$. Connecting $A$ to $B$ would generate a tadpole, and similarly, connecting $A$ to $J$ would join $L$ to $K$, also creating a tadpole. Connecting $A$ to $C$, $E$, $G$, or $I$ would result in an odd number of half-edges in a face. Connecting $A$ to $D$ would create a $\oD$-loop of length 2 (between $A$ and $D$), resulting in more than one $\oD$-loop in the graph. Linking $A$ to $H$ would join $L$ to $K$, $J$, or $I$, as the graph is planar. Connecting the half-edge $L$ to $K$ would generate a tadpole. Connecting $L$ to $J$ is impossible because it would leave a single half-edge $K$ alone in a face, and connecting $L$ to $I$ would create another $\oD$-loop of length 2 (between $L$ and $I$). Thus, $A$ can only be connected to $F$ or $L$.

To avoid having more than one $\oD$-loop, a tadpole, or an odd number of half-edges in a face, if $A$ connects to $F$, then $B$ should be joined to $C$ and $D$ to $E$. Similarly, $L$ should be connected to $G$, $K$ to $J$, and $I$ to $H$, resulting in the second graph in Figure \ref{facesix} which we will refer to as $G_1$.

If $A$ is connected to $L$, then for the same reasons listed earlier, the half-edge $B$ cannot connect to $D$, $E$, $F$, $G$, $H$, $J$, or $K$. Therefore, $B$ can only be joined to $C$ or $I$. If $B$ connects to $C$, then $D$ should connect to $K$. Due to the same reasons listed earlier, $D$ cannot connect to $F$, $G$, $H$, $I$, or $J$. Furthermore, connecting $D$ to $E$ would result in $F$ connecting to $G$, $H$ to $I$, and $J$ to $K$, creating a graph with two $\oD$-loops. Similarly, $E$ should connect to $J$, $F$ to $G$, and $H$ to $I$, resulting in a graph isomorphic to $G_1$.

\begin{figure}[H]
\centering
\includegraphics[width=100mm]{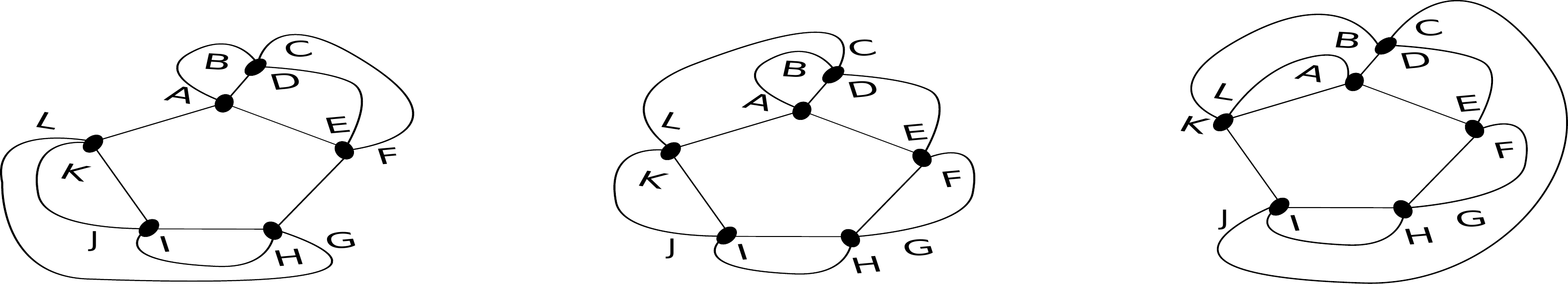}
\caption{\small{Configurations with a face of length 5; $\ell=6$ and $\varphi=1$ graph.}}
\label{facefive}
\end{figure}

Assume now that there is a face of length 5 and no face of greater length. The remaining vertex, not yet used in the face, is connected to one of the half-edges, resulting in the third configuration shown in Figure \ref{facesix}. To avoid having more than one $\oD$-loop, a tadpole, or an odd number of half-edges in a face, we proceed as before. The half-edge $A$ cannot be connected to $C$, $D$, $E$, $F$, $G$, $H$, $I$, $J$, or $K$. Therefore, $A$ can only be connected to $B$ or $L$.

Assume that $A$ is connected to $B$. The half-edge $C$ can only connect to $F$ or $L$. In the first case, $D$ should connect to $E$, $G$ to $L$, $K$ to $J$, and $I$ to $H$, resulting in the first graph in Figure \ref{facefive}, which we will refer to as $G_2$.

Now, consider the second case where $C$ is connected to $L$. The half-edge $D$ can only connect to $E$ or $K$. If $D$ connects to $E$, then $F$ should connect to $G$, $K$ to $J$, and $I$ to $H$, resulting in the second graph in Figure \ref{facefive}, which we will call $G_3$. Connecting $D$ to $K$ implies that $E$ should connect to $J$, $I$ to $H$, and $F$ to $G$, giving a graph isomorphic to $G_2$.

Now, assume that $A$ is connected to $L$. The half-edge $B$ should then connect to $G$, $I$, or $K$. Connecting $B$ to $G$ will result in $C$ connecting to $F$, $D$ to $E$, $H$ to $I$, and $J$ to $K$, producing a graph isomorphic to $G_3$. Alternatively, if $B$ is connected to $K$, then $C$ should connect to $J$, $D$ to $E$, $F$ to $G$, and $H$ to $I$, resulting in the third graph in Figure \ref{facefive}, which we will refer to as $G_4$. Connecting $B$ to $I$ and proceeding similarly will yield a graph isomorphic to $G_4$.

\begin{figure}[H]
\centering
\includegraphics[width=100mm]{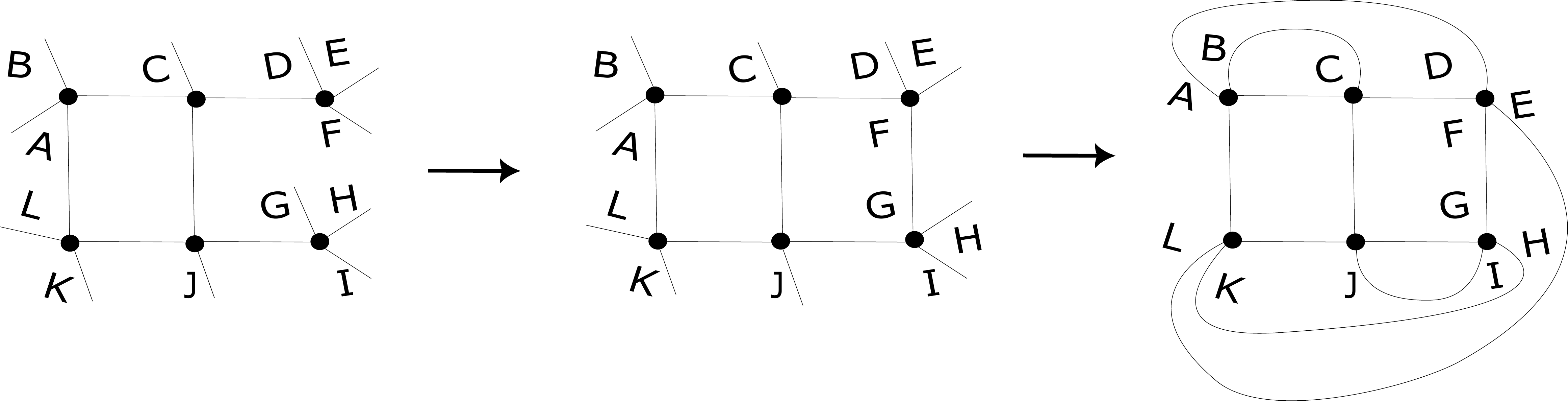}
\caption{\small{Configurations with a face  of length 4; $\ell=6$ and $\varphi=1$ graph.
}}
\label{facefour1}
\end{figure}

\begin{figure}[H]
\centering
\includegraphics[width=100mm]{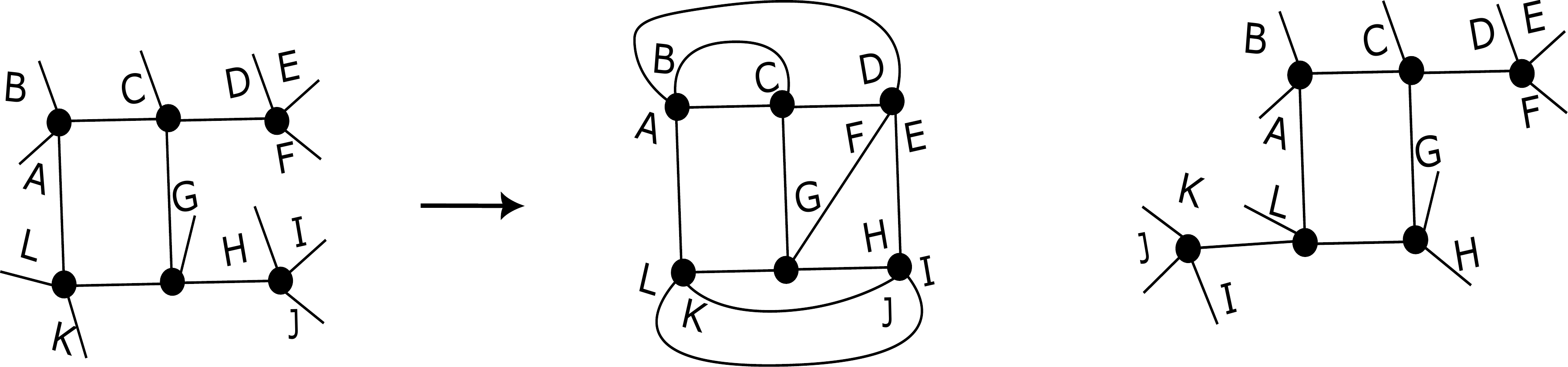}
\caption{\small{Configurations with a face  of length 4; $\ell=6$ and $\varphi=1$ graph.
}}
\label{facefour2}
\end{figure} 

\begin{figure}[H]
\centering
\includegraphics[width=100mm]{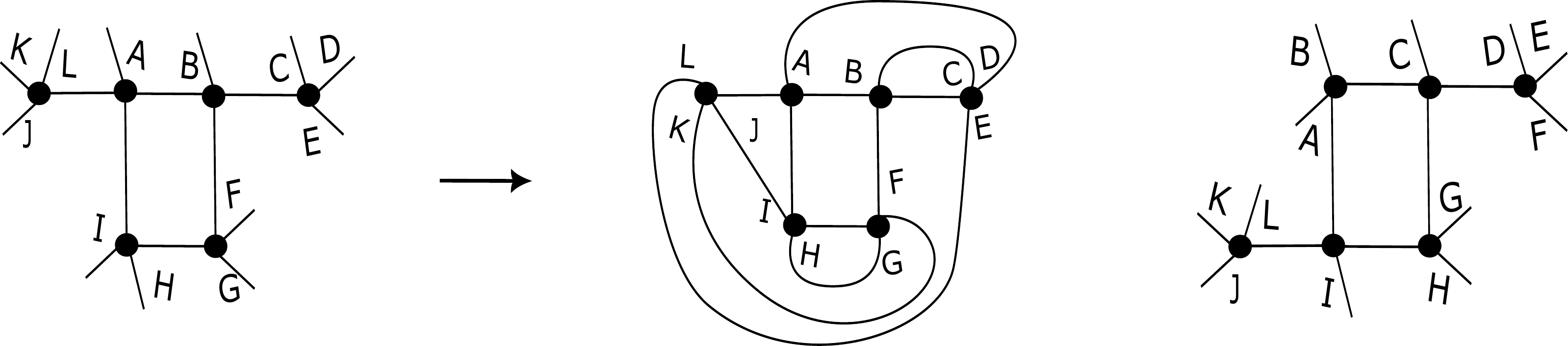}
\caption{\small{Configurations  with a face   of length 4; $\ell=6$ and $\varphi=1$ graph.
}}
\label{facefour3}
\end{figure}

\begin{figure}[H]
\centering
\includegraphics[width=90mm]{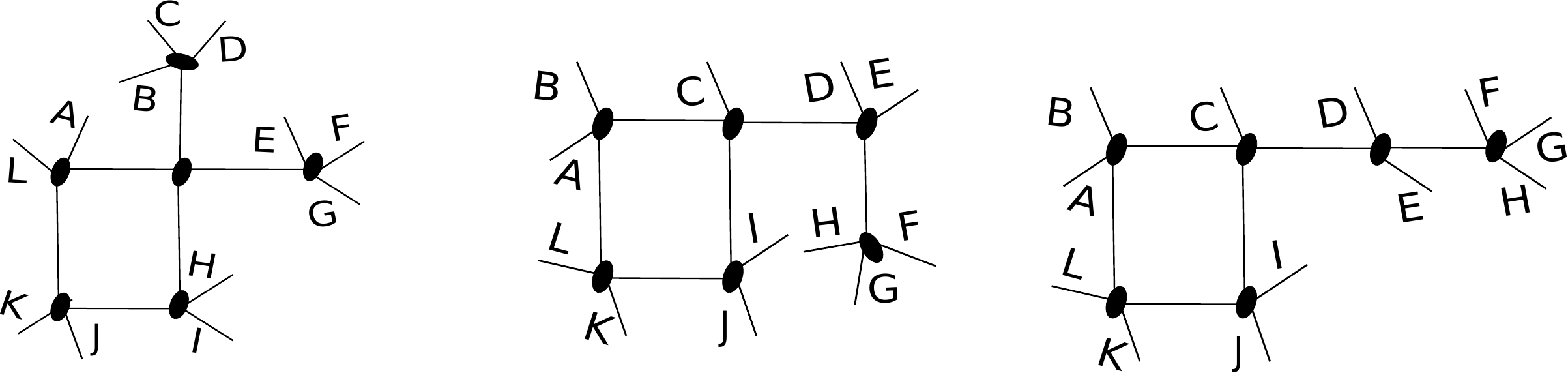}
\caption{\small{Configurations  with a face  of length 4; $\ell=6$ and $\varphi=1$ graph.
}}
\label{facefour4}
\end{figure}

Assume the graph contains a face of length 4, with no other face exceeding this length. The two remaining vertices can connect to the square face in eight distinct configurations: the first configuration shown in Figure \ref{facefour1}, the first and third configurations in Figures \ref{facefour2} and \ref{facefour3}, and the three configurations illustrated in Figure \ref{facefour4}.

Consider the first configuration in Figure \ref{facefour1}. To avoid creating a face longer than 4, the half-edge $F$ should connect to $G$ (as shown in the second graph of the same figure). We maintain the previous conditions, assuming the resulting graph has no more than one $\oD$-loop, no tadpole, or an odd number of half-edges in a face. Consequently, the half-edge $A$ should connect to $D$, $H$, or $L$. If $A$ connects to $D$, then $B$ must connect to $C$, and $E$ must connect to either $H$ or $L$. Connecting $E$ to $H$ is not feasible because $L$ would then connect to $I$ and $K$ to $J$, resulting in a graph with more than one $\oD$-loop. If $E$ connects to $L$, then $H$ connects to $K$ and $I$ to $J$, forming the third graph ($G_5$) in Figure \ref{facefour1}.

Assume $A$ connects to $H$. This implies $E$ should connect to $B$, $C$ to $D$, $L$ to $I$, and $K$ to $J$, resulting in a graph isomorphic to $G_4$. Now, suppose $A$ connects to $L$. The half-edge $B$ should then connect to $C$, $E$, or $I$. Connecting $B$ to $C$ is not possible because it would result in $D$ connecting to $K$, $I$ to $J$, and $H$ to $E$, creating a graph with more than one $\oD$-loop. Connecting $B$ to $E$ is also not feasible because $C$ would connect to $D$, $H$ to $K$, and $I$ to $J$, again resulting in a graph with more than one $\oD$-loop. A similar contradiction arises if $B$ connects to $I$, as $C$ would connect to $D$, $E$ to $H$, and $K$ to $J$.

Now, consider the first configuration in Figure \ref{facefour2}. Similar to the previous configuration, the half-edge $A$ should connect to $D$, $F$, $H$, or $L$. Connecting $A$ to $D$ will join $B$ to $C$, and $E$ to either $H$ or $L$. If $E$ connects to $H$, then $G$ should connect to $F$, $K$ to $J$, and $I$ to $L$, resulting in the second graph ($G_6$) in Figure \ref{facefour2}. If $E$ connects to $L$, then $G$ will connect to $H$, $I$ to $F$, and $J$ to $K$, forming a graph isomorphic to $G_5$.

Assume $A$ connects to $F$. This implies $B$ should connect to $E$, $C$ to $D$, $L$ to $I$, $K$ to $J$, and $G$ to $H$, resulting in a graph isomorphic to $G_2$. Connecting $A$ to $H$ is not feasible because planarity requires $L$ to connect to $I$, $K$ to $J$, $B$ to $E$, $C$ to $D$, and $F$ to $G$, creating a graph with two $\oD$-loops.

Now, connect $A$ to $L$. This implies $B$ should connect to $C$, $E$, or $I$. If $B$ connects to $C$, then $D$ should connect to $I$ or $K$. If $D$ connects to $I$, then $F$ should connect to $G$, $E$ to $H$, and $J$ to $K$, resulting in a graph isomorphic to $G_4$. Connecting $D$ to $K$ is not possible because it would join $E$ to $J$, $F$ to $I$, and $G$ to $H$, creating a graph with two $\oD$-loops.

Connecting $B$ to $E$ is also not feasible because it would result in $C$ connecting to $D$, $F$ to $I$, $G$ to $H$, and $J$ to $K$, forming a graph with two $\oD$-loops. Finally, if $B$ connects to $I$, then $C$ will connect to $D$, $E$ to $H$, $F$ to $G$, and $J$ to $K$, resulting in a graph isomorphic to $G_5$

Now, let's discuss the third configuration involving a face of length 4 in Figure \ref{facefour2}, under the condition that the resulting graph does not have more than one $\oD$-loop, tadpole, or an odd number of half-edges in a face.

The half-edge $A$ can only connect to $D$, $J$, or $L$. Connecting $A$ to $D$ will join $B$ to $C$ and $E$ to either $J$ or $L$. If $E$ connects to $J$, then $F$ will connect to $G$, $H$ to $I$, and $K$ to $L$, resulting in a graph isomorphic to $G_6$. If $E$ connects to $L$, then $K$ will connect to $F$, $J$ to $G$, and $I$ to $H$, forming a graph isomorphic to $G_5$.

Connecting $A$ to $J$ will join $K$ to $L$. The half-edge $B$ can then connect to $C$, $E$, or $I$. Connecting $B$ to $C$ is not feasible as it will join $D$ to $I$, $E$ to $H$, and $F$ to $G$, resulting in a graph with more than one $\oD$-loop. Connecting $B$ to $E$ is also impossible because it will join $C$ to $D$, $F$ to $G$, and $H$ to $I$, again resulting in a graph with more than one $\oD$-loop. If $B$ connects to $I$, then $C$ will join $D$, $E$ to $H$, and $F$ to $G$, resulting in a graph isomorphic to $G_1$.

Now, let's connect $A$ to $L$. This implies that $B$ should connect to $C$, $E$, or $K$. Connecting $B$ to $C$ implies that $D$ should connect to $K$, $E$ to $J$, $F$ to $G$, and $I$ to $H$, resulting in a graph isomorphic to $G_4$. If $B$ connects to $E$, then $C$ should connect to $D$, $F$ to $K$, $G$ to $J$, and $H$ to $I$, resulting in a graph isomorphic to $G_6$.
Consider the first configuration in Figure \ref{facefour3}. We assume the same conditions as before, ensuring that the resulting graph (after connecting some half-edges) does not have more than one $\oD$-loop, tadpole, or an odd number of half-edges in a face. The half-edge $A$ can only connect to $B$, $D$, or $L$. If $A$ connects to $B$, then $C$ should connect to $L$, and $D$ should connect to either $G$ or $K$. Connecting $D$ to $K$ would result in a graph with more than one $\oD$-loop, which is a contradiction. Connecting $D$ to $G$ is also not possible, as it would imply that $H$ connects to $K$, resulting in a graph with more than one $\oD$-loop.

Now, assume $A$ is connected to $D$. In this case, $B$ should connect to $C$, and $E$ to either $F$ or $L$. If $E$ connects to $F$, then $G$ should connect to $L$, $H$ to $K$, and $I$ to $J$, which would again result in a graph with more than one $\oD$-loop. If $E$ connects to $L$, then $F$ should connect to $K$, $G$ to $H$, and $I$ to $J$, resulting in the second graph ($G_7$) in Figure \ref{facefour3}.

Now, assume $A$ connects to $L$. This implies that $B$ should connect to $C$ or $K$. Connecting $B$ to $C$ implies that $D$ will connect to either $G$ or $K$. Connecting $D$ to $G$, with careful analysis, will create a graph isomorphic to $G_1$. If $D$ connects to $K$, a similar analysis will result in a graph with more than one $\oD$-loop, which is a contradiction.

Now, let's consider the third configuration in Figure \ref{facefour3}, under the same assumptions as before. The half-edge $A$ can only connect to $D$, $F$, or $L$. Connecting $A$ to $D$ implies that $B$ should connect to $C$, $E$ should connect to $L$, and $F$ will connect to either $K$ or $G$. If $F$ connects to $K$, then $H$ should connect to $I$ and $G$ to $J$, which would result in a graph with more than one $\oD$-loop, creating a contradiction. Therefore, $F$ can only connect to $G$, implying that $H$ should connect to $K$ and $I$ to $J$, resulting in a graph isomorphic to $G_6$.

Assume $A$ is connected to $F$. A similar procedure results in a graph isomorphic to $G_1$. Connecting $A$ to $L$ will connect $B$ to either $C$ or $E$. Connecting $B$ to $C$ is not feasible because it would, after similar analysis, create a graph with two $\oD$-loops. If $B$ connects to $E$, the same procedure leads to a graph isomorphic to $G_5$.

Under the same hypotheses that the resulting graph (obtained after connecting some half-edges) does not have more than one $\oD$-loop, tadpole, or an odd number of half-edges in a face, we now consider the first configuration in Figure \ref{facefour4}. A thorough analysis shows that $A$ can only connect to $B$ or $H$.

If $A$ connects to $B$, then $C$ should connect to $F$, $J$, or $L$. Connecting $C$ to $F$ will result in a graph isomorphic to $G_4$ after connecting the remaining half-edges. Similarly, connecting $C$ to $J$ or $L$ will result in a graph isomorphic to $G_7$. Specifically, connecting $C$ to $J$ will send $L$ to $K$, $D$ to $E$, $G$ to $H$, and $F$ to $I$. Connecting $C$ to $L$ gives two possibilities: $D$ to $E$ or $K$. If $D$ connects to $E$, the resulting graph is isomorphic to $G_7$, but connecting $D$ to $K$ results in a graph with more than two $\oD$-loops.

Finally, connecting $A$ to $H$ and proceeding similarly results in a graph with more than one $\oD$-loop, which is not possible. Applying the same analysis to the remaining two configurations in Figure \ref{facefour4} results in either graphs with more than one $\oD$-loop or graphs isomorphic to one of the graphs $G_i$, $i=1, \cdots, 7$.

\begin{figure}[H]
\centering
\includegraphics[width=110mm]{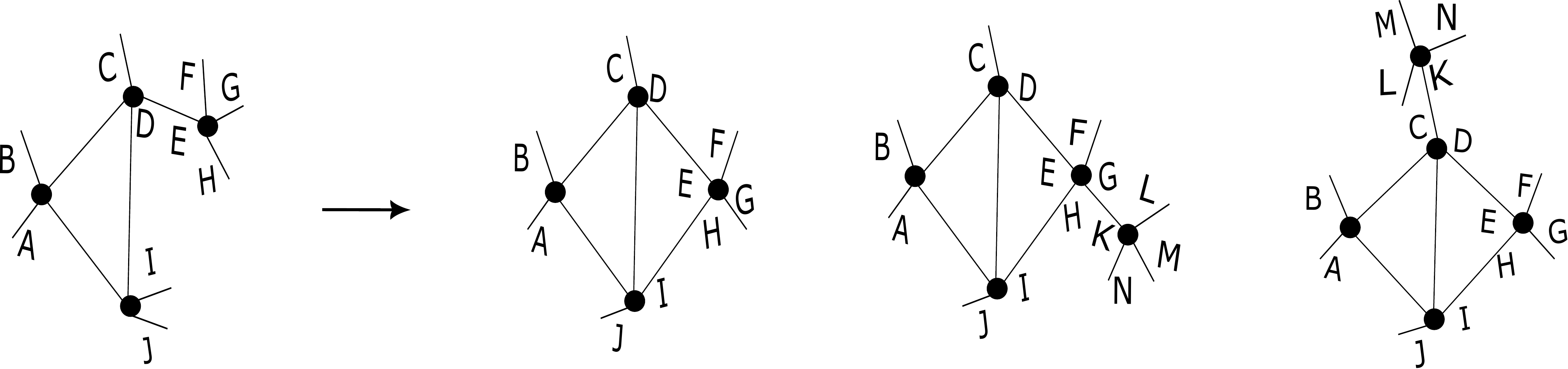}
\caption{\small{Configurations  with a face of length 3; $\ell=6$ and $\varphi=1$ graph.
}}
\label{facethree1}
\end{figure}

\begin{figure}[H]
\centering
\includegraphics[width=90mm]{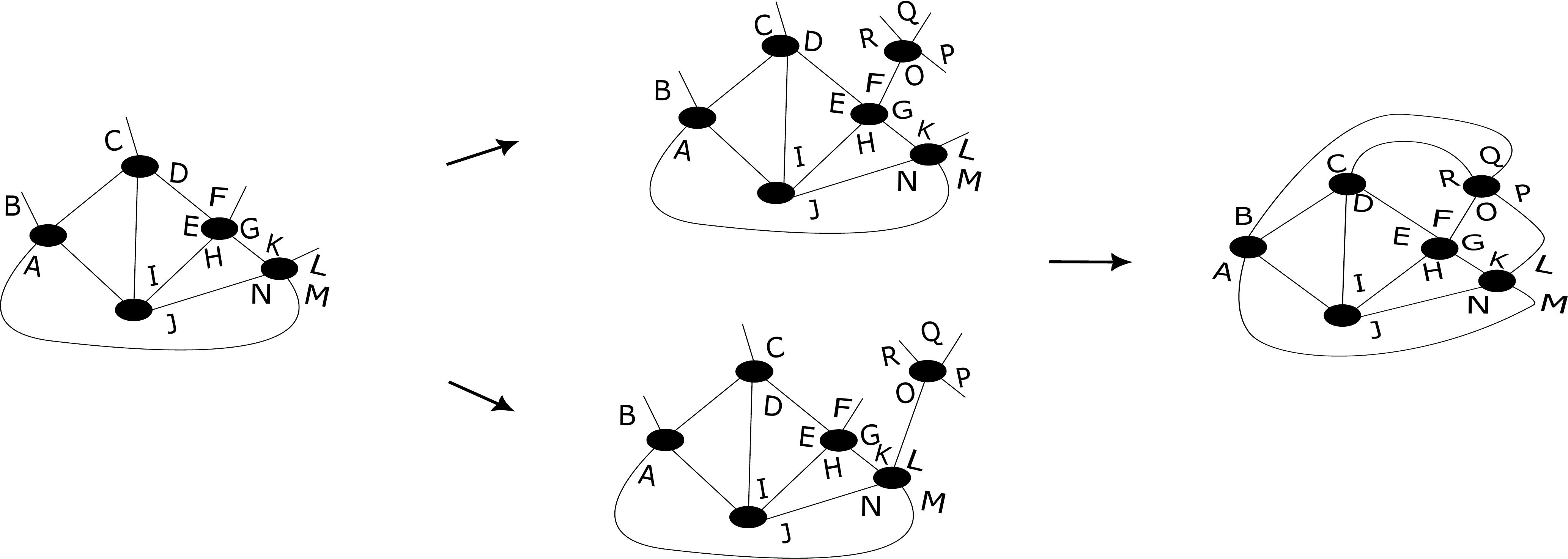}
\caption{\small{Configurations  with a face of length 3; $\ell=6$ and $\varphi=1$ graph.
}
}
\label{facethree2}
\end{figure}

\begin{figure}[H]
\centering
\includegraphics[width=90mm]{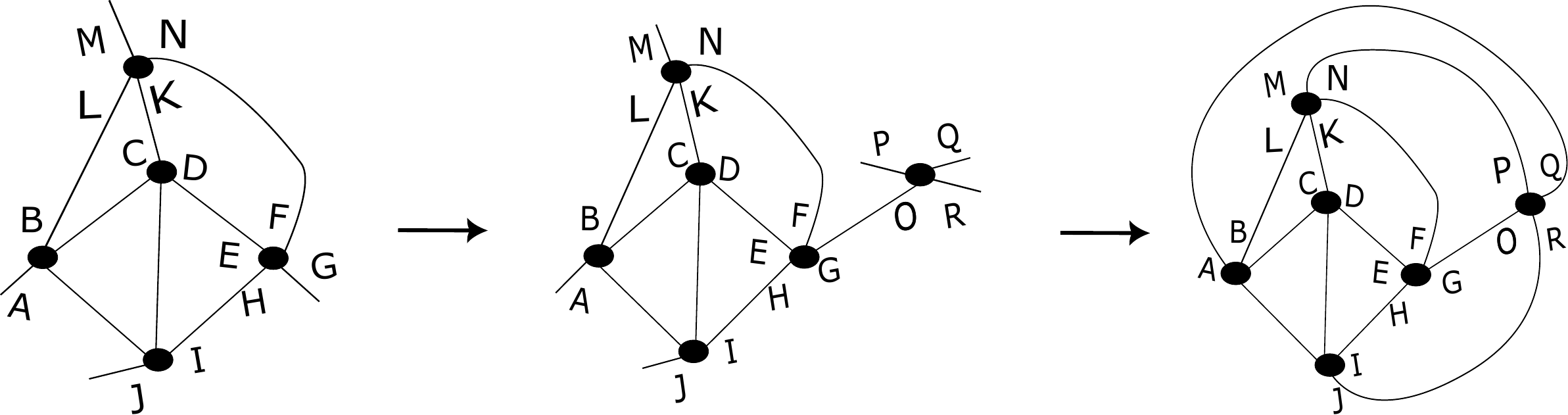}
\caption{\small{Configurations with a face of length 3; $\ell=6$ and $\varphi=1$ graph.
}}
\label{facethree4}
\end{figure}

Assume now that the graph has no face of lenght longer than 3. Therefore, each face is either of length 3 or 2, denoted by $F_2$ and $F_3$ for the number of faces of length 2 and 3, respectively. This leads to the following equations: $F_2 + F_3 = 8$ (using the Euler characteristic) and $2F_2 + 3F_3 = 24$ (since the sum of the lengths of the faces is twice the number of edges). Solving these equations gives $F_2 = 0$ and $F_3 = 8$, indicating that all faces of the graph are triangular.

Assume we have a configuration with a face of length 3, to which we connect an additional vertex as shown in the first configuration in Figure \ref{facethree1}. We need to connect the remaining two vertices, but we must first connect the half-edge $H$ to $I$, ensuring no face is longer than three, as shown in the second configuration in the same figure. Connecting the fifth vertex results in two possible non-isomorphic configurations, depicted in the remaining two graphs in Figure \ref{facethree1}. The third configuration in Figure \ref{facethree1} implies that $N$ should connect to $J$, and $M$ to $A$, resulting in the first configuration in Figure \ref{facethree2}.
This configuration already contains an $O(D)$-loop, and connecting the remaining half-lines will result in in a contradiction; namely $\varphi > 1$. The next two configurations in Figure \ref{facethree2} represent the two possible non-isomorphic configurations that arise from adding the last vertex to the first configuration in the same figure. Since no face is longer than 3, these configurations lead to the same graph: the final graph in Figure \ref{facethree2}, with $\varphi = 3$.

Consider the last configuration in Figure \ref{facethree1}. We should connect $N$ to $F$ and $L$ to $B$. Connecting the last vertex results in a single configuration, as shown in the second configuration of Figure \ref{facethree4}. Consequently, the final graph in the same figure follows. This graph also satisfies $\varphi = 3$.
\end{enumerate}
\end{proof}

\begin{remark}
We could expect to extend some of the results in Theorem \ref{prop:enumerate} in the following way:
    Let $G$ be a connected 2PI melon-free Feynman graph with grade $\ell$. If loop configuration has the form $(4,4,\dots,4, 2\ell)$, its scheme $S_G$ has the form in Figure \ref{fig:cong}.
\begin{figure}[H]
\centering
\includegraphics[width=35mm]
{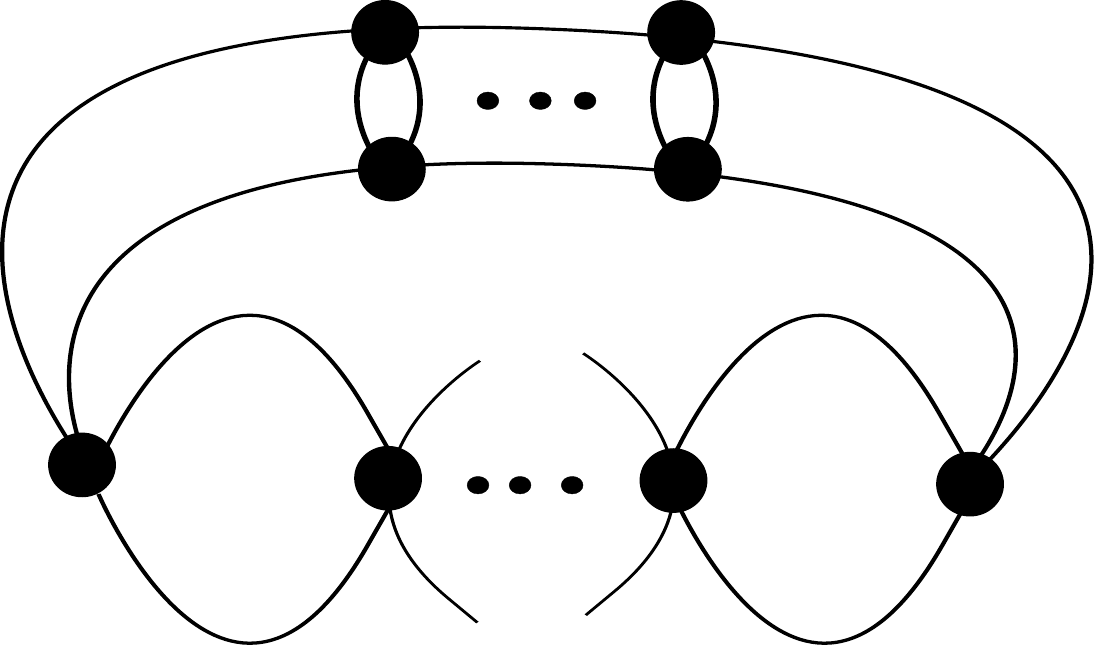}
\caption{\small{Some schematic representations of $(4,4,\dots,4, 2\ell)$ with $\ell-2$ crossings.}}
\label{fig:cong}
\end{figure}

Unfortunately this only works for $\ell=3,4$. Starting from $l=5$, we could find more configurations as illustrated in Figure \ref{fig:conter}.

\begin{figure}[H]
\centering
\includegraphics[width=35mm]
{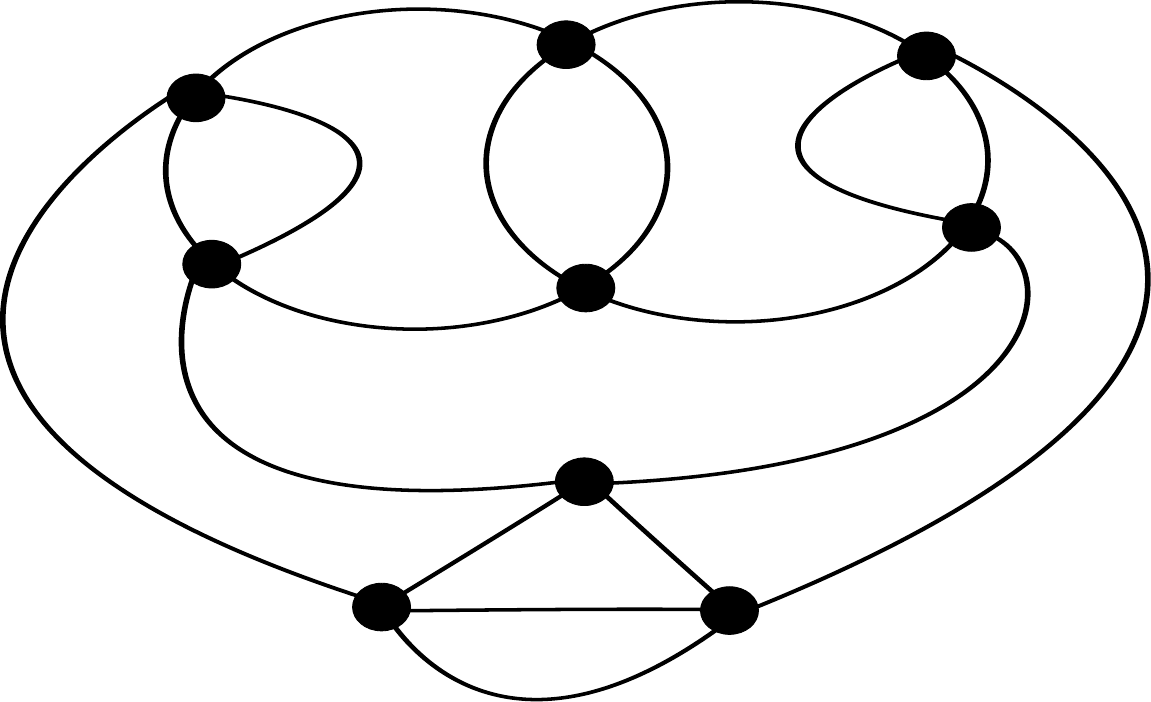}
\caption{\small{Some schematic representations of $(4,4, 10)$.}}
\label{fig:conter}
\end{figure}  
\end{remark}

We can also prove the previous results by relating any of the 4-regular graphs we are studying, to an alternating knot. This is mainly motivated by the following claim: Every four-regular planar graph is the medial graph of some planar graph. For a connected 4-regular planar graph $M$, a planar graph $G$ with $M$ as its medial graph can be constructed as follows. Color the faces of $M$ with only two colors, which is possible because $M$ is Eulerian and thus the dual graph of $M$ is bipartite. The vertices in $G$ correspond to the faces of a single color in $M$. These vertices are linked by an edge for each vertex shared by their corresponding faces in $M$. It should be noted that performing this construction with the faces of the other color as the vertices results in the dual graph of $G$.
Using the checkerboard coloring, the medial graph or 4-regular graph can be associated to an alternating link where each crossing in the link diagram corresponds to a vertex of the graph. This link diagram is an alternating knot for $\varphi=1$.

Using the Rolfsen knot table, there are only one knot with 3 or 4 crossing and two with five crossings. This confirms our results in Proposition \ref{prop:enumerate} for $\ell=3,4,5$ and $\varphi=1$. In the case $\ell=6$ and $\varphi=1$ we could expect to get three graphs because there are three knots with 3 crossing in the Table but there are more than that which actually can be obtain by taking the connected sum (because the table only contains prime knots) of two trefoils or applying the Tait flyping moves (the correspondent to Reidemeister moves on 3-spheres). This leads to the following result whose proof is straightforward.

\begin{thm}
\label{thm:knot}
Each 2PI 4-regular planar graph with $\varphi=1$ and any $\ell$, ignoring the orientation assignment on the edges, is in one-to-one correspondance with reduced alternating knot diagrams with $\ell$ crossings
1) which are projections of the prime knots as listed in the Rolfsen knot table, or 
2) which are obtained after performing the Tait flyping moves.

Furthermore, we can correspond each 2PR 4-regular planar graphs with $\varphi=1$ and any $\ell$ to an alternating knot diagram obtained by performing a connected sum or a Reidemeister move I on the reduced alternating knot diagrams referred above.
\end{thm}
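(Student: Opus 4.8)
The plan is to set up the standard dictionary between the $4$-regular planar Feynman graphs of this model and link diagrams provided by the medial-graph construction recalled above, and then to import the classical Tait theorems on alternating knots.

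First I would fix this dictionary. Let $G$ be a connected $4$-regular planar graph with $\varphi=1$. By the fact recalled above, $G$ is the medial graph of some planar graph, so its faces admit a checkerboard two-colouring; shading one colour class turns each vertex of $G$ into a crossing in the canonical way (the shaded wedges fixing the over/under convention), and the resulting link diagram $D=D(G)$ is alternating and well defined up to mirror image (the two shadings give mirror diagrams). Conversely, the shadow (underlying $4$-valent graph) of an alternating diagram is exactly such a $G$. Three observations pin down the invariants: \emph{(i)} crossings of $D$ correspond to vertices of $G$, so $D$ has $v(G)$ crossings, and substituting $g=0$, $\varphi=1$ into \eqref{eq:seven} gives $\ell(G)=v(G)$, i.e.\ $D$ has exactly $\ell$ crossings; \emph{(ii)} tracing a strand of $D$ through a crossing means leaving by the half-edge opposite the one entered, which is precisely the rule defining a straight face ($\oD$-loop) in the stranded representation of Section~\ref{sec:model}, so the components of $D$ correspond to the $\oD$-loops of $G$, and $\varphi=1$ is equivalent to $D$ being a knot diagram; \emph{(iii)} a two-edge-cut of $G$ in the sense of Section~\ref{sec:schemes} corresponds, on the diagram side, to a simple closed curve meeting $D$ in two points away from the crossings, and by Menasco's analysis of alternating links such a curve is either inessential, or isolates a nugatory crossing, or is a connected-sum sphere. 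Hence $G$ is $2$PI if and only if $D$ has no nugatory crossing and is not a connected sum, i.e.\ $D$ is a reduced alternating diagram of a prime knot; and a $2$PR $G$ produces a $D$ which is either a connected sum of smaller alternating diagrams or is obtained from a smaller one by a Reidemeister~I move (a tadpole on the graph side). This correspondence becomes a bijection once we quotient both sides by their natural equivalences: combinatorial-map isomorphisms preserving the face structure in the sense of Definition~\ref{def:isomorphism} on the graph side---forgetting edge orientations is harmless here by the Remark on edge arrows in Section~\ref{sec:comstruc}, and the shadow of a knot diagram carries no such data---and planar isotopy on the diagram side.

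Second, I would feed this into the Tait programme. For the $2$PI statement: by the resolution of the Tait minimality conjecture (Kauffman, Murasugi, Thistlethwaite), a reduced alternating diagram realises the minimal crossing number of its knot, so a reduced alternating diagram of a prime knot with $\ell$ crossings is the projection of a prime alternating knot of crossing number $\ell$; such a knot is listed, up to mirror image, in the $\ell$-crossing block of the Rolfsen table, and by the Menasco--Thistlethwaite flyping theorem every reduced alternating diagram of it is obtained from the table diagram by a finite sequence of flypes. Running the dictionary backwards then produces, from each such diagram, an honest $2$PI $4$-regular planar graph with $\varphi=1$ and grade $\ell$, which establishes the stated one-to-one correspondence. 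For the $2$PR half, one iterates the last clause of the dictionary: a $2$PR $G$ has a two-edge-cut, which on the diagram side is undone either by a Reidemeister~I move or by splitting a connected sum; repeating reduces $G$ to $2$PI pieces with $\varphi=1$, so $D(G)$ is assembled from the reduced alternating knot diagrams of the previous paragraph by connected sums and Reidemeister~I moves, as claimed.

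Third, the step I expect to be the real work is making the dictionary watertight---in particular the equivalence ``$G$ is $2$PI $\iff$ $D(G)$ is a reduced alternating diagram of a prime knot'' together with the matching of the two quotients: one must verify that a flype of $D$ corresponds exactly to a combinatorial-map isomorphism of $G$ of the restricted type of Definition~\ref{def:isomorphism} (preserving ladder-vertex types and the $L$/$R$/$\oD$-face structure), and that connected sums and Reidemeister~I moves on $D$ correspond exactly to the two-edge-cut decompositions already studied in the paper (the flip of Lemma~\ref{lem:2PR} and the tadpole insertions appearing in Theorem~\ref{thm:ell3g02PI} and Figure~\ref{ION}). Everything else is either a black-box citation from alternating-knot theory or an immediate consequence of \eqref{eq:seven}, so once the dictionary is nailed down the argument is indeed short, consistent with the claim in the statement that the proof is straightforward.
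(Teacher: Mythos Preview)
Your proposal is correct and follows essentially the same approach as the paper: establish the dictionary between $4$-regular planar graphs with $\varphi=1$ and alternating knot diagrams (crossings $\leftrightarrow$ vertices, so $\ell$ crossings via \eqref{eq:seven}; $\oD$-loops $\leftrightarrow$ link components, so $\varphi=1$ means a knot), then invoke the Tait flyping theorem for the 2PI part and observe that connected sums and Reidemeister~I moves are precisely what produce 2PR graphs. Your version is in fact considerably more careful than the paper's own proof---you explicitly invoke Menasco's theorem to pin down the equivalence ``$G$ is 2PI $\iff$ $D(G)$ is reduced and prime'' and the Kauffman--Murasugi--Thistlethwaite minimality result, whereas the paper leaves these steps implicit and treats the whole argument as a brief observation.
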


\begin{proof}
Firstly, from Proposition \ref{prop:grade}, if we have $g=0$, $\varphi=1$, immediately $\ell=v$.

Using the one-to-one correspondance between 4-regular planar graphs and alternating knot diagrams, it is sufficient to consider and stay in the space of alternating knot diagrams.

The Tait flyping Conjecture allows us to draw all possible reduced alternating diagrams of a given  alternating knot.
The Tait flyping moves keeps the number of crossing invariant, therefore under the moves, we stay in the same $\ell$.

Furthermore, Reidemeister moves II and III will get us away from the space of alternating knot diagrams, therefore, we will not consider them.
A Reidemeister move I will increase the number of crossings by one, and create a tadpole (self-loop).
Additionally, a connected sum of the alternating knot diagrams can be performed to yield more alternating knot diagrams, however, by definition, it will create a bridge, yielding 2PR graphs.

Finally, in order to obtain the Feynman graphs of the $\uN^2 \times \oD$ matrix model with any $\ell$ and $g=0$ with $\varphi =1$ at the double scaling limit of this current work, one needs to assign orientations to the edges and identify distinct combinatorial maps corresponding to these 4-regular planar graphs obtained by considering alternating knot diagrams obtained above.
\end{proof}

\section{Conclusion}
\label{sec:concl}

In this work, we have classified grade $\ell=1$ and $\ell=2$ with any genus $g$ Feynman graphs generated by a $\uN^2 \times \oD$ multi-matrix model of the type defined by the action \eqref{eq:actionND}. 
The result is given in Theorem \ref{thm:ell1ell2planar} and Theorem \ref{thm:induction} along with the algorithms on how to recursively construct higher genus graphs for a given $\ell$.
Furthermore, we classified grade $\ell=3$ with $g=0$ planar graphs as given in Theorem \ref{thm:ell3g02PI}.
We also classified subclasses of higher grade $\ell>3$ planar ($g=0$) graphs in Theorem \ref{prop:enumerate} and Theorem \ref{thm:knot}.
Theorem \ref{thm:knot} states that a subclass of higher grade $\ell>3$ planar graphs which have one $\oD$-loop ($\varphi = 1$) can be obtained from knot diagrams of Rolfsen table.

One of the main motivations for this work is a possible formulation of topological recursion appearing in this model. In fact, the enumeration of ordinary combinatorial maps was a major precursor of topological recursion. It was the first enumeration to be shown to be governed by topological recursion, and the theory of topological recursion evolved from the abstraction of loop equations from the theory of matrix models in the specific context of the 1-Hermitian matrix model mentioned above \cite{MR2118807}. 

Tensor models are built on tensor integrals, which are a generalization of matrix integrals. As a result, it is not surprising that tensor models generate Schwinger-Dyson equations that generalize those of matrix models.  Despite advances, solving the Schwinger-Dyson equations for tensor models remains extremely difficult. However, some tensor models can be completely solved in the $1/N$ expansion at any order and still satisfy the topological recursion. This is due to the fact that those tensor models are disguised matrix models, and most recent progress in tensor models relied on a correspondence between tensor models and some multi-matrix models with multi-trace interactions. This allows for the indirect study of tensor models via matrix model techniques on the corresponding matrix models. Thus, in \cite{MR3829394}, it was demonstrated that the quartic melonic tensor model, after some transformation to a matrix model, satisfies the blobbed topological recursion.

As explained above, the $\uN^2 \times \oD$ model is affine to matrix models, yet it can be readily interpreted as a tensor model.
What is attractive about this model is that there are two parameters owing to different sizes $N$ and $D$, which then let us incrementally classify the graphs as compared to the usual tensor model where only one parameter $N$ controls the classification of the graphs with one parameter Gurau degree $\omega$.
It means that it is more likely that one can classify higher order graphs, as stated earlier.
If one can classify higher order graphs, one may be able to identify topological recursion at play.
This model's affinity to matrix models, indeed giving topological (genus) expansion, is promising to find such a topological recursion.
This will be left to explore in the future.

Additionally, it is interesting to see the critical/continuum limit(s) of the model for the higher $\ell$, as was carried out in \cite{MR4450018} for $\ell=0$, to see if there is any relevance to quantum gravity (would it give rise to any phases other than branched polymer (tree) or planar?).
In fact, by looking at the higher $\ell =1, \, 2, \, 3$ graphs with any genus that we constructed in this work, one can speculate that the critical behavior at a given $\ell =1, \,2, \, 3 $ in the double scaling limit will likely still be trees as in the case for $\ell=0$.
It appears that the presence of 2PR schemes with B-ladder-vertices will still give rise to the dominant critical behavior similarly to the $\ell=0$ case in \cite{MR4450018}, therefore, we will again obtain trees. 
We will leave the precise analysis in the future work.

Finally, just as in Theorem \ref{thm:knot}, knot theory was useful in classifying subset of graphs of higher $\ell$, can it further help us enumerate more graphs?

\appendix{}
\section{Isomorphism of graphs}
\label{sec:appiso}

In this appendix, we illustrate isomorphisms of graphs.
It serves as a supplement to help elaborate and clarify the analyses of the classification of the graphs.

\begin{figure}[H]
\begin{center}
\begin{minipage}[t]{0.8\textwidth}
\centering
\def\svgwidth{0.8\columnwidth}
\tiny{
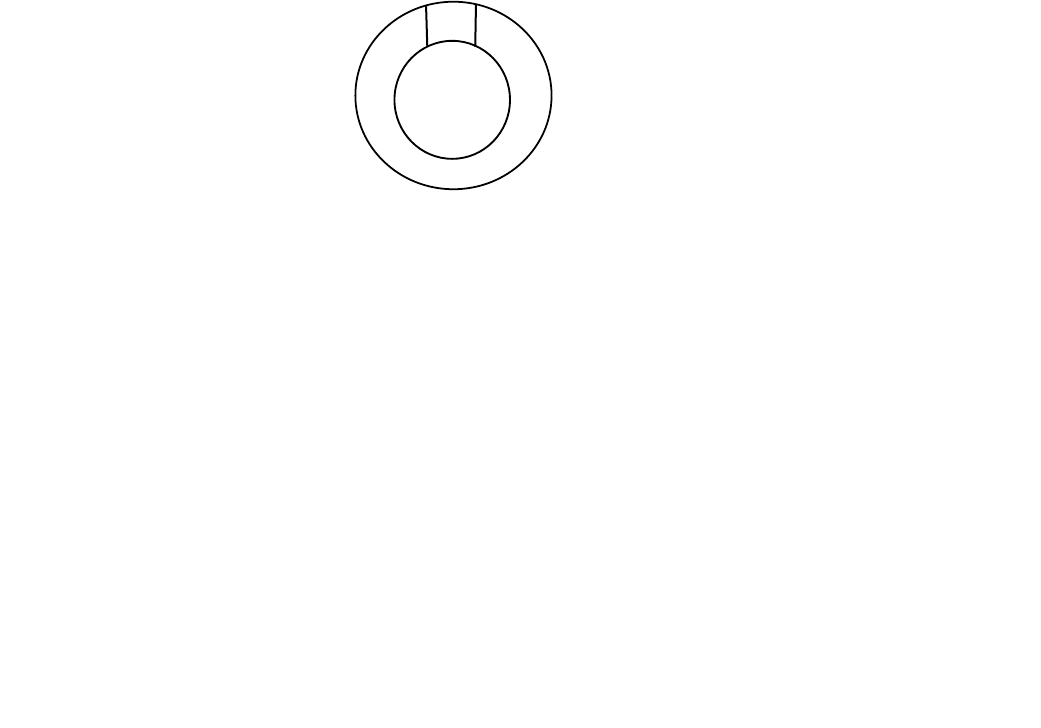
}
\caption{
Isomorphic graphs with cuts (i.e., marked edges) relevant in the operations depicted in Fig. \ref{2prEll1g1}.
}
\label{fig:isomorphismwithcuts}
\end{minipage}
\end{center}
\end{figure}

\begin{figure}[H]
\begin{minipage}[t]{0.95\textwidth}
\centering
\def\svgwidth{1\columnwidth}
\tiny{
\begingroup%
  \makeatletter%
  \providecommand\color[2][]{%
    \errmessage{(Inkscape) Color is used for the text in Inkscape, but the package 'color.sty' is not loaded}%
    \renewcommand\color[2][]{}%
  }%
  \providecommand\transparent[1]{%
    \errmessage{(Inkscape) Transparency is used (non-zero) for the text in Inkscape, but the package 'transparent.sty' is not loaded}%
    \renewcommand\transparent[1]{}%
  }%
  \providecommand\rotatebox[2]{#2}%
  \newcommand*\fsize{\dimexpr\f@size pt\relax}%
  \newcommand*\lineheight[1]{\fontsize{\fsize}{#1\fsize}\selectfont}%
  \ifx\svgwidth\undefined%
    \setlength{\unitlength}{685.59869241bp}%
    \ifx\svgscale\undefined%
      \relax%
    \else%
      \setlength{\unitlength}{\unitlength * \real{\svgscale}}%
    \fi%
  \else%
    \setlength{\unitlength}{\svgwidth}%
  \fi%
  \global\let\svgwidth\undefined%
  \global\let\svgscale\undefined%
  \makeatother%
  \begin{picture}(1,0.35749743)%
    \lineheight{1}%
    \setlength\tabcolsep{0pt}%
    \put(0.17260554,0.05626311){\color[rgb]{0,0,0}\makebox(0,0)[lt]{\lineheight{1.25}\smash{\begin{tabular}[t]{l}\scalebox{2}{$\cong$}\end{tabular}}}}%
    \put(0,0){\includegraphics[width=\unitlength,page=1]{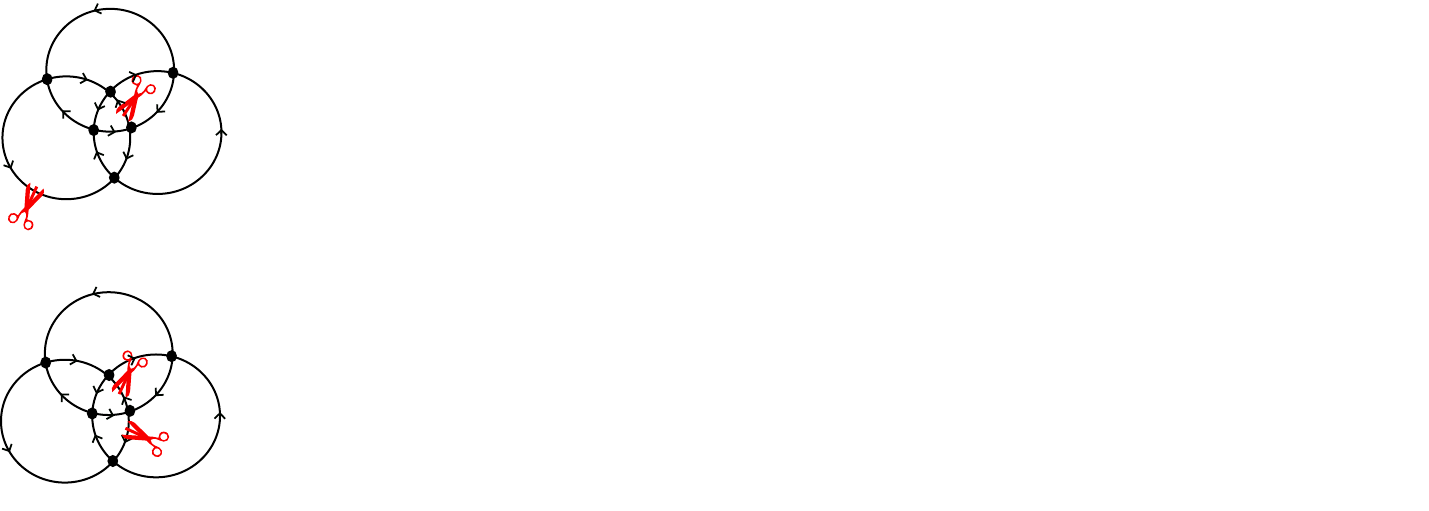}}%
    \put(0.77493589,0.0632119){\color[rgb]{0,0,0}\makebox(0,0)[lt]{\lineheight{1.25}\smash{\begin{tabular}[t]{l}\scalebox{2}{$\cong$}\end{tabular}}}}%
    \put(0,0){\includegraphics[width=\unitlength,page=2]{App1isomophismwithcuts.pdf}}%
    \put(0.46839367,0.05984804){\color[rgb]{0,0,0}\makebox(0,0)[lt]{\lineheight{1.25}\smash{\begin{tabular}[t]{l}\scalebox{2}{$\ncong$}\end{tabular}}}}%
    \put(0.18611791,0.26386224){\color[rgb]{0,0,0}\makebox(0,0)[lt]{\lineheight{1.25}\smash{\begin{tabular}[t]{l}\scalebox{2}{$\cong$}\end{tabular}}}}%
  \end{picture}%
\endgroup%

}
\caption{
Some isomophisms between graphs with cuts (marked edges) which are relevant in the procedure drawn in Fig. \ref{fig:App1}.
}
\label{fig:App1isomophismwithcuts}
\end{minipage}
\end{figure}

\begin{figure}[H]
\begin{minipage}[t]{0.7\textwidth}
\centering
\def\svgwidth{0.65\columnwidth}
\tiny{
\begingroup%
  \makeatletter%
  \providecommand\color[2][]{%
    \errmessage{(Inkscape) Color is used for the text in Inkscape, but the package 'color.sty' is not loaded}%
    \renewcommand\color[2][]{}%
  }%
  \providecommand\transparent[1]{%
    \errmessage{(Inkscape) Transparency is used (non-zero) for the text in Inkscape, but the package 'transparent.sty' is not loaded}%
    \renewcommand\transparent[1]{}%
  }%
  \providecommand\rotatebox[2]{#2}%
  \newcommand*\fsize{\dimexpr\f@size pt\relax}%
  \newcommand*\lineheight[1]{\fontsize{\fsize}{#1\fsize}\selectfont}%
  \ifx\svgwidth\undefined%
    \setlength{\unitlength}{296.62015419bp}%
    \ifx\svgscale\undefined%
      \relax%
    \else%
      \setlength{\unitlength}{\unitlength * \real{\svgscale}}%
    \fi%
  \else%
    \setlength{\unitlength}{\svgwidth}%
  \fi%
  \global\let\svgwidth\undefined%
  \global\let\svgscale\undefined%
  \makeatother%
  \begin{picture}(1,0.35623872)%
    \lineheight{1}%
    \setlength\tabcolsep{0pt}%
    \put(0,0){\includegraphics[width=\unitlength,page=1]{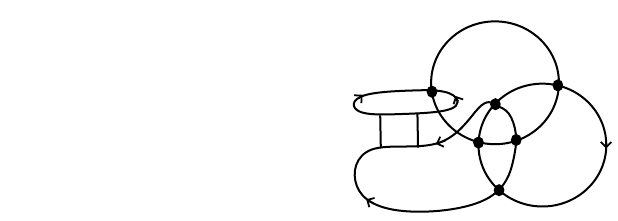}}%
    \put(0.61995709,0.12885146){\color[rgb]{0,0,0}\makebox(0,0)[lt]{\lineheight{1.25}\smash{\begin{tabular}[t]{l}\scalebox{1}{$\widetilde{\rm N}_{\rm o}$}\end{tabular}}}}%
    \put(0,0){\includegraphics[width=\unitlength,page=2]{app1isomorphism.pdf}}%
    \put(0.43179911,0.13178546){\color[rgb]{0,0,0}\makebox(0,0)[lt]{\lineheight{1.25}\smash{\begin{tabular}[t]{l}\scalebox{2}{$\cong$}\end{tabular}}}}%
    \put(0,0){\includegraphics[width=\unitlength,page=3]{app1isomorphism.pdf}}%
    \put(0.07232467,0.09644621){\color[rgb]{0,0,0}\makebox(0,0)[lt]{\lineheight{1.25}\smash{\begin{tabular}[t]{l}\scalebox{1}{$\widetilde{\rm N}_{\rm o}$}\end{tabular}}}}%
    \put(0,0){\includegraphics[width=\unitlength,page=4]{app1isomorphism.pdf}}%
  \end{picture}%
\endgroup%

}
\caption{
Some isomorphic schems relevant in 
Fig. \ref{fig:App1}.
$\widetilde {\rm N}_{\rm o} \in \{ $N-dipole$,\, {\rm N}_{\rm o}\}$.
}
\label{fig:app1isomorphism}
\end{minipage}
\end{figure}

\begin{figure}[H]
\begin{minipage}[t]{0.95\textwidth}
\centering
\def\svgwidth{0.95\columnwidth}
\tiny{
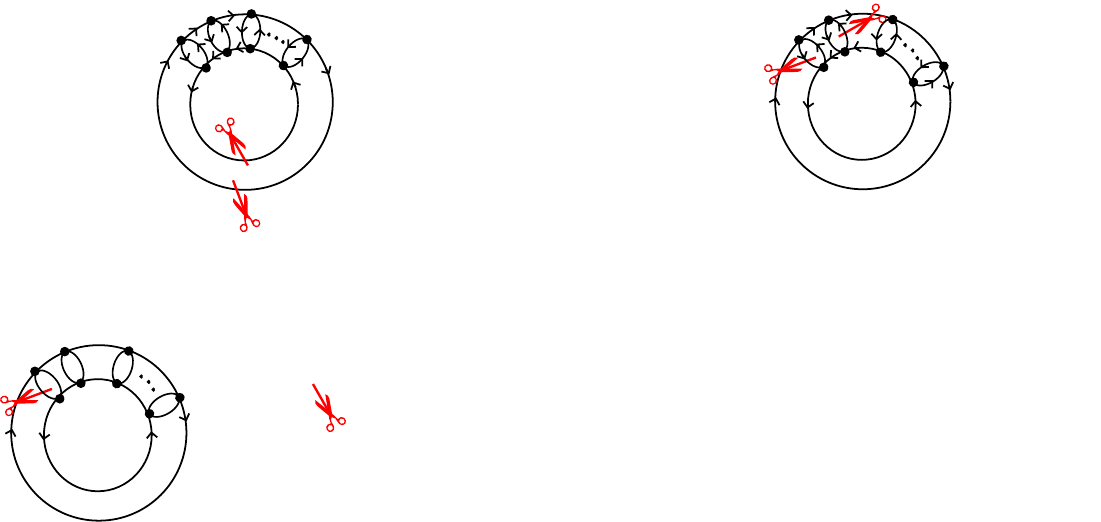
}
\end{minipage}
\vskip 50pt
\begin{minipage}[t]{0.95\textwidth}
\centering
\def\svgwidth{0.95\columnwidth}
\tiny{
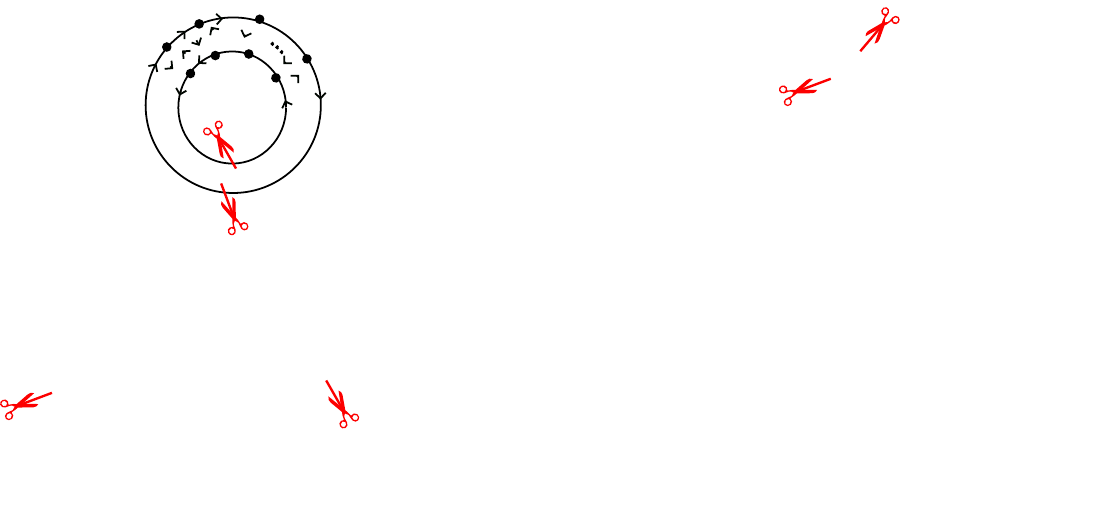
}
\caption{
Some isomorphic graphs with cuts (i.e., marked edges) relevant in Fig. \ref{fig:App2} and Fig. \ref{fig:App21}.
Remark that some are trivial, but we wish to be comprehensive to be clear.
}
\label{fig:App2isomorphismwithcuts}
\end{minipage}
\end{figure}

\begin{figure}[H]
\begin{minipage}[t]{0.95\textwidth}
\centering
\def\svgwidth{0.95\columnwidth}
\tiny{
\begingroup%
  \makeatletter%
  \providecommand\color[2][]{%
    \errmessage{(Inkscape) Color is used for the text in Inkscape, but the package 'color.sty' is not loaded}%
    \renewcommand\color[2][]{}%
  }%
  \providecommand\transparent[1]{%
    \errmessage{(Inkscape) Transparency is used (non-zero) for the text in Inkscape, but the package 'transparent.sty' is not loaded}%
    \renewcommand\transparent[1]{}%
  }%
  \providecommand\rotatebox[2]{#2}%
  \newcommand*\fsize{\dimexpr\f@size pt\relax}%
  \newcommand*\lineheight[1]{\fontsize{\fsize}{#1\fsize}\selectfont}%
  \ifx\svgwidth\undefined%
    \setlength{\unitlength}{459.78592039bp}%
    \ifx\svgscale\undefined%
      \relax%
    \else%
      \setlength{\unitlength}{\unitlength * \real{\svgscale}}%
    \fi%
  \else%
    \setlength{\unitlength}{\svgwidth}%
  \fi%
  \global\let\svgwidth\undefined%
  \global\let\svgscale\undefined%
  \makeatother%
  \begin{picture}(1,0.27624866)%
    \lineheight{1}%
    \setlength\tabcolsep{0pt}%
    \put(0,0){\includegraphics[width=\unitlength,page=1]{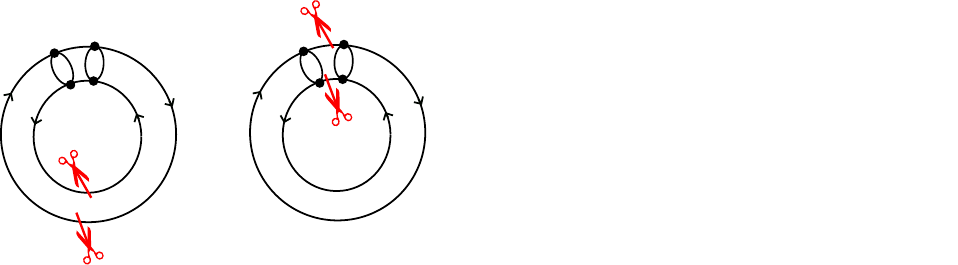}}%
    \put(0.20229097,0.12761577){\color[rgb]{0,0,0}\makebox(0,0)[lt]{\lineheight{1.25}\smash{\begin{tabular}[t]{l}\scalebox{2}{$\cong$}\end{tabular}}}}%
    \put(0,0){\includegraphics[width=\unitlength,page=2]{App2isomorphismwithcutsREDUCED.pdf}}%
    \put(0.73844105,0.12599885){\color[rgb]{0,0,0}\makebox(0,0)[lt]{\lineheight{1.25}\smash{\begin{tabular}[t]{l}\scalebox{2}{$\cong$}\end{tabular}}}}%
    \put(0,0){\includegraphics[width=\unitlength,page=3]{App2isomorphismwithcutsREDUCED.pdf}}%
    \put(0.46716011,0.12826337){\color[rgb]{0,0,0}\makebox(0,0)[lt]{\lineheight{1.25}\smash{\begin{tabular}[t]{l}\scalebox{2}{$\cong$}\end{tabular}}}}%
  \end{picture}%
\endgroup%

}
\end{minipage}
\vskip 30pt
\begin{minipage}[t]{0.95\textwidth}
\centering
\def\svgwidth{0.95\columnwidth}
\tiny{
\begingroup%
  \makeatletter%
  \providecommand\color[2][]{%
    \errmessage{(Inkscape) Color is used for the text in Inkscape, but the package 'color.sty' is not loaded}%
    \renewcommand\color[2][]{}%
  }%
  \providecommand\transparent[1]{%
    \errmessage{(Inkscape) Transparency is used (non-zero) for the text in Inkscape, but the package 'transparent.sty' is not loaded}%
    \renewcommand\transparent[1]{}%
  }%
  \providecommand\rotatebox[2]{#2}%
  \newcommand*\fsize{\dimexpr\f@size pt\relax}%
  \newcommand*\lineheight[1]{\fontsize{\fsize}{#1\fsize}\selectfont}%
  \ifx\svgwidth\undefined%
    \setlength{\unitlength}{461.47028885bp}%
    \ifx\svgscale\undefined%
      \relax%
    \else%
      \setlength{\unitlength}{\unitlength * \real{\svgscale}}%
    \fi%
  \else%
    \setlength{\unitlength}{\svgwidth}%
  \fi%
  \global\let\svgwidth\undefined%
  \global\let\svgscale\undefined%
  \makeatother%
  \begin{picture}(1,0.27881424)%
    \lineheight{1}%
    \setlength\tabcolsep{0pt}%
    \put(0,0){\includegraphics[width=\unitlength,page=1]{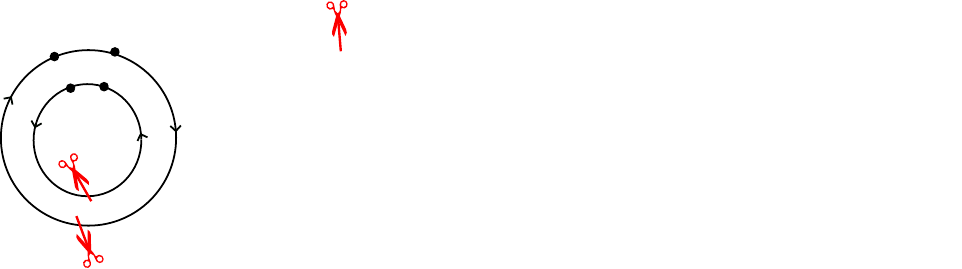}}%
    \put(0.2015526,0.12714996){\color[rgb]{0,0,0}\makebox(0,0)[lt]{\lineheight{1.25}\smash{\begin{tabular}[t]{l}\scalebox{2}{$\cong$}\end{tabular}}}}%
    \put(0,0){\includegraphics[width=\unitlength,page=2]{App2isomorphismwithcutsRREDUCED.pdf}}%
    \put(0.73939574,0.13506333){\color[rgb]{0,0,0}\makebox(0,0)[lt]{\lineheight{1.25}\smash{\begin{tabular}[t]{l}\scalebox{2}{$\cong$}\end{tabular}}}}%
    \put(0,0){\includegraphics[width=\unitlength,page=3]{App2isomorphismwithcutsRREDUCED.pdf}}%
    \put(0.47706924,0.12779519){\color[rgb]{0,0,0}\makebox(0,0)[lt]{\lineheight{1.25}\smash{\begin{tabular}[t]{l}\scalebox{2}{$\cong$}\end{tabular}}}}%
    \put(0,0){\includegraphics[width=\unitlength,page=4]{App2isomorphismwithcutsRREDUCED.pdf}}%
  \end{picture}%
\endgroup%

}
\caption{
Isomorphisms between certain graphs with identified cuts which are relevant for the operations performed in Figures \ref{fig:App2} and \ref{fig:App21}.
Remark that some are trivial, but we wish to be comprehensive to be clear.
}
\label{fig:App2isomorphismwithcutsREDUCED}
\end{minipage}
\end{figure}

\section{Constructing 2PR graphs of $\ell=2$}
\label{sec:app2PR}

In this Appendix, we illustrate some examples to recursively construct 2PR schemes of $\ell=2$ and $g=1$.
In particular, and these examples shown in Figures \ref{fig:2PRex1app} and \ref{fig:2PRex2app} supplement the proof for Theorem \ref{propo:g1ell2}.

\begin{figure}[H]
\begin{minipage}[t]{1\textwidth}
\centering
\def\svgwidth{1\columnwidth}
\tiny{
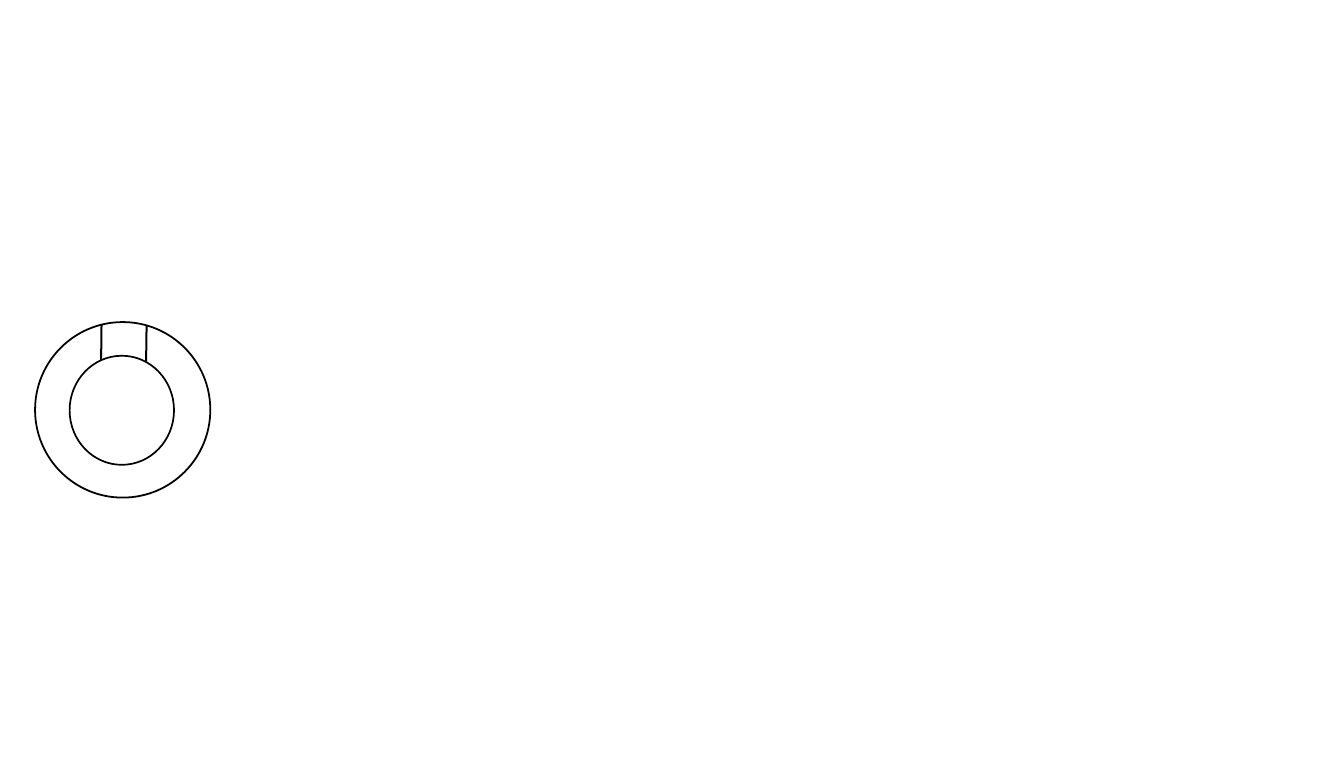
}
\caption{
Some 2PR schemes of $\ell=2$ and $g=1$ obtained from combining a $\ell=2$ and $g = 0$ scheme $S_{G_1}$ with a $\ell=0$ and $g=1$ scheme $S_{G_2}$ via separating dipole or a ladder-vertex
or a two-edge-connection.
These schemes add to Figure \ref{fig:2PRex1}.
$\widetilde {\rm N}_{\rm o} \in \{ $N-dipole$,\, {\rm N}_{\rm o}\}$, 
$Y\in\{ $L-dipole$,\, $R-dipole$,\, {\rm N}_{\rm e}, {\rm L}, {\rm R}, {\rm B}\}$.
}
\label{fig:2PRex1app}
\end{minipage}
\end{figure}

\begin{figure}[H]
\begin{minipage}[t]{1\textwidth}
\centering
\def\svgwidth{1\columnwidth}
\tiny{
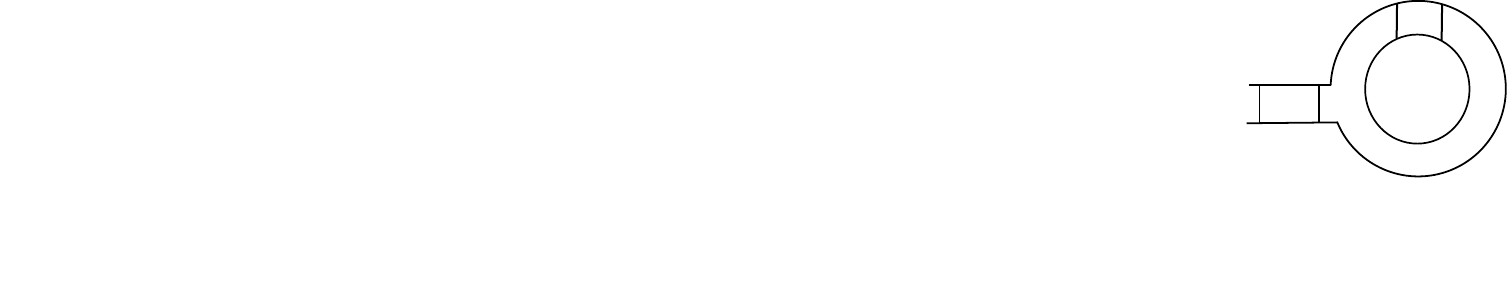
}
\caption{
Some 2PR schemes of $\ell=2$ and $g=1$ obtained from combining a $\ell=1$ and $g = 0$ scheme (``infinity graph") $S_{G_1}$ with a $\ell=1$ and $g=1$ scheme $S_{G_2}$ via separating dipole or a ladder-vertex.  
These schemes add to Figure \ref{fig:2PRex2}.
$ X\in\{$N-dipole$, \,$L-dipole$, \,$R-dipole$,\, {\rm N}_{\rm e}, {\rm N}_{\rm o}, {\rm L}, {\rm R}, {\rm B}\}$.
}
\label{fig:2PRex2app}
\end{minipage}
\end{figure}

\section{Constructing alternating knots corresponding to some $4$-regular planar graphs}
\label{sec:appb}
In this section we give some examples of alternating knots that correspond to 4-regular planar graphs. The knot $3_1$ corresponds to the necklace graphs of 3 vertices. The following graphs $4_1$ corresponds to the second graph in Figure \ref{l45phi1}. The knots $5_1$ and $5_2$ correspond to the necklace graphs of 5 vertices and the forth graph in Figure \ref{l45phi1} respectively. Finally the knots $6_1$, $6_1$ and $6_3$ correspond to the first and third graphs in Figure \ref{facefive} and the third graph of Figure \ref{facefour1} respectively.

We should remark that if we perform a connected sum of two knots of the form $3_1$, then the resulting knot corresponds to second graph in Figure \ref{facesix}.

\begin{figure}[H]
\centering
\includegraphics[width=75mm]
{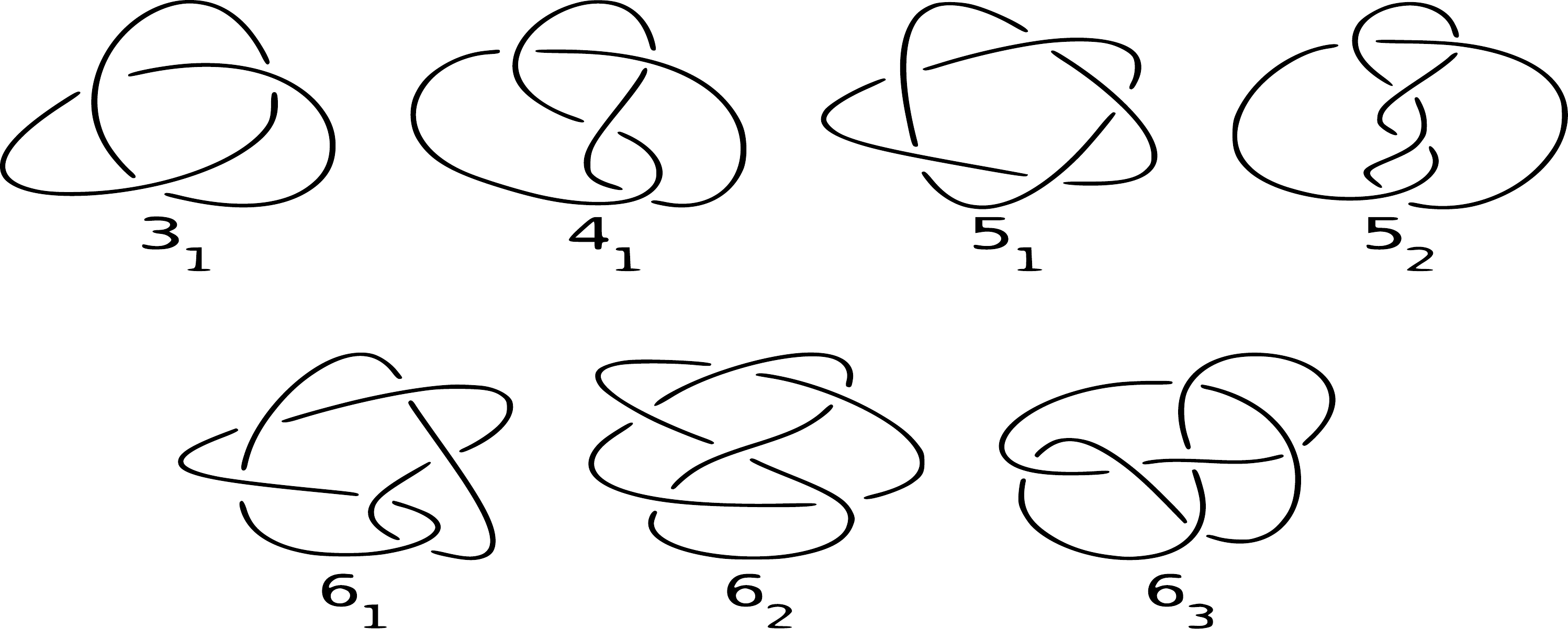}
\caption{Some alternating knots corresponding to $\ell=3,4,5,6$ and $\varphi=1$ graphs.}
\label{fig:conter}
\end{figure}

\section*{Acknowledgments}
The authors would like to thank Andreani Petrou, Andrew Lobb, Dario Benedetti and Sylvain Carrozza for useful discussions.
We would also like to thank the Insitut Henri Poincar\'e thematic program ``Quantum Gravity, Random Geometry and Holography", 2023 where the authors were participating and useful discussions took place for this work.

\bibliographystyle{alpha}
\bibliography{mref.bib}
\end{document}